\newcommand{\onlineversion}[2]{#1}{} 

\onlineversion{
\documentclass[11pt,letterpaper]{article}

\topmargin=-0.35in \topskip=0pt \headsep=15pt \oddsidemargin=0pt
\textheight=9in \textwidth=6.52in \voffset=0in
}{

}

\usepackage{amsmath,amsfonts,graphpap,amscd,mathrsfs,graphicx,lscape}
\usepackage{epsfig,amssymb,amstext,xspace}
\onlineversion{
\usepackage{theorem}
}{}
\usepackage{algorithm}
\usepackage{algorithmic}
\usepackage{subcaption}
\usepackage{paralist}
\usepackage{color}              
\usepackage{epsfig}
\usepackage{bm}
\usepackage{bbm}
\usepackage{hyperref}
\usepackage{booktabs}
\usepackage{multirow}
\usepackage{natbib}

\onlineversion{
\newtheorem{theorem}{Theorem}[section]
\newtheorem{lemma}[theorem]{Lemma}
\newtheorem{example}[theorem]{Example}
\newtheorem{corollary}[theorem]{Corollary}
\newtheorem{remark}[theorem]{Remark}
}{}

\newtheorem{proposition}[theorem]{Proposition}

\newtheorem{definition}[theorem]{Definition}

\newtheorem{assumption}[theorem]{Assumption}
\newtheorem{conjecture}[theorem]{Conjecture}

\def\squareforqed{\hbox{\rule{2.5mm}{2.5mm}}}

\def\QED{\ifmmode\squareforqed 
  \else{\nobreak\hfil   
    \penalty50                 
    \hskip1em                  
    \null                      
    \nobreak                   
    \hfil                      
    \squareforqed              
    \parfillskip=0pt           
    \finalhyphendemerits=0     
    \endgraf}                  
  \fi}

\def\blksquare{\rule{2mm}{2mm}}
\def\qedsymbol{\blksquare}
\newcommand{\bg}[1]{\medskip\noindent{\bf #1}}
\newcommand{\ed}{{\hfill\qedsymbol}\medskip}

\onlineversion{
\newenvironment{proof}{\bg{Proof : }}{\ed}
}{}
\newenvironment{proofof}[1]{\textbf{Proof of #1 : }}{\ed}

\newcommand{\comment}[1]{}

\newcommand{\junk}[1]{}

\newlength{\tmp} \newlength{\lpsx} \newlength{\lpsy} \newlength{\upsx}
\newlength{\upsy}

\newcommand{\xhdr}[1]{\vskip 6pt \noindent {\bf #1.}}
\onlineversion{
\newcommand{\email}[1]{\texttt{#1}}
}{}

\begin{document}

\onlineversion{

\setcounter{page}{0}

\title{Do AI Overviews Benefit Search Engines? \\ An Ecosystem Perspective}

\author{
Yihang Wu\\
       Zhejiang University\\
       \email{yhwu\_is@zju.edu.cn}
\and
Jiajun Tang\\
       Zhejiang University\\
       \email{jiajuntang@zju.edu.cn}
\and
Jinfei Liu\thanks{Advising authors.}\\
       Zhejiang University\\
       \email{jinfeiliu@zju.edu.cn}
\and
Haifeng Xu\footnotemark[1]\\
       University of Chicago\\
       \email{haifengxu@uchicago.edu}
\and
Fan Yao\footnotemark[1]\\
       UNC-Chapel Hill\\
       \email{fanyao@unc.edu}
}
}{}

\onlineversion{
\maketitle
}{}

\begin{abstract}
  The integration of AI Overviews into search engines enhances user experience but diverts traffic from content creators, potentially discouraging high-quality content creation and causing user attrition that undermines long-term search engine profit. To address this issue, we propose a game-theoretic model of creator competition with costly effort, characterize equilibrium behavior, and design two incentive mechanisms: a \emph{citation mechanism} that references sources within an AI Overview, and a \emph{compensation mechanism} that offers monetary rewards to creators. For both cases, we provide structural insights and near-optimal profit-maximizing mechanisms. Evaluations on real click data show that although AI Overviews harm long-term search engine profit, interventions based on our proposed mechanisms can increase long-term profit across a range of realistic scenarios, pointing toward a more sustainable trajectory for AI-enhanced search ecosystems.
\end{abstract}

\onlineversion{
\renewcommand{\thepage}{}
\clearpage
\pagenumbering{arabic}
}{
}

\section{Introduction} \label{sec:intro}

The paradigm of web search is undergoing a profound transformation driven by generative AI (GenAI). Mainstream search engines, including Google \cite{google_ai_overview}, Microsoft Bing \cite{bing_ai_overview}, and Baidu \cite{li2025towards}, are increasingly integrating AI-generated summaries (a.k.a. AI Overviews) directly into their results pages. By consolidating information from multiple sources into an immediate response, AI Overviews offer significant user benefits by reducing search time and simplifying information synthesis.


The rollout of AI Overviews has driven considerable revenue growth for search engines. Alphabet’s Q1 2025 report shows Google Search revenue of US \$50.7B (up 10\% YoY) and attributes part of the growth to the engagement with AI Overviews, which reaches 1.5B monthly users \cite{google_q1_2025_report}. However, behind the impressive figure lies a growing concern: AI Overviews are shifting traffic away from web content creators. Chegg, the leading student-first connected learning platform, reported that ``AI Overview is retaining traffic that historically had come to Chegg'' and that its non-subscriber traffic fell 49\% in January 2025 \cite{chegg_q4_2024_report}. Publishers have also sued Google over unlicensed summaries and diverted traffic \cite{gizmodo_rolling_stone_sue}. These shifts may weaken incentives for high-quality content creation, degrading user experience over time, and ultimately harming search engine profit. Motivated by these pressing issues, we study the following questions: 

\emph{Do AI Overviews benefit search engines in the long run? If not, how can search engines align creator incentives with their profit objective in the presence of AI Overviews?}


To answer these questions, we employ a game-theoretic model, grounded on the classic \emph{position-based model (PBM)} \cite{craswell2008experimental}, to study the competition among content creators. In our model, a number of $n$ creators (e.g., news sites, blogs, knowledge bases) compete for a keyword query. The traffic of a page depends on its rank-based position bias and costly creation effort, and the search engine ranks pages by effort. We characterize mixed Nash equilibria under both homogeneous and heterogeneous costs of creation effort.

Building on the equilibrium characterization, we then study two incentive mechanisms: (1) a \emph{citation mechanism} that cites selected organic results (i.e., web pages created by human creators) within AI Overviews to increase their position biases and address copyright concerns. This mechanism has been widely adopted by existing systems \cite{google_ai_overview,bing_ai_overview,perplexity}, but its design has not been theoretically examined; (2) a \emph{compensation mechanism} that pays creators based on creation effort, which also essentially raises position biases in our model. We solve (near-)optimal mechanism design for the search engine’s profit objective and show sharp structural simplifications: near-optimal citation and compensation mechanisms can be taken to increase position biases in a piecewise-constant manner by creator cost type, which enlightens simple, tractable, and effective practical design.

We ran a real-world user click experiment to estimate how AI Overviews change position biases. Parameterizing our PBM-based model with these estimates, numerical experiments show that introducing AI Overviews without incentives reduces long-term search engine profit (i.e., after creators adjust to the new equilibrium under the changed position biases), whereas adding our citation and compensation mechanisms increases long-term profit across a range of realistic settings, thus answering our motivating questions.

Our contributions lie in four aspects: 
\begin{enumerate}
    \item \textbf{Modeling}: We build a PBM-based game-theoretic model of creator competition in search engines, and formalize both AI Overviews and incentive mechanisms as interventions that alter the position biases;
    \item \textbf{Conceptual}: Our results sharply narrow the mechanism design space and thus provide actionable guidance on how to improve long-term profit for search engines, helping sustain a healthy creator ecosystem in the GenAI era;
    \item \textbf{Technical}: We extend equilibrium analysis techniques to accommodate multiplicative payoffs (rank-based position bias times the effort of creators) and heterogeneous cost structures, revealing novel equilibrium properties that are distinctive to our model;
    \item \textbf{Empirical}: We collect real user click data that quantifies how AI Overviews affect position biases, providing an empirical foundation that supports future research on AI-enhanced search ecosystems.
\end{enumerate}

\section{Related Works}

\xhdr{Content Creator Competition} Creator competition for user attention is well studied in recommendation settings \cite{ben2018game,ben2020content,hron2023modeling,yao2023bad,yao2023rethinking,jagadeesan2023supply}. Recent studies have further investigated the interplay between human creators and GenAI, including human-GenAI competition/collaboration \cite{yao2024human}, dynamic strategic behavior of human creators \cite{esmaeili2024strategize}, long-term ecosystem effects \cite{taitler2025braess}, and selective response strategies for GenAI \cite{taitler2025selective}, as well as data sharing between human and GenAI \cite{keinan2025strategic,taitler2025data} and how GenAI diversity shapes competition \cite{raghavan2024competition}. Differently, we study creator competition \emph{in search engines}, where AI Overviews reshape incentives, and we design mechanisms to improve search engine profit. Relatedly, \cite{madmon2025search} studies search engine creator competition but focuses on the effects of ranking rules on equilibrium stability and user welfare.

\xhdr{Contest Design} Contest design literature primarily addresses equilibrium characterization and optimal reward design, with reward structures typically classified as rank-order (e.g., all-pay auctions \cite{krishna2009auction}) or smooth allocation (e.g., Tullock contests \cite{tullock1980efficient}). See \cite{vojnovic2015contest,sisak2009multiple} for a comprehensive survey. We focus on rank-order allocation contests. On equilibrium characterization, classic results include single- and multi-prize all-pay auctions \cite{baye1996all,barut1998symmetric}, extensions with heterogeneous linear costs and restricted prize sequences \cite{bulow2006matching}, and nonlinear costs with homogeneous prizes \cite{siegel2009all}. Most related is \cite{xiao2016asymmetric}, which allows general prize structures with nonlinear costs and provides uniqueness under structural conditions on reward sequence. Our key difference is \emph{multiplicative} payoffs: in our model, a creator’s payoff equals the rank‑based position bias times her own effort level, whereas standard rank‑order contests award a fixed rank-based prize independent of effort. This difference necessitates new equilibrium analysis techniques. On reward design, prior work optimizes rank-order prizes to induce effort \cite{lazear1981rank,glazer1988optimal,moldovanu2001optimal}. The most related study is \cite{golrezaei2025contest}, which we extend to the multiplicative payoff and heterogeneous cost setting.

\xhdr{Search Engine Design} Our work connects to three strands of search engine research. First, our model is based on the position-based model (PBM) \cite{craswell2008experimental}, a classical click model that describes user click behavior; we use PBM for its simplicity and strong empirical fit, as surveyed in \cite{chuklin2015click}. Second, recent work integrates LLMs into search to improve search experience, including summarization \cite{goyal2022news}, query optimization \cite{ye2023enhancing}, and learning-to-rank (LTR) enhancements \cite{li2023s2phere}; see surveys such as \cite{xiong2024search,xu2025comprehensive,xi2025survey}. These studies mainly focus on technical perspective. Differently, we study mechanism design under AI Overviews to sustain creator incentives and thereby improve long-term search engine profit. Third, previous work highlights user experience risks of LLM-based search, such as echo chambers \cite{sharma2024generative}, hallucinations \cite{memon2024search}, and limitations of answer engines \cite{venkit2024search}. Complementing these macro-level analysis, we conduct a user study to collect fine-grained click data and empirically quantify how AI Overviews and citations shift user attention and clicking behavior.
\section{Model} \label{sec:model}

There are $n \geqslant 2$ web page creators competing for a single keyword query. Search engine users aim to obtain satisfactory answers via keyword search. We denote $[n] = \{1, 2, \dots, n\}$.


\xhdr{User browsing behavior} We adopt the \emph{position-based model (PBM)} to characterize user browsing behavior. In PBM, a user clicks a page if and only if they examine it and are attracted by its content. Let $p_j$ denote the probability that a user examines the page at rank $j \in [n]$, referred to as \emph{position bias} of rank $j$. \cite{joachims2005accurately} shows that examination depends strongly on rank and typically decreases with rank, so we assume $1 \geqslant p_1 > p_2 > \cdots > p_n > 0$. Let $x_i \in [0, 1]$ be the probability that a user is attracted to creator $i$'s page conditional on examination. Creators exert effort to increase the attractiveness, so we treat $x_i$ as the strategy (effort level) of creator $i$. Hence, if creator $i$'s page is ranked at $j$, its click probability is $x_i p_j$.


\xhdr{Web page creation} Given the position bias vector $\bm{p} = (p_1, p_2, \ldots, p_n)$, each creator $i \in [n]$ chooses an effort level $x_i \in [0, 1]$ to create a web page. A vector $\bm{x} = (x_1, x_2, \ldots, x_n)$ consisting of deterministic effort choices is referred to as a pure strategy profile, $\bm{x}_{-i} = (x_1, \ldots, x_{i - 1}, x_{i + 1}, \ldots, x_n)$ denotes the pure strategy profile excluding creator $i$. 

Naturally, exerting a higher effort level incurs a higher cost. Formally, creator $i$ incurs cost $g_i(x) = \gamma_i g(x)$ when exerting effort $x$, where $g: [0,1] \to \mathbb{R}_{\geqslant 0}$ is strictly increasing, differentiable, and strictly convex with $g(0) = 0$, and $\gamma_i > 0$ is a cost parameter that reflects creator $i$'s capability in web page creation. Here, $g$ captures the common cost structure shared by all creators, while $\gamma_i$ accounts for individual heterogeneity. For analytical convenience, we sometimes assume that $g(x) = x^{\beta}$ for a constant $\beta > 1$, which is commonly adopted in the literature \cite{yao2024human,golrezaei2025contest}.


We assume each click generates unit revenue without loss of generality \footnote{If a click brings $r_i$ units of revenue, we can rescale the utility by a factor $1 / r_i$, which does not alter the strategic behavior of creator $i$.}. If creator $i$ chooses $x_i$ and her web page is displayed at rank $j$, her expected revenue is $x_i p_j$. The expected utility is revenue minus cost:
\[ u_i(x_i, \bm{x}_{-i}) = x_i p_j - \gamma_i g(x_i). \]

The search engine ranks pages in descending order of effort levels \cite{robertson1977probability}. Ties are broken uniformly at random. Let $x_{(1)} \geqslant x_{(2)} \geqslant \cdots \geqslant x_{(n)}$ denote the sorted effort levels, where $x_{(j)}$ is the $j$-th highest effort. We assume that exerting the maximum effort $1$ is prohibitively costly, i.e., $u_i(1, \bm{x}_{-i}) < 0$ for all $i \in [n]$.




In addition to pure strategies, each creator may choose a mixed strategy, which is a probability distribution over the effort levels in $[0, 1]$. We denote by $F_i$ the cumulative distribution function (CDF) of creator $i$’s mixed strategy, and by $f_i$ the corresponding probability density function (PDF) if it exists. Let $\bm{F} = (F_1, \dots, F_n)$ be the mixed strategy profile and $\bm{F}_{-i}$ excludes creator $i$. Then we can define the mixed Nash equilibrium (MNE) of the game \cite{nash1950equilibrium}.

\begin{definition}[mixed Nash equilibrium]
    A mixed strategy profile $\bm{F}^* = (F_1^*, \ldots, F_n^*)$ is a \emph{mixed Nash equilibrium} if for every creator $i \in [n]$ and any CDF $F_i$,
    \[ \mathbb{E}_{x \sim \bm{F}^*} [u_i(x_i, \bm{x}_{-i})] \geqslant \mathbb{E}_{x_i \sim F_i, \bm{x}_{-i} \sim \bm{F}_{-i}^*} [u_i(x_i, \bm{x}_{-i})]. \]
\end{definition}

If $F_i^* = F^*$ for any $i \in [n]$, we say MNE is \emph{symmetric}.

\xhdr{Search engine's mechanisms} Deploying an AI Overview may reduce creator effort, lowering user welfare, and, through user attrition, ultimately harm search engine profit. We study two mechanism formats to address this: \emph{citation mechanism design} and \emph{compensation mechanism design}.

In the citation mechanism, the search engine cites in its AI Overview the organic search results as references. Suppose the AI Overview cites $m$ pages. Let $p_1^C, \ldots, p_n^C$ be the position biases of the $n$ organic results when they are not cited ($C$ corresponds to Type C in Table~\ref{tab:position-bias-no-ai-overview}), and let $q_1, \ldots, q_m$ be the position biases of the $m$ positions (or slots) within the AI Overview. Slot $j \in [m]$ cites the $i$-th organic result with probability $r_{ji} \in [0, 1]$, where $\sum_{i = 1}^n r_{ji} = 1$. Let $\bm{p} = (p_1, \ldots, p_n)$ as the position bias vector  for organic results after applying the citation mechanism, then for $i \in [n]$, we have $p_i = p_i^C + \sum_{j = 1}^m r_{ji} q_j$. We require the citation mechanism to maintain the monotonicity $p_1 \geqslant \cdots \geqslant p_n$. Example~\ref{ex:position-bias-monotone} provides justification for this requirement. Ignoring ties, the utility of the creator at rank $i$ is $x_{(i)} p_i - g_{(i)}(x_{(i)})$. Denote the feasible set of $\bm{p}$ as $\mathcal{P}$. Indeed, determining this set requires specifying $mn$ parameters, which must satisfy not only basic probability axioms but also the monotonicity condition of $\bm{p}$. Consequently, the structure $\mathcal{P}$ is highly complex.

In the compensation mechanism, the search engine rewards creators according to the attractiveness of their pages (equivalently, their exerted effort) \cite{tiktok2025creator}. Essentially, the compensation redesigns the position bias. Let $c_i$ be the compensation per effort level for rank $i$, with $c_1 \geqslant \cdots \geqslant c_n \geqslant 0$, and let $\bm{c} = (c_1, \dots, c_n)$. Ignoring ties, the creator at rank $i$ has utility $x_{(i)}(p_i + c_i) - g_{(i)}(x_{(i)})$.


\xhdr{Search engine's objective} The search engine aims to maximize its profit, defined as the revenue generated from user traffic minus the total compensation paid to creators. Since user traffic is positively correlated with user welfare, we model revenue as proportional to user welfare. Welfare has two parts: 
\begin{enumerate}
    \item \emph{Web pages}: users examine rank $i$ with probability $p_i$ and obtain utility proportional to quality $x_{(i)}$, giving welfare $\sum_{i = 1}^n \mathbb{E}[x_{(i)}]p_i$, where the expectation is taken over the mixed strategies of effort levels;
    \item \emph{AI Overview}: let $p_0$ be the position bias of the AI Overview, then users examine it with probability $p_0$ and obtain utility proportional to its quality, which depends on average expected effort of human creators. Let $\mu = \frac{1}{n} \sum_{i = 1}^n \mathbb{E}[x_i]$. The AI Overview quality is $h(\mu)$, where $h: [0, 1] \to [0, 1]$ is continuously differentiable, strictly increasing, and concave. The concavity assumption aligns with the idea of scaling laws~\cite{kaplan2020scaling}.
\end{enumerate}

Note that the profit should subtract the total compensation cost $\sum_{i = 1}^n \mathbb{E}[x_{(i)}] c_i$. Then the citation and compensation mechanism design problem can be formalized as the following optimization problem (we denote the objective function hereafter as $W(\bm{p}, \bm{c})$): 
\begin{equation} \label{eq:mechanism-design}
    \begin{aligned}
        \max_{\bm{p} \in \mathcal{P}, \bm{c}} \quad & \alpha\left[\sum_{i=1}^n \mathbb{E}[x_{(i)}] p_i +  h \left(\mu\right) p_0\right] - \sum_{i=1}^n \mathbb{E}[x_{(i)}] c_i \\
        \text{s.t.} \quad & p_1 \geqslant p_2 \geqslant \dots \geqslant p_n > 0, \\
        & c_1 \geqslant c_2 \geqslant \dots \geqslant c_n \geqslant 0.
    \end{aligned}
\end{equation}
Here, $\alpha > 0$ is a constant that transforms user welfare into search engine revenue, capturing search engine profitability.

In the following sections, we first analyze Problem~\eqref{eq:mechanism-design} under the symmetric setting where all creators share the same cost parameter (Section~\ref{sec:symmetric}), then extend the analysis to the asymmetric setting with heterogeneous cost parameters, in particular focusing on the case of binary distinct cost types (Section~\ref{sec:asymmetric}). For notational clarity, we denote the game with symmetric costs by $\mathcal{G}^{(1)}$, the game with binary cost types by $\mathcal{G}^{(2)}$, the general game with $n$ cost types by $\mathcal{G}^{(n)}$.
\section{Mechanisms for Symmetric Creators} \label{sec:symmetric}

We start our theoretical discussion with a simple yet fundamental setting, i.e., competition under symmetric creator costs (game $\mathcal{G}^{(1)}$). Specifically, assume $\gamma_1 = \gamma_2 = \cdots = \gamma_n \equiv \gamma$. This scenario commonly arises in highly competitive domains, such as news content creation, where major news outlets often face similar or identical production costs.

\subsection{Equilibrium Properties} \label{subsec:sym-equilibrium}

We first characterize equilibrium properties. The main theorem establishes the uniqueness and symmetry of the equilibrium. Proofs are in Appendix~\ref{sec:ap-symmetric}.

\begin{theorem} \label{thm:sym-equ-property}
    Game $\mathcal{G}^{(1)}$ admits a unique symmetric mixed Nash equilibrium.
\end{theorem}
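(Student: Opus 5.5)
The plan is to construct the symmetric equilibrium explicitly via an indifference condition, and then show that any symmetric equilibrium must coincide with it.

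\textbf{Step 1: reduce the payoff to a function of $F$.} In a symmetric profile where every creator plays an atomless CDF $F$, a creator exerting effort $x$ beats each rival independently with probability $F(x)$. Her rank is then $j$ with probability $\binom{n-1}{j-1}F(x)^{n-j}(1-F(x))^{j-1}$. Her expected payoff is therefore
\[
U(x) = x\,\pi(F(x)) - \gamma g(x), \qquad \pi(t) = \sum_{j=1}^n p_j \binom{n-1}{j-1} t^{n-j}(1-t)^{j-1}.
\]
Here $\pi$ is a Bernstein polynomial. Because $p_1>\cdots>p_n$, it is strictly increasing on $[0,1]$, with $\pi(0)=p_n$ and $\pi(1)=p_1$.

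\textbf{Step 2: pin down the structure of any symmetric equilibrium.} I will prove the following three facts by standard deviation arguments.

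\emph{(a) No atoms.} Suppose $F$ had an atom at some $x_0$. Ties would then occur with positive probability. Moving the atom's mass to $x_0+\epsilon$ gains a discrete amount of $\mathbb{E}[p]\cdot x_0$, since ties would be won, while raising the cost only by $O(\epsilon)$. An atom at $0$ is handled separately: deviating to a small $\epsilon$ gives payoff about $\epsilon\pi(F(0)) - \gamma g(\epsilon) > 0$. This holds because $g'(0)$ is finite and the margin is positive for small $\epsilon$ whenever $g'(0)<p_n/\gamma$. I will need to check the case $g'(0)\geq p_n/\gamma$, where the lowest-ranked player may optimally sit at an interior pure effort.

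\emph{(b) No gaps.} If $F$ were constant on some interval $(a,b)$ inside its support, the mass just above $b$ could shift down to $a^+$ at the same rank and strictly lower cost, contradicting optimality.

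\emph{(c) Support is $[\underline{x},\bar x]$ with the right lower endpoint.} The lower endpoint is $\underline{x} = \arg\max_x\,(x p_n - \gamma g(x))$. This is unique by strict convexity of $g$, and it is the best reply when ranked last for sure.

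\textbf{Step 3: construct $F$ from indifference.} On the support, $U$ must be constant and equal to $u^* = \underline{x}p_n - \gamma g(\underline{x})$. This gives
\[
\pi(F(x)) = \frac{u^* + \gamma g(x)}{x},
\]
so $F(x) = \pi^{-1}\!\big((u^*+\gamma g(x))/x\big)$. The upper endpoint $\bar x$ solves $\bar x p_1 - \gamma g(\bar x) = u^*$ with $\bar x > \underline{x}$, and it is less than $1$ by the assumption that $u_i(1,\cdot)<0$.

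I must verify three things about this $F$.
\begin{enumerate}
\item \emph{$F$ is a valid CDF.} The right-hand side $(u^*+\gamma g(x))/x$ must be increasing on $[\underline{x},\bar x]$. Its derivative has the sign of $x\gamma g'(x) - \gamma g(x) - u^*$. At $\underline{x}$ this equals $\underline{x}p_n - \gamma g(\underline{x}) - u^*$ when $g'(\underline{x}) = p_n/\gamma$, which is zero in the interior case. The quantity $x g'(x) - g(x)$ is increasing by convexity, so the derivative is nonnegative throughout.
\item \emph{No profitable deviation outside the support.} Below $\underline{x}$, the payoff is at most $\max_x (x p_n - \gamma g(x)) = u^*$. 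Above $\bar x$, the payoff is $x p_1 - \gamma g(x)$, which is decreasing there by concavity once it has passed $u^*$ on the way down.
\item \emph{Existence follows.} Items 1 and 2 together establish that this $F$ is a symmetric equilibrium.
\end{enumerate}

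\textbf{Step 4: uniqueness.} Any symmetric MNE satisfies Step 2 and has a constant payoff on its support, so it must equal the constructed $F$. The main obstacle is the boundary case: determining $\underline{x}$ and the equilibrium payoff rigorously, in particular excluding an equilibrium in which the lowest effort carries positive payoff above $u^*$. The argument is that the infimum of the support is ranked last almost surely. Its payoff is therefore $\underline{x}p_n-\gamma g(\underline{x})$, and this must be maximal over all $x$ below the support, which forces $\underline{x}$ to be the argmax. Continuity of $U$, which follows from the absence of atoms, justifies evaluating the payoff at the endpoint.
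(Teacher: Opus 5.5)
Your argument works only inside the class of symmetric profiles: already in Step~1 every rival is assumed to play the same atomless $F$. So what you establish is existence and uniqueness \emph{among symmetric} equilibria. The paper's theorem is stronger. The discussion right after it says that every MNE of $\mathcal{G}^{(1)}$ must be symmetric, in contrast to earlier contest results that never excluded asymmetric equilibria. Most of its proof (Lemma~\ref{lem:symmetric-equilibrium}) is exactly that exclusion, which your proposal does not address. The missing tool is the pairwise identity
\[
W(\bm{F}_{-i}(x),\bm{p})-W(\bm{F}_{-j}(x),\bm{p})=\bigl(F_j(x)-F_i(x)\bigr)\,W\bigl(\bm{F}_{-\{i,j\}}(x),\overline{\bm{p}}-\underline{\bm{p}}\bigr),
\]
whose last factor is strictly positive. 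With equal costs, this identity does three things:
\begin{enumerate}
\item It shows that a creator with lower utility can move to the top of a higher-utility creator's support and earn at least that creator's utility, so all utilities coincide.
\item It shows that unequal suprema or infima give some creator a strictly profitable deviation.
\item It shows that the two indifference conditions on a common support force $F_i=F_j$.
\end{enumerate}
Atoms also need a different argument in asymmetric profiles. A creator's own atom creates no ties for her, so your tie-breaking argument in 2(a) only applies when rivals share the atom. The paper instead shows that an atom must sit at the maximizer of $xq_i-\gamma g(x)$; equal utilities push that point down to $\min_k\underline{x}_k$, where a rival gains by placing effort just above it. On the positive side, your Step~3 is a genuinely different and more elementary existence argument than the paper's appeal to Reny's theorem. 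Combined with a symmetry lemma, it would give the full theorem without Reny.

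Two steps inside your symmetric analysis also need repair.

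In 2(b), moving mass from $b$ down to $a$ ``at the same rank'' is not automatically profitable, because payoffs are multiplicative. Revenue $x\pi(F(b))$ drops together with the cost, which is precisely where this model departs from fixed-prize contests. A correct argument: on the gap the payoff $x\pi(F(a))-\gamma g(x)$ is strictly concave and equals $u^*$ at both endpoints, so it strictly exceeds $u^*$ in between. Equivalently, indifference at $a$ and $b$, together with the strict monotonicity of $(u^*+\gamma g(x))/x$ from your Step~3, gives $\pi(F(a))<\pi(F(b))$, contradicting $F(a)=F(b)$.

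In 2(c) and Step~4, optimality against deviations below the support only gives $\underline{x}\le x^\dagger:=\arg\max_x\,(xp_n-\gamma g(x))$. The reverse inequality needs the upward deviation to $x^\dagger$, which earns at least $x^\dagger p_n-\gamma g(x^\dagger)$ since $\pi\ge p_n$.

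Finally, your $g'(0)$ remark is inverted. When $\gamma g'(0)\ge p_n$ the maximizer is $0$, not interior. For $p_n<\gamma g'(0)<p_1$ the symmetric equilibrium genuinely has an atom at $0$, of mass $\pi^{-1}(\gamma g'(0))$. So ``no atoms'' requires $\gamma g'(0)$ to be small (e.g.\ $g(x)=x^\beta$), an assumption the paper also uses tacitly.
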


The main idea of the proof is as follows. We first show that MNE exists in Appendix~\ref{sec:ap-existence} based on the seminal results by \cite{reny1999existence}. Then we show that all creators attain the same equilibrium utility, and prove that the equilibrium CDF is continuous with no mass points, thus ruling out the existence of pure Nash equilibrium. Next, we prove that the support and the CDF are identical across all creators in any MNE, therefore establishing the symmetry. Finally, we prove the uniqueness follows from the indifference condition \eqref{eq:indifference-sym}.

Theorem~\ref{thm:sym-equ-property} is consistent with real-world creator competition in search engines. For example, major news outlets (e.g., CNN, BBC, The New York Times) typically have similar production costs. They distribute effort differentially across topics, so their ranks differ across queries, as predicted by the symmetric MNE.

Note that we prove in Theorem~\ref{thm:sym-equ-property} that MNE must be symmetric, whereas prior works on contests with rank-based rewards only establish the existence of symmetric equilibria without ruling out asymmetric ones \cite{jagadeesan2023supply,golrezaei2025contest}. In fact, multiple equilibria have been found in some seminal works \cite{barut1998symmetric,xiao2016asymmetric}. The key difference is that our rank rewards are strictly decreasing due to the position bias effect, while prior works may assign equal rewards to multiple ranks. Specifically, when the least reward is shared by multiple ranks, some creators may find it optimal to exert minimal effort to tie with others for the least reward, leading to the existence of asymmetric equilibria.

In addition to Theorem~\ref{thm:sym-equ-property}, we provide more equilibrium properties in Appendix~\ref{sec:ap-symmetric} and describe a numerical procedure to compute MNE in Appendix~\ref{subsubsec:binary-type-algorithm}.

\subsection{Profit-Maximizing Mechanisms} \label{subsec:md-sym}

In this section, we solve the mechanism design problem \eqref{eq:mechanism-design} for maximizing search engine's profit. Due to the complexity of joint optimization over both $\bm{p}$ and $\bm{c}$, and the intricate design space $\mathcal{P}$, we first optimize over $\bm{c}$ while treating $\bm{p}$ as given. The proofs in this subsection are in Appendix~\ref{sec:ap-sym-proofs}.

Throughout this subsection we assume $g(x) = x^{\beta}$ with $\beta \geqslant 2$, which not only guarantees the convexity of $g(x) / x$, a property essential for our analysis, but also reflects practical intuition: as $x$ increases, indicating a stronger user preference for the content, the creator’s marginal cost should rise at a sufficiently rapid rate. The following theorem characterizes the structure of optimal $\bm{c}$ for problem \eqref{eq:mechanism-design}. 



\begin{theorem} \label{thm:compensation-md-sym}
   In the symmetric setting, if $c_n \leqslant \alpha p_n$, then there exists an optimal solution to Problem~\eqref{eq:mechanism-design} such that $c_1 = c_2 = \cdots = c_{n - 1}$.
\end{theorem}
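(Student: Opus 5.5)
We would fix $\bm{p}$ and work with the effective prize vector $\bm{w} = \bm{p} + \bm{c}$, which is nonincreasing. By Theorem~\ref{thm:sym-equ-property}, the unique equilibrium is a symmetric atomless CDF $F$. We characterize it through the indifference condition: a creator choosing $x$ has rank $j$ with probability $\binom{n-1}{j-1}(1-F(x))^{j-1}F(x)^{n-j}$. Writing $\pi_j(t)$ for these Bernstein weights at $t = F(x)$, the condition reads
\[
x\sum_j w_j \pi_j(F(x)) - \gamma x^\beta = U^*.
\]
Because $F$ is atomless and the bottom prize $w_n > 0$, the lowest type earns $x w_n - \gamma x^\beta$. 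Optimizing this pins down the lower end of the support, $\underline{x} = (w_n/(\gamma\beta))^{1/(\beta-1)}$, and the value $U^* = (\beta-1)\gamma\underline{x}^\beta$, both determined by $w_n$ alone.

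Next we express the objective in quantile form. Set $t = F(x)$ and write $x(t)$ for the inverse, defined by
\[
\Phi(t) := \sum_j w_j\pi_j(t) = \frac{U^*}{x} + \gamma x^{\beta-1}.
\]
The function $\phi(x) = U^*/x + \gamma x^{\beta-1}$ is increasing on the support, so $x(t) = \phi^{-1}(\Phi(t))$. The expected order statistics are $\mathbb{E}[x_{(i)}] = n\int_0^1 x(t)\,\pi_i(t)\,dt$, with the appropriate Bernstein basis, and $\mu = \int_0^1 x(t)\,dt$.

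The objective is then
\[
W = \int_0^1 x(t)\,\bigl[\alpha \textstyle\sum_i p_i\, n\pi_i(t) - \sum_i c_i\, n\pi_i(t)\bigr]\,dt + \alpha p_0\, h(\mu).
\]
The key structural fact is that the hypothesis $\beta \ge 2$ makes $g(x)/x$ convex, and hence $\phi^{-1}$ concave. We plan to use this concavity as follows.

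The main step is an exchange argument. Take any feasible optimal $\bm{c}$ with $c_n$ fixed, and replace $(c_1,\dots,c_{n-1})$ by a constant $\bar c$. We choose $\bar c$ so that the total expected compensation is preserved, or alternatively so that the induced $\mu$ is preserved, and then show the new $W$ is at least the old one.

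Intuition: with $c_1 = \dots = c_{n-1}$ and $c_n$ unchanged, the incentive is concentrated on separating "not last" from "last." This is the classical result that a winner-take-all-except-bottom structure is optimal under convex costs, as in \cite{golrezaei2025contest}, adapted to multiplicative payoffs.

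Concretely, we treat the map from $\bm{c}$ to the payoff term as a functional of $\Phi$. Its linear part involves $\int x(t)\,c(t)\,dt$-type terms, and its concave part is $\phi^{-1}$. We would then apply a majorization / Jensen argument. Flattening $c_1,\dots,c_{n-1}$ replaces $\Phi$ by a mean-preserving contraction on the relevant quantile range while keeping $\Phi(0) = w_n$ fixed. The concavity of $\phi^{-1}$ together with that of $h$ then yields an increase in $\alpha$-weighted welfare relative to the cost paid.

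The condition $c_n \le \alpha p_n$ is where we expect to use the sign of marginal benefit. It ensures the net coefficient $\alpha p_i - c_i$ on the bottom rank is nonnegative, so raising effort uniformly never hurts. This lets us argue that any slack freed by flattening can be redirected without loss.

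The main obstacle is this exchange step: showing that equalizing the top $n-1$ compensations improves $W$ despite the nonlinear coupling between $\Phi$ and the payments $\sum_i c_i\,\mathbb{E}[x_{(i)}]$, which themselves depend on $\bm{c}$.

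Our first attempt would be to reparametrize by $\Phi$ directly and view $W$ as a functional of the effective-prize sequence. We would then show it is Schur-concave (or at least that a flattening direction is an ascent direction) in $(w_1,\dots,w_{n-1})$ given $w_n$, via a first-order condition computed with the Bernstein identities. If that fails, we would fall back on a KKT analysis of the $n-1$ ordering constraints, showing they all bind at an optimum.
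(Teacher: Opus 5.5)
Your quantile reformulation matches the paper's Lemma~\ref{lem:compensation-md-reform}. Write $\psi:=\phi^{-1}=K^{-1}$, so that $x(t)=J(t,\bm p+\bm c)$, and set $\Phi_p(t)=\sum_i p_i\pi_i(t)$, $\Phi_c(t)=\sum_i c_i\pi_i(t)$. The objective is then $n\int_0^1\psi(\Phi(t))\,[\alpha\Phi_p(t)-\Phi_c(t)]\,dt+\alpha p_0h\bigl(\int_0^1\psi(\Phi)\,dt\bigr)$. Concavity of $\psi$ together with concavity of $h$ is indeed what handles the $h$-term. However, the step you defer as ``the main obstacle'' is the whole theorem, and the tools you name do not reach it.

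The payoff integrand depends on $t$ through $\Phi_p(t)$. It is therefore not a functional of the law of $\Phi(U)$, so a mean-preserving-contraction/Jensen argument does not directly apply. What works is a pointwise argument. Because $\Phi(t)$ is linear in $\bm c_{-n}$, the Hessian of the integrand is rank one, $G_t''(\Phi(t))\,\pi(t)\pi(t)^{\top}$, where $G_t(z)=\psi(z)(A_t-z)$ and $A_t=(\alpha+1)\Phi_p(t)$. Evaluating $\phi',\phi''$ at $\psi(z)$,
\[ G_t''(z)=-\frac{\phi''}{(\phi')^3}\Bigl[A_t-z+\frac{2(\phi')^2}{\phi''}\Bigr]. \]
Concavity of $\psi$ alone makes this nonpositive only where the net coefficient $\alpha\Phi_p(t)-\Phi_c(t)$ is nonnegative. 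That can fail for $t>0$; for example, at $t=1$ it fails whenever $c_1>\alpha p_1$. Your hypothesis controls only $t=0$.

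The missing idea is the paper's Lemma~\ref{lem:md-sym-aux}: for $g=x^\beta$ with $\beta\ge 2$, the quantity $2(\phi')^2/\phi''-\phi$ is increasing in $x$. Combined with Lemma~\ref{lem:KJ-monotone}, this makes $\Phi(t)-2(\phi')^2/\phi''-(\alpha+1)\Phi_p(t)$ decreasing in $t$. At $t=0$ the slack term vanishes because $\phi'(\underline{x})=0$, leaving exactly $c_n-\alpha p_n$. So $c_n\le\alpha p_n$ is a curvature condition at the bottom quantile, propagated upward. It is not the first-order ``raising effort never hurts'' condition you describe.

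Two further misalignments:
\begin{itemize}
\item Flattening must be done in $\bm c_{-n}$ with $\bm p$ fixed. Equalizing $w_1,\dots,w_{n-1}$ would force $c_i$ to increase in $i$.
\item The comparison point is the arithmetic mean of $c_1,\dots,c_{n-1}$, not a $\bar c$ calibrated to preserve payments or $\mu$.
\end{itemize}

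Be aware that the last step also needs care, and your first-order/KKT fallback would run straight into it. The paper concludes by applying ``concave and symmetric implies Schur-concave'' to the sorted extension $W^{\text{sym}}(\bm c)=W(\bm c_{(\cdot)})$. But the Hessian bound only gives concavity on each ordered piece. Concavity across a tie $c_{(k)}=c_{(k+1)}$ additionally requires $\partial W/\partial c_k\le\partial W/\partial c_{k+1}$ there. That is exactly the Schur--Ostrowski ordering on the cone $c_1\ge\cdots\ge c_{n-1}$, and it does not follow from the Hessian computation.

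A rough check suggests it can fail. For $\beta=2$ and effort well above $\underline{x}$, one has $x\approx\Phi/\gamma$, so the payoff term is roughly $\frac{n}{\gamma}\int_0^1\Phi(\alpha\Phi_p-\Phi_c)\,dt$. This is maximized pointwise at $\Phi_c=\frac{\alpha-1}{2}\Phi_p$, i.e.\ $\bm c\approx\frac{\alpha-1}{2}\bm p$, which is not flat. A quick quadrature with $n=3$, $\bm p=(0.9,0.5,0.1)$, $\alpha=3$, $c_3=0$, $p_0=0$ gives $\partial W/\partial c_1>\partial W/\partial c_2$ at the best flat vector. So if you pursue the first-order route, you must verify the derivative ordering directly, and it may require hypotheses beyond those stated.
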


We remind readers that $\alpha > 0$ here is the amount of revenue the platform gains from one unit of user welfare. In Appendix~\ref{subsec:optimal-cn}, we demonstrate that the optimal value of $c_n$ is zero across a wide range of plausible parameter settings, thereby justifying the imposed condition $c_n \leqslant \alpha p_n$.

Theorem \ref{thm:compensation-md-sym} generalizes a key characterization result of \cite{golrezaei2025contest}, which studies contests in recommender systems. Specifically, if we set $u = 0$, $\gamma = 1$ and $g(x) = x^{\beta + 1}$, then this special case corresponds to the setting of \cite{golrezaei2025contest}.  Like theirs, our proof also first reformulates the problem via Lemma~\ref{lem:compensation-md-reform}, and then shows the Schur-concavity (Definition~\ref{def:schur-concavity}) of the objective function with respect to the first $n - 1$ components of $\bm{c}$, implying that equalizing $c_1, \dots, c_{n - 1}$ improves the objective. However, the extra challenge in our more general setting is the difficulty of establishing  the Schur-concavity since the presence of $u > 0$ makes the analysis of the function $(u + x^{\beta}) / x$ more delicate, without a closed-form expression of its inverse  function any more, which necessitates  more involved analysis. In contrast, establishing Schur-concavity in \cite{golrezaei2025contest} is relatively straightforward.

Indeed, Theorem~\ref{thm:compensation-md-sym} significantly narrows the design space of compensation mechanisms for search engines. Appendix~\ref{subsec:grid-search-details} describes how this theorem can be leveraged to solve the compensation mechanism design problem efficiently via a one-dimensional search on the value of $c^* := c_1 = c_2 = \cdots = c_{n - 1}$.

Now we consider the problem of jointly optimizing over both $\bm{p}$ and $\bm{c}$. Unfortunately, the intricate design space $\mathcal{P}$ introduces significant analytical challenges. We therefore propose a feasible citation mechanism and prove it strictly improves the objective.

Let $p_1^B, \ldots, p_n^B$ be the position biases of the $n$ organic results when an AI Overview is present but includes no citations ($B$ corresponds to Type B in Table~\ref{tab:position-bias-no-ai-overview}). Let $\Delta p_i^B$ be the increase in position bias when the $i$-th ranked page is cited and then ranked uniformly at random within the AI Overview. Here $\Delta p_i^B$ captures both the reduction in organic result position bias due to the introduction of citations in the AI Overview and the gain from being cited.

Experiments in Section~\ref{subsec:data-collection-position-bias} indicate that $\Delta p_i^B > 0$ for all $i \in [n]$ in realistic scenarios. This motivates the \textbf{Uniform-But-Last (UBL)} citation mechanism, which equalizes the expected position bias increase across the first $n - 1$ pages by controlling citation probabilities. Let $q_i$ be the probability that the $i$-th ranked page is cited, and assume that cited pages are ranked uniformly at random within the overview. UBL is defined by
\begin{equation} \label{eq:UBL}
    \begin{cases}
        q_1 \Delta p_1^B = q_2 \Delta p_2^B = \cdots = q_{n - 1} \Delta p_{n - 1}^B, \\
        \sum_{i = 1}^{n - 1} q_i = 1, q_1, q_2, \ldots, q_{n - 1} \in [0, 1], q_n = 0.
    \end{cases}
\end{equation}

This system clearly admits a unique solution, yielding $p \in \mathcal{P}$ that satisfies $p_1 \geqslant p_2 \geqslant \cdots \geqslant p_n > 0$. Recall $W(\bm{p}, \bm{c})$ denote the objective function in Problem \eqref{eq:mechanism-design}. The following theorem shows that UBL improves the objective value.


\begin{proposition} \label{prop:UBL-inc-sym} 
    Denote $\bm{p}^{\text{UBL}}$ as the position bias vector induced by UBL, and $\bm{p}^B$ as the position bias vector when there is no citation in the AI Overview. Then
    \[ \max_{\bm{c}} W(\bm{p}^{\text{UBL}}, \bm{c}) > \max_{\bm{c}} W(\bm{p}^B, \bm{c}). \]
\end{proposition}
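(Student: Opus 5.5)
Let $\bm{c}^B$ be an optimal compensation vector for $\bm{p}^B$; by Theorem~\ref{thm:compensation-md-sym} (under its condition $c_n \leqslant \alpha p_n$) we may take $c^B_1 = \cdots = c^B_{n-1} =: c^*$. The plan is to exhibit a single compensation vector $\tilde{\bm{c}}$ with $W(\bm{p}^{\text{UBL}}, \tilde{\bm{c}}) > W(\bm{p}^B, \bm{c}^B)$. The natural choice is to keep the effective prizes $p_i + c_i$ fixed, so the equilibrium of Theorem~\ref{thm:sym-equ-property} is unchanged, and to let citations replace part of the monetary compensation.

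\textbf{Step 1 (UBL shifts the first $n-1$ biases equally).} From \eqref{eq:UBL}, $p^{\text{UBL}}_i = p^B_i + \delta$ for $i \le n-1$ and $p^{\text{UBL}}_n = p^B_n$, where $\delta := q_i \Delta p_i^B > 0$ is the common value. It is positive because every $\Delta p_i^B > 0$ and $\sum_{i\le n-1} q_i = 1$ forces some $q_i > 0$.

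\textbf{Step 2 (substitution).} Set $\tilde c_i = c^* - \delta$ for $i\le n-1$ and $\tilde c_n = c^B_n$. If $c^* \geqslant \delta$, this is feasible and monotone as long as $c^*-\delta \geqslant c_n^B$. The effective prize vector $\bm{p}+\bm{c}$ is then identical to before, and since equilibrium depends only on effective prizes, $\mathbb{E}[x_{(i)}]$ and $\mu$ are unchanged. Hence
\[W(\bm{p}^{\text{UBL}},\tilde{\bm{c}}) - W(\bm{p}^B,\bm{c}^B) = (\alpha+1)\,\delta \sum_{i=1}^{n-1}\mathbb{E}[x_{(i)}] > 0,\]
because the equilibrium is atomless with positive support, so $\mathbb{E}[x_{(i)}]>0$.

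\textbf{Step 3 (small $c^*$).} If $c^* < \delta$, the substitution cannot fully offset the citation gain. Instead take $\tilde{\bm{c}} = (0,\dots,0,c_n^B)$ when monotonicity allows, or more simply compare at the level of prizes: the new effective prizes weakly dominate the old ones on the first $n-1$ ranks, and the compensation paid strictly decreases. Here I need a monotonicity lemma stating that raising $p_1,\dots,p_{n-1}$ uniformly increases each $\mathbb{E}[x_{(i)}]$ and $\mu$. I would derive it from the indifference condition \eqref{eq:indifference-sym}: the equilibrium CDF $F$ solves $x\sum_j \pi_j(F(x)) P_j - \gamma g(x) = u$, where $P_j$ are the effective prizes and $\pi_j$ are binomial rank probabilities. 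Raising $P_1,\dots,P_{n-1}$ by $\varepsilon$ while holding $P_n$ fixed leaves $u = $ the value at $x=0$ unchanged, namely $0\cdot P_n$, consistent with the lower endpoint, and makes $F$ stochastically larger. I expect this comparative-statics step to be the main obstacle: one must show the implicit solution $F$ shifts pointwise downward. I would attempt it by differentiating the indifference identity and using the fact that $\sum_j \pi_j(F)P_j$ is increasing in $F$ because prizes decrease in rank.

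The edge case $c_n^B > c^*-\delta$, which would break monotonicity, is handled the same way. Lower $c_n$ only as needed, or argue that the optimal $c_n^B$ is $0$ in the relevant regime (Appendix~\ref{subsec:optimal-cn}). With Step 2 or Step 3 in hand, $\max_{\bm c} W(\bm p^{\text{UBL}},\bm c) \geqslant W(\bm p^{\text{UBL}},\tilde{\bm c}) > \max_{\bm c}W(\bm p^B,\bm c)$.
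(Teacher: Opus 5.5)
Your proof is correct and has the same two-case skeleton as the paper's proof. The paper's cases $\bm r=\bm s$ and $\bm r=\bm c$, with $r_i=\min\{s_i,c_i\}$, are exactly your cases $c^*\geqslant\delta$ and $c^*<\delta$; what differs is how the two key inequalities are justified.

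\textbf{Case $c^*\geqslant\delta$.} The paper integrates the identity $\partial W/\partial p_i-\partial W/\partial c_i=(\alpha+1)n\int_0^1 b_i(y)J(y,\bm p+\bm c)\,\mathrm{d}y>0$, derived from the reformulated objective. You instead note that the effective prizes, and hence the equilibrium, are unchanged. You then read off the gain $(\alpha+1)\delta\sum_{i<n}\mathbb{E}[x_{(i)}]$ directly, which is the same quantity without differentiating under the integral.

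\textbf{Case $c^*<\delta$.} The paper passes through $W(\bm p^B+\bm c^B,\bm 0)$ and uses $\partial W(\bm p,\bm 0)/\partial p_i>0$. You compare $W(\bm p^{\mathrm{UBL}},\bm 0)$ with $W(\bm p^B,\bm c^B)$ in one step via stochastic dominance. This is more elementary and gives the slightly stronger fact that every $\mathbb{E}[x_{(i)}]$ and $\mu$ rise. Strictness here comes from the term $\alpha\delta\sum_{i<n}\mathbb{E}[x_{(i)}]$, not from the compensation saving, which vanishes if $c^*=0$.

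\textbf{Two corrections.}

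First, the comparative-statics lemma you flag as the main obstacle is immediate, but not for the reason you give. In this model $u$ is not the value at $x=0$. It equals $\max_x xP_n-\gamma g(x)>0$, attained at $\underline x>0$ (Proposition~\ref{prop:symmetric-equilibrium-properties}), which is also what your Step~2 relies on. It is still unchanged, because it depends only on $P_n$. So $K(x)=(u+\gamma g(x))/x$ is fixed and strictly increasing for $x\geqslant\underline x$. Raising $P_1,\dots,P_{n-1}$ by $\varepsilon$ increases $\sum_j b_j(y)P_j$ by $\varepsilon\bigl(1-(1-y)^{n-1}\bigr)$. Hence the quantile function $J(y,\bm P)=K^{-1}\bigl(\sum_j b_j(y)P_j\bigr)$ rises strictly for $y>0$. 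It follows that $\mathbb{E}[x_{(i)}]=n\int_0^1 b_i(y)J(y,\bm P)\,\mathrm{d}y$ and $\mu=\int_0^1 J(y,\bm P)\,\mathrm{d}y$ also rise.

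Second, ``lower $c_n$ only as needed'' is not a valid patch. Changing $c_n$ changes $u$ and $\underline x$, and hence the whole equilibrium, so the fixed-prize comparison of Step~2 is lost. Like the paper, which simply writes $s_n=c_n=0$, you should take $c^B_n=0$ as given.
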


The intuition behind this proposition is natural: while both citation and compensation mechanisms adjust position biases, citations strengthen incentives without direct compensation payments. Finally, Appendix~\ref{subsec:ap-exp-p} provides numerical evidence that UBL consistently performs close to the optimal mechanism under plausible parameter settings, offering search engines a simple yet effective practical solution.

\section{Mechanisms for Asymmetric Creators with Binary Distinct Types} \label{sec:asymmetric}

In this section, we consider the asymmetric case with binary creator types (game $\mathcal{G}^{(2)}$): high-ability (type $H$) creators with cost parameter $\gamma_H$ and low-ability (type $L$) creators with cost parameter $\gamma_L$, where $\gamma_H < \gamma_L$. Let $n_H$ and $n_L$ be the numbers of type-$H$ and type-$L$ creators with $n_H + n_L = n$. Without loss of generality, we label creators $1, \dots, n_H$ as type $H$ and creators $n_H + 1, \dots, n$ as type $L$.

We focus on binary types since it captures common settings where experts and non-experts coexist (e.g., specialized knowledge creation) and yield different cost structures, while allowing analysis that remains tractable with valuable insights. Extending to more cost types becomes prohibitively complex (see Appendix~\ref{subsec:ap-asym-three}).

\subsection{Equilibrium Properties} \label{subsec:asym-two}

Let $[\underline{x}_H, \overline{x}_H]$ and $[\underline{x}_L, \overline{x}_L]$ be the equilibrium supports of type-$H$ and type-$L$ creators. We call an equilibrium \emph{hybrid} if the supports of the two types overlap, and \emph{separated} otherwise. Appendix~\ref{subsec:ap-asym-general} establishes general properties under arbitrary cost structures (the game $\mathcal{G}^{(n)}$) and provides a high-level characterization for game $\mathcal{G}^{(2)}$: (i) creators of the same type use identical strategies; (ii) type-$H$ creators typically exert higher effort due to their lower cost; (iii) the support of type-$H$ creators is the continuous interval $[\underline{x}_H, \overline{x}_H]$. Figure~\ref{fig:2-type-support-main} illustrates the typical separated and hybrid support patterns, which align with the conclusions in Appendix~\ref{subsec:ap-asym-general}.

\begin{figure}[t]
    \centering
    \begin{subfigure}[b]{0.32\linewidth}
        \centering
        \includegraphics[width=\linewidth]{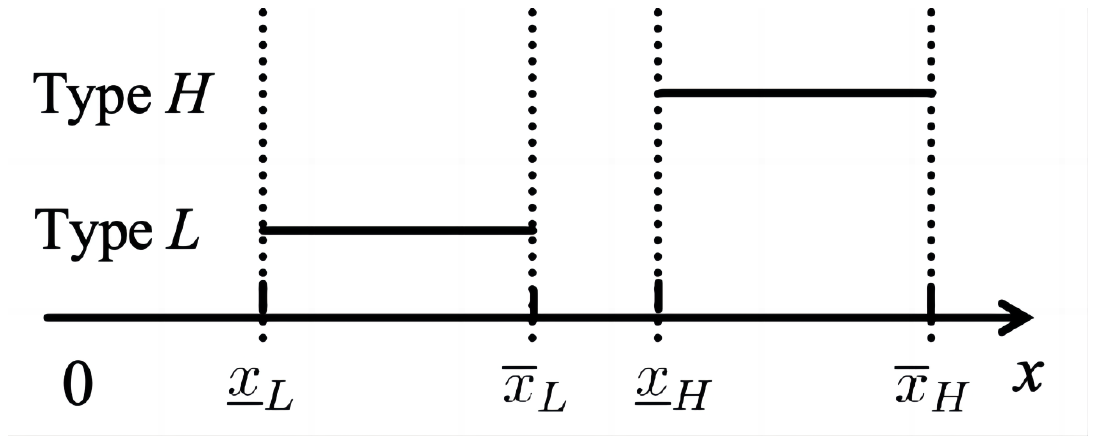}
        \caption{Separated equilibrium}
        \label{fig:2-type-separated-support-main}
    \end{subfigure}
    \begin{subfigure}[b]{0.32\linewidth}
        \centering
        \includegraphics[width=\linewidth]{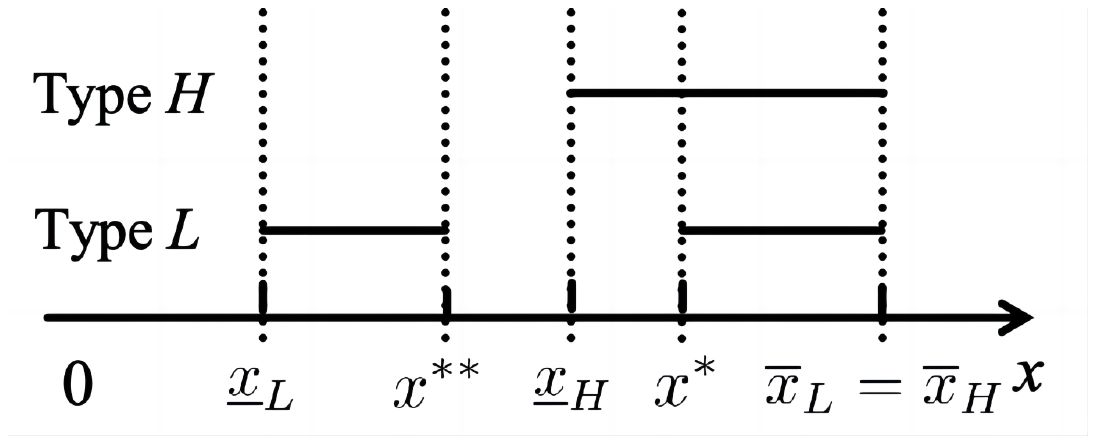}
        \caption{Hybrid equilibrium}
        \label{fig:2-type-hybrid-support-main}
    \end{subfigure}
    \caption{Support structure of equilibrium in the binary-type setting}
    \label{fig:2-type-support-main}
\end{figure}


We next summarize the main equilibrium properties of game $\mathcal{G}^{(2)}$ (Theorem~\ref{thm:asym-equ}). We first define the \emph{pseudo strategy} \cite{xiao2016asymmetric}.

\begin{definition}[pseudo strategy]
    A function $F_H^u: \mathbb{R} \to [0,1]$ is called the pseudo strategy of type-$H$ creator yielding utility $u$ if it satisfies the following equation:
    \begin{equation} \label{eq:pseudo-strategy}
        x \left(\sum_{i = 1}^{n_H} \binom{n_H - 1}{i - 1} [F_H^u(x)]^{n_H - i} [1 - F_H^u(x)]^{i - 1} p_i\right) - \gamma_H g(x) = u.
    \end{equation}
\end{definition}

Intuitively, the pseudo strategy $F_H^u$ is the strategy type-$H$ creators would use to attain utility $u$ (not necessarily the equilibrium utility) if type-$L$ creators were absent. Let $\left[\underline{x}_H^u, \overline{x}_H^u\right]$ be the support of $F_H^u$, $u_L^u$ be the utility type-$L$ creators obtain by the best response against $F_H^u$ on $\left[\underline{x}_H^u, \overline{x}_H^u\right]$, and $\overline{u}_H = \max_x x p_{n_H} - \gamma_H g(x)$ be the maximum utility type-$H$ creators can achieve. We can then state the main theorem.

\begin{theorem} \label{thm:asym-equ}
    The following properties hold in game $\mathcal{G}^{(2)}$:
    \begin{enumerate}
        \item The equilibrium utilities of type-$L$ and type-$H$ creators ($u_L$ and $u_H$) are unique.
        \item A separated equilibrium exists if and only if one of the following two conditions holds: (1) $u_L^{\overline{u}_H} \leqslant u_L$; (2) there exists a $u$ such that $u_L^u = u_L$ and $u_L^u$ is attained at $\underline{x}_H^u$; moreover, if a separated equilibrium exists, it is the unique equilibrium.
    \end{enumerate}
\end{theorem}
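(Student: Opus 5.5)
We would build on the general properties of $\mathcal{G}^{(n)}$ established in Appendix~\ref{subsec:ap-asym-general}, specialized to $\mathcal{G}^{(2)}$: same-type creators use identical strategies, equilibrium CDFs have no mass points except possibly at the bottom of the support, and the type-$H$ support is an interval $[\underline{x}_H,\overline{x}_H]$ lying weakly above the type-$L$ support.

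\textbf{Step 1: parameterize equilibria by $u_H$.} On $[\underline{x}_H,\overline{x}_H]$, every point of the type-$H$ support gives utility $u_H$. Consider first the separated case, where type-$L$ creators lie below $\underline{x}_H$. There, a type-$H$ creator at $x$ faces only other type-$H$ creators, ranks among the top $n_H$ positions, and beats all $n_L$ type-$L$ creators. Her indifference condition is therefore exactly \eqref{eq:pseudo-strategy}, so $F_H=F_H^{u_H}$. In a hybrid equilibrium, $F_H$ coincides with $F_H^{u_H}$ only above $\overline{x}_L$ and is modified on the overlap.

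\textbf{Step 2: monotonicity in $u$.} We would show that, as $u$ increases, $F_H^u$ shifts: its support endpoints move monotonically, since a larger $u$ requires a smaller $F$ at each $x$. It follows that $u_L^u$, the best-response utility of type $L$ against $F_H^u$, is continuous and strictly decreasing in $u$. This is because type-$H$ creators competing harder lowers type $L$'s rank payoff on the common region.

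\textbf{Step 3: uniqueness of utilities.} We would characterize $u_L$ directly from the bottom of the support. The lowest type-$L$ effort gets the last rank $p_n$, and a no-mass-point argument yields $u_L=\max_x\{xp_n-\gamma_L g(x)\}$. Any mass at the bottom would tie, and the strictly decreasing $\bm p$ rules this out as in Theorem~\ref{thm:sym-equ-property}. Hence $u_L$ is pinned down independently of $H$.

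For $u_H$, we would adapt the contradiction argument of \cite{xiao2016asymmetric}. Suppose two equilibria have $u_H<u_H'$. Compare the CDFs at the top of the support, where both equal $1$ at $\overline{x}_H$, and propagate downward using the indifference ODEs. A single-crossing argument then produces a point at which type $L$ would strictly gain a deviation in one of the equilibria. The multiplicative payoff $xp_j$ enters the indifference equation only through the factor $x$, which cancels consistently for a fixed $x$, so the comparison remains valid.

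\textbf{Step 4: existence of a separated equilibrium.} A separated equilibrium requires type $L$ to weakly prefer effort below $\underline{x}_H$ against $F_H^{u_H}$, that is, $u_L^{u_H}=u_L$ with the maximum attained at or below $\underline{x}_H^{u_H}$. This gives condition (2).

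Condition (1) covers the boundary case $u_H=\overline{u}_H$. Here $\underline{x}_H$ equals the type-$H$ monopoly optimum at rank $p_{n_H}$ and cannot be pushed further. If even then $u_L^{\overline{u}_H}\le u_L$, then type $L$ mixes below, possibly with $H$ obtaining the rank-$n_H$ optimum, and the separated profile is constructed explicitly.

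The converse follows from Step 1: any separated equilibrium has $F_H=F_H^{u_H}$ with either interior $u_H$, giving (2), or boundary $u_H$, giving (1). Uniqueness of the separated equilibrium, and the fact that no hybrid equilibrium then coexists, follows from uniqueness of $(u_L,u_H)$ in Step 3. Fixed utilities determine $F_H$ above $\overline{x}_L$ and the type-$L$ CDF via the indifference equations, and a hybrid equilibrium with the same $u_H$ would force type $L$ to strictly prefer $\underline{x}_H^{u_H}$, contradicting overlap.

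\textbf{Main obstacle.} We expect the main obstacle to be Step 3, the uniqueness of $u_H$ across potentially hybrid equilibria. The overlap region couples the two types' ODEs, and the single-crossing comparison must handle differing supports. We would first attempt the argument comparing $F_H$ and $F_H'$ from the top down.
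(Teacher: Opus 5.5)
\textbf{Main gap: uniqueness of $u_H$ is not proved.} Your Step~3 only announces a top-down single-crossing comparison of two equilibria, and as stated it has no common anchor. $\overline{x}_H$ is the largest root of $xp_1-\gamma_H g(x)=u_H$, so it differs between two equilibria with $u_H<u_H'$. Below that point you would have to compare coupled $H$/$L$ indifference systems on overlaps whose structure you do not control. You derive the uniqueness part of item~2 (no hybrid equilibrium coexisting with a separated one) from Step~3, so that part is unproved too.

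The missing idea is that your own Steps~1--2 already suffice, and this is how the paper argues:
\begin{itemize}
\item On $[\overline{x}_L,\overline{x}_H]$ the equilibrium $F_H$ coincides with $F_H^{u_H}$. A type-$L$ creator's payoff at $\overline{x}_L$ therefore equals $u_L^{u_H}(\overline{x}_L)$, which gives $u_L^{u_H}\geqslant u_L$.
\item The paper obtains the reverse inequality from $F_H\geqslant F_H^{u_H}$ on the overlap.
\item Hence every equilibrium with $u_H<\overline{u}_H$ satisfies $u_L^{u_H}=u_L$, and strict monotonicity of $u\mapsto u_L^u$ pins $u_H$ down.
\item The same monotonicity handles case~(1). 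In a hybrid equilibrium type $L$ has mass above $\underline{x}_H$, so a type-$H$ creator at $\underline{x}_H$ gets expected bias below $p_{n_H}$ and $u_H<\overline{u}_H$. Then $u_L^{u_H}<u_L^{\overline{u}_H}\leqslant u_L$, which is a contradiction.
\end{itemize}

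\textbf{Step~2 has the monotonicity backwards.} The bracket in \eqref{eq:pseudo-strategy} is strictly increasing in $F$ and equals $(u+\gamma_H g(x))/x$. So a larger $u$ forces a \emph{larger} $F_H^u(x)$: type $H$ is less aggressive, and $u_L^u$ is strictly \emph{increasing} in $u$. This matters for item~2. Condition~(1) is the right separation test precisely because $\overline{u}_H$ is the most favourable case for a type-$L$ deviation. Your ``if even then'' in Step~4 presupposes the increasing direction and so contradicts your Step~2.

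Your pointwise comparison also covers only the common part of the two supports. Where the supports differ you need a separate bound; the paper uses that $xp_1-\gamma_L g(x)$ is decreasing above the top of the support.

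\textbf{The converse in Step~4 needs a no-gap argument.} An interior $u_H$ yields condition~(2) only after you rule out a gap $\overline{x}_L<\underline{x}_H$. For $u<\overline{u}_H$ one has $\underline{x}_H^u>\arg\max_x\{xp_{n_H}-\gamma_H g(x)\}$, so type $H$ would gain by moving down into such a gap. This forces $\overline{x}_L=\underline{x}_H$, which in turn makes $\underline{x}_H^{u_H}$ the maximizer of $u_L^{u_H}(\cdot)$.
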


Proofs are in Appendix~\ref{subsec:ap-asym-two}. The uniqueness of $u_L$ extends the symmetric case; then we show that $u_L^u$ is strictly increasing in $u$ and $u_L^{u_H} = u_L$, which implies the uniqueness of $u_H$ and enables computing $u_H$ via binary search. The separation conditions in the second part of the theorem formalize that separation occurs precisely when a type-$L$ creator cannot gain by deviating to the type-$H$ support due to the higher cost. Appendix~\ref{subsec:ap-asym-two} also refines the structure of separated and hybrid equilibria and provides an equilibrium computation algorithm. Our analysis generalizes techniques from \cite{xiao2016asymmetric} and identifies the separating equilibrium specific to our multiplicative payoff structure.

In practice, separated equilibria often arise in specialized domains with large cost gaps (e.g., advanced math queries where Wikipedia and Math Stack Exchange dominate). Hybrid equilibria are consistent with general search contexts (e.g., queries on daily life), where high-ability creators cover a broad range of topics while low-ability creators concentrate on a subset, yielding overlapping supports.

\subsection{Profit-Maximizing Mechanisms} \label{subsec:asym-mechanism}

We then study problem~\eqref{eq:mechanism-design} in the binary-type setting. We first consider the case where the equilibrium remains separated both before and after implementing the mechanism. In this case, the design reduces to applying the symmetric analysis (Section~\ref{subsec:md-sym}) to each type separately. Consistent with Section~\ref{subsec:md-sym}, we assume $g(x) = x^{\beta}$ with $\beta \geqslant 2$ throughout this subsection.

We begin with the compensation mechanism. Because the strategies of type-$H$ and type-$L$ do not overlap, the compensation design decouples across types. We obtain the following structural characterization. Proofs of this subsection are in Appendix~\ref{sec:ap-asym-proofs}.

\begin{proposition} \label{prop:asym-separated-compensation-md}
    In the binary-type setting, assume that the equilibrium is separated both before and after implementing the mechanism. If $c_n \leqslant \alpha p_n$ and $c_{n_H} \leqslant \alpha p_{n_H}$, then there exists an optimal solution to Problem~\eqref{eq:mechanism-design} such that $c_1 = \cdots = c_{n_H - 1}$ and $c_{n_H + 1} = \cdots = c_{n - 1}$.
\end{proposition}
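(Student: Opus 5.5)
The plan is to reduce the binary-type problem to two independent instances of the symmetric problem and then apply Theorem~\ref{thm:compensation-md-sym} to each.

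\textbf{Step 1: decoupling.} In a separated equilibrium, every type-$H$ creator's support lies strictly above every type-$L$ creator's support. Hence, almost surely, the $n_H$ type-$H$ creators occupy ranks $1,\dots,n_H$ and the type-$L$ creators occupy ranks $n_H+1,\dots,n$.

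Take a type-$H$ creator at effort $x$ in her support. Her expected payoff depends only on the other type-$H$ strategies and on the effective rewards $p_1+c_1,\dots,p_{n_H}+c_{n_H}$. So the type-$H$ strategies form the unique symmetric MNE (Theorem~\ref{thm:sym-equ-property}) of an $n_H$-creator symmetric game with prize vector $(p_i+c_i)_{i\le n_H}$ and cost $\gamma_H g$.

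The type-$L$ group plays a symmetric game with prizes $(p_i+c_i)_{i>n_H}$ and cost $\gamma_L g$. There is one subtlety: its support may start above zero, and the reservation option (deviating upward into the $H$ region) must be unprofitable. Separation is assumed to hold both before and after the change, so I will treat the relevant constraint as satisfied along the modification. The type-$L$ equilibrium is then the symmetric equilibrium of its subgame, as in the refined characterization in Appendix~\ref{subsec:ap-asym-two}.

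\textbf{Step 2: split the objective.} The objective $W$ separates into
\[ \sum_{i\le n_H}\mathbb{E}[x_{(i)}](\alpha p_i-c_i)+\sum_{i>n_H}\mathbb{E}[x_{(i)}](\alpha p_i-c_i)+\alpha p_0 h(\mu), \]
with $\mu=(n_H\mu_H+n_L\mu_L)/n$. The term $h(\mu)$ couples the two blocks, but only through the group means $\mu_H,\mu_L$.

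\textbf{Step 3: symmetrize each block.} Fix an optimal $\bm{c}$ and consider the type-$H$ block $(c_1,\dots,c_{n_H})$. The condition $c_{n_H}\le \alpha p_{n_H}$ is exactly the hypothesis of Theorem~\ref{thm:compensation-md-sym} for that block, with $n_H$ playing the role of $n$. The proof of that theorem (reformulation via Lemma~\ref{lem:compensation-md-reform} plus Schur-concavity in the first $n_H-1$ components) shows that replacing $c_1,\dots,c_{n_H-1}$ by their average weakly increases the block's welfare-minus-payment term. I need it to also weakly increase the group's mean effort contribution to $h(\mu)$. In the symmetric theorem this is handled because the $h$-term is part of the objective there too. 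Because $h$ is increasing and concave, the same Schur-concavity argument applies with $\mu$ depending on the block only through $\mu_H$. To handle this, I will verify that the reformulated objective remains Schur-concave when the other block's contribution to $h$ is held fixed as a constant shift. Concavity of $h$ composed with a Schur-concave argument that is affine or concave in the reformulated variables suffices.

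The same argument applies to the type-$L$ block using $c_n\le\alpha p_n$, equalizing $c_{n_H+1},\dots,c_{n-1}$.

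Monotonicity $c_1\ge\dots\ge c_n$ is preserved. Averaging the first $n_H-1$ entries keeps the block between $c_1$ and $c_{n_H}$, and likewise for the $L$ block, so the order across the two blocks is unchanged.

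\textbf{Main obstacle.} Two points need care. First, the equalized compensation must preserve separation; I would appeal to the standing assumption that the equilibrium remains separated after the mechanism. Second, the lower endpoint of the type-$L$ support might be nonzero, in which case the equilibrium is not exactly a standard symmetric game. I would check that Lemma~\ref{lem:compensation-md-reform}'s reformulation still applies given the utility level $u$ of the bottom type, which is where the $u>0$ generalization in Theorem~\ref{thm:compensation-md-sym} is precisely what is needed.
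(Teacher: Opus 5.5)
Your route is the paper's: block-wise indifference conditions with $K_H,K_L$, the reformulation of Lemma~\ref{lem:compensation-md-reform} into an $H$-integral, an $L$-integral and $h$ of a weighted sum of block means, then the Schur-concavity argument of Theorem~\ref{thm:compensation-md-sym} on each block, with $h$ handled as you describe. The step that fails is your Step~1 claim that the type-$H$ creators play the stand-alone symmetric equilibrium of the $n_H$-creator game, so that the blocks interact only through $h(\mu)$.

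That claim holds in Case~1 of Proposition~\ref{prop:asym-separated-equilibrium} ($\overline{x}_L<\underline{x}_H$). There $u_H=\max_x x(p_{n_H}+c_{n_H})-\gamma_H g(x)$ depends on $c_{n_H}$ alone. It fails in Case~2 ($\overline{x}_L=\underline{x}_H$). This case still counts as ``separated'' and occurs, e.g., in Example~\ref{ex:asym-separated-equilibrium} with $\gamma_L=2$. There $\underline{x}_H=\overline{x}_L$ is the largest root of $x(p_{n_H+1}+c_{n_H+1})-\gamma_L g(x)=u_L$, generally above the $H$ creators' unconstrained maximizer. Then $u_H=\underline{x}_H(p_{n_H}+c_{n_H})-\gamma_H g(\underline{x}_H)$. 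So $K_H$, and with it the whole $H$ block, depends on $c_{n_H+1}$ and $c_n$.

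This breaks the $L$-block equalization. Replacing $(c_{n_H+1},\dots,c_{n-1})$ by its average lowers the top entry $c_{n_H+1}$. That lowers $\overline{x}_L=\underline{x}_H$, raises $u_H$, and hence raises $F_H$ pointwise, so $H$ effort falls. With $c_i\le\alpha p_i$ on the $H$ block, both the $H$-block profit and $\mu_H$ drop at first order. In symmetric form this cross term is an increasing function of the largest $L$-entry, hence Schur-convex, so it works against the Schur-concave $L$-block term. Nothing in your sketch controls that balance. The paper's one-line appeal to the symmetric theorem passes over the same point.

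What your argument does establish:
\begin{itemize}
\item the full claim in Case~1;
\item in Case~2, the $H$-block equalization. Here $\underline{x}_H$ and $u_H$ do not depend on $c_1,\dots,c_{n_H-1}$. A nonzero $K_H'(\underline{x}_H)$ only helps the $y=0$ check, since the bracketed term there becomes $c_{n_H}-\alpha p_{n_H}-2K_H'(\underline{x}_H)^2/K_H''(\underline{x}_H)\le 0$.
\end{itemize}

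Relatedly, the nonstandard lower endpoint is $\underline{x}_H$, not $\underline{x}_L$. The latter equals $\arg\max_x x(p_n+c_n)-\gamma_L g(x)$ exactly as in the symmetric case, so your worry about it is misplaced. To cover Case~2 you need either an additional estimate showing the $L$-block gain dominates this cross effect, or a hypothesis restricting to $\overline{x}_L<\underline{x}_H$.
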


This follows directly from Theorem~\ref{thm:compensation-md-sym} and the separated equilibrium structure. Moreover, as discussed in Section~\ref{subsec:optimal-cn}, if we ignore the monotonicity constraint, the optimal compensation (under plausible parameters) satisfies $c_1 = \cdots = c_{n_H - 1} \geqslant c_{n_H} = 0$ and $c_{n_H + 1}  = \cdots = c_{n - 1} \geqslant c_n = 0$; enforcing monotonicity requires increasing $c_{n_H}$ to the smallest feasible value $c_{n_H} = c_{n_H + 1}$. Accordingly, we can parameterize any potential optimal compensation vector as
\begin{equation} \label{eq:asym-optimal-compensation-vector}
    \bm{c} = \bigl(\underbrace{c_H, \ldots, c_H}_{n_H - 1},
           \underbrace{c_L, \ldots, c_L}_{n_L}, 0\bigr).
\end{equation}

\begin{remark}
    This argument extends to any number of types: if equilibrium stays separated before and after the mechanism, an optimal compensation vector can be taken piecewise-constant by type.
\end{remark}

A natural question is whether $c_H > c_L$ can hold (otherwise the result collapses to the symmetric structure of Theorem~\ref{thm:compensation-md-sym}). Proposition~\ref{prop:asym-optimal-ch-cl} provides parameter regimes where $c_H > c_L$ holds, and experiments in Appendix~\ref{subsec:ap-exp-cHcL} show it occurs across a range of plausible settings.

Then we consider the citation mechanism. Motivated by \eqref{eq:asym-optimal-compensation-vector} and the UBL mechanism (Section~\ref{subsec:md-sym}), we propose the \textbf{Two-Part-But-Last (TPBL)} citation mechanism for any compensation vector of the form \eqref{eq:asym-optimal-compensation-vector}. Formally, given $\bm{c}$, TPBL is defined by ($\Delta p_i^B$ and $q_i$ are defined as equation~\eqref{eq:UBL}):
\[ \begin{cases}
    q_1 \Delta p_1^B = q_2 \Delta p_2^B = \cdots = q_{n_H - 1} \Delta p_{n_H - 1}^B, \\
    q_{n_H} \Delta p_{n_H}^B = q_{n_H + 1} \Delta p_{n_H + 1}^B = \cdots = q_{n - 1} \Delta p_{n - 1}^B, \\
    \sum_{i = 1}^{n - 1} q_i = 1, q_1, q_2, \ldots, q_{n - 1} \in [0, 1], q_n = 0, \\
    (q_1 \Delta p_1^B) / (q_{n_H} \Delta p_{n_H}^B) = c_H / c_L.
\end{cases} \]

This system admits a unique solution, yielding $\bm{p} \in \mathcal{P}$ that satisfies $p_1 \geqslant \cdots \geqslant p_n > 0$. The following proposition shows TPBL improves the objective when the equilibrium stays separated before and after the mechanism.

\begin{proposition} \label{prop:TPBL-inc-asym}
    Denote $\bm{p}^{\text{TPBL}}$ as the position bias vector induced by TPBL, and $\bm{p}^B$ as the position bias vector when there is no citation in the AI Overview. If the equilibrium is separated before and after the mechanism, then
    \[ \max_{\bm{c}} W(\bm{p}^{\text{TPBL}}, \bm{c}) > \max_{\bm{c}} W(\bm{p}^B, \bm{c}). \]
\end{proposition}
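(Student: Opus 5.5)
The plan is to compare the two maxima by transferring compensation into citations. Citations raise position biases for free, whereas compensation raises them at a direct cost to the platform.

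First, let $\bm{c}^B$ be an optimal compensation for $\bm{p}^B$. Under the separation hypothesis, Proposition~\ref{prop:asym-separated-compensation-md} and the discussion after it let us take $\bm{c}^B$ of the form \eqref{eq:asym-optimal-compensation-vector}, with parameters $(c_H,c_L)$. Creators respond only to the effective reward vector $\bm{p}+\bm{c}$, so the equilibrium is determined by $\bm{p}+\bm{c}$. TPBL is defined relative to this $\bm{c}^B$, and it adds to the first $n-1$ ranks an increment vector $\bm{\delta}=\bm{p}^{\text{TPBL}}-\bm{p}^B$. This increment equals $\delta_H$ on ranks $1,\dots,n_H-1$, equals $\delta_L$ on ranks $n_H,\dots,n-1$, and is $0$ at rank $n$, with $\delta_H/\delta_L=c_H/c_L$. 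Both increments are strictly positive because every $\Delta p_i^B>0$ and $\sum_{i\le n-1} q_i=1$.

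Second, choose $t\in(0,1]$ such that $t\,\delta_H\le c_H$ and $t\,\delta_L\le c_L$, which is possible by proportionality. We then want $\bm{c}'=\bm{c}^B-t\bm{\delta}$ to be a legitimate compensation. One might instead try to pair $\bm{p}^{\text{TPBL}}$ directly with $\bm{c}^B-\bm{\delta}$, but that vector may be negative. The fix is to interpolate: the mechanism uses only a fraction $t$ of the citation mass. I expect this to be the main obstacle, since the claimed inequality is stated for the full $\bm{p}^{\text{TPBL}}$. To handle it, I would argue that any additional citation mass beyond the fraction $t$ only weakly helps. The cleaner route is to note that when $c_H,c_L>0$, the full $\bm{\delta}$ can be absorbed whenever $c_H\ge\delta_H$. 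When it cannot, we set $\bm{c}'=0$ and use $\bm{p}^{\text{TPBL}}$ alone. That case would rely on monotonicity of $W$ in uniform-within-type increases of $\bm{p}$ at zero compensation, which is obtained by applying the symmetric-type monotonicity from the proof of Proposition~\ref{prop:UBL-inc-sym} to each type separately. I would take that monotonicity from the UBL argument: under a separated equilibrium, each type's subgame is a symmetric game of the type analysed in Section~\ref{sec:symmetric}.

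Third, in the main case $\bm{c}'=\bm{c}^B-\bm{\delta}\ge 0$, we have $\bm{p}^{\text{TPBL}}+\bm{c}'=\bm{p}^B+\bm{c}^B$. The equilibrium effort distributions are therefore identical, and so are $\mathbb{E}[x_{(i)}]$ and $\mu$, and separation is preserved by hypothesis. The revenue terms differ only through $\alpha\sum_i\mathbb{E}[x_{(i)}](p_i-p_i^B)=\alpha\sum_i\mathbb{E}[x_{(i)}]\delta_i$. The compensation payments fall by exactly $\sum_i\mathbb{E}[x_{(i)}]\delta_i$. Hence
\[ W(\bm{p}^{\text{TPBL}},\bm{c}')-W(\bm{p}^B,\bm{c}^B)=(1+\alpha)\sum_{i=1}^{n-1}\mathbb{E}[x_{(i)}]\delta_i>0. \]
Strictness comes from the fact that the equilibrium has no atom at $0$, which follows from the continuity properties in Appendix~\ref{subsec:ap-asym-two}, so $\mathbb{E}[x_{(i)}]>0$. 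We must also check that $\bm{c}'$ still satisfies $c'_1\ge\dots\ge c'_n\ge 0$. This holds because $c_H-\delta_H$ and $c_L-\delta_L$ keep the same ratio ordering when $c_H\ge c_L$. Finally, $\max_{\bm{c}}W(\bm{p}^{\text{TPBL}},\bm{c})\ge W(\bm{p}^{\text{TPBL}},\bm{c}')$, which completes the argument.
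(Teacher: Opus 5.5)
Your proposal is correct and follows the paper's proof. Your main case $\bm{c}^B-\bm{\delta}\geqslant 0$ is the paper's case $\bm{r}=\bm{s}$, and your fallback (take $t$ with $t\bm{\delta}=\bm{c}^B$, convert all compensation into citation, then use monotonicity of $W(\cdot,\bm{0})$ for the remaining citation mass, borrowed from the UBL argument as the paper also does) is its case $\bm{r}=\bm{c}$; the only difference is that you obtain the gain $(1+\alpha)\sum_i\mathbb{E}[x_{(i)}]\delta_i$ directly from the fact that fixing $\bm{p}+\bm{c}$ fixes the equilibrium, whereas the paper derives it infinitesimally from $\partial W/\partial p_i-\partial W/\partial c_i=(\alpha+1)n\int_0^1 b_i(y)J(y,\bm{p}+\bm{c})\,\textup{d}y$, so your version avoids differentiating under the integral.
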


Indeed, TPBL is not intrinsically tied to the ratio $c_H/c_L$: it assigns position bias in a piecewise-constant manner by type, and the ratio between the two segments can be further optimized. Experiments in Appendix~\ref{subsec:ap-exp-p} show that TPBL performs close to the optimal mechanism under plausible parameter settings.

If the equilibrium is hybrid either before or after implementing the mechanism, analysis becomes considerably more complex due to overlapping supports. This case lies between the symmetric case ($c_H = c_L$) and the separated case (large $c_H - c_L$). Theorem~\ref{thm:compensation-md-sym} and Proposition~\ref{prop:asym-separated-compensation-md} characterize these boundary cases, motivating the following conjecture for the hybrid regime.

\begin{conjecture} \label{conj:asym-hybrid-compensation-md}
    In the binary-type setting, suppose the equilibrium is hybrid either before or after implementing the mechanism. If $c_n \leqslant \alpha p_n$ and $c_{n_H} \leqslant \alpha p_{n_H}$, then there exists an optimal solution to Problem~\eqref{eq:mechanism-design} such that $c_1 = \cdots = c_{n_H - 1}$ and $c_{n_H + 1} = \cdots = c_{n - 1}$.
\end{conjecture}

We evaluate this conjecture experimentally in Section~\ref{subsec:ap-exp-c} under a variety of plausible parameters. Under the conjectured compensation structure, TPBL can be applied analogously in the hybrid case.

In summary, for the binary types we obtain a concise characterization of the compensation structure (piecewise-constant by type) and a practical citation mechanism (TPBL). Although extending to more than two types is analytically challenging, the binary-type results suggest a general design template: use piecewise-constant compensation by inferred types and construct a corresponding citation mechanism accordingly.
\section{Experiments} \label{sec:experiments}

In this section, we conducted a real-world user click experiment to quantify how AI Overviews and citations affect position biases, and then we analyze the profit impact of introducing AI Overviews based on the real click data.

\subsection{Data Collection and Position Bias Estimation} \label{subsec:data-collection-position-bias}

\begin{table*}[t]
    \centering
    \caption{Estimated position biases of pages excluded from the AI Overview}
    \begin{tabular}{ccccccccccc}
    \hline
    Type & page1 & page2 & page3 & page4 & page5 & page6 & page7 & page8 & page9 & page10 \\
    \hline
    A & 0.9911 & 0.9473 & 0.7173 & 0.6792 & 0.4170 & 0.2294 & 0.2116 & 0.1742 & 0.1485 & 0.1421 \\
    B & 0.7557 & 0.6502 & 0.3974 & 0.3661 & 0.2280 & 0.1572 & 0.1519 & 0.1430 & 0.1392 & 0.1158 \\
    C & 0.4801 & 0.4328 & 0.3307 & 0.3209 & 0.1605 & 0.1273 & 0.1203 & 0.0787 & 0.0762 & 0.0529 \\
    \hline
    \end{tabular}
    \label{tab:position-bias-no-ai-overview}
\end{table*}

\begin{table}[t]
    \centering
    \caption{Estimated position biases of AI Overview and pages included in the AI Overview}
    \begin{tabular}{ccccc}
    \hline
    AI Overview & page1 & page2 & page3 & page4 \\
    \hline
    0.9888 & 0.7289 & 0.4417 & 0.3221 & 0.1480 \\
    \hline
    \end{tabular}
    \label{tab:position-bias-ai-overview}
\end{table}

We built a custom platform simulating a search engine and recruited 50 participants to complete search tasks. \footnote{The code of our user click experiment platform is available at \href{https://anonymous.4open.science/r/search-ai-2404}{https://anonymous.4open.science/r/search-ai-2404}.} We designed 24 search queries (see Appendix~\ref{subsec:ap-search-queries}), which cover a range of topics (e.g., mathematics, history, health and travel) and reflected distinct search intents \cite{rowley2018organizing}: 12 were fact-based (e.g., “What is the capital city of Kenya?”) and 12 were open-ended (e.g., “How can I improve my sleep quality?”). Result pages and AI Overview answers were constructed from real Google results. The 24 queries were randomly and evenly assigned to three types of interfaces (see examples in Appendix~\ref{subsec:ap-search-interfaces}): (A) no AI Overview; (B) AI Overview without citations; and (C) AI Overview citing four relevant pages per query. Each participant completed all 24 tasks (8 per interface). Query order was randomized, while interface types were presented in a fixed sequence (A then B then C) to avoid confusion. Participants clicked the most suitable pages based on their own understanding and search habits; each task additionally required answering a question to encourage careful examination.

We logged dwell-time and click behavior, then estimated position biases for organic pages (Table~\ref{tab:position-bias-no-ai-overview}) and the AI Overview and its cited pages (Table~\ref{tab:position-bias-ai-overview}) following the methodology in \cite{chuklin2015click}. Details for computing position bias are in Appendix~\ref{subsec:ap-position-bias-calculation}.

We focus on aggregate trends of position bias here and report additional analyses in Appendix~\ref{subsec:ap-more-data-analysis}. Comparing lines 1-2 of Table~\ref{tab:position-bias-no-ai-overview} shows that introducing an AI Overview substantially reduces the position biases of organic pages. Table~\ref{tab:position-bias-ai-overview} further indicates that the AI Overview itself attains a very high position bias, comparable to the top-ranked page in Type A, suggesting it captures a large share of user attention. Line 3 of Table~\ref{tab:position-bias-no-ai-overview} shows that adding citations further lowers the position bias of organic pages. Summing position biases yields 4.6577 (Type A), 3.1045 (Type B), and 3.8211 (Type C, including cited pages), implying that citations shift attention toward cited pages and increase the overall probability that users examine a page. These findings also empirically validate the design of both the UBL mechanism (Section~\ref{subsec:md-sym}) and the TPBL mechanism (Section~\ref{subsec:asym-mechanism}). Two additional observations are noteworthy: (i) the fourth cited page already has a low position bias comparable to the lowest-ranked page without an AI Overview, suggesting that citing more than four pages may not further increase total position bias; and (ii) even with citations, total position bias remains below the level without an AI Overview, confirming that the AI Overview continues to divert attention away from result pages.

\subsection{Analysis of Search Engine Profit} \label{subsec:experiment-profit}

We now examine how introducing an AI Overview affects search engine profit in the short term (creator quality has not yet changed) and in the long term (creator quality has adjusted to the new equilibrium), and how our mechanisms affect long-term profit. We present the asymmetric case; the symmetric case is the special case $n_H = 10$ and is deferred to Appendix~\ref{subsec:ap-exp-profit}. Experimental parameters and practical interpretations are summarized in Appendix~\ref{subsec:ap-exp-para}.

\xhdr{Short-term effects without mechanism design} Without compensation, profit coincides with user welfare. This case therefore reflects the immediate welfare impact of AI Overviews, which is of independent interest. We sweep $\beta \in \{2, 3, 4, 5, 6, 7, 8\}$ and $h(x) \in \{x, \sqrt{x}, \sqrt[3]{x}, \sqrt[4]{x}, \sqrt[5]{x}\}$, and fix $\gamma_H = 1, \gamma_L = 2, n_H = 7$ since they have limited impact on the trend (see Appendix~\ref{subsec:ap-exp-profit}). We compare welfare without an AI Overview ($W_1$) to welfare with an AI Overview that cites all $10$ pages with equal probability and ranks cited pages uniformly at random within the overview ($W_2$), which serves as a realistic baseline, since AI Overviews typically include citations now. 

Moving from $W_1$ to $W_2$ reflects two opposing effects: the AI Overview reduces organic position biases: it hurts welfare when users are dissatisfied with the AI answer, as time spent on it reduces the utility from subsequent content, but it also directly benefits users satisfied with the AI answer. Intuitively, a more capable AI Overview enlarges the positive effect, raising the ratio $W_2 / W_1$. Figure~\ref{fig:short-asym} confirms that $W_2 / W_1$ rises as $h$ becomes more concave and as $\beta$ increases. The former directly increases the AI Overview utility, while the latter shifts $g$ downward on $[0, 1]$, raising equilibrium effort, increasing average quality $\mu$, and thus increasing $h(\mu)$, further elevating $W_2$.

\begin{figure}[t]
    \centering
    \begin{subfigure}[b]{0.32\linewidth}
        \centering
        \includegraphics[width=\linewidth]{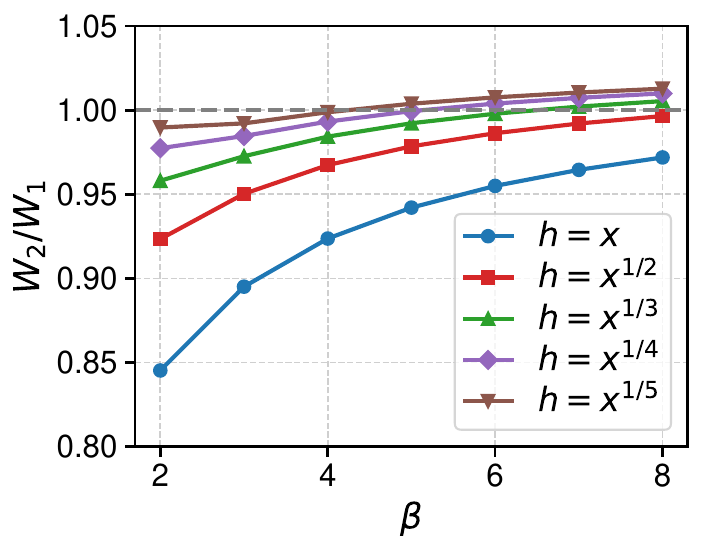}
        \caption{\centering Short term effect \\ y-axis: [0.8, 1.05]}
        \label{fig:short-asym}
    \end{subfigure}
    \begin{subfigure}[b]{0.32\linewidth}
        \centering
        \includegraphics[width=\linewidth]{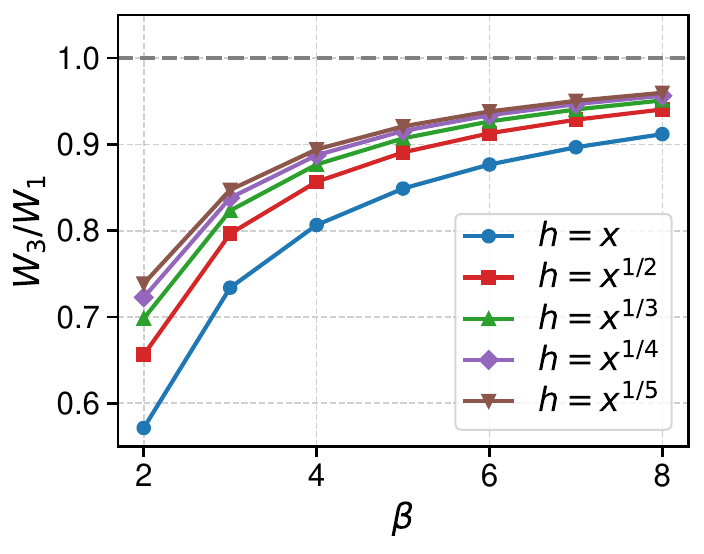}
        \caption{\centering Long term effect \\ y-axis: [0.55, 1.05]}
        \label{fig:long-asym}
    \end{subfigure}
    \caption{Short- and long-term effects of AI Overview on search engine profit}
    \label{fig:short-long-asym}
\end{figure}

\xhdr{Long-term effects without mechanism design} Keeping the same baseline citation scheme, creators adjust to the new equilibrium. Let the resulting welfare be $W_3$. Sweeping the same $\beta$ and $h$, Figure~\ref{fig:long-asym} shows that $W_3 / W_1$ increases as $h$ becomes more concave and as $\beta$ increases, consistent with the result in short-term case. However, $W_3 / W_1 < 1$ for all tested configurations, implying that introducing an AI Overview without incentive mechanisms reduces long-term profit.

\xhdr{Long-term effects with mechanism design} We then introduce the citation mechanism (TPBL) and the compensation mechanism, and sweep $\beta \in \{2, 3, 4, 5, 6, 7, 8\}$, $\alpha \in \{1, 3, 5, 7, 9\}$, and $h(x) \in \{x, \sqrt[3]{x}, \sqrt[5]{x}\}$. Let $W_4$ be profit under the mechanisms. Figure~\ref{fig:TPBL-fig-ratio} reports $W_4 / (\alpha W_1)$ across tested parameters (dividing by $\alpha$ since $W_1$ omits this factor). Comparing Figure~\ref{fig:long-asym} and Figure~\ref{fig:TPBL-fig-ratio}, profit improves for all parameter settings after introducing the mechanisms, and $W_4 / (\alpha W_1) > 1$ holds in most configurations. The overall trend is weakly sensitive to $h$ but strongly shaped by $\alpha$ and $\beta$: larger $\alpha$ increases $W_4 / (\alpha W_1)$ because higher profitability enhances the return on monetary compensation, thereby improving the efficacy of the compensation mechanism. This also explains why $W_4 / (\alpha W_1)$ falls below 1 when $\alpha$ is small. When $\beta$ is large, $W_4 / (\alpha W_1)$ stays close to $W_3 / W_1$ because creators already exert high effort, which diminishes the influence of mechanism design on creator behavior. In particular, the optimal compensation becomes very small, limiting the effectiveness of the compensation mechanism. Conversely, smaller $\beta$ makes incentives more effective, yielding larger profit gains.

\begin{figure}[t]
    \centering
    \begin{subfigure}[b]{0.3\linewidth}
        \centering
        \includegraphics[width=\linewidth]{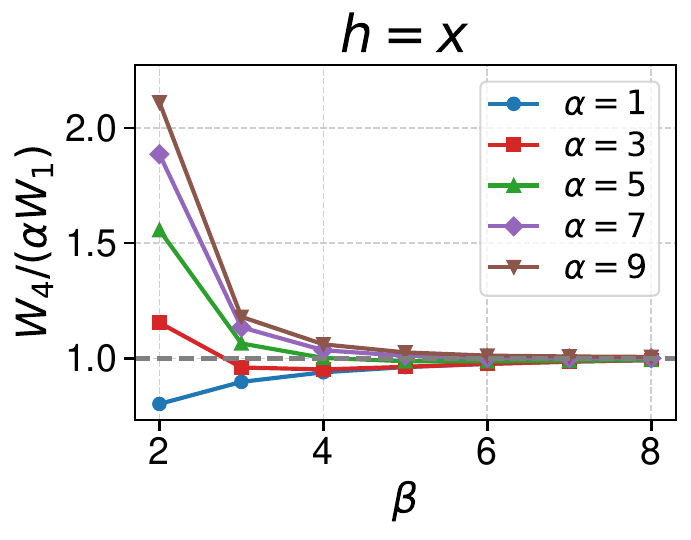}
    \end{subfigure}
    \begin{subfigure}[b]{0.3\linewidth}
        \centering
        \includegraphics[width=\linewidth]{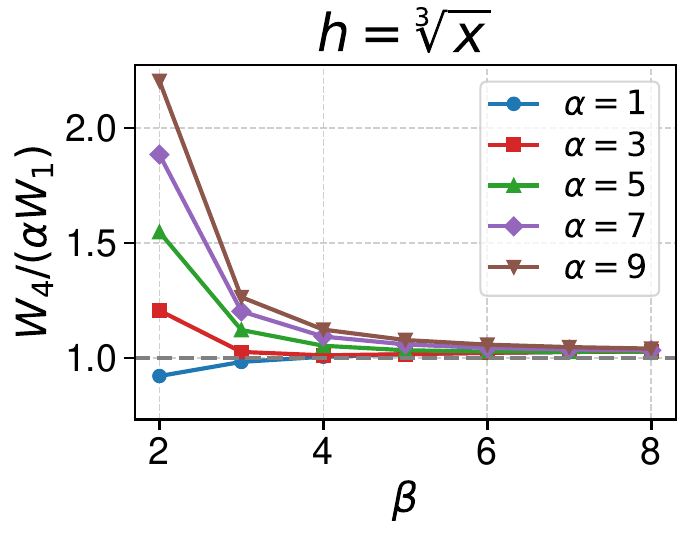}
    \end{subfigure}
    \begin{subfigure}[b]{0.3\linewidth}
        \centering
        \includegraphics[width=\linewidth]{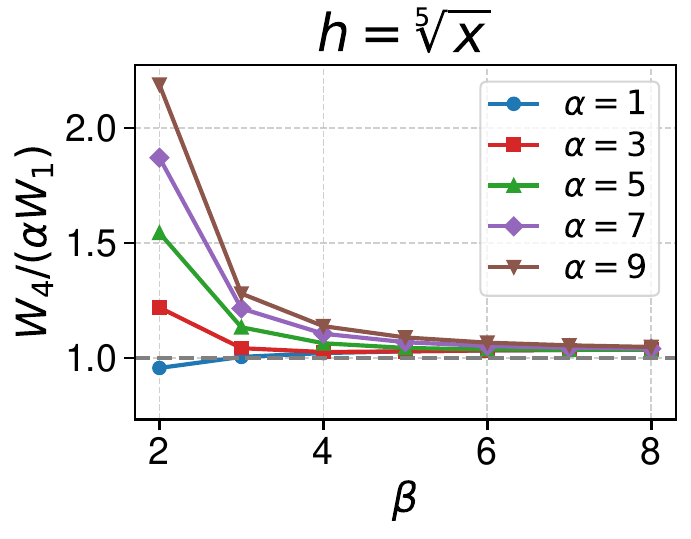}
    \end{subfigure}
    \caption{Long-term search engine profit under mechanism design}
    \label{fig:TPBL-fig-ratio}
\end{figure}

Overall, we can answer the main question of this study: deploying AI Overviews may improve the short-term user experience when AI Overviews are sufficiently capable, but reduce long-term search engine profit due to diminished creator incentives. However, the proposed citation and compensation mechanisms enable search engines to increase long-term profit after deploying AI Overviews, especially when profitability $\alpha$ is high and creators face higher creation cost (lower $\beta$).

Practically, search engines should prioritize strong incentives for high creation cost queries (e.g., specialized or academic topics): under high $\alpha$, deploying such mechanisms is particularly valuable; under low $\alpha$, introducing AI Overviews for these queries should be approached cautiously to avoid long-term profit loss. For low creation cost queries (e.g., lifestyle or entertainment), compensation has limited marginal value, making improvements to the intrinsic quality of AI Overviews and the citation mechanism more effective.
\section{Conclusion and Discussion} \label{sec:conclusion}

In this study, we develop a creator competition model under keyword search, characterize the mixed Nash equilibria in both symmetric and binary-type asymmetric settings, and design citation and compensation mechanisms with piecewise-constant structural insights and near-optimality guarantees for search engine profit. Evaluations on real user click data indicate that introducing AI Overviews without incentives reduces long-term search engine profit, whereas appropriately designed mechanisms can restore incentives and improve long-term profit across realistic regimes. Our results provide a principled, tractable blueprint for sustaining a healthy search ecosystem in the GenAI era.

Two directions for future work are promising. From a theoretical perspective, it is interesting to further characterize equilibria under more general cost structures and derive the optimal structure of compensation mechanisms beyond the boundary cases examined in this paper. For empirical aspects, it is urgent to study how the capability of AI Overviews affects position biases, enabling a more fine-grained mechanism design problem.

\onlineversion{
\bibliographystyle{plain}
}{
}
\bibliography{aioverview}

@misc{google_ai_overview,
  author = {Google},
  howpublished = {https://blog.google/products-and-platforms/products/search/generative-ai-google-search-may-2024/},
  title = {Generative AI in Search: Let Google do the searching for you},
  year = {2024}
}

@misc{bing_ai_overview,
  author = {Microsoft Bing},
  howpublished = {https://support.microsoft.com/en-us/topic/understanding-ai-overview-in-search-results-6b9497e1-bcda-42e4-879b-7b9710361651},
  title = {Understanding AI Overview in search results},
  year = {2025}
}

@article{li2025towards,
  title={Towards AI Search Paradigm},
  author={Li, Yuchen and Cai, Hengyi and Kong, Rui and Chen, Xinran and Chen, Jiamin and Yang, Jun and Zhang, Haojie and Li, Jiayi and Wu, Jiayi and Chen, Yiqun and others},
  journal={arXiv preprint arXiv:2506.17188},
  year={2025}
}

@misc{perplexity,
  author = {Perplexity},
  howpublished = {https://www.perplexity.ai/hub/blog/getting-started-with-perplexity},
  title = {Perplexity AI},
  year = {2024}
}

@misc{google_q1_2025_report,
  author = {Alphabet},
  howpublished = {http://s206.q4cdn.com/479360582/files/doc\_fina\\ncials/2025/q1/2025\\q1-alphabet-earnings-release.pdf},
  title = {Alphabet Q1 2025 Earnings Report},
  year = {2025}
}

@misc{chegg_q4_2024_report,
  author = {Chegg},
  howpublished = {https://investor.chegg.com/Press-Releases/press-release-details/2025/Chegg-Reports-2024-Fourth-Quarter-and-Full-Year-Financial-Results/default.aspx},
  title = {Chegg Reports 2024 Fourth Quarter and Full Year Financial Results},
  year = {2025}
}

@misc{gizmodo_rolling_stone_sue,
  author = {Gizmodo},
  howpublished = {https://gizmodo.com/rolling-stone-publisher-sues-google-over-ai-overview-summaries-2000658662},
  title = {Rolling Stone Publisher Sues Google Over AI Overview Summaries},
  year = {2025}
}

@inproceedings{craswell2008experimental,
  title={An experimental comparison of click position-bias models},
  author={Craswell, Nick and Zoeter, Onno and Taylor, Michael and Ramsey, Bill},
  booktitle={Proceedings of the 2008 international conference on web search and data mining},
  pages={87--94},
  year={2008}
}

@inproceedings{joachims2005accurately,
  title={Accurately interpreting clickthrough data as implicit feedback},
  author={Joachims, Thorsten and Granka, Laura and Pan, Bing and Hembrooke, Helene and Gay, Geri},
  booktitle={28th Annual International ACM SIGIR Conference on Research and Development in Information Retrieval, SIGIR 2005},
  pages={154--161},
  year={2005}
}

@book{chuklin2015click,
  title={Click models for web search},
  author={Chuklin, Aleksandr and Markov, Ilya and De Rijke, Maarten},
  year={2015},
  publisher={Springer Nature}
}

@article{robertson1977probability,
  title={The probability ranking principle in IR},
  author={Robertson, Stephen E},
  journal={Journal of documentation},
  volume={33},
  number={4},
  pages={294--304},
  year={1977},
  publisher={MCB UP Ltd}
}

@article{goyal2022news,
  title={News summarization and evaluation in the era of gpt-3},
  author={Goyal, Tanya and Li, Junyi Jessy and Durrett, Greg},
  journal={arXiv preprint arXiv:2209.12356},
  year={2022}
}

@article{ye2023enhancing,
  title={Enhancing conversational search: Large language model-aided informative query rewriting},
  author={Ye, Fanghua and Fang, Meng and Li, Shenghui and Yilmaz, Emine},
  journal={arXiv preprint arXiv:2310.09716},
  year={2023}
}

@inproceedings{li2023s2phere,
  title={S2phere: Semi-supervised pre-training for web search over heterogeneous learning to rank data},
  author={Li, Yuchen and Xiong, Haoyi and Kong, Linghe and Wang, Qingzhong and Wang, Shuaiqiang and Chen, Guihai and Yin, Dawei},
  booktitle={Proceedings of the 29th ACM SIGKDD Conference on Knowledge Discovery and Data Mining},
  pages={4437--4448},
  year={2023}
}

@article{xiong2024search,
  title={When search engine services meet large language models: visions and challenges},
  author={Xiong, Haoyi and Bian, Jiang and Li, Yuchen and Li, Xuhong and Du, Mengnan and Wang, Shuaiqiang and Yin, Dawei and Helal, Sumi},
  journal={IEEE Transactions on Services Computing},
  year={2024},
  publisher={IEEE}
}

@article{xu2025comprehensive,
  title={A Comprehensive Survey of Deep Research: Systems, Methodologies, and Applications},
  author={Xu, Renjun and Peng, Jingwen},
  journal={arXiv preprint arXiv:2506.12594},
  year={2025}
}

@article{xi2025survey,
  title={A survey of llm-based deep search agents: Paradigm, optimization, evaluation, and challenges},
  author={Xi, Yunjia and Lin, Jianghao and Xiao, Yongzhao and Zhou, Zheli and Shan, Rong and Gao, Te and Zhu, Jiachen and Liu, Weiwen and Yu, Yong and Zhang, Weinan},
  journal={arXiv preprint arXiv:2508.05668},
  year={2025}
}

@inproceedings{sharma2024generative,
  title={Generative echo chamber? effect of llm-powered search systems on diverse information seeking},
  author={Sharma, Nikhil and Liao, Q Vera and Xiao, Ziang},
  booktitle={Proceedings of the 2024 CHI Conference on Human Factors in Computing Systems},
  pages={1--17},
  year={2024}
}

@article{venkit2024search,
  title={Search engines in an ai era: The false promise of factual and verifiable source-cited responses},
  author={Venkit, Pranav Narayanan and Laban, Philippe and Zhou, Yilun and Mao, Yixin and Wu, Chien-Sheng},
  journal={arXiv preprint arXiv:2410.22349},
  year={2024}
}

@article{memon2024search,
  title={Search engines post-ChatGPT: How generative artificial intelligence could make search less reliable},
  author={Memon, Shahan Ali and West, Jevin D},
  journal={arXiv preprint arXiv:2402.11707},
  year={2024}
}

@misc{tiktok2025creator,
  author = {TikTok},
  howpublished = {https://www.tiktok.com/creator-academy/en/article/RPM-understanding-the-four-key-factors},
  title = {Finding Creator Rewards Program success},
  year = {2025}
}

@article{kaplan2020scaling,
  title={Scaling laws for neural language models},
  author={Kaplan, Jared and McCandlish, Sam and Henighan, Tom and Brown, Tom B and Chess, Benjamin and Child, Rewon and Gray, Scott and Radford, Alec and Wu, Jeffrey and Amodei, Dario},
  journal={arXiv preprint arXiv:2001.08361},
  year={2020}
}

@article{nash1950equilibrium,
  title={Equilibrium points in n-person games},
  author={Nash Jr, John F},
  journal={Proceedings of the national academy of sciences},
  volume={36},
  number={1},
  pages={48--49},
  year={1950},
  publisher={national academy of sciences}
}

@article{reny1999existence,
  title={On the existence of pure and mixed strategy Nash equilibria in discontinuous games},
  author={Reny, Philip J},
  journal={Econometrica},
  volume={67},
  number={5},
  pages={1029--1056},
  year={1999},
  publisher={Wiley Online Library}
}

@article{ben2018game,
  title={A game-theoretic approach to recommendation systems with strategic content providers},
  author={Ben-Porat, Omer and Tennenholtz, Moshe},
  journal={Advances in Neural Information Processing Systems},
  volume={31},
  year={2018}
}

@article{ben2020content,
  title={Content provider dynamics and coordination in recommendation ecosystems},
  author={Ben-Porat, Omer and Rosenberg, Itay and Tennenholtz, Moshe},
  journal={Advances in Neural Information Processing Systems},
  volume={33},
  pages={18931--18941},
  year={2020}
}

@inproceedings{hron2023modeling,
  title={Modeling content creator incentives on algorithm-curated platforms},
  author={Hron, Jiri and Krauth, Karl and Jordan, Michael and Kilbertus, Niki and Dean, Sarah},
  booktitle={The Eleventh International Conference on Learning Representations},
  year={2023}
}

@inproceedings{yao2023bad,
  title={How Bad is Top-$ K $ Recommendation under Competing Content Creators?},
  author={Yao, Fan and Li, Chuanhao and Nekipelov, Denis and Wang, Hongning and Xu, Haifeng},
  booktitle={International Conference on Machine Learning},
  pages={39674--39701},
  year={2023},
  organization={PMLR}
}

@article{yao2023rethinking,
  title={Rethinking incentives in recommender systems: are monotone rewards always beneficial?},
  author={Yao, Fan and Li, Chuanhao and Sankararaman, Karthik Abinav and Liao, Yiming and Zhu, Yan and Wang, Qifan and Wang, Hongning and Xu, Haifeng},
  journal={Advances in Neural Information Processing Systems},
  volume={36},
  pages={74582--74601},
  year={2023}
}

@article{jagadeesan2023supply,
  title={Supply-side equilibria in recommender systems},
  author={Jagadeesan, Meena and Garg, Nikhil and Steinhardt, Jacob},
  journal={Advances in Neural Information Processing Systems},
  volume={36},
  pages={14597--14608},
  year={2023}
}

@inproceedings{yao2024human,
  title={Human vs. Generative AI in Content Creation Competition: Symbiosis or Conflict?},
  author={Yao, Fan and Li, Chuanhao and Nekipelov, Denis and Wang, Hongning and Xu, Haifeng},
  booktitle={International Conference on Machine Learning},
  pages={56885--56913},
  year={2024},
  organization={PMLR}
}

@article{esmaeili2024strategize,
  title={How to Strategize Human Content Creation in the Era of GenAI?},
  author={Esmaeili, Seyed A and Lim, Kevin and Bhawalkar, Kshipra and Feng, Zhe and Wang, Di and Xu, Haifeng},
  journal={arXiv preprint arXiv:2406.05187},
  year={2024}
}

@inproceedings{taitler2025braess,
  title={Braess’s paradox of generative ai},
  author={Taitler, Boaz and Ben-Porat, Omer},
  booktitle={Proceedings of the AAAI Conference on Artificial Intelligence},
  volume={39},
  pages={14139--14147},
  year={2025}
}

@inproceedings{taitler2025selective,
  title={Selective Response Strategies for GenAI},
  author={Taitler, Boaz and Ben-Porat, Omer},
  booktitle={Forty-second International Conference on Machine Learning},
  year={2025}
}

@article{keinan2025strategic,
  title={Strategic Content Creation in the Age of GenAI: To Share or Not to Share?},
  author={Keinan, Gur and Ben-Porat, Omer},
  journal={arXiv preprint arXiv:2505.16358},
  year={2025}
}

@article{taitler2025data,
  title={Data Sharing with a Generative AI Competitor},
  author={Taitler, Boaz and Madmon, Omer and Tennenholtz, Moshe and Ben-Porat, Omer},
  journal={arXiv preprint arXiv:2505.12386},
  year={2025}
}

@article{raghavan2024competition,
  title={Competition and diversity in generative ai},
  author={Raghavan, Manish},
  journal={arXiv preprint arXiv:2412.08610},
  year={2024}
}

@article{madmon2025search,
  title={The search for stability: Learning dynamics of strategic publishers with initial documents},
  author={Madmon, Omer and Pipano, Idan and Reinman, Itamar and Tennenholtz, Moshe},
  journal={Journal of Artificial Intelligence Research},
  volume={83},
  year={2025}
}

@book{krishna2009auction,
  title={Auction theory},
  author={Krishna, Vijay},
  year={2009},
  publisher={Academic press}
}

@article{tullock1980efficient,
  title={Efficient rent seeking},
  author={Tullock, Gordon and others},
  journal={Toward a theory of the rent-seeking society},
  volume={97},
  pages={112},
  year={1980},
  publisher={Springer}
}

@book{vojnovic2015contest,
  title={Contest theory: Incentive mechanisms and ranking methods},
  author={Vojnovi{\'c}, Milan},
  year={2015},
  publisher={Cambridge University Press}
}

@article{sisak2009multiple,
  title={Multiple-prize contests--the optimal allocation of prizes},
  author={Sisak, Dana},
  journal={Journal of Economic Surveys},
  volume={23},
  number={1},
  pages={82--114},
  year={2009},
  publisher={Wiley Online Library}
}

@article{baye1996all,
  title={The all-pay auction with complete information},
  author={Baye, Michael R and Kovenock, Dan and De Vries, Casper G},
  journal={Economic Theory},
  volume={8},
  number={2},
  pages={291--305},
  year={1996},
  publisher={Springer}
}

@article{barut1998symmetric,
  title={The symmetric multiple prize all-pay auction with complete information},
  author={Barut, Yasar and Kovenock, Dan},
  journal={European Journal of Political Economy},
  volume={14},
  number={4},
  pages={627--644},
  year={1998},
  publisher={Elsevier}
}

@article{bulow2006matching,
  title={Matching and price competition},
  author={Bulow, Jeremy and Levin, Jonathan},
  journal={American Economic Review},
  volume={96},
  number={3},
  pages={652--668},
  year={2006},
  publisher={American Economic Association}
}

@article{siegel2009all,
  title={All-pay contests},
  author={Siegel, Ron},
  journal={Econometrica},
  volume={77},
  number={1},
  pages={71--92},
  year={2009},
  publisher={Wiley Online Library}
}

@article{xiao2016asymmetric,
  title={Asymmetric all-pay contests with heterogeneous prizes},
  author={Xiao, Jun},
  journal={Journal of Economic Theory},
  volume={163},
  pages={178--221},
  year={2016},
  publisher={Elsevier}
}

@article{lazear1981rank,
  title={Rank-order tournaments as optimum labor contracts},
  author={Lazear, Edward P and Rosen, Sherwin},
  journal={Journal of political Economy},
  volume={89},
  number={5},
  pages={841--864},
  year={1981},
  publisher={The University of Chicago Press}
}

@article{glazer1988optimal,
  title={Optimal contests},
  author={Glazer, Amihai and Hassin, Refael},
  journal={Economic Inquiry},
  volume={26},
  number={1},
  pages={133--143},
  year={1988},
  publisher={Wiley Online Library}
}

@article{moldovanu2001optimal,
  title={The optimal allocation of prizes in contests},
  author={Moldovanu, Benny and Sela, Aner},
  journal={American Economic Review},
  volume={91},
  number={3},
  pages={542--558},
  year={2001},
  publisher={American Economic Association}
}

@article{golrezaei2025contest,
  title={The Contest Behind the Feed: Optimal Contest for Recommender Systems},
  author={Golrezaei, Negin and Hajiaghayi, MohammadTaghi and Shin, Suho},
  journal={Available at SSRN 5258940},
  year={2025}
}

@article{ando1989majorization,
  title={Majorization, doubly stochastic matrices, and comparison of eigenvalues},
  author={Ando, Tsuyoshi},
  journal={Linear Algebra and its Applications},
  volume={118},
  pages={163--248},
  year={1989},
  publisher={Elsevier}
}

@book{arnold1992ordinary,
  title={Ordinary differential equations},
  author={Arnold, Vladimir I},
  year={1992},
  publisher={Springer Science \& Business Media}
}

@book{rowley2018organizing,
  title={Organizing knowledge: Introduction to access to information: Introduction to access to information},
  author={Rowley, Jennifer E and Farrow, John},
  year={2018},
  publisher={Routledge}
}

\clearpage

\onlineversion{
\appendix
}{
\elecappendix

\medskip
}
\section{Existence of Mixed Nash Equilibrium} \label{sec:ap-existence}

The purpose of this section is to prove Theorem~\ref{thm:mix-existence}, which states that game $\mathcal{G}^{(n)}$ defined in Section~\ref{sec:model} admits a mixed Nash equilibrium.

\begin{theorem} \label{thm:mix-existence}
    Game $\mathcal{G}^{(n)}$ admits a mixed Nash equilibrium.
\end{theorem}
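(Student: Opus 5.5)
We will verify the hypotheses of Reny's theorem \cite{reny1999existence} for the mixed extension of $\mathcal{G}^{(n)}$. Reny shows that a compact, convex-strategy-space game whose payoffs are bounded and quasiconcave in own strategy, and which is \emph{better-reply secure}, has a pure Nash equilibrium. Applied to the mixed extension, he gives a sufficient condition: if the mixed extension is \emph{reciprocally upper semicontinuous} and \emph{payoff secure}, then the original game has a mixed Nash equilibrium.

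Each strategy set $[0,1]$ is compact and metric, and the payoffs $u_i(x_i,\bm{x}_{-i}) = x_i p_{\text{rank}} - \gamma_i g(x_i)$ (averaged over random tie-breaking) are bounded and measurable. Compactness and convexity of the mixed strategy spaces in the weak-* topology follow automatically. The remaining work is to check the two conditions.

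\textbf{Reciprocal upper semicontinuity.} It suffices that the sum of payoffs $\sum_i u_i$ is upper semicontinuous on the mixed extension. In pure strategies, the sum equals $\sum_j x_{(j)} p_j - \sum_i \gamma_i g(x_i)$, where the cost part is continuous. The revenue part $\sum_j x_{(j)} p_j$ is also continuous, because the $j$-th order statistic is a continuous function of $\bm{x}$. At a tie, whichever way the tied players are ordered, their revenues sum to the same value $x\sum_{j\in\text{tie}}p_j$, so random tie-breaking does not affect the total. Hence $\sum_i u_i$ is continuous in pure strategies, and by integration it is continuous (in particular upper semicontinuous) in the mixed extension. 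This is where the multiplicative payoff helps, compared with standard all-pay prizes.

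\textbf{Payoff security.} This is the main obstacle. For each $i$, each mixed profile $\bm{F}$ and each $\varepsilon>0$, we need a strategy $\tilde F_i$ that guarantees at least $u_i(\bm{F})-\varepsilon$ against every $\bm{F}'_{-i}$ in a weak-* neighborhood of $\bm{F}_{-i}$.

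The plan is to take $\tilde F_i$ to be $F_i$ shifted upward by a small $\delta>0$, i.e.\ the law of $x_i+\delta$, with mass pushed down from near $1$ if necessary. Since $u_i(1,\cdot)<0$, we can instead truncate at some $\bar x<1$ and lose little.

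Two facts make this work. First, raising effort weakly improves rank, and $p_j$ is decreasing in rank, so revenue only goes up. Second, the cost increase $\gamma_i[g(x+\delta)-g(x)]$ is uniformly small by uniform continuity of $g$.

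Choose $\delta$ so that the set of points $x_i+\delta$ that are atoms of some $F_k$, $k\neq i$, has small $F_i$-measure. This is possible because each $F_k$ has only countably many atoms. Then, with $F_i$-probability close to $1$, the shifted action $x_i+\delta$ lies at a continuity point of each opponent's CDF. At such points the probability $\Pr(x_k' < x_i+\delta)$ is continuous under weak-* convergence of $F'_k\to F_k$. Consequently, the probability of each rank configuration is lower semicontinuous, i.e.\ it drops by at most $\varepsilon$ in a neighborhood.

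Putting this together, $u_i(\tilde F_i,\bm{F}'_{-i}) \geq u_i(F_i,\bm{F}_{-i}) - O(\varepsilon)$ for all $\bm{F}'_{-i}$ near $\bm{F}_{-i}$. The delicate part is making this estimate rigorous by splitting into atom and non-atom contributions and using dominated convergence. Once both conditions hold, Reny's theorem gives a mixed Nash equilibrium of $\mathcal{G}^{(n)}$.
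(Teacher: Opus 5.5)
Your proposal is correct and follows the paper's route: Reny's Corollary~5.2, with upper semicontinuity of $\sum_i u_i$ coming from the fact that tied creators' revenues aggregate to $\sum_j x_{(j)} p_j$, and payoff security coming from a small upward perturbation of effort (which only improves rank and costs little) placed at continuity points of the opponents' CDFs. The only difference is that the paper secures the payoff with a single pure action $x_i^{\text{sec}}$, a slight perturbation of some $x_i^{\text{dev}}$ with $u_i(x_i^{\text{dev}}, \bm{\mu}_{-i}) \geqslant u_i(\bm{\mu})$, rather than shifting all of $F_i$. That choice avoids your integration over $F_i$ and your boundary issue at $1$; in your version the displaced mass near $1$ should go to effort $0$, whose payoff is identically $0$ and beats the negative payoffs near $1$, not be truncated at some $\bar{x}<1$, where the rank can drop and the loss need not be small.
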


To prove the existence of mixed Nash equilibrium, we use the results from \cite{reny1999existence}. We first introduce some definitions.

A game $G = (X_i, u_i)_{i = 1}^n$ consists of a set of players $[n]$, each with a pure strategy space $X_i$ and a payoff function $u_i : X \to \mathbb{R}$, where $X = \times_{i=1}^n X_i$. If each $X_i$ is a compact Hausdorff space, then $G$ is called a compact Hausdorff game. The mixed extension of $G$ is defined as $\overline{G} = (M_i, u_i)_{i=1}^n$, where $M_i$ is the set of probability measures on the Borel subsets of $X_i$, and $u_i : M \to \mathbb{R}$ is defined as $u_i(\bm{\mu}) = \int_X u_i(\bm{x}) \textup{d}\bm{\mu}$, where $\bm{\mu} \in M = \times_{i=1}^n M_i$. If each $X_i$ is a metric space, then each $M_i$ can be endowed with the weak* topology, which can be metrized by the Lévy-Prokhorov metric.

The notion of payoff security is crucial in establishing the existence of mixed Nash equilibrium. To this end, we first introduce the definition of securing a payoff.

\begin{definition}[secure a payoff]
    Player $i$ can secure a payoff of $v$ at $\bm{\mu} \in M$ if there exists $\overline{\mu}_i \in M_i$, such that $u_i(\overline{\mu}_i, \bm{\mu}_{-i}') \geqslant v$ for all $\bm{\mu}_{-i}'$ in some open neighborhood of $\bm{\mu}_{-i}$.
\end{definition}

The following definition formalizes the notion of payoff security for a game.

\begin{definition}[payoff secure]
    Let $\overline{G} = (M_i, u_i)_{i = 1}^n$ be the mixed extension of $G = (X_i, u_i)_{i = 1}^n$. Then $\overline{G}$ is payoff secure if for every $\bm{\mu} \in M$ and every $\varepsilon > 0$, each player $i$ can secure a payoff of $u_i(\bm{\mu}) - \varepsilon$ at $\bm{\mu}$.
\end{definition}

Thus, payoff security requires that each player can approximately secure her payoff at any given strategy profile, even when the other players slightly deviate from their strategies. The seminal result by \cite{reny1999existence} establishes the existence of mixed Nash equilibrium in compact Hausdorff games that satisfy certain semi-continuity and payoff security conditions.

\begin{theorem} \label{thm:mix-existence-reny}
    (Corollary~5.2 of \cite{reny1999existence})
    Suppose that game $G = (X_i, u_i)_{i = 1}^n$ is a compact Hausdorff game. Then $G$ possesses a mixed Nash equilibrium if $\sum_{i = 1}^n u_i(\bm{x})$ is upper semicontinuous in $x$ on $X$ and the mixed extension of $G$ is payoff secure.
\end{theorem}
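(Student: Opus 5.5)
Since this is Corollary~5.2 of \cite{reny1999existence}, I would recover it from Reny's main existence theorem rather than from scratch. That theorem says: a game $(M_i, u_i)_{i=1}^n$ has a pure Nash equilibrium if each $M_i$ is a nonempty compact convex subset of a locally convex topological vector space, each $u_i$ is quasiconcave in player $i$'s own strategy, and the game is \emph{better-reply secure}. The plan is to verify these hypotheses for the mixed extension $\overline{G}$; a pure equilibrium of $\overline{G}$ is by definition a mixed Nash equilibrium of $G$.

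\textbf{Step 1 (structure of $\overline{G}$).} Each $M_i$ is the set of regular Borel probability measures on the compact Hausdorff space $X_i$. It is a convex subset of the dual of $C(X_i)$ with the weak* topology, which is locally convex. It is weak*-compact by Banach--Alaoglu, since it is a weak*-closed subset of the unit ball. The extended payoff $u_i(\bm{\mu}) = \int_X u_i \,\textup{d}\bm{\mu}$ is affine in $\mu_i$ for fixed $\bm{\mu}_{-i}$, hence quasiconcave. Boundedness of $u_i$ (implicit in payoffs being integrable on a compact space) makes these integrals well defined.

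\textbf{Step 2 (upper semicontinuity of the summed payoff on $M$).} Let $U = \sum_i u_i$, which is upper semicontinuous and bounded on the compact space $X$. I will write $U$ as the pointwise infimum of the continuous functions $\phi \geqslant U$. Now suppose a net $\bm{\mu}^\alpha \to \bm{\mu}$ in the product weak* topology. Then the product measures $\otimes_i \mu_i^\alpha$ converge weak* to $\otimes_i \mu_i$. To see this, note that linear combinations of products $\prod_i \phi_i(x_i)$ are dense in $C(X)$ by Stone--Weierstrass, and convergence holds on such products by multiplicativity. Hence for each continuous $\phi \geqslant U$,
\[
\limsup_\alpha \int U \,\textup{d}\bm{\mu}^\alpha \leqslant \int \phi \,\textup{d}\bm{\mu}.
\]
Taking the infimum over $\phi$, using monotone convergence for regular measures along the downward-directed family, gives $\limsup_\alpha \sum_i u_i(\bm{\mu}^\alpha) \leqslant \sum_i u_i(\bm{\mu})$. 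I expect this step to be the main technical obstacle. The difficulty is that $X$ need not be metrizable, so the argument must use nets and a Dini/regularity argument rather than sequences.

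\textbf{Step 3 (better-reply security).} Take a non-equilibrium $\bm{\mu}^*$ and a payoff vector $\bm{v}^*$ in the closure of the graph of $\bm{\mu}\mapsto (u_i(\bm{\mu}))_i$ above $\bm{\mu}^*$. By Step~2, the sum is upper semicontinuous, so $\sum_i v_i^* \leqslant \sum_i u_i(\bm{\mu}^*)$. Each $u_i$ is lower semicontinuous along the approximating net in the sense relevant here (this is reciprocal upper semicontinuity), which forces $v_i^* = u_i(\bm{\mu}^*)$ for every $i$. Since $\bm{\mu}^*$ is not an equilibrium, some player $i$ has a deviation $\hat\mu_i$ with $u_i(\hat\mu_i, \bm{\mu}^*_{-i}) > u_i(\bm{\mu}^*) + 2\varepsilon$. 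Applying payoff security at $(\hat\mu_i, \bm{\mu}^*_{-i})$ with this $\varepsilon$ yields a strategy $\overline{\mu}_i$ and a neighbourhood of $\bm{\mu}^*_{-i}$ on which player $i$ secures strictly more than $v_i^*$. This is exactly better-reply security. Reny's theorem then delivers the equilibrium of $\overline{G}$, hence a mixed Nash equilibrium of $G$.
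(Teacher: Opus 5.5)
The paper does not prove this statement; it only cites Reny. Your outline is Reny's own argument: verify the hypotheses of his Theorem~3.1 for $\overline{G}$, show $\sum_i u_i$ is upper semicontinuous on $M$ via weak* continuity of the product-measure map, then derive better-reply security from payoff security. Steps~1 and~2 are sound. One caveat: boundedness and Borel measurability of each $u_i$ are standing hypotheses in Reny's definition of a compact Hausdorff game. They do not follow from compactness of $X$, since the $u_i$ are discontinuous.

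The gap is in Step~3. Upper semicontinuity of the sum does \emph{not} force $v_i^* = u_i(\bm{\mu}^*)$ for every $i$. It only gives $\sum_i v_i^* \leqslant \sum_i u_i(\bm{\mu}^*)$, which allows some coordinates of $\bm{v}^*$ to lie above $u(\bm{\mu}^*)$ and others below. Reciprocal upper semicontinuity is conditional: if $u(\bm{\mu}^*) \leqslant \bm{v}^*$ componentwise, then equality holds. It is not componentwise lower semicontinuity.

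The paper's own game is a counterexample to your claim. Take two creators with equal costs at the tie profile $(\delta_x, \delta_x)$ with $x>0$; this is not an equilibrium, since equilibria are atomless. Approach it along $(\delta_{x+1/k}, \delta_x)$. The limit payoff vector is $(x p_1 - \gamma g(x),\, x p_2 - \gamma g(x))$. Creator~1 gets strictly more and creator~2 strictly less than the tie value $x(p_1+p_2)/2 - \gamma g(x)$, while the sum is unchanged.

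Your argument therefore breaks whenever the player $i$ with a profitable deviation is one with $v_i^* > u_i(\bm{\mu}^*)$. Securing $u_i(\bm{\mu}^*) + \varepsilon$ need not exceed $v_i^*$.

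The repair is the case split in Reny's Proposition~3.2:
\begin{itemize}
\item If some player $j$ has $u_j(\bm{\mu}^*) > v_j^*$, apply payoff security at $\bm{\mu}^*$ itself with $\varepsilon < u_j(\bm{\mu}^*) - v_j^*$. Then $j$ secures strictly more than $v_j^*$, and the non-equilibrium property is not even needed.
\item Otherwise $u(\bm{\mu}^*) \leqslant \bm{v}^*$ componentwise. Together with $\sum_i v_i^* \leqslant \sum_i u_i(\bm{\mu}^*)$, this forces $u(\bm{\mu}^*) = \bm{v}^*$, and your deviation argument then goes through verbatim.
\end{itemize}
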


With the above theorem, we can now prove the existence of mixed Nash equilibrium in game $\mathcal{G}^{(n)}$. We first show that game $\mathcal{G}^{(n)}$ is compact and Hausdorff.

\begin{lemma} \label{lem:game-compact-hausdorff}
    Game $\mathcal{G}^{(n)}$ is compact and Hausdorff.
\end{lemma}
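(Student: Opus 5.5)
The plan is to read the statement in the sense of the preceding definition: game $\mathcal{G}^{(n)}$ is a compact Hausdorff game when each pure strategy space $X_i$ is a compact Hausdorff space. In $\mathcal{G}^{(n)}$, every creator $i \in [n]$ picks an effort level $x_i \in [0,1]$, so $X_i = [0,1]$ with the usual Euclidean topology. Hence it suffices to check that $[0,1]$ is compact and Hausdorff. I will also note the product $X = [0,1]^n$, since it is needed when Theorem~\ref{thm:mix-existence-reny} is applied.

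The key steps are as follows.

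\begin{enumerate}
    \item \emph{Compactness.} $[0,1]$ is closed and bounded in $\mathbb{R}$, so it is compact by the Heine--Borel theorem. By Tychonoff's theorem, or again by Heine--Borel in $\mathbb{R}^n$, the product $X = [0,1]^n$ is compact.
    \item \emph{Hausdorff property.} $[0,1]$ is a metric space under $d(x,y) = |x-y|$, and every metric space is Hausdorff. For $x \neq y$, the open balls of radius $|x-y|/2$ around $x$ and around $y$ are disjoint. Finite products of Hausdorff spaces are Hausdorff, so $X$ is Hausdorff.
    \item \emph{Metrizability.} Since each $X_i$ is a metric space, each $M_i$ can be endowed with the weak* topology, metrized by the Lévy--Prokhorov metric. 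This is what the later payoff-security argument needs.
\end{enumerate}

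Nothing in this lemma depends on the payoffs $u_i$, the cost types $\gamma_i$, or the position biases. The discontinuity introduced by rank-based ties only matters for the upper semicontinuity of $\sum_i u_i$ and for payoff security, which are handled separately.

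There is no genuine obstacle here. The only care needed is to state clearly that the strategy sets are the full closed interval $[0,1]$, so that compactness holds, rather than some open or restricted set.
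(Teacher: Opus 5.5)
Your proof is correct and follows the paper's own argument: the paper just observes that each creator's pure strategy space is $[0,1]$, which is compact and Hausdorff, so $\mathcal{G}^{(n)}$ is a compact Hausdorff game. Your point that $X_i$ must be the full closed interval is slightly more careful than the paper's phrasing ``a subset of $[0,1]$,'' since an arbitrary subset of $[0,1]$ need not be compact; your remarks on the product $[0,1]^n$ and on L\'evy--Prokhorov metrizability are harmless but not needed for this lemma.
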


\begin{proof}
    The pure strategy space of each player is a subset of $[0, 1]$, which is a compact Hausdorff space. Thus, game $\mathcal{G}^{(n)}$ is a compact Hausdorff game.
\end{proof}

Next, we show that the sum of payoffs is upper semicontinuous.

\begin{lemma} \label{lem:sum-payoff-usc}
    In game $\mathcal{G}^{(n)}$, the sum of payoffs $\sum_{i=1}^n u_i(\bm{x})$ is upper semicontinuous in $x$ on $X$.
\end{lemma}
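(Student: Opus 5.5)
The plan is to compute $\sum_{i=1}^n u_i(\bm{x})$ explicitly and show that the tie-breaking discontinuities cancel in the sum. Fix a pure profile $\bm{x}$. Under uniform tie-breaking, a group of tied creators sharing the ranks $j, j+1, \ldots, j+k-1$ each receives expected position bias $\frac{1}{k}\sum_{\ell=j}^{j+k-1} p_\ell$. Since all tied creators have the same effort $x$, the group's total revenue is
\[ x\sum_{\ell=j}^{j+k-1} p_\ell. \]
Summing over groups gives
\[ \sum_{i=1}^n u_i(\bm{x}) = \sum_{j=1}^n x_{(j)} p_j - \sum_{i=1}^n \gamma_i g(x_i). \]
This identity holds whether or not there are ties. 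In the mechanism setting, the same argument applies with $p_j$ replaced by $p_j + c_j$.

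The next step is continuity of each term. The cost term $\sum_i \gamma_i g(x_i)$ is continuous because $g$ is differentiable, hence continuous. Each order statistic $x_{(j)}$ is continuous as a function of $\bm{x} \in [0,1]^n$, for example via $x_{(j)} = \max_{|S|=j}\min_{i\in S} x_i$, which is a finite max–min of coordinate functions. So $\sum_j x_{(j)} p_j$ is continuous. The sum of payoffs is therefore continuous on $X$, and in particular upper semicontinuous.

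The only delicate step is the tie-handling identity. Individual payoffs do jump at ties, for instance when $x_i$ crosses $x_k$. The point to verify is that the tied group's aggregate revenue equals the sorted assignment $\sum_\ell x_{(\ell)} p_\ell$ restricted to those ranks. This holds because the tied efforts are equal, so any permutation of rank assignments within the group yields the same total. With that established, the rest is routine.
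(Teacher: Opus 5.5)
Your proof is correct and follows the paper's argument: pool the payoffs of tied creators to obtain the identity $\sum_i u_i(\bm{x}) = \sum_j x_{(j)} p_j - \sum_i \gamma_i g(x_i)$, then conclude that the sum is continuous. Your max--min formula for the order statistics also makes explicit the continuity of $x_{(j)}$, which the paper leaves implicit.
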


\begin{proof}
    When all creators adopt a pure strategy, since ties are broken uniformly at random, their payoffs can be aggregated without altering the total payoff by assigning creators with the same strategies to distinct ranks within their corresponding ranking range. For example, if creators $i_1, i_2, \ldots, i_k$ choose the same strategy $x_{(j)}$, we may assign them to ranks $j, j + 1, \ldots, j + k - 1$, respectively, without changing the sum of their payoffs. Based on this assignment rule, we obtain
    \[ \sum_{i = 1}^n u_i(\bm{x}) = \sum_{i = 1}^n x_{(i)} p_i - \sum_{i = 1}^n \gamma_i g(x_i). \]
    Because $g$ is continuous, $\sum_{i=1}^n u_i(\bm{x})$ is continuous and therefore upper semicontinuous on $X$.
\end{proof}

Finally, we show that the mixed extension of game $\mathcal{G}^{(n)}$ is payoff secure. To this end, we first introduce several definitions and lemmas. For any subset of creators $S \subseteq [n]$, we define a function $W_S: [0, 1]^{|S| - 1} \times [0, 1]^{|S|} \to \mathbb{R}$ as
\begin{equation} \label{eq:W-def}
    W_S(\bm{y}_{-j}, \bm{p}) = \sum_{k = 1}^{|S|} \left(p^{(k)} \sum_{\substack{S_{k - 1} \subseteq S \setminus \{j\} \\ |S_{k - 1}| = k - 1}} \left(\prod_{i \in S_{k - 1}}(1 - y_i) \prod_{i' \in (S \setminus \{j\}) \setminus S_{k - 1}} y_{i'}\right)\right),
\end{equation}
where $j \in S$, $\bm{y}_{-j} = (y_i)_{i \in S \setminus \{j\}}$, and $p^{(k)}$ is the $k$-th highest element in $\bm{p}$. Let $\bm{F}_{-j} = (F_i)_{i \in S \setminus \{j\}}$ be the mixed strategy profile of creators in $S \setminus \{j\}$, then $W_{[n]}(\bm{F}_{-j}(x), \bm{p})$ represents the expected position bias of creator $j$ when she exerts effort level $x$ and no other creator in $S$ exerts effort level equals to $x$ with positive probability. The next lemma illustrates some properties of function $W_S$.

\begin{lemma} \label{lem:W-monotone}
    (Lemma~1 of \cite{xiao2016asymmetric})
    For any subset of creators $S \subseteq [n]$, function $W_S(\bm{y}_{-j}, \bm{p})$ is linear in $\bm{p}$, and it is strictly increasing in each $y_i$ for $i \in S \setminus \{j\}$.
\end{lemma}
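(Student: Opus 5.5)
Linearity in $\bm{p}$ is immediate from the definition \eqref{eq:W-def}, so the work is strict monotonicity in each $y_i$. For linearity, note that the coefficient multiplying each $p^{(k)}$ depends only on $\bm{y}_{-j}$. The ordering $p^{(1)} \geqslant \cdots \geqslant p^{(|S|)}$ of the entries of $\bm{p}$ is applied to a fixed vector, so $W_S$ is a linear combination of the sorted components. In our setting $\bm{p}$ is already sorted ($p^{(k)} = p_k$), so $W_S$ is linear in $\bm{p}$ as a vector.

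For monotonicity, I would give $W_S$ a probabilistic reading. Let $m = |S| - 1$ and let $Z_i \sim \mathrm{Bernoulli}(1 - y_i)$ be independent for $i \in S \setminus \{j\}$, where $Z_i = 1$ means ``creator $i$ beats $j$.'' Then the inner sum in \eqref{eq:W-def} is exactly $\Pr[N = k - 1]$, where $N = \sum_i Z_i$. Hence
\[ W_S(\bm{y}_{-j}, \bm{p}) = \mathbb{E}\bigl[p^{(N+1)}\bigr]. \]
Now fix some $i_0 \in S \setminus \{j\}$ and condition on the remaining variables, letting $N' = \sum_{i \neq i_0} Z_i$. Then
\[ W_S = \mathbb{E}\Bigl[ y_{i_0}\, p^{(N'+1)} + (1 - y_{i_0})\, p^{(N'+2)} \Bigr], \]
which is affine in $y_{i_0}$ with slope
\[ \frac{\partial W_S}{\partial y_{i_0}} = \sum_{\ell=0}^{m-1} \Pr[N' = \ell]\,\bigl(p^{(\ell+1)} - p^{(\ell+2)}\bigr). \]

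The main obstacle is showing this slope is strictly positive rather than merely nonnegative. Each difference $p^{(\ell+1)} - p^{(\ell+2)}$ is nonnegative, and it is strictly positive when the entries of $\bm{p}$ are strictly decreasing, as assumed in the model ($p_1 > \cdots > p_n$). The weights $\Pr[N' = \ell]$ sum to one, so at least one is positive. Therefore the sum is strictly positive whenever all consecutive gaps are positive, and this holds for every $\bm{y}_{-j} \in [0,1]^{m}$, including boundary points.

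If $\bm{p}$ may have ties, as in the mechanism setting with $p_1 \geqslant \cdots \geqslant p_n$, the argument still gives weak monotonicity. Strict monotonicity then needs positive gaps wherever $N'$ has mass. I would note this caveat and state strictness under the paper's standing strict-order assumption on $\bm{p}$, consistent with \cite{xiao2016asymmetric}.
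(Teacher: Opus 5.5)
Your proof is correct and is essentially the paper's argument. The paper writes $W_S = y_i W_{S\setminus\{i\}}(\bm{y}_{-\{i,j\}},\overline{\bm{p}}) + (1-y_i)W_{S\setminus\{i\}}(\bm{y}_{-\{i,j\}},\underline{\bm{p}})$ and obtains the slope $W_{S\setminus\{i\}}(\bm{y}_{-\{i,j\}},\overline{\bm{p}}-\underline{\bm{p}})>0$; this is exactly your $\sum_{\ell}\Pr[N'=\ell]\,(p^{(\ell+1)}-p^{(\ell+2)})$, obtained there by the same conditioning on creator $i$. Your explicit form, and your caveat about ties, is if anything slightly cleaner, because it avoids passing the unsorted difference vector into a function defined through the sorted entries $p^{(k)}$.
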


\begin{proof}
    The linearity in $\bm{p}$ is straightforward from the definition of $W_S$. To see the monotonicity, note that
    \[ W_S(\bm{y}_{-j}, \bm{p}) = y_i W_{S \setminus \{i\}}(\bm{y}_{-\{i, j\}}, \overline{\bm{p}}) + (1 - y_i) W_{S \setminus \{i\}}(\bm{y}_{-\{i, j\}}, \underline{\bm{p}}), \]
    where $\overline{\bm{p}}$ and $\underline{\bm{p}}$ are obtained by removing the lowest and highest elements from $\bm{p}$, respectively. Differentiating with respect to $y_i$ on both sides yields
    \[ \frac{\partial W_S(\bm{y}_{-j}, \bm{p})}{\partial y_i} = W_{S \setminus \{i\}}(\bm{y}_{-\{i, j\}}, \overline{\bm{p}}) - W_{S \setminus \{i\}}(\bm{y}_{-\{i, j\}}, \underline{\bm{p}}) = W_{S \setminus \{i\}}(\bm{y}_{-\{i, j\}}, \overline{\bm{p}} - \underline{\bm{p}}), \]
    where the second equality follows from the linearity of $W_{S \setminus \{i\}}$ in $\bm{p}$. Since $\overline{\bm{p}} - \underline{\bm{p}}$ is a vector with positive elements, by the definition of $W_{S \setminus \{i\}}$, we have $\frac{\partial W_S(\bm{y}_{-j}, \bm{p})}{\partial y_i} > 0$. Thus, $W_S$ is strictly increasing in each $y_i$ for $i \in S \setminus \{j\}$.
\end{proof}

In the following discussion, the subscript of $W$ will be omitted when there is no ambiguity. Based on the above definitions and lemmas, we can now prove that the mixed extension of game $\mathcal{G}^{(n)}$ is payoff secure. Similar techniques have been used in \cite{jagadeesan2023supply,golrezaei2025contest}.

\begin{lemma} \label{lem:payoff-secure}
    The mixed extension of game $\mathcal{G}^{(n)}$ is payoff secure.
\end{lemma}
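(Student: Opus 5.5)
The plan is the standard "shift up by a small amount" argument used for all-pay contests (as in \cite{jagadeesan2023supply,golrezaei2025contest}). Fix a mixed profile $\bm{\mu}=(\mu_1,\dots,\mu_n)$, a creator $i$, and $\varepsilon>0$. We must exhibit $\overline{\mu}_i$ such that $u_i(\overline{\mu}_i,\bm{\mu}_{-i}')\geqslant u_i(\bm{\mu})-\varepsilon$ for all $\bm{\mu}_{-i}'$ in a weak* neighborhood of $\bm{\mu}_{-i}$.

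\textbf{Step 1: the shifted strategy.} For small $\delta>0$, let $\overline{\mu}_i$ be the law of $\min\{x_i+\delta,1\}$ with $x_i\sim\mu_i$. Since the atoms of $\bm{\mu}_{-i}$ are countable, we choose $\delta$ so that the relevant shifted points avoid them. Because $g$ is uniformly continuous on $[0,1]$, the cost changes by at most $\gamma_i\sup_{|x-y|\leqslant\delta}|g(x)-g(y)|$. The revenue factor $x$ grows under the shift.

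\textbf{Step 2: the position term.} Ranking is weakly improved by the shift: conditional on $x_i$, creator $i$ now beats every opponent with effort in $[0,x_i+\delta)$. By Lemma~\ref{lem:W-monotone}, the expected position bias is monotone in opponents' CDF values. So it suffices to compare the ties-included expected bias at $x_i$ under $\bm{\mu}_{-i}$ with the no-tie bias $W(\bm{F}'_{-i}(x_i+\delta),\bm{p})$ under $\bm{\mu}_{-i}'$.

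\textbf{Step 3: continuity in the opponents.} This is the main obstacle. If $x_i+\delta$ is a continuity point of each $F_k$, then weak* convergence gives $F_k'(x_i+\delta)\to F_k(x_i+\delta)\geqslant F_k(x_i)$. This yields
\[
W(\bm{F}_{-i}'(x_i+\delta),\bm{p})\geqslant W(\bm{F}_{-i}(x_i),\bm{p})-\eta ,
\]
where the right-hand side already dominates the tie-randomized payoff at $x_i$, since random tie-breaking gives at most the ``win all ties'' value.

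The difficulty is that this must hold uniformly over $x_i$ in the support of $\mu_i$, while weak* neighborhoods control only finitely many points. I would handle it with a grid $0=t_0<t_1<\dots<t_K=1$ of mesh less than $\delta/2$, consisting of continuity points of all $F_k$. For each $x_i$, some grid point $t$ satisfies $x_i+\delta/2\leqslant t\leqslant x_i+\delta$. Monotonicity of the CDFs then gives
\[
F_k'(x_i+\delta)\geqslant F_k'(t)\geqslant F_k(t)-\eta'\geqslant F_k(x_i)-\eta' .
\]
The neighborhood is defined by $|F_k'(t_j)-F_k(t_j)|<\eta'$ at finitely many points, which is weak* open. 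Uniform continuity of $W$ in $\bm{y}$ (it is a polynomial) converts $\eta'$ into an error $\eta$ in the bias.

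\textbf{Step 4: the boundary and the total error.} The boundary case $x_i+\delta>1$ is harmless: it occurs only with $\mu_i$-probability at most $\mu_i((1-\delta,1])$, which tends to $0$ as $\delta\to0$, and payoffs are bounded. Adding up the errors gives a total of $x\eta+\gamma_i\omega_g(\delta)+C\mu_i((1-\delta,1])$, where $\omega_g$ is the modulus of continuity of $g$. Choosing $\delta$ small and then $\eta'$ small makes this less than $\varepsilon$.

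Together with Lemma~\ref{lem:game-compact-hausdorff}, Lemma~\ref{lem:sum-payoff-usc}, and Theorem~\ref{thm:mix-existence-reny}, this gives Theorem~\ref{thm:mix-existence}.
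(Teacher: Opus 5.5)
Your route differs from the paper's. The paper never perturbs the mixed strategy $\mu_i$. Since $u_i(\bm\mu)=\int u_i(x_i,\bm\mu_{-i})\,\textup{d}\mu_i(x_i)$, some pure effort $x_i^{\text{dev}}$ does at least as well as $\mu_i$. The paper nudges this single point to $x_i^{\text{sec}}$ so that opponents have no atoms just below it, and evaluates their CDFs at $x_i^{\text{sec}}-\delta$. It then uses the L\'evy--Prokhorov inequality $\mu_j'(\{x<x_i^{\text{sec}}\})\geqslant\mu_j(\{x<x_i^{\text{sec}}-\delta\})-\delta$ together with the monotonicity of $W$ (Lemma~\ref{lem:W-monotone}). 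Because the securing strategy is pure, only one point has to be controlled, so the uniformity problem you solve with a grid never arises. Your version yields a securing strategy that is a small perturbation of $\mu_i$ itself, which is more than payoff security requires.

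Your Step~3 does go through, with two small corrections.
\begin{itemize}
\item The lower bound at the shifted point must be the lose-all-ties value built from $\mu_k'([0,x_i+\delta))$, not from the right-continuous $F_k'(x_i+\delta)$. Opponents in the neighborhood may place atoms exactly at $x_i+\delta$, and then $W(\bm F'_{-i}(x_i+\delta),\bm p)$ over-estimates the payoff. So take the grid point $t\in[x_i+\delta/2,\,x_i+\delta)$ strictly below $x_i+\delta$.
\item The set $\{|F_k'(t_j)-F_k(t_j)|<\eta'\}$ is not itself weak* open. The one-sided conditions $\mu_k'([0,t_j))>F_k(t_j)-\eta'$ are open, since the mass of an open set is lower semicontinuous. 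They are all you actually use, because each $t_j$ is a continuity point of $F_k$.
\end{itemize}

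The genuine gap is in Step~4. The sets $(1-\delta,1]$ shrink to $\{1\}$, so $\mu_i((1-\delta,1])\to\mu_i(\{1\})$, not to $0$. Payoff security must hold at every profile in $M$, including profiles where $\mu_i$ has an atom at $1$. That mass cannot be shifted upward at all, so your error term $C\mu_i((1-\delta,1])$ stays bounded away from zero and you have not secured $u_i(\bm\mu)-\varepsilon$.

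The fix uses the standing assumption $u_i(1,\bm x_{-i})<0$, i.e.\ $p_1<\gamma_i g(1)$.
\begin{itemize}
\item By continuity of $xp_1-\gamma_i g(x)$, there is $\delta_0>0$ such that every effort in $(1-\delta_0,1]$ earns a strictly negative payoff against every opponent profile, since the expected bias never exceeds $p_1$.
\item The effort $x_i^\ast=\arg\max_x xp_n-\gamma_i g(x)$ earns at least $x_i^\ast p_n-\gamma_i g(x_i^\ast)\geqslant 0$ against every opponent profile.
\item So for $\delta<\delta_0$, send the $\mu_i$-mass on $(1-\delta,1]$ to $x_i^\ast$ rather than to $1$. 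That piece then incurs no loss at all, uniformly in $\bm\mu_{-i}'$.
\end{itemize}
The same observation is what lets one take $x_i^{\text{dev}}<1$ in the paper's pure-deviation argument, so that there is room to move it upward.
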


\begin{proof}
    Consider any mixed-strategy profile $\bm{\mu} \in M$ and any $\varepsilon > 0$. By the definition of expected payoff under mixed strategies, there exists a pure strategy $x_i^{\text{dev}}$ such that
    \[ u_i(x_i^{\text{dev}}, \bm{\mu}_{-i}) \geqslant u_i(\bm{\mu}). \]
    Otherwise, using the definition of expected payoff under mixed strategies, we would have
    \[ u_i(\bm{\mu}) = \int_{X_i} u_i(x_i, \bm{\mu}_{-i}) \textup{d}\bm{\mu}_i < u_i(\bm{\mu}), \]
    a contradiction. To satisfy the requirement of equation~\eqref{eq:W-def}, we adjust $x_i^{\text{dev}}$ slightly to obtain $x_i^{\text{sec}}$ such that $(\mu_j)_{j \in [n] \setminus \{i\}}$ have no atom in $[x_i^{\text{sec}} - \delta', x_i^{\text{sec}}]$ for a small $\delta' > 0$. Let $F_j$ denote the CDF corresponding to $\mu_j$. By equation~\eqref{eq:W-def}, 
    \[ u_i(x_i^{\text{sec}}, \bm{\mu}_{-i}) = x_i^{\text{sec}} W(\bm{F}_{-i}(x_i^{\text{sec}}), \bm{p}) - \gamma_i g(x_i^{\text{sec}}). \]
    On the other hand, by the continuity of $F_j$ in $[x_i^{\text{sec}} - \delta', x_i^{\text{sec}}]$, we can choose a sufficiently small $\delta < \delta'$ such that
    \begin{equation} \label{eq:secure-payoff-1}
        x_i^{\text{sec}} W(\bm{F}_{-i}(x_i^{\text{sec}} - \delta), \bm{p}) - \gamma_i g(x_i^{\text{sec}}) > u_i(\bm{\mu}) - \frac{\varepsilon}{2}.
    \end{equation}
    Define $A = \{x_i \mid x_i < x_i^{\text{sec}}\}$, $A^{\delta} = \{x_i \mid x_i < x_i^{\text{sec}} - \delta\}$. Then $F_j(x_i^{\text{sec}}) = \mu_j(A)$ and $F_j(x_i^{\text{sec}} - \delta) = \mu_j(A^{\delta})$. Equation~\eqref{eq:secure-payoff-1} can be rewritten as
    \[ \Delta_1 := x_i^{\text{sec}} W(\bm{\mu}_{-i}(A^{\delta}), \bm{p}) - \gamma_i g(x_i^{\text{sec}}) > u_i(\bm{\mu}) - \frac{\varepsilon}{2}. \]
    Let $A' = \{x_i \mid \exists x_i' \in A^{\delta}\ \text{s.t.}\ |x_i' - x_i| < \delta\}$. By the definition of the Lévy–Prokhorov metric, for every $\mu_j' \in B_{\delta}(\mu_j)$,
    \[ \mu_j(A^{\delta}) \leqslant \mu_j'(A') + \delta. \]
    Since every $x_i \in A'$ satisfies $x_i < x_i^{\text{sec}}$, we have
    \[ \mu_j'(A) \geqslant \mu_j'(A') \geqslant \mu_j(A^{\delta}) - \delta. \]
    Consequently, for all $\mu_j' \in B_{\delta}(\mu_j)$ with $j \in [n] \setminus \{i\}$,
    \begin{align*}
        u_i(x_i^{\text{sec}}, \bm{\mu}_{-i}') & = x_i^{\text{sec}} W(\bm{\mu}_{-i}'(A), \bm{p}) - \gamma_i g(x_i^{\text{sec}}) \\
        & \geqslant x_i^{\text{sec}} W(\bm{\mu}_{-i}(A^{\delta}) - \delta, \bm{p}) - \gamma_i g(x_i^{\text{sec}}) \\
        & = \Delta_1 + \Delta_2,
    \end{align*}
    where the inequality follows from Lemma~\ref{lem:W-monotone} and
    \[ \Delta_2 = x_i^{\text{sec}} \left(W(\bm{\mu}_{-i}(A^{\delta}) - \delta, \bm{p}) - W(\bm{\mu}_{-i}(A^{\delta}), \bm{p})\right). \]
    Because $\Delta_2 \to 0$ as $\delta \to 0$, we can choose $\delta$ small enough so that $\Delta_2 \geqslant -\frac{\varepsilon}{2}$. Combining the above results yields
    \[ u_i(x_i^{\text{sec}}, \bm{\mu}_{-i}') \geqslant \Delta_1 + \Delta_2 > u_i(\bm{\mu}) - \varepsilon. \]
    This shows that player $i$ can secure a payoff of at least $u_i(\bm{\mu}) - \varepsilon$ at $\bm{\mu}$. Therefore, the mixed extension is payoff-secure.
\end{proof}

\begin{proofof}{Theorem~\ref{thm:mix-existence}}
    The conclusion immediately follows from Theorem~\ref{thm:mix-existence-reny}, Lemma~\ref{lem:game-compact-hausdorff}, Lemma~\ref{lem:sum-payoff-usc}, and Lemma~\ref{lem:payoff-secure}.
\end{proofof}
\section{Properties of Symmetric Equilibrium} \label{sec:ap-symmetric}

The purpose of this subsection is to prove several important properties of symmetric equilibrium in game $\mathcal{G}^{(1)}$. For notational simplicity, we do not explicitly introduce the citation and compensation mechanism in the discussion of equilibrium properties, and treat the original position bias vector $\bm{p}$ as the parameter; this is without loss of generality since both mechanisms operate by adjusting position biases.

We first prove the following lemma, which establishes that any equilibrium must be a symmetric mixed Nash equilibrium in game $\mathcal{G}^{(1)}$.

\begin{lemma} \label{lem:symmetric-equilibrium}
    In game $\mathcal{G}^{(1)}$, the equilibrium must be a symmetric mixed Nash equilibrium.
\end{lemma}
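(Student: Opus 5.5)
Let $(F_1,\dots,F_n)$ be any mixed Nash equilibrium of $\mathcal{G}^{(1)}$; one exists by Theorem~\ref{thm:mix-existence}. The plan is to show, in order, that $F_1=\cdots=F_n$ by establishing: (a) equilibrium CDFs have no atoms; (b) all creators have the same equilibrium utility; (c) all supports share the same upper and lower endpoints and have no gaps; (d) the CDFs coincide on the common support. The tool throughout is the expected position bias $W(\bm{F}_{-i}(x),\bm{p})$ from \eqref{eq:W-def}, which is strictly increasing in each argument by Lemma~\ref{lem:W-monotone}. The payoff at an atomless point is
\[ U_i(x)=x\,W(\bm{F}_{-i}(x),\bm{p})-\gamma g(x). \]

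\textbf{Step (a): no atoms.} Suppose some creator $j$ has an atom at $a$.

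If $a>0$, any creator $i\neq j$ who places mass at or just below $a$ can move that mass slightly above $a$. This strictly raises her expected position bias, because the ranks tied or beaten now move to strictly higher $p_k$, and the $p_k$ are strictly decreasing. The extra cost is only $O(\varepsilon)$ by continuity of $g$. Hence no other creator plays in some interval $(a-\eta,a]$. But then $j$ gains by shifting her atom down to $a-\eta$: her rank distribution is unchanged and her cost strictly falls, since $g$ is strictly increasing. This is a contradiction.

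If $a=0$, two atoms at $0$ give a tie whose expected bias averages $p$'s strictly below the best one attainable by bidding $\varepsilon>0$. The gain $\varepsilon\cdot(\text{positive gap})-\gamma g(\varepsilon)$ is positive for small $\varepsilon$, because $g'(0)$ is finite and the bias gap is bounded below. (This needs care: to beat the $g'(0)$ term, compare against zero utility versus a positive payoff near zero.) A single atom at $0$ is handled afterwards by step (c).

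\textbf{Steps (b) and (c): common utility and common support.} Let $\overline{x}_i$ be the top of $\mathrm{supp} F_i$ and $\overline{x}=\max_i\overline{x}_i$.

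First show that at least two creators reach $\overline{x}$. Otherwise the unique top creator wins rank $1$ on an interval where she could lower effort without changing her rank, and save cost.

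Next show utilities are equal. For creators $i,j$ with a common support point $x$, write $W$ for $i$ by conditioning on $x_j$, as in the proof of Lemma~\ref{lem:W-monotone}. If $u_i<u_j$, creator $i$ can copy a point of $j$'s support. I expect symmetry of the cost to make $U_i(x)-U_j(x)$ have a sign determined by $F_j(x)-F_i(x)$ through the positive difference $W(\cdot,\overline{\bm{p}}-\underline{\bm{p}})$. This gives the inequality $u_i\ge u_j$ at points in $\mathrm{supp}F_j$, so $u_i=u_j$ by symmetry of the argument.

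Then rule out gaps. If no creator plays in $(b,c)$ with mass above it, anyone playing near $c$ moves down to $b$ and keeps the same rank at lower cost.

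For the lower endpoint, the creator with the lowest support point $\underline{x}$ receives $p_n$ there. A creator whose support starts higher would, by equal utilities and strict monotonicity of $W$, make $U_i(\underline{x})$ comparable, and this forces all lower endpoints to be equal.

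\textbf{Step (d): identical CDFs.} On the common support $[\underline{x},\overline{x}]$, the indifference conditions $U_i(x)=u$ for all $i$ hold. Subtract the conditions for $i$ and $j$ and use the decomposition
\[ W(\bm{F}_{-i},\bm{p})-W(\bm{F}_{-j},\bm{p})=(F_j-F_i)\,W(\bm{F}_{-\{i,j\}},\overline{\bm{p}}-\underline{\bm{p}}). \]
Since the last factor is strictly positive, $F_i(x)=F_j(x)$ wherever both are indifferent. The main obstacle is exactly here: indifference must hold on the same set for both $i$ and $j$. That is why steps (b) and (c), showing common support without gaps, are the crux. I would handle them by an ordering argument like Xiao's: take the supremum of points where the CDFs differ, and use the strict decrease of $\bm{p}$ to propagate a contradiction.
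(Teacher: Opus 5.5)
\textbf{There is a genuine gap.} Your atom, top-endpoint and gap arguments all rely on the all-pay-auction move ``lower your effort without changing your rank and save cost.'' That move is not available here, because revenue is multiplicative.

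When the expected position bias $W$ is locally constant, the payoff $xW-\gamma g(x)$ is strictly concave, with an interior maximizer $x^*(W)$ solving $\gamma g'(x)=W$. Shifting mass down from $a$ therefore helps only if $a>x^*(W)$. If the atom, the lone top creator's mass, or the upper end $c$ of a gap sits exactly at $x^*(W)$, moving down strictly hurts: revenue falls by $\eta W$ while cost falls by less. Your first sub-step in (a), that others avoid $(a-\eta,a]$, is correct. But what it yields is $a=x^*(q)$, not a contradiction.

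This is not a technicality. Your no-atom argument never uses that costs are equal, yet with unequal costs atoms occur in this very game. The pure equilibrium $(0.4,0.15,0.05)$ for $g(x)=x^2$, $\gamma=(1,2,4)$ in Appendix~\ref{subsec:ap-asym-pure} consists of three atoms. Each sits at $\arg\max_x xp_i-\gamma_i g(x)$ with nobody playing just below it. Proposition~\ref{prop:2-player-equ} also has atoms in a genuinely mixed equilibrium. So Step~(a) as written proves something false, and Steps~(b)--(c) inherit the same defect. Your lower-endpoint paragraph (``make $U_i(\underline{x})$ comparable'') is not yet an argument.

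\textbf{The missing idea} is to pin down the common utility level first, and then use that $\max_x xW-\gamma g(x)$ is strictly increasing in $W$. The paper proceeds as follows.
\begin{enumerate}
\item Equal utilities can be shown without atomlessness. If $u_k>u_l$, let $l$ bid just above $\overline{x}_k$. Since $F_k(\overline{x}_k)=1\geqslant F_l(\overline{x}_k)$, your decomposition gives $W(\bm{F}_{-l}(\overline{x}_k),\bm{p})\geqslant W(\bm{F}_{-k}(\overline{x}_k),\bm{p})$, so $l$ secures $u_k$. This is the specific point your ``copy a point of $j$'s support'' needs.
\item The lowest support point is $\arg\max_x xp_n-\gamma g(x)$, so $u=\max_x xp_n-\gamma g(x)$.
\item Every local argument now runs the same way. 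Each of the following must sit at $\arg\max_x xW-\gamma g(x)$ for the locally constant $W$, so its payoff is $\max_x xW-\gamma g(x)$, which exceeds $u$ unless $W=p_n$:
\begin{itemize}
\item an atom;
\item the upper end of a gap in the union of supports;
\item the largest lower endpoint, if it exceeds the smallest (your decomposition shows nobody else then plays just below it).
\end{itemize}
For gaps and lower endpoints $W>p_n$, which is an immediate contradiction.
\item For an atom, $W=p_n$ forces it to the bottom of all supports. A second creator with the same $\gamma$ and the same utility $u$ then bids just above it. She obtains bias strictly above $p_n$ at essentially the same effort, hence strictly more than $u$. This is exactly where symmetric costs enter.
\item Equal upper endpoints follow by copying rather than by lowering effort. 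If $\overline{x}_i<\overline{x}_j$, then $F_i(\overline{x}_i)=1>F_j(\overline{x}_i)$, and $j$ earns strictly more than $u_i=u_j$ at $\overline{x}_i$.
\end{enumerate}
With these repairs, your Step~(d) goes through as in the paper.
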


\begin{proof}
    Let $(F_1, \ldots, F_n)$ denote the strategy profile with support $\textup{supp}(F_i) = [\underline{x}_i, \overline{x}_i]$, and $u_i$ the equilibrium utility for creator $i$. Our proof can be divided into the following six steps:
    
    \textbf{Step 1: In equilibrium, all creators obtain the same utility.}

    Suppose, for a contradiction, that $\max_i u_i > \min_j u_j$. Let $k = \arg\max_i u_i, l = \arg\min_j u_j$, and thus $u_k > u_l$. If $l$ deviates by placing all mass on  $\overline{x}_k + \varepsilon$ (where $\varepsilon \to 0^+$ is an infinitesimal perturbation, hereafter we omit such $\varepsilon$), that is, $F_l(x) = 0$ for $x < \overline{x}_k$ and $F_l(x) = 1$ for $x \geqslant \overline{x}_k$. Then the utility of $l$ after deviation becomes:
    \[ u_l' = \overline{x}_k W(\bm{F}_{-l}(\overline{x}_k), \bm{p}) - \gamma g(\overline{x}_k) \]
    where the expected position bias $W$ is defined in equation~\eqref{eq:W-def}. Note that $F_k(\overline{x}_k) = 1$, $F_l(\overline{x}_k) \leqslant 1$, and thus we have
    \begin{align*}
        W(\bm{F}_{-l}(\overline{x}_k), \bm{p}) &= F_k(\overline{x}_k) W(\bm{F}_{-\{k, l\}}(\overline{x}_k), \overline{\bm{p}}) + (1 - F_k(\overline{x}_k)) W(\bm{F}_{-\{k, l\}}(\overline{x}_k), \underline{\bm{p}}) \\
        &= W(\bm{F}_{-\{k, l\}}(\overline{x}_k), \underline{\bm{p}}) + F_k(\overline{x}_k) W(\bm{F}_{-\{k, l\}}(\overline{x}_k), \overline{\bm{p}} - \underline{\bm{p}}) \\
        &\geqslant W(\bm{F}_{-\{k, l\}}(\overline{x}_k), \underline{\bm{p}}) + F_l(\overline{x}_k) W(\bm{F}_{-\{k, l\}}(\overline{x}_k), \overline{\bm{p}} - \underline{\bm{p}}) \\
        &= W(\bm{F}_{-k}(\overline{x}_k), \bm{p}).
    \end{align*}
    where $\overline{\bm{p}}$ and $\underline{\bm{p}}$ are obtained by removing the lowest and highest elements from $\bm{p}$, respectively. Therefore, we have
    \[ u_k = \overline{x}_k W(\bm{F}_{-k}(\overline{x}_k), \bm{p}) - \gamma g(\overline{x}_k) \leqslant \overline{x}_k W(\bm{F}_{-l}(\overline{x}_k), \bm{p}) - \gamma g(\overline{x}_k) = u_l', \]
    which yields $u_l < u_k \leqslant u_l'$, a profitable deviation for creator $l$. This contradicts equilibrium.

    Note that although the definition of $W$ needs no other creators have a positive probability density on $\overline{x}_k$, the above inequality still holds because whether other creators have a positive probability density on $\overline{x}_k$ does not affect the proof above.

    \textbf{Step 2: $\min_i \underline{x}_i = \arg\max_x x p_n - \gamma g(x)$}

    Denote $j = \arg\min_i \underline{x}_i$, $\underline{x} = \arg\max_x x p_n - \gamma g(x)$, we prove by contradiction. If $\underline{x}_j > \underline{x}$, note that $x p_n - \gamma g(x)$ is concave as a function of $x$, thus has a unique maximum point, thus $\underline{x}_j p_n - \gamma g(\underline{x}_j) < \underline{x} p_n - \gamma g(\underline{x})$. Hence, creator $j$ could profitably deviate by shifting mass downward to $\underline{x}$, which violates equilibrium stability requirement.

    If $\underline{x}_j < \underline{x}$, consider the case where creator $j$ deviates his infimum strategy to $\underline{x}$, then the expected position bias on $\underline{x}$ (denoted by $q_j$) satisfies $q_j > p_n$. Note that $\underline{x}$ is the only maximum point of $x p_n - \gamma g(x)$, thus we have $\underline{x} q_j - \gamma g(\underline{x}) > \underline{x} p_n - \gamma g(\underline{x}) > \underline{x}_j p_n - \gamma g(\underline{x}_j)$. Again, a profitable deviation arises, contradicting equilibrium.
    
    \textbf{Step 3: For all $i \in [n]$, $F_i$ is continuous, that is, there is no point mass in the probability distribution.}

    Prove by contradiction. Assume the strategy of creator $i$ has a point mass on $x_i$, suppose the expected position bias of creator $i$ on $x_i$ is $q_i$. Let $x_{\max} = \arg\max_x x q_i - \gamma g(x)$, we claim that $x_i = x_{\max}$. If $x_i < x_{\max}$, let $\varepsilon > 0$ be such that $x_i + \varepsilon < x_{\max}$, then the expected position bias of creator $i$ on $x_i + \varepsilon$ (denoted by $q_i'$) satisfies $q_i' \geqslant q_i$, thus $(x_i + \varepsilon)q_i' - \gamma g(x_i + \varepsilon) \geqslant (x_i + \varepsilon)q_i - \gamma g(x_i + \varepsilon) > x_i q_i - \gamma g(x_i)$, the last inequality stems from the concavity of $x q - \gamma g(x)$ as a function of $x$. So, creator $i$ has the incentive to deviate his strategy on $x_i$ to $x_i + \varepsilon$, which violates equilibrium stability requirement.

    If $x_i > x_{\max}$, we claim that there exists a $\delta > 0$ such that no other creator has a positive probability density on $[x_i - \delta, x]$, then creator $i$ has the incentive to deviate to $x_i - \delta$. We prove the claim. Assume the expected position bias for any creator $j \neq i$ on $x_i - \delta$ is $q_j$, thus the utility of $j$ when choosing $x_i - \delta$ is $u_{j_1} = (x_i - \delta)q_j - \gamma g(x_i - \delta)$. But if creator $j$ deviates to $x_i + \varepsilon$ (where $\varepsilon \to 0^+$ is an infinitesimal perturbation, hereafter we omit such $\varepsilon$), the expected position bias $q_j'$ satisfies $q_j' > q_j$ because of the point mass of creator $i$ on $x_i$, thus the utility of creator $j$ becomes $u_{j_2} = x_i q_j' - \gamma g(x_i)$. Because $g$ is a continuous function and $x_i q_j' > (x_i - \delta)q_j$, there exists a $\delta > 0$ such that $u_{j_2} > u_{j_1}$, thus no other creator has a positive probability density on $[x_i - \delta, x]$.

    Thus $x_i = x_{\max}$. Recall the result of Step~2, we can derive $q_i = p_n$, otherwise, $\max_x x q_i - \gamma g(x) > \max_x x p_n - \gamma g(x)$, which contradicts the result of Step~1. Thus, the point mass can only be on $\min_k \underline{x}_k$. Assume $i = \arg\min_k \underline{x}_k$, if creator $j \neq i$ places the probability density on $\underline{x}_i + \varepsilon$ (where $\varepsilon \to 0^+$ is an infinitesimal perturbation, hereafter we omit such $\varepsilon$), then $u_i = \underline{x}_i p_n - \gamma g(\underline{x}_i)$, $u_j' = \underline{x}_i((1 - F_i(\underline{x}_i))p_n +F_i(\underline{x}_i) p_{n - 1}) - \gamma g(\underline{x}_i) > u_i$, which contradicts the conclusion of Step~1. So, there is no point mass in the probability distribution, and thus the equilibrium cannot be a pure strategy equilibrium.

    \textbf{Step 4: $\overline{x}_1 = \cdots = \overline{x}_n$.}

    Prove by contradiction. Suppose that there exist creators $i$ and $j$ such that $\overline{x}_i < \overline{x}_j$. Consider the case where creator $j$ takes the strategy $\overline{x}_i$, then the utility of creator $j$ is
    \[ u_j' = \overline{x}_i W(\bm{F}_{-j}(\overline{x}_i), \bm{p}) - \gamma g(\overline{x}_i). \]
    Note that $\overline{x}_i < \overline{x}_j$, so we have $F_i(\overline{x}_i) = 1$, $F_j(\overline{x}_i) < 1$, thus
    \begin{equation} \label{eq:symmetric-equilibrium-1}
        \begin{aligned}
            W(\bm{F}_{-j}(\overline{x}_i), \bm{p}) &= W(\bm{F}_{-\{i, j\}}(\overline{x}_i), \underline{\bm{p}}) + F_i(\overline{x}_i) W(\bm{F}_{-\{i, j\}}(\overline{x}_i), \overline{\bm{p}} - \underline{\bm{p}}) \\
            &> W(\bm{F}_{-\{i, j\}}(\overline{x}_i), \underline{\bm{p}}) + F_j(\overline{x}_i) W(\bm{F}_{-\{i, j\}}(\overline{x}_i), \overline{\bm{p}} - \underline{\bm{p}}) \\
            &= W(\bm{F}_{-i}(\overline{x}_i), \bm{p}).
        \end{aligned}
    \end{equation}
    Therefore, we have 
    \[ u_i = \overline{x}_i W(\bm{F}_{-i}(\overline{x}_i), \bm{p}) - \gamma g(\overline{x}_i) < \overline{x}_i W(\bm{F}_{-j}(\overline{x}_i), \bm{p}) - \gamma g(\overline{x}_i) = u_j', \]
    so $u_j' > u_i = u_j$, which violates equilibrium stability requirement, as it induces a profitable unilateral deviation of creator $j$.

    \textbf{Step 5: $\underline{x}_1 = \cdots = \underline{x}_n$.}
    
    Prove by contradiction. Let $i = \arg\min_k \underline{x}_k$, $j = \arg\max_k \underline{x}_k$. Assume that $\underline{x}_i < \underline{x}_j$. Thus, $F_i(\underline{x}_j) > 0$, $F_j(\underline{x}_j) = 0$. Similar to the derivation in equation~\eqref{eq:symmetric-equilibrium-1}, we have $W(\bm{F}_{-j}(\underline{x}_j), \bm{p}) > W(\bm{F}_{-i}(\underline{x}_j), \bm{p})$. Therefore, we have
    \[ u_j = \underline{x}_j W(\bm{F}_{-j}(\underline{x}_j), \bm{p}) - \gamma g(\underline{x}_j) > \underline{x}_j W(\bm{F}_{-i}(\underline{x}_j), \bm{p}) - \gamma g(\underline{x}_j) = u_i', \]
    where $u_i'$ denotes the utility of creator $i$ when choosing $\underline{x}_j$. According to the result of Step~1, Step~3 and the continuity of $g$, there must exist a $\delta > 0$ such that no creator has a positive probability density on $[\underline{x}_j - \delta, \underline{x}_j]$.

    Let $x_j = \arg\max_x x W(\bm{F}_{-j}(\underline{x}_j), \bm{p}) - \gamma g(x)$, then we claim that $\underline{x}_j = x_j$. Otherwise, if $\underline{x}_j > x_j$, let $\delta' = \underline{x}_j - x_j$, $\delta'' = \min\{\delta, \delta'\}$, then creator $j$ has the incentive to deviate from $\underline{x}_j$ to $\underline{x}_j - \delta''$; if $\underline{x}_j < x_j$, then $W(\bm{F}_{-j}(\underline{x}_j), \bm{p}) \leqslant W(\bm{F}_{-j}(x_j), \bm{p})$, thus
    \[ x_j W(\bm{F}_{-j}(x_j), \bm{p}) - \gamma g(x_j) \geqslant x_j W(\bm{F}_{-j}(\underline{x}_j), \bm{p}) - \gamma g(x_j) > \underline{x}_j W(\bm{F}_{-j}(\underline{x}_j), \bm{p}) - \gamma g(\underline{x}_j), \]
    then creator $j$ has the incentive to deviate from $\underline{x}_j$ to $x_j$. But if $\underline{x}_j = x_j$, then
    \[ x_j W(\bm{F}_{-j}(\underline{x}_j), \bm{p}) - \gamma g(x_j) = \max_x x W(\bm{F}_{-j}(\underline{x}_j), \bm{p}) - \gamma g(x) > \max_x x p_n - \gamma g(x), \]
    which contradicts the result of Step~1. So, there must be $\underline{x}_1 = \cdots = \underline{x}_n$ in equilibrium.

    \textbf{Step 6: $F_1 = \cdots = F_n$.}

    We first show that $\text{Supp}(F_1) \cup \cdots \cup \text{Supp}(F_n) = [\underline{x}, \overline{x}]$, where $\underline{x} = \underline{x}_1 = \cdots = \underline{x}_n$ and $\overline{x} = \overline{x}_1 = \cdots = \overline{x}_n$. Prove by contradiction. Suppose that there exists a gap $(x_1, x_2)$ in the union of supports, then similar to Step~2 and Step~5, we can derive $x_2 = \arg\max_x x W(\bm{F}_{-i}(x_2), \bm{p}) - \gamma g(x)$, where $i$ is any creator with $x_2$ in her support. Note that $\max_x x p_n - \gamma g(x) < \max_x x W(\bm{F}_{-i}(x_2), \bm{p}) - \gamma g(x)$, thus the utility of any creator $i$ when choosing $x_2$ is strictly greater than the equilibrium utility, which contradicts the equilibrium definition. Thus, there is no gap in the union of supports.

    Then consider any two creators $i$ and $j$ ($i \neq j$). For any $x \in \text{Supp}(F_i)$, we have
    \begin{align*}
        u_i &= x W(\bm{F}_{-i}(x), \bm{p}) - \gamma g(x), \\
            &= x \left(W(\bm{F}_{-\{i, j\}}(x), \underline{\bm{p}}) + F_j(x) W(\bm{F}_{-\{i, j\}}(x), \overline{\bm{p}} - \underline{\bm{p}})\right) - \gamma g(x),
    \end{align*}
    for any $x \in \text{Supp}(F_j)$, we have
    \begin{align*}
        u_j &= x W(\bm{F}_{-j}(x), \bm{p}) - \gamma g(x), \\
            &= x \left(W(\bm{F}_{-\{i, j\}}(x), \underline{\bm{p}}) + F_i(x) W(\bm{F}_{-\{i, j\}}(x), \overline{\bm{p}} - \underline{\bm{p}})\right) - \gamma g(x),
    \end{align*}
    Recall that Step~1 gives $u_i = u_j$, which forces $F_i(x) = F_j(x)$ for any $x \in \text{Supp}(F_i) \cap \text{Supp}(F_j)$. By Step~5 and the continuity of any $F_i$, we can choose the largest $\varepsilon > 0$ such that any $F_i$ has a positive density on $[\underline{x}, \underline{x} + \varepsilon]$. If $\underline{x} + \varepsilon < \overline{x}$, then there exists a subset of creators $P \subset [n]$ whose CDFs assign zero density after $\underline{x} + \varepsilon$. Let $x'$ be the infimum of the supports of the creators in $P$ that lies strictly above $\underline{x} + \varepsilon$ (this point belongs to the support of some creator $k \in P$). Because $\text{Supp}(F_1) \cup \cdots \cup \text{Supp}(F_n) = [\underline{x}, \overline{x}]$, there must be another creator $l \notin P$ whose support contains $x'$. However, at this point, the CDFs satisfy $F_k(x') \neq F_j(x')$, a contradiction. Hence, we must have $\underline{x} + \varepsilon = \overline{x}$, and consequently $F_1 = \cdots = F_n$ on the whole interval $[\underline{x}, \overline{x}]$.
\end{proof}

Let $F$ be the equilibrium CDF and let $\underline{x}$ and $\overline{x}$ be the infimum and supremum of its support. Step~6 in the proof of Lemma~\ref{lem:symmetric-equilibrium} implies $\text{supp}(F) = [\underline{x}, \overline{x}]$. We formally state this result in the following corollary.

\begin{corollary} \label{cor:symmetric-no-gap}
    In game $\mathcal{G}^{(1)}$, the support of $F$ is a continuous interval, i.e., $\textup{supp}(F) = [\underline{x}, \overline{x}]$.
\end{corollary}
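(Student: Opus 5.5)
We need to prove Corollary~\ref{cor:symmetric-no-gap}: in the symmetric equilibrium, the support of $F$ is the full interval $[\underline{x}, \overline{x}]$. The plan is to take the symmetry from Lemma~\ref{lem:symmetric-equilibrium}, so that all creators use the common CDF $F$, which is continuous by Step~3. Then argue by contradiction against a gap in $\textup{supp}(F)$, using the indifference condition together with strict concavity of $x\mapsto xq-\gamma g(x)$.

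\textbf{Setting up the gap.} By Step~1 and Step~2 of Lemma~\ref{lem:symmetric-equilibrium}, every creator obtains the common utility
\[ u^*=\max_x \bigl(x p_n-\gamma g(x)\bigr), \]
attained at $\underline{x}$. For a deviation to effort $x$, the expected position bias is
\[ q(x)=\sum_{i=1}^n\binom{n-1}{i-1}F(x)^{n-i}(1-F(x))^{i-1}p_i , \]
the symmetric specialization of $W(\bm F_{-i}(x),\bm p)$. Suppose a maximal gap $(x_1,x_2)$ exists with $\underline{x}\le x_1<x_2\le\overline{x}$ and $x_1,x_2\in\textup{supp}(F)$. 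Because $F$ has no atoms, $F$ is constant on $[x_1,x_2]$, say $F\equiv F(x_1)\in(0,1)$. Hence $q$ equals a constant $\bar q$ on this interval, and $\bar q>p_n$ by Lemma~\ref{lem:W-monotone}, since $F(x_1)>0$.

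\textbf{Deriving the contradiction.} On $[x_1,x_2]$ the payoff of a deviator is $\phi(x)=x\bar q-\gamma g(x)$, which is strictly concave. Both endpoints lie in the support, and by continuity of $F$ and $g$ the equilibrium indifference yields $\phi(x_1)=\phi(x_2)=u^*$. Strict concavity then gives $\phi(x)>u^*$ for every $x\in(x_1,x_2)$. A creator therefore gains by deviating into the gap, which contradicts equilibrium.

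An alternative route, following the paper's Step~6, is to show that $x_2$ must maximize $\phi$. This forces
\[ \phi(x_2)=\max_x \phi(x)>\max_x \bigl(xp_n-\gamma g(x)\bigr)=u^*, \]
which is again a contradiction.

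\textbf{Main obstacle.} The delicate step is justifying that the endpoint utilities equal $u^*$. Indifference holds $F$-almost everywhere on the support; it extends to the support points $x_1,x_2$ by continuity. This uses right- and left-continuity of $q$, which follows from the no-atom property in Step~3.
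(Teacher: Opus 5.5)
Your proof is correct, but its main line differs from the paper's. The paper obtains the corollary from Step~6 of Lemma~\ref{lem:symmetric-equilibrium}, which is essentially your ``alternative route.'' It takes a gap $(x_1,x_2)$ in the \emph{union} of the supports and looks only at the upper endpoint $x_2$, for a creator $i$ whose support contains it. Arguing as in Steps~2 and~5, it shows that $x_2$ must be the unconstrained maximizer of $x\bar q-\gamma g(x)$; otherwise shifting down into the gap, or up to the maximizer, would be profitable. It then uses $\bar q>p_n$ to get a payoff strictly above $u^*=\max_x\bigl(xp_n-\gamma g(x)\bigr)$.

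Your main argument is instead a chord argument: indifference at both endpoints plus strict concavity of $x\bar q-\gamma g(x)$ gives a strictly profitable point inside the gap. This is in the same spirit as Step~1 of the paper's proof of Proposition~\ref{prop:2-player-equ}. Your version is more economical, since it needs neither the closed form of $u^*$ nor the strict inequality $\bar q>p_n$. Your justification of indifference at every support point is exactly what is required: no atoms make $q$ continuous, so $\{x: xq(x)-\gamma g(x)<u^*\}$ is an open $F$-null set and therefore misses the support.

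The price is that you need both endpoints in the support of one and the same creator, and $q$ constant on the gap for that creator. You get both from the already-established symmetry. The paper's one-endpoint argument needs neither. That is why Step~6 can run it on the union of supports before symmetry is known, and then use the resulting no-gap property as an ingredient in proving $F_1=\cdots=F_n$.
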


The following proposition summarizes several important properties of the symmetric equilibrium.

\begin{proposition} \label{prop:symmetric-equilibrium-properties}
    In game $\mathcal{G}^{(1)}$, the equilibrium has the following properties:
    \begin{enumerate}
        \item The equilibrium PDF $f$ exists almost everywhere on the support $[\underline{x}, \overline{x}]$.
        \item The equilibrium PDF $f$ is positive on $(\underline{x}, \overline{x}]$ and $f(\underline{x}) = 0$.
        \item $\underline{x} = \arg\max_x x p_n - \gamma g(x)$, and the equilibrium utility is $u = \underline{x} p_n - \gamma g(\underline{x})$.
        \item $\overline{x}$ is the largest root of equation $x p_1 - \gamma g(x) = u$, where $u$ is the equilibrium utility.
    \end{enumerate}
\end{proposition}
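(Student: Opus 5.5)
Throughout, I work with the symmetric equilibrium $F$ from Lemma~\ref{lem:symmetric-equilibrium}, whose support is the interval $[\underline{x},\overline{x}]$ by Corollary~\ref{cor:symmetric-no-gap} and which has no atoms (Step~3 of that lemma). The central tool is the indifference condition. For every $x \in [\underline{x},\overline{x}]$,
\begin{equation} \label{eq:indiff-plan}
x\,\Phi(F(x)) - \gamma g(x) = u, \qquad \Phi(y) := \sum_{i=1}^n \binom{n-1}{i-1} y^{n-i}(1-y)^{i-1} p_i ,
\end{equation}
where $u$ is the common equilibrium utility. On the interior of the support this holds because any point of the support is a best response. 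At the endpoints it follows by continuity of $F$ and of $g$. The function $\Phi(y)=W(y\mathbf{1},\bm{p})$ is a polynomial, and by Lemma~\ref{lem:W-monotone} it is strictly increasing in $y$, with $\Phi(0)=p_n$ and $\Phi(1)=p_1$.

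\textbf{Properties 3 and 4.} Property~3 is essentially Step~2 of Lemma~\ref{lem:symmetric-equilibrium}, which gives $\underline{x}=\arg\max_x\, x p_n-\gamma g(x)$. Setting $x=\underline{x}$ and $F(\underline{x})=0$ in \eqref{eq:indiff-plan} then yields $u=\underline{x}p_n-\gamma g(\underline{x})$. For property~4, set $x=\overline{x}$ and $F(\overline{x})=1$, which gives $\overline{x}p_1-\gamma g(\overline{x})=u$. The function $x p_1-\gamma g(x)$ is strictly concave, so this equation has at most two roots. I must show $\overline{x}$ is the larger one. Since $u = \max_x (x p_n - \gamma g(x)) \le \max_x (x p_1 - \gamma g(x))$, the smaller root lies at or below the maximizer $x_1^*$ of $x p_1-\gamma g(x)$. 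I will then argue $\overline{x}\ge x_1^*$: if $\overline{x}<x_1^*$, deviating from $\overline{x}$ to a slightly higher effort keeps position bias $p_1$ and strictly increases $x p_1-\gamma g(x)$, which contradicts equilibrium. A cleaner alternative is to observe that $\overline{x}>\underline{x}$ and that $x\mapsto x p_1 - \gamma g(x)$ exceeds $u$ at $\underline{x}$, because $p_1>p_n$. Hence $\overline{x}$ must be the root lying to the right of $\underline{x}$, which is the larger root.

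\textbf{Properties 1 and 2.} I will solve \eqref{eq:indiff-plan} explicitly: $F(x)=\Phi^{-1}\!\big((u+\gamma g(x))/x\big)$. Since $\Phi$ is a strictly increasing polynomial, $\Phi^{-1}$ is differentiable wherever $\Phi'\neq 0$. The derivative $\Phi'(y)=\sum_k W(\cdot,\overline{\bm{p}}-\underline{\bm{p}})$ is positive on $(0,1)$, and by the computation in Lemma~\ref{lem:W-monotone} it is in fact positive on $[0,1]$. Because $g$ is differentiable, $F$ is differentiable on $[\underline{x},\overline{x}]$ with
\[ f(x)=\frac{\frac{d}{dx}\big[(u+\gamma g(x))/x\big]}{\Phi'(F(x))} = \frac{\gamma x g'(x)-\gamma g(x)-u}{x^2\,\Phi'(F(x))}. \]
This settles property~1, and in fact gives more than almost-everywhere existence. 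For property~2, the numerator vanishes at $\underline{x}$ by the first-order condition $p_n=\gamma g'(\underline{x})$, since $\gamma\underline{x}g'(\underline{x})-\gamma g(\underline{x})-u=\underline{x}(\gamma g'(\underline{x})-p_n)=0$. So $f(\underline{x})=0$. The derivative of the numerator is $\gamma x g''(x)$. Strict convexity gives that the numerator $\gamma(xg'(x)-g(x))$ is strictly increasing, because $x g'(x) - g(x) = \int_0^x (g'(x)-g'(t))\,dt$ and $g'$ is strictly increasing. Therefore the numerator is positive on $(\underline{x},\overline{x}]$, and so is $f$.

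\textbf{Main obstacle.} I expect the hardest step to be differentiability at the endpoints in properties~1 and~2, specifically confirming $\Phi'(0)>0$ and $\Phi'(1)>0$ so that the division is legitimate. This reduces to $\Phi'(0)=(n-1)(p_{n-1}-p_n)>0$ and $\Phi'(1)=(n-1)(p_1-p_2)>0$, which hold by strict monotonicity of $\bm{p}$. The other delicate point is ruling out the smaller root in property~4, which I will handle with the deviation argument given above.
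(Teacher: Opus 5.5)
Your proposal is correct and follows essentially the same route as the paper. Property~3 comes from Step~2 of Lemma~\ref{lem:symmetric-equilibrium} plus indifference at $\underline{x}$, property~4 from indifference at $\overline{x}$ with the smaller root excluded, and properties~1--2 from the chain-rule identity $\Phi'(F(x))\,f(x)=K'(x)$, which is the paper's \eqref{eq:chain-rule-KJ} combined with Lemmas~\ref{lem:K-monotone} and~\ref{lem:KJ-monotone}. Your version is slightly sharper in two respects: checking $\Phi'(0),\Phi'(1)>0$ gives differentiability of $F$ on all of $[\underline{x},\overline{x}]$ rather than only almost everywhere, and your integral argument for the monotonicity of $xg'(x)-g(x)$ avoids the paper's implicit use of $g''$.
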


The third property indicates that $\underline{x}$ and the equilibrium utility is uniquely determined by $p_n$, while the fourth property shows that $\overline{x}$ depends only on $p_1$ and $p_n$. Note that the third property is specific to our model. The reason is that in our model, the utility function $x p_n - \gamma g(x)$ attains its maximum at a positive effort level, while in standard contests with rank-based rewards, the utility function is simply a constant reward minus the cost, so if the infimum of support were positive, a player could always deviate to $0$ and increase her utility by avoiding the cost while still receiving the same reward.


\begin{proof}
    \begin{enumerate}
        \item Since $F$ is monotone and atomless (from Step~3 in the proof of Lemma~\ref{lem:symmetric-equilibrium}), PDF $f$ exists almost everywhere on the support $[\underline{x}, \overline{x}]$.
        \item Corollary~\ref{cor:symmetric-no-gap} shows that the support of $F$ is a continuous interval $[\underline{x}, \overline{x}]$. Based on this result, we can refer to equation~\eqref{eq:chain-rule-KJ}, Lemma~\ref{lem:K-monotone} and Lemma~\ref{lem:KJ-monotone} to deduce $f(\underline{x}) = 0$ and $f(x) > 0$ for $x \in (\underline{x}, \overline{x}]$.
        \item Step~2 in the proof of Lemma~\ref{lem:symmetric-equilibrium} shows that $\underline{x} = \arg\max_x x p_n - \gamma g(x)$. The indifference condition of mixed Nash equilibrium implies that the expected utility must be equal to the equilibrium utility $u$ at $\underline{x}$. Thus, we have $u = \max_x x p_n - \gamma g(x) = \underline{x} p_n - \gamma g(\underline{x})$.
        \item According to the indifference condition of mixed Nash equilibrium, at the supremum $\overline{x}$ of the support, the expected utility must be equal to the equilibrium utility $u$. From part~3 above, we know that $u = \max_x x p_n - \gamma g(x) = \underline{x} p_n - \gamma g(\underline{x})$.
        
        Let $x^*$ be the unique solution to $p_1 = \gamma g'(x^*)$. Because $g$ is strictly convex, the equation $u = x p_1 - \gamma g(x)$ has two solutions: one less than $x^*$ and the other greater than $x^*$. The lower one cannot be the supremum of the support of the mixed Nash equilibrium, since at that point any creator would have an incentive to deviate to $x^*$ to achieve strictly higher utility, which contradicts the indifference condition of mixed Nash equilibrium. Hence, the supremum of the support must be the largest solution to the equation $u = x p_1 - \gamma g(x)$.
    \end{enumerate}
\end{proof}

Building on Lemma~\ref{lem:symmetric-equilibrium}, Corollary~\ref{cor:symmetric-no-gap} and Proposition~\ref{prop:symmetric-equilibrium-properties}, consider the symmetric mixed Nash equilibrium. Let $F(\cdot)$ denote the CDF of any creator's equilibrium strategy and $f(\cdot)$ denote the corresponding probability density function. According to the indifference condition of mixed Nash equilibria, we have:

\begin{equation} \label{eq:indifference-sym}
    x \left(\sum_{i = 1}^n \binom{n - 1}{i - 1} [F(x)]^{n - i} [1 - F(x)]^{i - 1} p_i\right) - \gamma g(x) = u \quad \forall x \in \textup{supp}(F),
\end{equation}
where $u$ is the equilibrium utility. Based on this indifference condition, we can establish the uniqueness of the symmetric mixed Nash equilibrium.

\begin{lemma} \label{lem:unique-symmetric-equilibrium}
    In game $\mathcal{G}^{(1)}$, the symmetric equilibrium is unique.
\end{lemma}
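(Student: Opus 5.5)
The plan is to show that any symmetric equilibrium is pinned down by the indifference condition \eqref{eq:indifference-sym} together with its boundary data, and that this condition determines $F$ pointwise. By Lemma~\ref{lem:symmetric-equilibrium} every equilibrium is symmetric. By Proposition~\ref{prop:symmetric-equilibrium-properties}, the support is $[\underline{x},\overline{x}]$ with $\underline{x}=\arg\max_x x p_n-\gamma g(x)$, and the equilibrium utility is $u=\underline{x}p_n-\gamma g(\underline{x})$. Both are functions of $(p_n,\gamma,g)$ only, so they coincide across any two candidate equilibria $F$ and $\tilde F$. Likewise $\overline{x}$ is the largest root of $xp_1-\gamma g(x)=u$, so it also coincides. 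It therefore remains to show that $F(x)$ is uniquely determined for each $x\in[\underline{x},\overline{x}]$.

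For this, define $K(y)=\sum_{i=1}^n\binom{n-1}{i-1}y^{n-i}(1-y)^{i-1}p_i$, which is exactly $W(\bm{F}_{-j}(x),\bm{p})$ evaluated at the symmetric profile $y=F(x)$. I will show that $K$ is strictly increasing on $[0,1]$. There are two routes. One is to invoke Lemma~\ref{lem:W-monotone}: $K(y)=W(y,\dots,y,\bm{p})$ is strictly increasing in each coordinate, hence along the diagonal. The other is direct: $K(y)$ is the expectation of $p_{1+B}$ with $B\sim\mathrm{Bin}(n-1,1-y)$, and the strict decrease of $p_i$ together with first-order stochastic dominance gives $K'(y)>0$ on $(0,1)$. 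In either case $K$ is a continuous bijection from $[0,1]$ onto $[p_n,p_1]$.

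With this in hand, \eqref{eq:indifference-sym} reads $xK(F(x))=u+\gamma g(x)$ for every $x$ in the support, so
\[ F(x)=K^{-1}\!\left(\frac{u+\gamma g(x)}{x}\right),\qquad x\in[\underline{x},\overline{x}]. \]
Here $x\geqslant\underline{x}>0$, since $p_n>0$ and $g'(0)$ is finite forces an interior maximizer. Since $u$, $\underline{x}$ and $\overline{x}$ are common to $F$ and $\tilde F$, we obtain $F=\tilde F$ on the support. Both CDFs equal $0$ below $\underline{x}$ and $1$ above $\overline{x}$, so they agree everywhere.

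Two points need care. First, \eqref{eq:indifference-sym} must hold at every point of $[\underline{x},\overline{x}]$, not merely almost everywhere. This follows because $F$ is atomless (Step~3), the support has no gaps (Corollary~\ref{cor:symmetric-no-gap}), and the payoff $xK(F(x))-\gamma g(x)$ is continuous in $x$. Second, the argument of $K^{-1}$ must lie in $[p_n,p_1]$. This holds at the endpoints, by the definitions of $u$ and $\overline{x}$, and in between because $F(x)\in[0,1]$. I expect the main obstacle to be the continuity justification in the first point; uniqueness itself is immediate once $K$ is shown to be invertible.
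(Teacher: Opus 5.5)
Your proof is correct and follows the paper's argument. Like the paper, you rewrite \eqref{eq:indifference-sym} as $\sum_{i=1}^n \binom{n-1}{i-1}F(x)^{n-i}(1-F(x))^{i-1}p_i = (u+\gamma g(x))/x$ and invert the strictly increasing left-hand side pointwise. You are more careful in three places: you make explicit via Proposition~\ref{prop:symmetric-equilibrium-properties} and Corollary~\ref{cor:symmetric-no-gap} that $u$ and the support are common to all candidate equilibria and that the identity holds at every point of the support, and you rightly base the monotonicity on Lemma~\ref{lem:W-monotone} or Lemma~\ref{lem:KJ-monotone}, where the paper's citation of Lemma~\ref{lem:K-monotone} is a slip. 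One small correction: finiteness of $g'(0)$ alone does not give $\underline{x}>0$; you need $\gamma g'(0)<p_n$, which holds for $g(x)=x^{\beta}$ and is tacitly assumed throughout the paper, so this does not affect your argument.
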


\begin{proof}
    We first reformulate equation~\eqref{eq:indifference-sym} as
    \[ \sum_{i = 1}^n \binom{n - 1}{i - 1} [F(x)]^{n - i} [1 - F(x)]^{i - 1} p_i = \frac{u + \gamma g(x)}{x}. \]
    Lemma~\ref{lem:K-monotone} shows that the left-hand side is a strictly increasing function of $F(x)$, thus for any $x \in \textup{supp}(F)$, there exists a unique $F(x)$ that satisfies the above equation. Therefore, the CDF $F$ of the symmetric mixed Nash equilibrium is unique.
\end{proof}

According to Theorem~\ref{thm:mix-existence}, Lemma~\ref{lem:symmetric-equilibrium}, Lemma~\ref{lem:unique-symmetric-equilibrium}, we can conclude that the symmetric mixed Nash equilibrium is the unique equilibrium in the symmetric case. Thus, Theorem~\ref{thm:sym-equ-property} is proved.

We provide an example to illustrate how to solve the unique symmetric mixed Nash equilibrium when $n = 2$.

\begin{example} \label{ex:sym-n2}
    If $n = 2$, equation \eqref{eq:indifference-sym} becomes
    \[ x (F(x) p_1 + (1 - F(x)) p_2) - \gamma g(x) = u. \]
    
    Solving the above equation directly, we can derive the unique equilibrium CDF:
    \begin{equation} \label{eq:sym-equ-cdf-n2}
        F(x) = \frac{u + \gamma g(x)}{x (p_1 - p_2)} - \frac{p_2}{p_1 - p_2}.
    \end{equation}
    
    The following three equations determine $\overline{x}$, $\underline{x}$, and $u$ (thus $F$) in equilibrium, where $\underline{x}$ is the unique solution to the first equation, $u$ is determined by the second equation, and $\overline{x}$ is the largest solution of the last equation:
    \[ \begin{cases}
        p_2 - \gamma g'(\underline{x}) = 0, \\
        \underline{x} p_2 - \gamma g(\underline{x}) = u, \\
        \overline{x} p_1 - \gamma g(\overline{x}) = u.
    \end{cases} \]
\end{example}
\section{An Example of Non-decreasing Position Bias} \label{sec:position-bias-monotone}

In this section, we provide an illustrative example in which an increasing position bias vector leads to a sharp drop in equilibrium effort levels, thus illustrating the importance of the assumption that the position bias vector is non-increasing in our Problem~\eqref{eq:mechanism-design}.

\begin{example} \label{ex:position-bias-monotone}
    Consider a setting with $n = 2$, where the position bias vector is $\bm{p} = (p_1, p_2)$ with $p_1 < p_2$ and both creators share the same cost function $\gamma g(x)$. We first prove that no pure Nash equilibrium exists. We then demonstrate that, under the unique mixed Nash equilibrium, creators exert substantially lower effort compared to the case where the bias vector is reversed, i.e., $\bm{p} = (p_2, p_1)$.

    We first show the non-existence of pure Nash equilibrium. First, there cannot be an equilibrium profile $(x_1, x_2)$ with $x_1 = x_2 := x$. If $x \neq 0$, then under such a profile, either creator could profitably deviate by reducing her effort by an arbitrarily small $\varepsilon \to 0^+$, thus raising her payoff from $x\bigl((p_1 + p_2)/2\bigr) - \gamma g(x)$ to $(x - \varepsilon) p_2 - \gamma g(x - \varepsilon)$. If $x = 0$, then either creator could deviate to $\arg\max_x x p_1 - \gamma g(x)$, which yields strictly positive utility. Hence, any pure Nash equilibrium must satisfy $x_1 \neq x_2$. Without loss of generality, assume $x_1 > x_2$. In a pure Nash equilibrium, the first-order conditions give  
    \[ p_2 - \gamma g'(x_2) = 0 \quad \text{and} \quad p_1 - \gamma g'(x_1) = 0. \]  
    Because $g'(\cdot)$ is strictly increasing and $p_1 < p_2$, the two equations cannot be held simultaneously. Therefore, no pure Nash equilibrium exists.

    Then we consider the mixed Nash equilibrium. Following Step~2 of the proof of Lemma~\ref{lem:symmetric-equilibrium}, we can show that the supremum of the support of the mixed Nash equilibrium is
    \[ \overline{x} = \arg\max_x x p_1 - \gamma g(x). \]  
    However, if we reverse the position bias vector to $\bm{p} = (p_2, p_1)$, then by Lemma~\ref{lem:symmetric-equilibrium}, the infimum of the support of the unique mixed Nash equilibrium is  
    \[ \underline{x}' = \arg\max_x x p_1 - \gamma g(x). \]
    Thus, we have $\underline{x}' = \overline{x}$. This means that when the position bias vector is non-decreasing, creators exert substantially lower effort in equilibrium compared to the case where the position bias vector is non-increasing.
\end{example}
\section{Omitted Proofs and Discussions in Section~\ref{subsec:md-sym}} \label{sec:ap-sym-proofs}

\subsection{Reformulation of Problem~\texorpdfstring{\eqref{eq:mechanism-design}}{(1)}} \label{subsec:ap-sym-comp-reform}

Observe that the terms $\mathbb{E}[x_{(i)}]$ and $\mu$ in the objective function of Problem \eqref{eq:mechanism-design} depend on the equilibrium CDF $F$, so they are related to the optimization variable $\bm{c}$. However, since this relationship is not explicit, direct analysis becomes challenging. Adopting an approach analogous to \cite{golrezaei2025contest}, we reformulate the objective function to depend purely on $\bm{c}$.

With compensation, the indifference condition \eqref{eq:indifference-sym} becomes
\[ x \left(\sum_{i = 1}^n \binom{n - 1}{i - 1} [F(x)]^{n - i} [1 - F(x)]^{i - 1} (p_i + c_i)\right) - \gamma g(x) = u. \]
We rearrange to obtain:
\begin{equation} \label{eq:indifference-sym-2}
    \sum_{i = 1}^n \binom{n - 1}{i - 1} [F(x)]^{n - i} [1 - F(x)]^{i - 1} (p_i + c_i) = \frac{u + \gamma g(x)}{x}.
\end{equation}
Define $K(x) = \frac{u + \gamma g(x)}{x}$. Lemma~\ref{lem:K-monotone} shows that $K(x)$ is strictly increasing over the support of the equilibrium.

\begin{lemma} \label{lem:K-monotone}
    $K'(\underline{x}) = 0$, and $K'(x) > 0$ for every $x \in (\underline{x}, \overline{x}]$; consequently, $K(x)$ is strictly increasing on $[\underline{x}, \overline{x}]$.
\end{lemma}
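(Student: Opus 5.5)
The plan is to compute $K'$ directly and compare it with the first-order condition that pins down $\underline{x}$. Differentiating $K(x) = (u + \gamma g(x))/x$ gives
\[ K'(x) = \frac{\gamma x g'(x) - \gamma g(x) - u}{x^2}, \]
so the sign of $K'$ is the sign of the numerator $N(x) := \gamma x g'(x) - \gamma g(x) - u$. I would then study $N$ on $[\underline{x}, \overline{x}]$. Here $u$ and $\underline{x}$ are the equilibrium quantities of the compensated game, so Proposition~\ref{prop:symmetric-equilibrium-properties} applies with $p_i$ replaced by $p_i + c_i$.

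First, evaluate $N(\underline{x})$. By part~3 of Proposition~\ref{prop:symmetric-equilibrium-properties} (with $p_n + c_n$ in place of $p_n$), $\underline{x}$ maximizes $x(p_n + c_n) - \gamma g(x)$. Since $g$ is differentiable and strictly convex, and $\underline{x}$ is interior, the first-order condition $\gamma g'(\underline{x}) = p_n + c_n$ holds. Interiority follows because the derivative at $0$ is $p_n + c_n - \gamma g'(0) > 0$ is not guaranteed in general, but $g(x)=x^\beta$ with $\beta\ge 2$ gives $g'(0)=0$, and effort $1$ is prohibitively costly. The same part gives $u = \underline{x}(p_n + c_n) - \gamma g(\underline{x})$. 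Substituting both,
\[ N(\underline{x}) = \underline{x}(p_n + c_n) - \gamma g(\underline{x}) - u = 0, \]
which yields $K'(\underline{x}) = 0$.

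Next, show that $N$ is strictly increasing. Its derivative is $N'(x) = \gamma(g'(x) + x g''(x) - g'(x)) = \gamma x g''(x)$, which is positive for $x > 0$ when $g''>0$. Under the standing assumption $g(x) = x^\beta$ with $\beta \geqslant 2$, this holds on $(0,1]$. Without twice differentiability I would instead argue that $x g'(x) - g(x)$ is strictly increasing by strict convexity: for $y > x$,
\[ y g'(y) - g(y) - (x g'(x) - g(x)) = (y-x) g'(y) - (g(y) - g(x)) + x(g'(y) - g'(x)) > 0, \]
since $g(y) - g(x) \leqslant (y-x) g'(y)$ and $g'$ is strictly increasing with $x \geqslant \underline{x} > 0$. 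In either case, $N(x) > N(\underline{x}) = 0$ for $x \in (\underline{x}, \overline{x}]$, so $K'(x) > 0$ there. Strict monotonicity of $K$ on the closed interval then follows by the mean value theorem.

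The main subtlety is justifying $\underline{x} > 0$ and the first-order condition at $\underline{x}$, i.e., that the maximizer of $x(p_n + c_n) - \gamma g(x)$ is interior. I would handle this by noting that the objective has positive slope at $0$ because $g'(0) = 0$ for $\beta \geqslant 2$, and is negative at $x = 1$ by the prohibitive-cost assumption. By strict concavity the maximizer is therefore the unique interior critical point. Everything else is a direct computation.
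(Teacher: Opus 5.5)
Your proposal is correct and follows essentially the same route as the paper: differentiate $K$, show the numerator $\gamma\bigl(xg'(x)-g(x)\bigr)-u$ vanishes at $\underline{x}$ by combining the first-order condition $\gamma g'(\underline{x})=p_n+c_n$ with $u=\underline{x}(p_n+c_n)-\gamma g(\underline{x})$, and conclude from its derivative $\gamma x g''(x)>0$ that it is positive on $(\underline{x},\overline{x}]$. Your extra justifications (interiority of $\underline{x}$, and the monotonicity argument that avoids $g''$) make explicit points the paper takes for granted.
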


\begin{proof}
    Differentiating $K(x)$ yields
    \[ K'(x) = \frac{\gamma (g'(x) x - g(x)) - u}{x^2}. \]
    Define $H(x) = \gamma \left(g'(x) x - g(x)\right) - u$. Then:
    \[ H'(x) = \gamma g''(x) x > 0, \]
    where the inequality follows from the strict convexity of $g$. Hence, $H(x)$ is strictly increasing on $[\underline{x}, \overline{x}]$. Evaluating at $\underline{x}$:
    \begin{align*}
        H(\underline{x}) &= \gamma (g'(\underline{x}) \underline{x} - g(\underline{x})) - u \\
        &= \gamma (g'(\underline{x}) \underline{x} - g(\underline{x})) - (\underline{x} (p_n + c_n) - \gamma g(\underline{x})) \\
        &= \underline{x} (\gamma g'(\underline{x}) - (p_n + c_n)) = 0.
    \end{align*}
    Therefore, $H(x) > 0$ for all $x \in (\underline{x}, \overline{x}]$, implying $K'(x) > 0$ on that interval. Thus, we conclude that $K(x)$ is strictly increasing on $[\underline{x}, \overline{x}]$.
\end{proof}

Since $K(x)$ is strictly increasing, its inverse $K^{-1}$ exists on $[\underline{x}, \overline{x}]$. We may thus define:

\begin{equation} \label{eq:J-def}
    x = K^{-1}\left(\sum_{i = 1}^n \binom{n - 1}{i - 1} [F(x)]^{n - i} [1 - F(x)]^{i - 1} (p_i + c_i)\right) := J(F(x), \bm{p} + \bm{c}).
\end{equation}

Therefore, we have

\[ K(J(F(x), \bm{p} + \bm{c})) = \sum_{i = 1}^n \binom{n - 1}{i - 1} [F(x)]^{n - i} [1 - F(x)]^{i - 1} (p_i + c_i). \]

The following lemma expresses the expectation of a non-negative random variable in terms of its CDF, which will be useful in the subsequent analysis.

\begin{lemma} \label{lem:expectation-cdf}
    For a non-negative random variable $X \sim F$, the expectation can be expressed as
    \[ \mathbb{E}[X] = \int_0^{\infty} (1 - F(t)) \, \textup{d} t. \]
\end{lemma}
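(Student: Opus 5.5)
We prove the tail formula $\mathbb{E}[X] = \int_0^{\infty}(1-F(t))\,\textup{d}t$ by writing $X$ as an integral of indicators and exchanging the order of integration. For every realization $X(\omega) \geqslant 0$ we have the pointwise identity
\[ X(\omega) = \int_0^{X(\omega)} 1 \, \textup{d}t = \int_0^{\infty} \mathbbm{1}\{t < X(\omega)\} \, \textup{d}t. \]
Taking expectations of both sides gives $\mathbb{E}[X] = \mathbb{E}\left[\int_0^{\infty} \mathbbm{1}\{t < X\}\,\textup{d}t\right]$.

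The key step is to move the expectation inside the $\textup{d}t$ integral. The integrand $(\omega, t) \mapsto \mathbbm{1}\{t < X(\omega)\}$ is non-negative. It is jointly measurable, because $\{(\omega,t): t < X(\omega)\}$ is the preimage of the open half-plane $\{(a,b): b<a\}$ under the measurable map $(\omega,t)\mapsto (X(\omega),t)$. Tonelli's theorem for non-negative functions therefore permits the swap with no integrability assumption. The identity then holds in $[0,\infty]$, and in particular both sides are infinite together.

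After the exchange we obtain
\[ \mathbb{E}[X] = \int_0^{\infty} \mathbb{P}(X > t)\,\textup{d}t = \int_0^{\infty} (1 - F(t))\,\textup{d}t, \]
using $F(t) = \mathbb{P}(X \leqslant t)$. Atoms in $F$ cause no trouble: the indicator is $\{t<X\}$, so the integrand is exactly $\mathbb{P}(X>t) = 1-F(t)$ at every $t$, with no exceptional set to discard.

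The only point needing care is the measure-theoretic justification of the swap, and Tonelli handles it completely. In the paper's setting, effort levels lie in $[0,1]$, so the integral effectively runs over $[0,1]$ and all quantities are finite.
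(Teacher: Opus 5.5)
Your proof is correct and is essentially the paper's argument: both write $x=\int_0^x \textup{d}t$ and exchange the order of integration, the paper doing so against $\textup{d}F$ on the law of $X$ (citing Fubini) and you doing so on the underlying probability space with indicators. Your use of Tonelli for the non-negative integrand, your joint-measurability check, and your choice of the strict event $\{t<X\}$ to avoid atom issues make your version slightly more careful than the paper's one-line justification.
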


\begin{proof}
    By the definition of expectation, we have
    \begin{align*}
        \mathbb{E}[X] &= \int_0^{\infty} x \, \textup{d}F(x) = \int_0^{\infty} \int_0^x \, \textup{d}t \, \textup{d}F(x) \\
        &= \int_0^{\infty} \int_t^{\infty} \, \textup{d}F(x) \, \textup{d}t = \int_0^{\infty} (1 - F(t)) \, \textup{d}t,
    \end{align*}
    where the third equality follows from Fubini’s theorem.
\end{proof}

Using the definitions and lemmas above, the following lemma provides a reformulation of the original problem. For compensation mechanism design, where the position bias vector $\bm{p}$ is fixed, we denote the corresponding objective function in Problem~\eqref{eq:mechanism-design} by $W(\bm{c})$.

\begin{lemma} \label{lem:compensation-md-reform}
    When treating $\bm{p}$ as given, Problem \eqref{eq:mechanism-design} can be reformulated as
    \begin{equation} \label{eq:compensation-md-sym}
        \begin{aligned}
            \max_{\bm{c}} \ & W(\bm{c}) = n \int_0^1 K(J(y, \alpha\bm{p} - \bm{c})) J(y, \bm{p} + \bm{c}) \, \textup{d}y + \alpha h\left(\int_0^1 J(y, \bm{p} + \bm{c}) \, \textup{d}y\right) p_0 \\
            \textup{s.t.} \quad & c_1 \geqslant c_2 \geqslant \dots \geqslant c_n \geqslant 0.
        \end{aligned}
    \end{equation}
\end{lemma}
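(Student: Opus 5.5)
The plan is to rewrite every term of the objective in Problem~\eqref{eq:mechanism-design} through the change of variables $y = F(x)$. Under this substitution the equilibrium quantile function becomes explicit through \eqref{eq:J-def}.

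\textbf{Step 1: the quantile function.} By Theorem~\ref{thm:sym-equ-property} and Proposition~\ref{prop:symmetric-equilibrium-properties} (applied with $\bm{p}+\bm{c}$ in place of $\bm{p}$), the equilibrium $F$ is continuous and strictly increasing on $[\underline{x},\overline{x}]$. Hence $F^{-1}$ exists, and by \eqref{eq:indifference-sym-2} together with Lemma~\ref{lem:K-monotone} we get $F^{-1}(y) = J(y,\bm{p}+\bm{c})$ for $y\in[0,1]$. If $X\sim F$, then $X \overset{d}{=} J(Y,\bm{p}+\bm{c})$ with $Y\sim U[0,1]$.

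\textbf{Step 2: the AI Overview term.} Since all creators share $F$, we have $\mu = \mathbb{E}[X] = \int_0^1 J(y,\bm{p}+\bm{c})\,\textup{d}y$. This gives the $h$-term directly; Lemma~\ref{lem:expectation-cdf} offers an equivalent check.

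\textbf{Step 3: the rank terms.} Write $\sum_i \mathbb{E}[x_{(i)}](\alpha p_i - c_i)$ as an expectation over a single creator. A creator with effort $x$ (quantile $y=F(x)$) occupies rank $i$ with probability $\binom{n-1}{i-1}y^{n-i}(1-y)^{i-1}$. By symmetry, $\sum_i \mathbb{E}[x_{(i)}] w_i = n\,\mathbb{E}\bigl[X \sum_i \binom{n-1}{i-1}F(X)^{n-i}(1-F(X))^{i-1} w_i\bigr]$ for any weights $\bm{w}$. Now take $\bm{w} = \alpha\bm{p}-\bm{c}$ and substitute $y=F(X)$. The inner sum is the same Bernstein-type combination that appears in the definition of $J$, applied to the vector $\alpha\bm{p}-\bm{c}$, so it equals $K(J(y,\alpha\bm{p}-\bm{c}))$ by the defining identity of $J$. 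The rank term then becomes $n\int_0^1 K(J(y,\alpha\bm{p}-\bm{c}))\,J(y,\bm{p}+\bm{c})\,\textup{d}y$. Multiplying the $h$-term by $\alpha p_0$ completes the objective.

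\textbf{Main obstacle.} The delicate point is that $K$ and $J$ are defined only on the equilibrium range, through $u$ and $[\underline{x},\overline{x}]$ determined by $\bm{p}+\bm{c}$. The combination $\sum_i \binom{n-1}{i-1}y^{n-i}(1-y)^{i-1}(\alpha p_i - c_i)$ may fall outside the range of $K$ on $[\underline{x},\overline{x}]$, and $\alpha p_i - c_i$ need not even be monotone in $i$. I will therefore interpret the term $K(J(\cdot,\alpha\bm{p}-\bm{c}))$ simply as that combination, with $K\circ J$ understood as the identity on the polynomial in $y$ and $K$ extended to a strictly increasing function where needed. I will state this convention explicitly. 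Under it the equality is an exact identity, and the constraints on $\bm{c}$ carry over unchanged.
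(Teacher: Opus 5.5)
Your proof is correct, and for the rank terms it takes a more direct route than the paper.

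\textbf{The paper's route.} It writes each $\mathbb{E}[x_{(i)}]$ through the tail formula of Lemma~\ref{lem:expectation-cdf}, using the order-statistic CDFs, and collects the result into the auxiliary function $H(y,\bm{p})$. It then converts $\textup{d}x$ into $\textup{d}F/f$ via the chain-rule identity \eqref{eq:chain-rule-KJ}, which requires $f$ to exist and be positive almost everywhere. After that it integrates by parts and invokes Lemma~14 of \cite{golrezaei2025contest}, namely $H'(y,\bm{p}) = -nK(J(y,\bm{p}))$. The $\bm{p}$-term and the $\bm{c}$-term are handled separately and merged at the end by linearity of $K\circ J$.

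\textbf{Your route.} You decompose by creator. A creator at quantile $y$ occupies rank $i$ with probability $b_i(y)=\binom{n-1}{i-1}y^{n-i}(1-y)^{i-1}$. This gives $\mathbb{E}[x_{(i)}] = n\int_0^1 b_i(y)\,J(y,\bm{p}+\bm{c})\,\textup{d}y$, which is just the Beta-density formula for order statistics.

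\textbf{What each buys.} The cited identity for $H'$ says exactly that the $y$-derivative of the $i$-th order-statistic CDF is $n\,b_i(y)$. So your argument is self-contained: it needs no external lemma, no density $f$, and no integration by parts. It also treats the weight vector $\alpha\bm{p}-\bm{c}$ in a single pass. The paper's route, in exchange, produces along the way the relation between $f$ and $J'$ that it reuses elsewhere, for instance in Proposition~\ref{prop:symmetric-equilibrium-properties}.

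\textbf{Your convention.} Reading $K(J(y,\bm{w}))$ as the polynomial $\sum_i b_i(y)w_i$ is not an extra assumption on your part. The paper relies on the same reading tacitly, both when it applies Lemma~14 with $K$ built from the equilibrium utility under $\bm{p}+\bm{c}$, and when it calls $K\circ J$ linear in its second argument. Stating it explicitly is an improvement.

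\textbf{One addition needed.} State that atomlessness of $F$ (Step~3 in the proof of Lemma~\ref{lem:symmetric-equilibrium}) is what makes ties a null event. That is what justifies the binomial rank probabilities and the uniformity of $F(X)$ on which your substitution rests.
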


\begin{proof}
    This reformulation applies specifically to the objective function, while all constraints remain unchanged. We first set aside the factor $\alpha$ and the brackets, and begin by analyzing the second term of the objective function. In symmetric equilibrium, all creators adopt effort levels drawn from the same distribution $F$, implying identical expectations, denoted by $\mathbb{E}[x]$. Thus, the second term simplifies as follows:
    \begin{align*}
        h\left(\frac{1}{n} \cdot n \mathbb{E}[x]\right) p_0 &= h\left(\mathbb{E}[x]\right) p_0 \\ 
        &= h\left(\int_{\underline{x}}^{\overline{x}} J(F(x), \bm{p} + \bm{c}) \, \textup{d}F(x)\right) p_0 = h\left(\int_0^1 J(y, \bm{p} + \bm{c}) \, \textup{d}y\right) p_0,
    \end{align*}
    where the final substitution eliminates the dependence on $F$.
    
    We now turn to the first term of the objective. Let $F_{(i)}$ denote the CDF of the $i$-th order statistic, and define the auxiliary function:
    \[ H(y, \bm{p}) = \sum_{i = 1}^n \left(1 - \sum_{j = 1}^i \binom{n}{j - 1} y^{n - j + 1} (1 - y)^{j - 1}\right) p_i, \]
    then we have
    \begin{align*}
        \sum_{i=1}^n \mathbb{E}[x_{(i)}] p_i &= \sum_{i=1}^n \left(\underline{x} + \int_{\underline{x}}^{\overline{x}} (1 - F_{(i)}(x)) \, \textup{d}x\right) \cdot p_i \\
        &= \underline{x} \left(\sum_{i=1}^n p_i\right) + \int_{\underline{x}}^{\overline{x}} \sum_{i = 1}^n \left(1 - \sum_{j = 1}^i \binom{n}{j - 1} [F(x)]^{n - j + 1} [1 - F(x)]^{j - 1}\right) p_i \, \textup{d}x \\
        &= \underline{x} \left(\sum_{i=1}^n p_i\right) + \int_{\underline{x}}^{\overline{x}} \sum_{i = 1}^n \left(1 - \sum_{j = 1}^i \binom{n}{j - 1} [F(x)]^{n - j + 1} [1 - F(x)]^{j - 1}\right) p_i \frac{1}{f(x)} \, \textup{d}F(x) \\
        &= \underline{x} \left(\sum_{i=1}^n p_i\right) + \int_{\underline{x}}^{\overline{x}} H(F(x), \bm{p}) \frac{1}{f(x)} \, \textup{d}F(x).
    \end{align*}
    Here, the first equality follows from Lemma~\ref{lem:expectation-cdf}, the second from the distribution function of order statistics, and the third holds since Proposition~\ref{prop:symmetric-equilibrium-properties} ensures that 
    the probability density function $f$ exists and is positive almost everywhere on $[\underline{x}, \overline{x}]$.
    
    Differentiating both sides of equation \eqref{eq:indifference-sym-2} with respect to $x$ and applying the chain rule together with equation \eqref{eq:J-def}, we derive:
    \begin{equation} \label{eq:chain-rule-KJ}
        \frac{\text{d}[K(J(F(x), \bm{p} + \bm{c}))]}{\text{d}[F(x)]} f(x) = K'(x) = K'(J(F(x), \bm{p} + \bm{c})).
    \end{equation}
    Consequently,
    \begin{align*}
        \int_{\underline{x}}^{\overline{x}} H(F(x), \bm{p}) \frac{1}{f(x)} \, \textup{d}F(x) &= \int_{\underline{x}}^{\overline{x}} H(F(x), \bm{p}) \frac{\text{d}[K(J(F(x), \bm{p} + \bm{c}))]}{\text{d}[F(x)]} \frac{1}{K'(J(F(x), \bm{p} + \bm{c}))} \, \textup{d}F(x) \\
        &= \int_0^1 H(y, \bm{p}) \frac{\text{d}[K(J(y, \bm{p} + \bm{c}))]}{\text{d}y} \frac{1}{K'(J(y, \bm{p} + \bm{c}))} \, \textup{d}y \\
        &= \int_0^1 H(y, \bm{p}) J'(y, \bm{p} + \bm{c}) \, \textup{d}y.
    \end{align*}
    where the final equality follows from another application of the chain rule. Note that $H(0, \bm{p}) = \sum_{i=1}^n p_i, H(1, \bm{p}) = 0, J(0, \bm{p} + \bm{c}) = \underline{x}, J(1, \bm{p} + \bm{c}) = \overline{x}$, we apply integration by parts:
    \begin{align*}
        \sum_{i=1}^n \mathbb{E}[x_{(i)}] p_i &= \underline{x} \left(\sum_{i=1}^n p_i\right) + H(y, \bm{p}) J(y, \bm{p} + \bm{c}) \Big|_0^1 - \int_0^1 H'(y, \bm{p}) J(y, \bm{p} + \bm{c}) \, \textup{d}y \\
        &= n \int_0^1 K(J(y, \bm{p})) J(y, \bm{p} + \bm{c}) \, \textup{d}y.
    \end{align*}
    The final equality uses Lemma~14 of \cite{golrezaei2025contest}, which establishes $H'(y, \bm{p}) = -n K(J(y, \bm{p}))$. 
    
    Finally, we consider the third term of the objective function. By following a similar derivation as above, the total compensation cost can be reformulated as
    \[ \sum_{i=1}^n \mathbb{E}[x_{(i)}] c_i = n \int_0^1 K(J(y, \bm{c})) J(y, \bm{p} + \bm{c}) \, \textup{d}y. \]
    Note that the function $K \circ J$ is linear in its second argument, thus combining the above derivations completes the reformulation.
\end{proof}

\subsection{Proof of Theorem~\ref{thm:compensation-md-sym}} \label{subsec:ap-sym-comp-theorem}

To prove Theorem~\ref{thm:compensation-md-sym}, we first introduce some definitions.

\begin{definition}[symmetric function]
    A function $f: \mathbb{R}^n \to \mathbb{R}$ is symmetric if for any permutation $\pi$ of $[n]$ and $\bm{p} \in \mathbb{R}^n$, we have
    \[ f(p_1, p_2, \ldots, p_n) = f(p_{\pi(1)}, p_{\pi(2)}, \ldots, p_{\pi(n)}). \]
\end{definition}

We rewrite $W(\bm{c})$ (defined in Lemma~\ref{lem:compensation-md-reform}) as a symmetric function $W^{\text{sym}}: \mathbb{R}^{n} \to \mathbb{R}$, defined without constraint $c_1 \geqslant c_2 \geqslant \dots \geqslant c_n$. Let $c_{(1)} \geqslant c_{(2)} \geqslant \dots \geqslant c_{(n)}$ be the descending order of the elements in $\bm{c}$, and let $\bm{c_{(\cdot)}} = (c_{(1)}, c_{(2)}, \ldots, c_{(n)})$. Then we define
\[ W^{\text{sym}}(\bm{c}) = n \int_0^1 K(J(y, \alpha\bm{p} - \bm{c}_{(\cdot)})) J(y, \bm{p} + \bm{c}_{(\cdot)}) \, \textup{d}y +\alpha h\left(\int_0^1 J(y, \bm{p} + \bm{c}_{(\cdot)}) \, \textup{d}y\right) p_0, \]

It is clear that $W^{\text{sym}}(\bm{c})$ is symmetric. Next, we define Schur-concave functions.

\begin{definition}[Schur-concavity] \label{def:schur-concavity}
    Given two vectors $\bm{p}, \bm{y} \in \mathbb{R}^n$, we say $\bm{p}$ majorizes $\bm{y}$, denoted as $\bm{p} \succ \bm{y}$, if
    \begin{align*}
        \sum_{i = 1}^k p_{(i)} &\geqslant \sum_{i = 1}^k y_{(i)} \quad \text{for } k = 1, 2, \ldots, n - 1, \\
        \sum_{i = 1}^n p_{(i)} &= \sum_{i = 1}^n y_{(i)},
    \end{align*}
    
    A symmetric function $f: \mathbb{R}^n \to \mathbb{R}$ is Schur-concave if for any $\bm{p}, \bm{y} \in \mathbb{R}^n$ such that $\bm{p} \succ \bm{y}$, we have $f(\bm{p}) \leqslant f(\bm{y})$.
\end{definition}

The following lemma provides a useful criterion for Schur-concavity, which we will use in the proof.

\begin{lemma} \label{lem:schur-concave} \cite{ando1989majorization} 
    A concave function $f: \mathbb{R}^n \to \mathbb{R}$ is Schur-concave if and only if it is symmetric.
\end{lemma}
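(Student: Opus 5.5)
The plan is to prove the two directions separately. The ``only if'' direction is essentially definitional. By Definition~\ref{def:schur-concavity}, Schur-concavity is only defined for symmetric functions. Even without that convention, symmetry follows directly. For any permutation $\pi$, the vectors $\bm{p}$ and $\bm{p}_\pi=(p_{\pi(1)},\ldots,p_{\pi(n)})$ have the same sorted entries, so $\bm{p}\succ\bm{p}_\pi$ and $\bm{p}_\pi\succ\bm{p}$ both hold. Schur-concavity then gives $f(\bm{p})\leqslant f(\bm{p}_\pi)\leqslant f(\bm{p})$, so $f$ is symmetric. Concavity plays no role in this direction.

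For the ``if'' direction, I will use the classical Hardy--Littlewood--P\'olya characterization of majorization. If $\bm{p}\succ\bm{y}$, then $\bm{y}=D\bm{p}$ for some doubly stochastic matrix $D$. By Birkhoff's theorem, $D=\sum_{\pi}\lambda_\pi P_\pi$ is a convex combination of permutation matrices, with $\lambda_\pi\geqslant 0$ and $\sum_\pi\lambda_\pi=1$. Concavity and symmetry then give
\[
f(\bm{y})=f\Bigl(\sum_\pi \lambda_\pi P_\pi\bm{p}\Bigr)\geqslant \sum_\pi \lambda_\pi f(P_\pi\bm{p})=\sum_\pi\lambda_\pi f(\bm{p})=f(\bm{p}),
\]
which is exactly Schur-concavity.

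The main obstacle is establishing that $\bm{y}$ lies in the convex hull of the permutations of $\bm{p}$; everything else is one line. If I prefer not to cite HLP and Birkhoff, I would argue by T-transforms (``Robin Hood'' transfers). Using symmetry, assume both vectors are sorted in decreasing order. If $\bm{y}\neq\bm{p}$, let $j$ be the largest index with $p_j>y_j$ and $k$ the smallest index greater than $j$ with $p_k<y_k$. Such a $k$ exists because the partial-sum inequalities are tight at $n$.

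Next, transfer $\delta=\min\{p_j-y_j,\,y_k-p_k\}$ from coordinate $j$ to coordinate $k$. This yields $\bm{p}'=t\bm{p}+(1-t)P_{jk}\bm{p}$ for some $t\in[0,1]$, where $P_{jk}$ is the transposition swapping coordinates $j$ and $k$. One then checks two things: first, $\bm{p}\succ\bm{p}'\succ\bm{y}$; second, $\bm{p}'$ agrees with $\bm{y}$ in at least one more coordinate than $\bm{p}$ does. The partial-sum bookkeeping in this check is the only delicate part.

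For each such step, concavity and symmetry give $f(\bm{p}')\geqslant t f(\bm{p})+(1-t)f(P_{jk}\bm{p})=f(\bm{p})$. Since each step adds a matching coordinate, at most $n$ steps reach $\bm{y}$, and chaining the inequalities yields $f(\bm{y})\geqslant f(\bm{p})$.
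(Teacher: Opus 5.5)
The paper gives no proof of this lemma; it is quoted from \cite{ando1989majorization}. Your proposal is correct and is the standard textbook argument. The forward direction is immediate, and is in fact built into Definition~\ref{def:schur-concavity}. For the converse, both of your routes work. The T-transform chain is the more elementary one: it essentially proves the Hardy--Littlewood--P\'olya representation along the way, and it never needs Birkhoff, because each step is only a two-point convex combination.

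Compared with the bare citation, your proof shows exactly which concavity is used. It is needed only on $\operatorname{conv}\{P_\pi\bm{p}\}$, and in the T-transform version only along segments $[\bm{p}, P_{jk}\bm{p}]$. This has two consequences.
\begin{itemize}
\item The lemma holds verbatim for symmetric functions that are concave on any permutation-invariant convex domain, such as a nonnegative orthant, rather than on all of $\mathbb{R}^n$.
\item Each segment $[\bm{p}, P_{jk}\bm{p}]$ crosses the hyperplane $x_j = x_k$, so concavity inside each ordering chamber is not enough. For example, $f(x_1,x_2)=|x_1-x_2|$ is symmetric and linear on each closed chamber, yet it is Schur-convex.
\end{itemize}
The second point matters where the lemma is invoked for $W^{\text{sym}}$ in the proof of Theorem~\ref{thm:compensation-md-sym}. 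There the Hessian is computed for a fixed sorting permutation $\pi$, i.e.\ within a single chamber.

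On the bookkeeping you deferred, two short points.
\begin{itemize}
\item Existence of $k$ needs the strict inequality $\sum_{i\le j}p_i>\sum_{i\le j}y_i$, which comes from $p_j>y_j$. Tightness at $n$ alone is not enough. Together these give $\sum_{i>j}p_i<\sum_{i>j}y_i$.
\item By the choice of $j$ and $k$, $p_i=y_i$ for $j<i<k$. Everything else follows from this. First, $\bm{p}'$ stays sorted decreasingly, since $p'_j\ge y_j$ and $p'_k\le y_k$. This is what lets you iterate without re-sorting, which could otherwise disturb your count of matched coordinates. Second, $\sum_{i\le m}(p'_i-y_i)\ge 0$ holds for every $m$, which gives $\bm{p}'\succ\bm{y}$. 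Third, $2\delta\le(p_j-y_j)+(y_k-p_k)\le p_j-p_k$, so $t\in[\tfrac12,1)$.
\end{itemize}
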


Then we establish two auxiliary lemmas.

\begin{lemma} \label{lem:KJ-monotone}
    Function $K(J(y, \bm{p}))$ is strictly increasing in $y \in [0, 1]$.
\end{lemma}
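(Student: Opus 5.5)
By definition, $K(J(y,\bm{p})) = \sum_{i=1}^n \binom{n-1}{i-1} y^{n-i}(1-y)^{i-1} p_i$. This holds because $J(y,\bm{p}) = K^{-1}(\cdot)$ applied to exactly this Bernstein-type sum (equation~\eqref{eq:J-def}, with $\bm{c}$ absorbed into $\bm{p}$). The claim is therefore purely about this polynomial in $y$, and $K^{-1}$ plays no role. My plan is to differentiate the polynomial directly and show that the derivative is a positive combination of consecutive differences $p_i - p_{i+1}$.

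Reindex with $k = n-i$, so that $K(J(y,\bm{p})) = \sum_{k=0}^{n-1} \binom{n-1}{k} y^k (1-y)^{n-1-k} p_{n-k}$, which is a Bernstein polynomial of degree $n-1$ with coefficients $b_k = p_{n-k}$. Apply the standard Bernstein derivative identity
\[ \frac{\textup{d}}{\textup{d}y}\sum_{k=0}^{m} \binom{m}{k} y^k(1-y)^{m-k} b_k = m\sum_{k=0}^{m-1}\binom{m-1}{k} y^k (1-y)^{m-1-k}(b_{k+1}-b_k), \]
with $m = n-1$. Here $b_{k+1}-b_k = p_{n-k-1}-p_{n-k} > 0$ under the strict ordering $p_1 > \cdots > p_n$. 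This identity can be verified by differentiating each term and telescoping; it is routine.

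For $y \in (0,1)$, every basis term $\binom{n-2}{k} y^k(1-y)^{n-2-k}$ is strictly positive, so the derivative is strictly positive. At the endpoints the derivative equals $(n-1)(p_{n-1}-p_n) > 0$ at $y=0$ and $(n-1)(p_1-p_2)>0$ at $y=1$. Hence the polynomial is strictly increasing on $[0,1]$. An alternative, equivalent route is Lemma~\ref{lem:W-monotone}: the expression equals $W(\bm{y}_{-j},\bm{p})$ with all $n-1$ coordinates set to $y$, and increasing a common $y$ increases every coordinate, so strict monotonicity follows from the chain rule.

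The only subtle point is the convention for the vector. In applications the lemma is used with $\bm{p}+\bm{c}$ (non-increasing, and strictly so because $\bm{p}$ is strictly decreasing) and also with $\alpha\bm{p}-\bm{c}$ or $\bm{c}$ alone, where strictness may fail. I would state the result for strictly decreasing vectors, and note that for a merely non-increasing, non-constant vector the same argument gives strict monotonicity, since a nonnegative combination of positive basis functions with some positive weight is positive on $(0,1)$. Handling this hypothesis carefully is the main thing to check.
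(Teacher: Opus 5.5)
Your proof is correct and follows the paper's route. The paper likewise differentiates $K(J(y,\bm{p}))$ and writes the derivative as $(n-1)\sum_{i=1}^{n-1}\binom{n-2}{i-1}y^{n-i-1}(1-y)^{i-1}(p_i-p_{i+1})$, which after reindexing is exactly your Bernstein-derivative formula; the only difference is that the paper cites Lemma~17 of \cite{golrezaei2025contest} for this identity, whereas you derive it directly.
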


\begin{proof}
    Let $I(y, \bm{p})$ denote the derivative of $K(J(y, \bm{p}))$ with respect to $y$. According to Lemma~17 of \cite{golrezaei2025contest}, we have
    \begin{equation} \label{eq:KJ-derivative}
        I(y, \bm{p}) = (n - 1) \sum_{i = 1}^{n - 1} \binom{n - 2}{i - 1} y^{n - i - 1} (1 - y)^{i - 1} (p_i - p_{i + 1}).
    \end{equation}
    Recall that $p_1 > p_2 > \cdots > p_n$, hence for all $y \in (0, 1)$, $I(y, \bm{p}) > 0$, thus $K(J(y, \bm{p}))$ is strictly increasing in $y$ on $[0, 1]$.
\end{proof}

\begin{lemma} \label{lem:md-sym-aux}
    If $g(x) = x^{\beta}$ with $\beta \geqslant 2$, then the function $\frac{2(K'(x))^2}{K''(x)} - K(x)$ is strictly increasing on $[\underline{x}, \overline{x}]$.
\end{lemma}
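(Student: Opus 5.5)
The plan is to make everything explicit using $g(x)=x^{\beta}$ and the equilibrium characterization of $\underline{x}$ and $u$, reduce the claim to a one-variable rational function, and check the sign of its derivative. By Proposition~\ref{prop:symmetric-equilibrium-properties} (applied with the effective bias $p_n+c_n$), $\underline{x}$ satisfies $\gamma\beta\underline{x}^{\beta-1}=p_n+c_n$. Hence
\[ u=\underline{x}(p_n+c_n)-\gamma\underline{x}^{\beta}=\gamma(\beta-1)\underline{x}^{\beta}>0. \]
With $K(x)=u/x+\gamma x^{\beta-1}$ we have $K'(x)=-u/x^2+\gamma(\beta-1)x^{\beta-2}$ and $K''(x)=2u/x^3+\gamma(\beta-1)(\beta-2)x^{\beta-3}$. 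Since $\beta\geqslant 2$ and $u>0$, $K''>0$ on $[\underline{x},\overline{x}]$, so the expression $2(K')^2/K''-K$ is well defined there.

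Next I would normalize by substituting $t=x/\underline{x}\in[1,\overline{x}/\underline{x}]$ and $s=t^{\beta}$, and insert $u=\gamma(\beta-1)\underline{x}^{\beta}$. All three quantities then share common powers of $\underline{x}$ and factors of $\gamma$:
\begin{align*}
K'&=\gamma(\beta-1)\underline{x}^{\beta-2}\,t^{-2}(s-1),\\
K''&=\gamma(\beta-1)\underline{x}^{\beta-3}\,t^{-3}\bigl(2+(\beta-2)s\bigr),\\
K&=\gamma\underline{x}^{\beta-1}\,t^{-1}(\beta-1+s).
\end{align*}
Consequently $2(K')^2/K''-K=\gamma\underline{x}^{\beta-1}\,\psi(s)/t$, where
\[ \psi(s)=\frac{2(\beta-1)(s-1)^2}{2+(\beta-2)s}-(\beta-1)-s. \]
Putting $\psi$ over a common denominator, the numerator should simplify to $\beta s(s-\beta-1)$. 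Using $s/t=t^{\beta-1}$, the target function becomes a positive constant times
\[ \phi(t)=\frac{\beta\, t^{\beta-1}\bigl(t^{\beta}-\beta-1\bigr)}{2+(\beta-2)t^{\beta}}. \]
It therefore suffices to show $\phi'(t)>0$ for $t>1$, which gives strict monotonicity on $[\underline{x},\overline{x}]$.

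The main obstacle is this last step. The factor $R(s)=(s-\beta-1)/(2+(\beta-2)s)$ is increasing in $s$, since $R'(s)=\beta(\beta-1)/(2+(\beta-2)s)^2>0$. However, $R$ can be negative, so "increasing times increasing" does not settle it, and I would differentiate directly. Writing $\phi=\beta t^{\beta-1}R(t^{\beta})$ gives
\[ \phi'(t)=\beta t^{\beta-2}\bigl[(\beta-1)R(s)+\beta sR'(s)\bigr], \]
and multiplying the bracket by $(2+(\beta-2)s)^2>0$ yields
\[ (\beta-1)\,Q(s),\qquad Q(s)=(\beta-2)s^2+(\beta+4)s-2(\beta+1). \]
Then $Q(1)=0$ and $Q'(s)=2(\beta-2)s+\beta+4>0$ for $s\geqslant 1$, so $Q(s)>0$ for $s>1$, i.e.\ for $x>\underline{x}$. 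This shows the function is strictly increasing on $[\underline{x},\overline{x}]$; the derivative vanishes only at the single endpoint $\underline{x}$, which does not affect strictness. The routine work is verifying the two polynomial simplifications, namely the numerator $\beta s(s-\beta-1)$ and the form of $Q$. The hypothesis $\beta\geqslant 2$ is used exactly to guarantee $K''>0$ and $Q'>0$.
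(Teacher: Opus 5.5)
Your proof is correct and is essentially the paper's argument, a direct differentiation using $g(x)=x^{\beta}$. The paper computes $L'=K'\bigl(3(K'')^2-2K'K'''\bigr)/(K'')^2$, shows $3(K'')^2-2K'K'''=\gamma\beta(\beta-1)\bigl[2u(\beta+1)x^{\beta-6}+\gamma(\beta-1)(\beta-2)x^{2\beta-6}\bigr]$, and takes $K'>0$ from Lemma~\ref{lem:K-monotone}; you instead first normalize via $u=\gamma(\beta-1)\underline{x}^{\beta}$, and your $Q(s)$ factors as $(s-1)\bigl((\beta-2)s+2(\beta+1)\bigr)$, which up to positive factors is exactly the paper's $K'$ times its bracket (your normalization also makes $K''>0$ and $u>0$, the latter needed when $\beta=2$, explicit).
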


\begin{proof}
    Define $L(x) = \frac{2(K'(x))^2}{K''(x)} - K(x)$. Differentiation yields:
    \[ L'(x) = \frac{K'(x)}{(K''(x))^2} \left(3(K''(x))^2 - 2K'(x) K'''(x)\right). \]
    The derivatives of $K(x)$ are computed as follows:
    \begin{align*}
        K'(x) &= c (\beta - 1) x^{\beta - 2} - \frac{u}{x^2}, \\
        K''(x) &= c (\beta - 1)(\beta - 2) x^{\beta - 3} + \frac{2u}{x^3}, \\
        K'''(x) &= c (\beta - 1)(\beta - 2)(\beta - 3) x^{\beta - 4} - \frac{6u}{x^4}.
    \end{align*}
    Substituting these into the expression for $L'(x)$ and simplifying, we obtain:
    \[ L'(x) = \frac{K'(x)}{(K''(x))^2} \cdot c\beta(\beta - 1)\left[2u(\beta + 1)x^{\beta - 6} + c(\beta - 1)(\beta - 2)x^{2\beta - 6}\right]. \]
    By Lemma~\ref{lem:K-monotone}, $K'(x) > 0$ on $(\underline{x}, \overline{x}]$. Given $\beta \geqslant 2$, all terms within the bracket are non-negative, and the leading coefficient is positive. Therefore, $L'(x) > 0$ on $(\underline{x}, \overline{x}]$, which implies that $L(x)$ is strictly increasing on $[\underline{x}, \overline{x}]$.
\end{proof}

Let $b_i(y) = \binom{n - 1}{i - 1} y^{n - i} (1 - y)^{i - 1}$ for $i \in [n]$. Note that $b_i(y) \geqslant 0$ for all $y \in [0, 1]$. Using the above lemmas, we now prove Theorem~\ref{thm:compensation-md-sym}.

\begin{proofof}{Theorem~\ref{thm:compensation-md-sym}}
    Consider the two terms in $W^{\text{sym}}(\bm{c})$ separately. Let $\pi$ be a permutation such that $c_{\pi(1)} \geqslant c_{\pi(2)} \geqslant \dots \geqslant c_{\pi(n)}$, and let $\bm{c}_{-n} = (c_1, c_2, \ldots, c_{n - 1})$. Consider the second term of $W^{\text{sym}}(\bm{c})$. The derivative of $J(y, \bm{p} + \bm{c}_{(\cdot)})$ with respect to $c_i$ for $i = 1, \dots, n - 1$ is:
    \[ \frac{\partial J(y, \bm{p} + \bm{c}_{(\cdot)})}{\partial c_i} = \frac{b_{\pi^{-1}(i)}(y)}{K'(J(y, \bm{p} + \bm{c}_{(\cdot)}))}, \]
    Differentiating again with respect to $c_j$ for $j = 1, \dots, n - 1$ gives:
    \[ \frac{\partial^2 J(y, \bm{p} + \bm{c}_{(\cdot)})}{\partial c_i \partial c_j} = -\frac{b_{\pi^{-1}(i)}(y) b_{\pi^{-1}(j)}(y) K''(J(y, \bm{p} + \bm{c}_{(\cdot)}))}{(K'(J(y, \bm{p} + \bm{c}_{(\cdot)})))^3}, \]
    
    Since $b_i(y) \geqslant 0$, $K'(x) > 0$, and $K''(x) > 0$, the above expression is non-positive. Hence, $J(y, \bm{p} + \bm{c}_{(\cdot)})$ is concave in $\bm{c}_{-n}$, and so is $\int_0^1 J(y, \bm{p} + \bm{c}_{(\cdot)}) \textup{d}y$. Since $h$ is increasing and concave, $\alpha > 0$, $\alpha h\left(\int_0^1 J(y, \bm{p} + \bm{c}_{(\cdot)}) \textup{d}y\right) p_0$ is also concave in $\bm{c}_{-n}$.

    Now consider the first term of $W^{\text{sym}}(\bm{c})$. The derivative of the integrand with respect to $c_i$ for $i = 1, \dots, n - 1$ is:
    \[ -b_{\pi^{-1}(i)}(y) J(y, \bm{p} + \bm{c}_{(\cdot)}) + \frac{K(J(x, \alpha\bm{p} - \bm{c}_{(\cdot)})) b_{\pi^{-1}(i)}(y)}{K'(J(y, \bm{p} + \bm{c}_{(\cdot)}))}, \]
    Differentiating again with respect to $c_j$ for $j = 1, \dots, n - 1$ gives:
    \begin{align*}
        &\quad H_{ij}(y, \bm{c}) \\
        &= -\frac{2b_{\pi^{-1}(i)}(y) b_{\pi^{-1}(j)}(y)}{K'(J(y, \bm{p} + \bm{c}_{(\cdot)}))} - \frac{b_{\pi^{-1}(i)}(y) b_{\pi^{-1}(j)}(y) K''(J(y, \bm{p} + \bm{c}_{(\cdot)})) K(J(y, \alpha\bm{p} - \bm{c}_{(\cdot)}))}{(K'(J(y, \bm{p} + \bm{c}_{(\cdot)})))^3} \\
        &= -\left( \frac{2(K'(J(y, \bm{p} + \bm{c}_{(\cdot)})))^2}{K''(J(y, \bm{p} + \bm{c}_{(\cdot)}))} + K(J(y, \alpha\bm{p} - \bm{c}_{(\cdot)})) \right) \frac{b_{\pi^{-1}(i)}(y) b_{\pi^{-1}(j)}(y) K''(J(y, \bm{p} + \bm{c}_{(\cdot)}))}{(K'(J(y, \bm{p} + \bm{c}_{(\cdot)})))^3} \\
        &= \left( K(J(y, \bm{p} + \bm{c}_{(\cdot)})) - \frac{2(K'(J(y, \bm{p} + \bm{c}_{(\cdot)})))^2}{K''(J(y, \bm{p} + \bm{c}_{(\cdot)}))} - (\alpha + 1)K(J(y, \bm{p})) \right) \\ 
        &\quad \cdot \frac{b_{\pi^{-1}(i)}(y) b_{\pi^{-1}(j)}(y) K''(J(y, \bm{p} + \bm{c}_{(\cdot)}))}{(K'(J(y, \bm{p} + \bm{c}_{(\cdot)})))^3}.
    \end{align*}
    
    By Lemma~\ref{lem:md-sym-aux}, the first two terms in the parentheses decrease with $y$, and by Lemma~\ref{lem:KJ-monotone}, the third term increases with $y$. Thus, the expression in parentheses decreases with $y$. Since the factor outside the parentheses is positive, if the parenthesized expression is non-positive at $y = 0$, then $H_{ij}(y, \bm{c}) \leqslant 0$ for any $y \in [0, 1]$. At $y = 0$, the parenthesized expression equals:
    \[ p_n + c_n - (\alpha + 1)p_n = c_n - \alpha p_n, \]
    Therefore, if $c_n \leqslant \alpha p_n$, then $H_{ij}(y, \bm{c}) \leqslant 0$. Let $H(y, \bm{c})$ be the matrix $(H_{ij}(y, \bm{c}))_{i, j = 1}^{n - 1}$, and define
    \[ L(y, \bm{c}) = \left( K(J(y, \bm{p} + \bm{c}_{(\cdot)})) - \frac{2(K'(J(y, \bm{p} + \bm{c}_{(\cdot)})))^2}{K''(J(y, \bm{p} + \bm{c}_{(\cdot)}))} - (\alpha + 1)K(J(y, \bm{p})) \right) \frac{K''(J(y, \bm{p} + \bm{c}_{(\cdot)}))}{(K'(J(y, \bm{p} + \bm{c}_{(\cdot)})))^3}, \]
    Then for any $\bm{z} \in \mathbb{R}^{n - 1}$,
    \begin{align*}
        \bm{z}^\top H(y, \bm{c}) \bm{z} &= L(y, \bm{c}) \cdot \sum_{i = 1}^{n - 1} \sum_{j = 1}^{n - 1} z_i z_j b_{\pi^{-1}(i)}(y) b_{\pi^{-1}(j)}(y) \\
        &= L(y, \bm{c}) \cdot \left(\sum_{i = 1}^{n - 1} z_i b_{\pi^{-1}(i)}(y)\right)^2 \leqslant 0.
    \end{align*}
    Hence, $H(y, \bm{c})$ is negative semidefinite, and the integral in the first term of $W^{\text{sym}}(\bm{c})$ is concave in $\bm{c}_{-n}$. Combining this with the concavity of the second term, by Lemma~\ref{lem:schur-concave}, $W^{\text{sym}}(\bm{c})$ is Schur-concave in $\bm{c}_{-n}$ when $c_n \leqslant p_n$. Therefore, there exists an optimal solution with $c_1 = c_2 = \cdots = c_{n - 1}$ when $c_n \leqslant p_n$.
\end{proofof}

\subsection{Discussion on the optimal value of \texorpdfstring{$c_n$}{c\_n}} \label{subsec:optimal-cn}

In this subsection, we discuss the condition on $c_n$ in Theorem~\ref{thm:compensation-md-sym} and analyze its optimal value in realistic scenarios.

The condition $c_n \leqslant \alpha p_n$ in Theorem~\ref{thm:compensation-md-sym} does not appear in previous literature on rank-based contests \cite{glazer1988optimal,barut1998symmetric,golrezaei2025contest}. In those models, the utility function is simply reward minus cost, which implies that the infimum equilibrium effort is always zero (as explained in the discussion below Proposition~\ref{prop:symmetric-equilibrium-properties}). Consequently, the optimal compensation for the last position would intuitively also be zero, because setting a positive $c_n$ would raise compensation costs without eliciting higher effort, and therefore cannot be optimal. In our setting, however, raising $c_n$ increases the infimum effort $\underline{x}$. Hence, it is not immediate that the optimal $c_n$ should be zero. In fact, we can give an example in which the optimal $c_n$ is not zero.

\begin{example} \label{ex:cn-increasing}
    If $n = 2$, $p_1 = 0.7$, $p_2 = 0.1$, $\alpha = 1$, $g(x) = x^2$, $\gamma = 1$, and $h(x) = \sqrt{x}$, then Figure~\ref{fig:cn-increasing} plots the optimal value of $W(\bm{c})$ for each value of $c_2$. The plot shows that the optimal $c_2$ is approximately $0.025 > 0$.
\end{example}

\begin{figure}[t]
    \centering
    \includegraphics[width=0.4\textwidth]{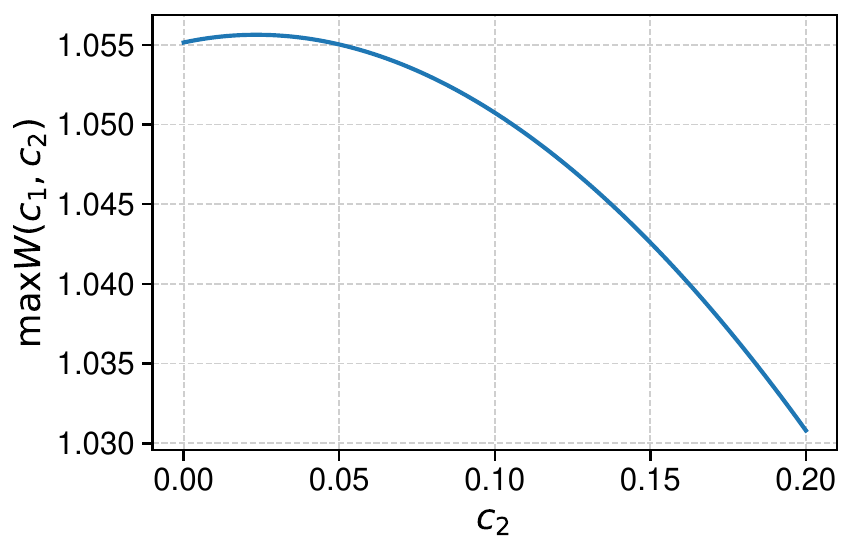} 
    \caption{Plot of $\max \ W(\bm{c})$ against $c_n$ in Example~\ref{ex:cn-increasing}}
    \label{fig:cn-increasing}
\end{figure}

The numerical experiments below demonstrate that the scenario illustrated in Example~\ref{ex:cn-increasing} is uncommon in practice. Taking the second line of Table~\ref{tab:position-bias-no-ai-overview} as a realistic position bias vector for compensation mechanism design, we conduct numerical experiments across a range of parameter settings. These experiments consistently show that the optimal value of $c_n$ is usually zero. We also examine the reasons why $c_n > 0$ can arise in Example~\ref{ex:cn-increasing}.

Figures~\ref{fig:cn-experiments-1}–\ref{fig:cn-experiments-3} examine the optimal value of $c_n$ under different choices of $g, \alpha, h$. We do not vary $\gamma$ since Lemma~\ref{lem:asym-optimal-ch-cl-2} shows that the value of $\gamma$ does not affect the structure of the equilibrium. For each $c_n$, we compute the optimal objective value $W(\bm{c})$ via grid search; this is what the vertical axis $\max\ W(\bm{c})$ in the figures represents. In all three experiments, the position bias vector is divided equally into two halves: the first five positions and the last five positions, corresponding to the setting $n_H = n_L = 5$ in Section~\ref{sec:asymmetric}. Sub-figure (a) in each panel displays results for the first five positions, while sub-figure (b) shows results for the last five. Parameters that are not varied are held at their default values: $\alpha = 2, g(x) = x^4, \gamma = 1, h(x) = \sqrt{x}$.

The plots show that the optimal value of $c_n$ is always $0$. The overall trend is hardly affected by changes in $\alpha$ and $h$, but the choice of $g$ has a visible effect: a lower power in $g$ corresponds to a more gradual decline. This can be understood intuitively: a smaller power in $g$ shifts the equilibrium towards lower effort levels (a smaller exponent makes $g$ larger on $[0, 1]$), which enhances the positive incentive effect of increasing $c_n$ and hence slows the decline. In Example~\ref{ex:cn-increasing}, where the power of $g$ is $2$, the observed trend is consistent with this explanation.

\begin{figure}[t]
    \centering
    \captionsetup[subfigure]{justification=centering}
    \begin{subfigure}[b]{0.4\linewidth}
        \centering
        \includegraphics[width=\linewidth]{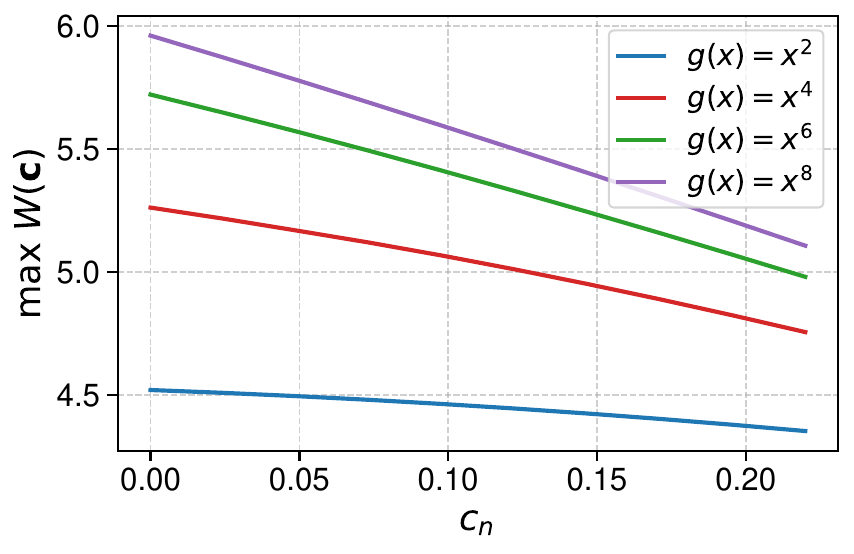}
        \caption{First group of position biases}
        \label{fig:cn-g1}
    \end{subfigure}
    \begin{subfigure}[b]{0.4\linewidth}
        \centering
        \includegraphics[width=\linewidth]{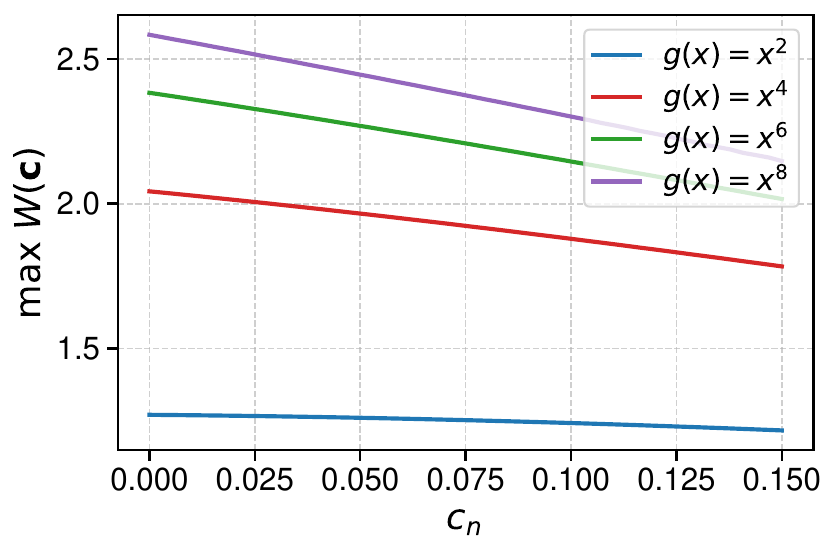}
        \caption{Second group of position biases}
        \label{fig:cn-g2}
    \end{subfigure}
    \caption{Plots of $\max W(\bm{c})$ against $c_n$ with different $g$}
    \label{fig:cn-experiments-1}
\end{figure}

\begin{figure}[t]
    \centering
    \captionsetup[subfigure]{justification=centering}
    \begin{subfigure}[b]{0.4\linewidth}
        \centering
        \includegraphics[width=\linewidth]{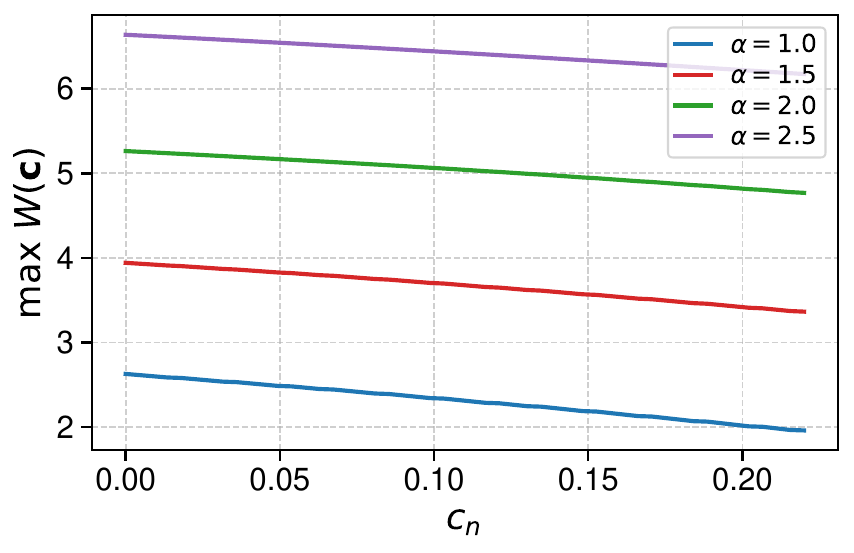}
        \caption{First group of position biases}
        \label{fig:cn-alpha1}
    \end{subfigure}
    \begin{subfigure}[b]{0.4\linewidth}
        \centering
        \includegraphics[width=\linewidth]{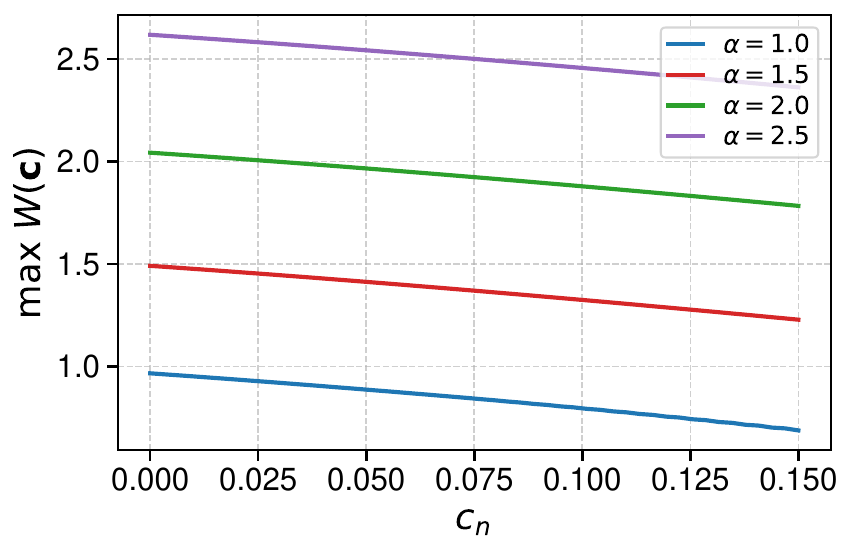}
        \caption{Second group of position biases}
        \label{fig:cn-alpha2}
    \end{subfigure}
    \caption{Plots of $\max W(\bm{c})$ against $c_n$ with different $\alpha$}
    \label{fig:cn-experiments-2}
\end{figure}

\begin{figure}[t]
    \centering
    \captionsetup[subfigure]{justification=centering}
    \begin{subfigure}[b]{0.4\linewidth}
        \centering
        \includegraphics[width=\linewidth]{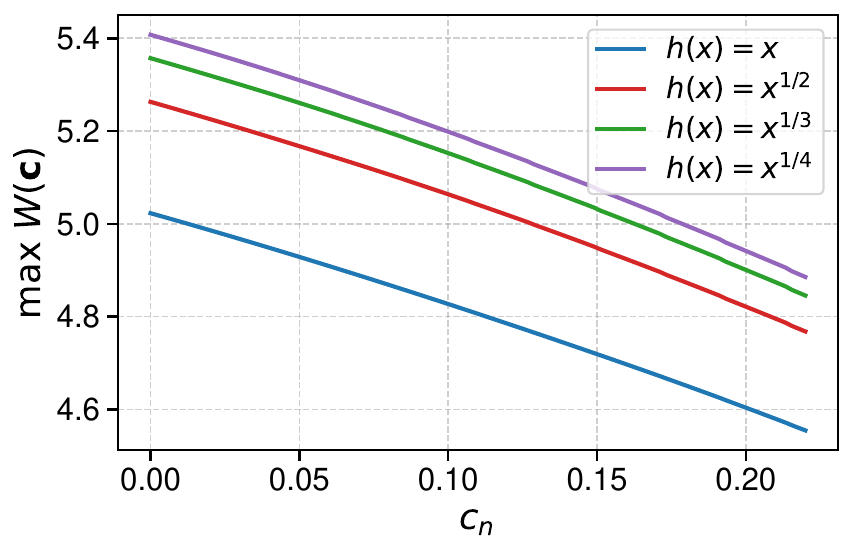}
        \caption{First group of position biases}
        \label{fig:cn-h1}
    \end{subfigure}
    \begin{subfigure}[b]{0.4\linewidth}
        \centering
        \includegraphics[width=\linewidth]{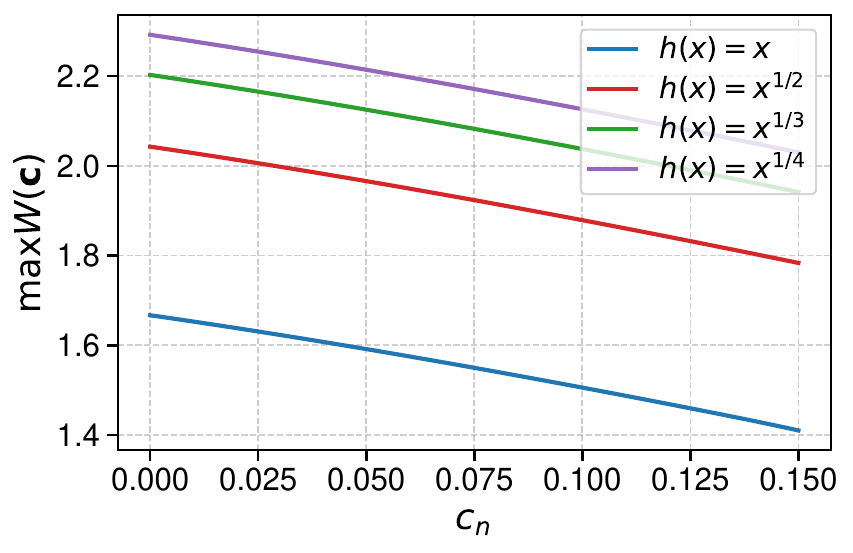}
        \caption{Second group of position biases}
        \label{fig:cn-h2}
    \end{subfigure}
    \caption{Plots of $\max W(\bm{c})$ against $c_n$ with different $h$}
    \label{fig:cn-experiments-3}
\end{figure}

\begin{figure}[t]
    \centering
    \captionsetup[subfigure]{justification=centering}
    \begin{subfigure}[b]{0.4\linewidth}
        \centering
        \includegraphics[width=\linewidth]{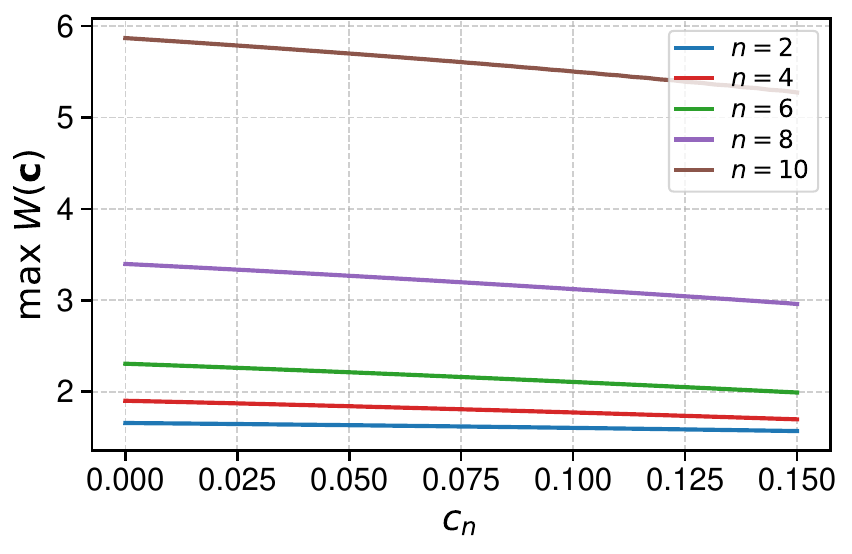}
        \caption{First group of position biases}
        \label{fig:cn-player1}
    \end{subfigure}
    \begin{subfigure}[b]{0.4\linewidth}
        \centering
        \includegraphics[width=\linewidth]{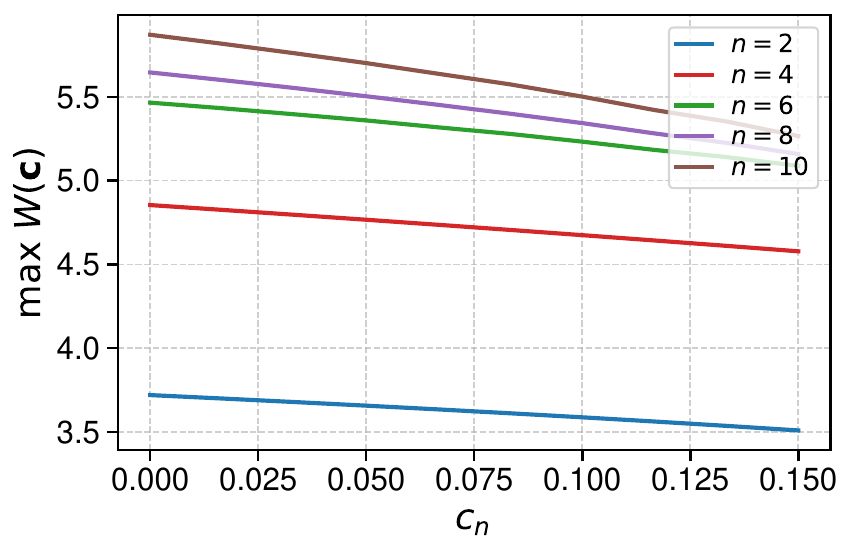}
        \caption{Second group of position biases}
        \label{fig:cn-player2}
    \end{subfigure}
    \caption{Plots of $\max W(\bm{c})$ against $c_n$ with different number of creators}
    \label{fig:cn-experiments-4}
\end{figure}

Figure~\ref{fig:cn-experiments-4} examines the effect of the number of creators $n$, while other parameters remain at their default values: $\alpha = 2, g(x) = x^4, \gamma = 1, h(x) = \sqrt{x}$. In Figure~\ref{fig:cn-player1}, $n$ corresponds to $n_L$ in Section~\ref{sec:asymmetric}, thus for $n = 2$, the position bias vector corresponds to the last two elements of the entire position bias vector; for $n = 4$, it corresponds to the last four elements, and so on. Similarly, in Figure~\ref{fig:cn-player2}, $n$ corresponds to $n_H$ in Section~\ref{sec:asymmetric}; thus for $n = 2$, it corresponds to the first two elements of the position bias vector, and so on. 

The experimental results show that when $n$ is small, the decline is gradual, a pattern consistent with Example~\ref{ex:cn-increasing}. However, in this experiment, we do not observe cases where the optimal $c_n > 0$. Intuitively, this is because in Example~\ref{ex:cn-increasing}, the differences between adjacent position biases are large, making the positive incentive effect of increasing $c_n$ for lower ranks significant, while the negative incentive effect for higher ranks is minimal, leading to $c_n > 0$. In contrast, in our experiment, the differences between adjacent positions in the position bias vector are relatively small; therefore we do not observe cases where $c_n > 0$.


\subsection{Grid Search for Compensation Mechanism Design under Symmetric Case} \label{subsec:grid-search-details}

Setting $c^* := c_1 = \cdots = c_{n - 1}$ and recalling from Appendix~\ref{subsec:optimal-cn} that $c_n = 0$ is typically optimal in realistic settings, Theorem~\ref{thm:compensation-md-sym} reduces the compensation design problem to the choice of a single key parameter: $c^*$. An approximate solution can be efficiently obtained via a grid search in the feasible domain.

A remaining issue, however, is that the search for $c^*$ is a priori unbounded. The proof of Theorem~\ref{thm:compensation-md-sym} implies that $\frac{\partial W(\bm{c})}{\partial c_i}$ decreases in $c_j$ for any $i, j \in [n - 1]$. Consequently, according to the chain rule, the total derivative of $W(\bm{c})$ with respect to $c^*$ is also decreasing. Hence, a natural heuristic is to gradually increase $c^*$ until $W(\bm{c})$ starts to decrease, since further increases of $c^*$ beyond this point would only lower $W(\bm{c})$. The following proposition shows that this process terminates, making the approach feasible.

\begin{proposition} \label{prop:compensation-grid-search-valid}
    There exists $\overline{c} > 0$ such that for all $c^* \geqslant \overline{c}$ and all $i \in [n - 1]$, $\frac{\partial W(\bm{c^*})}{\partial c_i} < 0$.
\end{proposition}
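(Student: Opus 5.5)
We work directly with the reformulation of Lemma~\ref{lem:compensation-md-reform} and write $\bm{c}^* = (c^*, \ldots, c^*, c_n)$ with $c_n$ fixed (typically $0$). By the chain-rule computation in the proof of Theorem~\ref{thm:compensation-md-sym}, for $i \in [n-1]$ we have
\begin{align*}
\frac{\partial W(\bm{c})}{\partial c_i} &= n\int_0^1 b_{\pi^{-1}(i)}(y)\left[\frac{K(J(y, \alpha\bm{p} - \bm{c}))}{K'(J(y, \bm{p} + \bm{c}))} - J(y, \bm{p} + \bm{c})\right]\textup{d}y \\
&\quad + \alpha p_0\, h'\!\left(\int_0^1 J(y, \bm{p}+\bm{c})\,\textup{d}y\right)\int_0^1 \frac{b_{\pi^{-1}(i)}(y)}{K'(J(y, \bm{p}+\bm{c}))}\,\textup{d}y .
\end{align*}
The plan is to show that, as $c^* \to \infty$, the first integral becomes uniformly negative while the second term tends to $0$. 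Because $\frac{\partial W}{\partial c_i}$ is nonincreasing in each $c_j$ (as noted before the proposition), it then suffices to find a single $\overline{c}$ at which all $n-1$ partial derivatives are negative. Monotonicity carries this to every $c^* \geqslant \overline{c}$.

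\textbf{Step 1.} We control the sign of $K(J(y,\alpha\bm{p}-\bm{c}))$. Since $K\circ J$ is linear in its second argument, $K(J(y,\alpha\bm{p}-\bm{c})) = \alpha K(J(y,\bm{p})) - K(J(y,\bm{c}))$. Evaluating the Bernstein form gives $K(J(y,\bm{c}^*)) = c^*(1-b_n(y)) + c_n b_n(y)$ with $b_n(y) = (1-y)^{n-1}$. The term $\alpha K(J(y,\bm{p}))$ is bounded by $\alpha p_1$ and does not depend on $c^*$. Hence for $y$ in any interval $[\delta,1]$ the bracketed numerator is $\leqslant \alpha p_1 - c^*(1-(1-\delta)^{n-1}) < 0$ once $c^*$ is large. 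On $[0,\delta]$ the weights $b_{\pi^{-1}(i)}(y)$ with $\pi^{-1}(i) \leqslant n-1$ are $O(\delta)$, so this region contributes only a small amount.

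\textbf{Step 2.} We bound $1/K'$ and $J$. Take $g(x)=x^\beta$. Here $\underline{x}$ depends only on $p_n + c_n$, and the utility $u$ stays fixed as $c^*$ grows. For $y$ bounded away from $0$, $K(J(y,\bm{p}+\bm{c})) \to \infty$, which forces $J(y,\bm{p}+\bm{c}) = K^{-1}(\cdot) \to \infty$ at rate $(c^*)^{1/(\beta-1)}$, while $K'(J) \to \infty$ at rate $(c^*)^{(\beta-2)/(\beta-1)}$ (for $\beta = 2$ it tends to a positive constant). This growth is formal: the hypothesis $u_i(1,\cdot)<0$ would eventually be violated, so efforts are capped at $1$. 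In that case the cleaner route is to note that $J \leqslant 1$ and $K'(J) \geqslant K'(J(\delta,\bm{p}+\bm{c})) > 0$ on $[\delta,1]$. With either viewpoint, the first integrand on $[\delta,1]$ is bounded above by $-\kappa\, c^*$ times a positive constant, with $\kappa>0$. The second term is at most $\alpha p_0 h'(\underline{x})\int b/K'$, which is bounded uniformly. Therefore the negative part, being linear in $c^*$, dominates.

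\textbf{Step 3 (main obstacle).} The delicate part is the neighbourhood of $y=0$. There $K'(J(y,\cdot)) \to K'(\underline{x}) = 0$ by Lemma~\ref{lem:K-monotone}, so $1/K'$ blows up while the numerator is still near $\alpha p_n - c_n \geqslant 0$. To handle this, expand near $\underline{x}$. We have $K'(J(y)) \approx K''(\underline{x})(J(y)-\underline{x})$, and $K(J(y)) - K(\underline{x}) \approx \tfrac12 K''(\underline{x})(J-\underline{x})^2$. This quantity equals $\sum_i b_i(y)(p_i+c_i) - (p_n+c_n) \approx (n-1)(c^*+p_{n-1}-p_n-c_n)\,y$. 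It follows that $J-\underline{x} \sim \sqrt{y\,c^*}$, so $1/K' \sim (y c^*)^{-1/2}$. Meanwhile the weights satisfy $b_{\pi^{-1}(i)}(y) = O(y)$ for $\pi^{-1}(i) = n-1$, and are smaller still for the other indices. The near-zero contribution is therefore $O\bigl(\int_0^\delta y\,(yc^*)^{-1/2}\,\textup{d}y\bigr) = O(\delta^{3/2}(c^*)^{-1/2})$, which is integrable and vanishes as $c^*$ grows. Combining this with Steps 1--2 yields a $\overline{c}$ at which every $\partial W/\partial c_i<0$, and the monotonicity argument extends this to all $c^*\geqslant\overline{c}$.
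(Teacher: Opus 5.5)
Your proposal is correct in substance and shares the paper's skeleton: the formula for $\partial W/\partial c_i$, linearity of $K\circ J$ to pull $-c^*(1-b_n(y))$ out of the numerator, and then a domination argument. It differs in the step that actually matters.

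The paper uses L'H\^opital at $y=0$ to show $b_i(y)/K'(J(y,\bm{p}+\bm{c}))\to 0$, so this weight is continuous on $[0,1]$. It then drops the $-b_iJ$ term, calls the two sides $C_1$ and $c^*nC_2$ with $C_1,C_2$ ``finite constants'', and picks $\overline{c}>C_1/(nC_2)$. However, both $C_1$ and $C_2$ depend on $c^*$ through $K'(J(y,\bm{p}+\bm{c}))$ (and through $h'(\int_0^1 J\,\textup{d}y)$). That choice is therefore circular unless $C_1/C_2$ is bounded uniformly in $c^*$, i.e.\ unless the weight $b_i/K'$ cannot pile up near $y=0$, where $1-b_n(y)\to 0$.

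Your Steps 2--3 supply exactly this uniformity. Away from $0$ you track the growth of $J$ and $K'(J)$, and near $0$ you show $b_i/K'=O(\sqrt{y/c^*})$. The paper's route buys brevity and a clean integrability check at a fixed $\bm{c}$; yours buys the uniformity in $c^*$ that the statement actually requires, under the standing assumption $g(x)=x^\beta$.

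A few points to tighten.

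\emph{Uniform bound near $0$.} The expansions in Step~3 are only local (valid while $yc^*$ is small), so for fixed $\delta$ and $c^*\to\infty$ replace them with a uniform bound. Integrating $\frac{\textup{d}}{\textup{d}x}(K')^2=2K'K''$ gives $K'(x)^2\geqslant 2m\,(K(x)-K(\underline{x}))$ on $[\underline{x},x_0]$ with $m=\min_{[\underline{x},x_0]}K''>0$. Also $K'(x)\geqslant K'(x_0)$ for $x\geqslant x_0$, and $K(J(y,\bm{p}+\bm{c}))-K(\underline{x})\geqslant (c^*-c_n)\,y$. Together these bound the positive near-zero contribution by $O(\delta^{3/2}(c^*)^{-1/2}+\delta^2)$, which is enough.

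\emph{The $h'$-term.} It only needs to be bounded. This follows most simply because $J(y,\bm{p}+\bm{c})$, hence $K'(J(y,\bm{p}+\bm{c}))$, is nondecreasing in $c^*$, so $\int_0^1 b_i/K'\,\textup{d}y$ is bounded by its value at any fixed $c^*$. Your claim that this term tends to $0$ is unnecessary, and $\int_0^1 b_i/K'\,\textup{d}y\geqslant 1/(n\gamma)$ when $\beta=2$.

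\emph{Growth rate and the cap.} In the uncapped formula the negative part diverges like $(c^*)^{1/(\beta-1)}$ (linearly only when $\beta=2$), which still dominates. Drop the ``capped at $1$'' alternative: once efforts reach the cap, the interior indifference condition behind the derivative formula fails. You must work with the formal uncapped formula, as the paper implicitly does.

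\emph{Monotonicity reduction.} It is redundant, since your limit argument already covers all large $c^*$. It would also import the hypothesis $c_n\leqslant\alpha p_n$ of Theorem~\ref{thm:compensation-md-sym}.
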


\begin{proof}
    For any $i \in [n - 1]$, the expression for $\frac{\partial W(\bm{c})}{\partial c_i}$ is:
    \begin{equation} \label{eq:W-derivative-c_i}
        \begin{aligned}
            \frac{\partial W(\bm{c})}{\partial c_i} &= n \int_0^1 -b_i(y) J(y, \bm{p} + \bm{c}) + K(J(y, \alpha\bm{p} - \bm{c})) \cdot \frac{b_i(y)}{K'(J(y, \bm{p} + \bm{c}))} \, \textup{d}y \\
            &\quad + \alpha h'\left(\int_0^1 J(y, \bm{p} + \bm{c}) \, \textup{d}y\right) p_0 \cdot \int_0^1 \frac{b_i(y)}{K'(J(y, \bm{p} + \bm{c}))} \, \textup{d}y.
        \end{aligned}
    \end{equation}
    Since $b_i(y) = 0$ and $K'(J(y, \bm{p} + \bm{c})) = 0$ at $y = 0$, we evaluate the limit of $b_i(y) / K'(J(y, \bm{p} + \bm{c}))$ as $y \to 0$ using L’Hôpital’s rule:
    \[ \lim_{y \to 0} \frac{b_i (y)}{K'(J(y, \bm{p} + \bm{c}))} = \lim_{y \to 0} \frac{b_i' (y)}{[K'(J(y, \bm{p} + \bm{c}))]'} = \lim_{y \to 0} \frac{b_i' (y) K'(J(y, \bm{p} + \bm{c}))}{K''(J(y, \bm{p} + \bm{c})) [K(J(y, \bm{p} + \bm{c}))]'}. \]
    If $i \neq n - 1$, then $b_i'(0) = 0$; if $i = n - 1$, then $b_i'(0) = n - 1$. Moreover, $K'(J(0, \bm{p} + \bm{c})) = K'(\underline{x}) = 0$, and by strict convexity of $K$, $K''(J(0, \bm{p} + \bm{c})) = K''(\underline{x}) > 0$. Also, from equation \eqref{eq:KJ-derivative}, $[K(J(0, \bm{p} + \bm{c}))]' = I(0, \bm{p} + \bm{c}) = (n - 1)((p_{n - 1} + c_{n - 1}) - (p_n + c_n)) > 0$. Hence, $\lim_{y \to 0} \frac{b_i (y)}{K'(J(y, \bm{p} + \bm{c}))} = 0$ for all $i = 1, 2, \ldots, n - 1$, implying that $b_i(y) / K'(J(y, \bm{p} + \bm{c}))$ is continuous on $[0, 1]$. This justifies differentiation under the integral, so the expression for $\partial W / \partial c_i$ is valid.
    
    Note that $J(y, \bm{p} + \bm{c}) \geqslant \underline{x} > 0$ for all $y \in [0, 1]$, it suffices to show that there exists $\overline{c} > 0$ such that for all $c^* \geqslant \overline{c}$,
    \begin{align*}
        &\quad n \int_0^1 \left(K(J(y, \alpha\bm{p})) - (1 - y)^{n - 1}c_n\right) \cdot \frac{b_i(y)}{K'(J(y, \bm{p} + \bm{c}))} \, \textup{d}y \\
        &+ \alpha h'\left(\int_0^1 J(y, \bm{p} + \bm{c}) \, \textup{d}y\right) p_0 \cdot \int_0^1 \frac{b_i(y)}{K'(J(y, \bm{p} + \bm{c}))} \, \textup{d}y \\
        &< n \int_0^1 \left(\sum_{i = 1}^{n - 1}\binom{n - 1}{i - 1}y^{n - i}(1 - y)^{i - 1}c^*\right)\cdot \frac{b_i(y)}{K'(J(y, \bm{p} + \bm{c}))} \, \textup{d}y \\
        &= c^* n \int_0^1 \left(\sum_{i = 1}^{n - 1}\binom{n - 1}{i - 1}y^{n - i}(1 - y)^{i - 1}\right)\cdot \frac{b_i(y)}{K'(J(y, \bm{p} + \bm{c}))} \, \textup{d}y.
    \end{align*}
    For a fixed $c_n$, all integrands before the sign of inequality are continuous functions. Since $h$ is continuously differentiable, $h'$ is bounded on $[0, 1]$, the integrals before the sign of inequality evaluate to a finite constant $C_1$. The integrands after the sign of inequality are non-negative continuous functions and are zero only at finitely many points, so the integral after the sign of inequality is a positive finite constant $C_2$. Choosing $\overline{c} > \frac{C_1}{nC_2}$ completes the proof.
\end{proof}

Based on the above discussion, Algorithm~\ref{alg:compensation-grid} outlines the grid search procedure to approximate the solution to the compensation mechanism design problem.

\begin{algorithm}[t]
    \caption{Grid Search for Compensation Mechanism Design}
    \label{alg:compensation-grid}
    \begin{algorithmic}[1]
        \REQUIRE Number of creators $n$, step sizes $\Delta$, position bias vector $\bm{p}$, $p_0$
        \ENSURE Approximate optimal compensation vector $\bm{c}$
        \STATE Initialize best objective value $best\_obj \gets -\infty$
        \STATE Initialize best compensation vector $best\_c \gets \text{null}$
        \FOR{$c^*$ from $0$ with step size $\Delta$}
            \STATE $\bm{c} \gets (c^*, c^*, \ldots, c^*, 0)$
            \STATE Compute objective value $obj$ using reformulated problem \eqref{eq:compensation-md-sym}
            \IF{$obj > best\_obj$}
                \STATE $best\_obj \gets obj$
                \STATE $best\_c \gets \bm{c}$
            \ELSE
                \STATE \textbf{break}
            \ENDIF
        \ENDFOR
        \RETURN $best\_c$
    \end{algorithmic}
\end{algorithm}

\subsection{Proof of in Proposition~\ref{prop:UBL-inc-sym}}

\begin{proofof}{Proposition~\ref{prop:UBL-inc-sym}}
    First, following a derivation analogous to that of \eqref{eq:W-derivative-c_i}, the partial derivative of $W(\bm{p}, \bm{c})$ with respect to $p_i$ is  
    \begin{equation} \label{eq:W-derivative-p_i}
        \begin{aligned}
            \frac{\partial W(\bm{p}, \bm{c})}{\partial p_i} &= n \int_0^1 \alpha b_i (y) J(y, \bm{p} + \bm{c}) + K(J(y, \alpha\bm{p} - \bm{c})) \cdot \frac{b_i (y)}{K'(J(y, \bm{p} + \bm{c}))}\textup{d}y \\ 
            &\quad + h'\left(\int_0^1 J(y, \bm{p} + \bm{c}) \textup{d}y\right) p_0 \cdot \int_0^1 \frac{b_i (y)}{K'(J(y, \bm{p} + \bm{c}))} \textup{d}y.
        \end{aligned}
    \end{equation}
    Consequently,
    \[ \frac{\partial W(\bm{p}, \bm{c})}{\partial p_i} - \frac{\partial W(\bm{p}, \bm{c})}{\partial c_i} = (\alpha + 1) n \int_0^1 b_i(y) J(y, \bm{p} + \bm{c}) \, \textup{d}y > 0. \]

    Thus, if we keep $\bm{p} + \bm{c}$ fixed, then for any pair $(p_i, c_i)\;(i \in [n - 1])$, increasing $p_i$ while decreasing $c_i$ by the same amount raises $W(\bm{p}, \bm{c})$. Let $s_i = q_i \Delta p_i^B$ for $i \in [n]$ and $\bm{s} = (s_1, \ldots, s_n)$. Define $r_i = \min\{s_i, c_i\}$ and $\bm{r} = (r_1, \dots, r_n)$. Then
    \begin{equation} \label{eq:UBL-inc-sym-1}
        W(\bm{p} + \bm{r}, \bm{c} - \bm{r}) > W(\bm{p}, \bm{c}).
    \end{equation}

    Under the UBL mechanism, all $s_i$ are equal for $i \in [n - 1]$ and all $c_i$ are equal as well, while $s_n = c_n = 0$. Hence $\bm{r}$ can take two possible forms.
    \begin{enumerate}
        \item Case 1: $\bm{r} = \bm{s}$. This occurs when $s_i \leqslant c_i$ for every $i \in [n - 1]$. In this case, equation~\eqref{eq:UBL-inc-sym-1} directly yields the statement of the theorem.
        \item Case 2: $\bm{r} = \bm{c}$. This occurs when $c_i < s_i$ for every $i \in [n - 1]$. Then $W(\bm{p} + \bm{r}, \bm{c} - \bm{r}) = W(\bm{p} + \bm{c}, \bm{0})$. From equation~\eqref{eq:W-derivative-p_i} we see that $W(\bm{p} + \bm{c}, \bm{0})$ is increasing in each $p_i\;(i \in [n - 1])$. Therefore,
        \[ W(\bm{p} + \bm{s}, \bm{0}) > W(\bm{p} + \bm{c}, \bm{0}) = W(\bm{p} + \bm{r}, \bm{c} - \bm{r}). \]
        Combining this inequality with \eqref{eq:UBL-inc-sym-1} completes the proof of the theorem.
    \end{enumerate}
\end{proofof}

\section{Properties of Asymmetric Equilibrium} \label{sec:ap-asymmetric}

\subsection{General Results for Arbitrary Cost Structures} \label{subsec:ap-asym-general}

We analyze the equilibrium properties of game $\mathcal{G}^{(n)}$ in this subsection. Without loss of generality, we assume that the cost parameters satisfy $\gamma_1 \leqslant \gamma_2 \leqslant \cdots \leqslant \gamma_n$. The next proposition shows that, in any equilibrium, creators with lower cost parameters always achieve higher utilities.

\begin{proposition} \label{prop:asym-utility-order}
    In game $\mathcal{G}^{(n)}$, if $\gamma_i < \gamma_j$, then $u_i > u_j$ in any equilibrium.
\end{proposition}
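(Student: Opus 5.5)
We prove the strict inequality $u_i > u_j$ by a mimicking deviation: creator $i$ copies creator $j$'s equilibrium strategy, and we show this already yields more than $u_j$. Fix any equilibrium $(F_1,\dots,F_n)$ and let $x$ be any point in the support of $F_j$ at which $j$ attains her equilibrium utility, i.e.
\[ u_j = x\, W(\bm{F}_{-j}(x), \bm{p}) - \gamma_j g(x). \]
By the analogue of Step~3 of Lemma~\ref{lem:symmetric-equilibrium}, we may choose $x$ to be a point that no other creator plays with an atom; at most countably many atoms exist, so this is possible. If we cannot rule atoms out in the asymmetric game, we instead use the $\varepsilon$-perturbation $x+\varepsilon$ exactly as in Step~1 of that lemma.

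\textbf{Step 1: the $W$ comparison.} Consider creator $i$ deviating to the pure effort $x$. Using the decomposition from Step~1 of Lemma~\ref{lem:symmetric-equilibrium},
\[ W(\bm{F}_{-i}(x),\bm{p}) = W(\bm{F}_{-\{i,j\}}(x),\underline{\bm{p}}) + F_j(x)\, W(\bm{F}_{-\{i,j\}}(x), \overline{\bm{p}}-\underline{\bm{p}}), \]
and the analogous expression for $W(\bm{F}_{-j}(x),\bm{p})$ has $F_i(x)$ in place of $F_j(x)$. So the comparison reduces to comparing $F_j(x)$ with $F_i(x)$, which is not immediately ordered.

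\textbf{Step 2: avoiding the comparison via the top of the supports.} To sidestep this, we take $x = \overline{x}_{\max} := \max_k \overline{x}_k$. This point lies in the support of some creator $k$, and there all other CDFs equal $1$. Any creator deviating to $\overline{x}_{\max}$ (plus $\varepsilon$) gets expected bias $p_1$. This gives a lower bound $u_l \geq \overline{x}_{\max}p_1 - \gamma_l g(\overline{x}_{\max})$ for every $l$. That is not directly the comparison we need, so we use the more robust route below.

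\textbf{Step 3: the robust route.} Write $\phi_l(y) := \sup_{z} \big[z\, W(\bm{F}_{-l}(z),\bm{p})\big]$ restricted appropriately. Instead, we argue by contradiction. Suppose $u_i \le u_j$ with $\gamma_i<\gamma_j$. Let $x$ be the supremum of $j$'s support. Creator $i$ deviating to $x^+$ earns
\[ x\, W(\bm{F}_{-i}(x^+),\bm{p}) - \gamma_i g(x), \]
where $F_j(x)=1 \ge F_i(x)$. Hence $W(\bm{F}_{-i}(x),\bm{p}) \ge W(\bm{F}_{-j}(x),\bm{p})$ by Lemma~\ref{lem:W-monotone}. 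Therefore $i$'s deviation payoff is at least
\[ u_j + (\gamma_j - \gamma_i) g(x) > u_j \ge u_i. \]
The strict inequality needs $x>0$. This holds because, as in Step~2 of Lemma~\ref{lem:symmetric-equilibrium}, every support lies above $\arg\max_x\, x p_n - \gamma_l g(x) > 0$: effort near $0$ is strictly dominated by moving to that maximizer, whose bias is at least $p_n$. The resulting profitable deviation for $i$ contradicts equilibrium, giving $u_i>u_j$.

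\textbf{Main obstacle.} The main obstacle is justifying the evaluation of $j$'s utility at $\overline{x}_j$ when atoms may be present. Indifference at the supremum requires upper semicontinuity of the payoff from the right, plus tie-breaking at a possible common atom. We handle this with the $x+\varepsilon$ device of Step~1 of Lemma~\ref{lem:symmetric-equilibrium}, noting that raising effort slightly only weakly increases $W$ and changes the cost continuously.
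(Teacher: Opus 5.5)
Your Step~3 is the paper's argument. Assume $u_i\leqslant u_j$ and let $i$ deviate to (just above) $\overline{x}_j$. There $F_j(\overline{x}_j)=1\geqslant F_i(\overline{x}_j)$, so the decomposition behind Lemma~\ref{lem:W-monotone} gives $W(\bm{F}_{-i}(\overline{x}_j),\bm{p})\geqslant W(\bm{F}_{-j}(\overline{x}_j),\bm{p})$, and the cost gap $(\gamma_j-\gamma_i)g(\overline{x}_j)$ makes the deviation strictly profitable.

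Two of your details go beyond the paper, which glosses over both. The first is the $\varepsilon$-perturbation treatment of atoms and indifference at the supremum. The second is the check that $\overline{x}_j>0$, which is what makes the cost gap strictly positive; it tacitly needs $\gamma_l g'(0)<p_n$, which the paper also assumes implicitly. The opening mimicking idea, Step~1 and Step~2 are abandoned detours, and you should delete them.
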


\begin{proof}
    Proof by contradiction. Suppose $\gamma_i < \gamma_j$ but $u_i \leqslant u_j$. Consider the supremum of the support of $j$, denoted by $\overline{x}_j$, we have $F_j(\overline{x}_j) = 1$ and $F_i(\overline{x}_j) \leqslant 1$, thus
    \begin{align*}
        W(\bm{F}_{-i}(\overline{x}_j), \bm{p}) &= W(\bm{F}_{-\{i,j\}}(\overline{x}_j), \underline{\bm{p}}) + F_j(\overline{x}_j) W(\bm{F}_{-\{i,j\}}(\overline{x}_j), \overline{\bm{p}} - \underline{\bm{p}}) \\
        &\geqslant W(\bm{F}_{-\{i,j\}}(\overline{x}_j), \underline{\bm{p}}) + F_i(\overline{x}_j) W(\bm{F}_{-\{i,j\}}(\overline{x}_j), \overline{\bm{p}} - \underline{\bm{p}}) \\
        &= W(\bm{F}_{-j}(\overline{x}_j), \bm{p}).
    \end{align*}
    Together with the assumption $\gamma_i < \gamma_j$, we have
    \[ \overline{x}_j W(\bm{F}_{-i}(\overline{x}_j), \bm{p}) - \gamma_i g(\overline{x}_j) > \overline{x}_j W(\bm{F}_{-j}(\overline{x}_j), \bm{p}) - \gamma_j g(\overline{x}_j) = u_j \geqslant u_i. \]
    Hence, creator $i$ can increase her utility by deviating to $\overline{x}_j$, contradicting the assumption that the strategy profile is an equilibrium. Therefore, we must have $u_i > u_j$.
\end{proof}

Let $(F_1, F_2, \ldots, F_n)$ denote the mixed strategy profile in equilibrium. The next result generalizes Theorem~\ref{thm:sym-equ-property}. It shows that whenever a subset of creators share the same cost parameter, their equilibrium strategies must be symmetric.

\begin{proposition} \label{prop:not-all-symmetric-equilibrium}
    In game $\mathcal{G}^{(n)}$, if there exists a subset of creators $S \subseteq [n]$ such that for any $i, j \in S$, $\gamma_i = \gamma_j$, then the equilibrium must be symmetric 
    among all creators with the same cost parameter, i.e., for any $i, j \in S$, $F_i = F_j$.
\end{proposition}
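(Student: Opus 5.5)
The plan is to transplant the six-step argument of Lemma~\ref{lem:symmetric-equilibrium} to the subset $S$. Whenever a deviation compares only two creators $i,j\in S$, the cost term $\gamma g(x)$ is the same for both. The key device is again the decomposition
\[ W(\bm{F}_{-i}(x),\bm{p}) = W(\bm{F}_{-\{i,j\}}(x),\underline{\bm{p}}) + F_j(x)\, W(\bm{F}_{-\{i,j\}}(x),\overline{\bm{p}}-\underline{\bm{p}}), \]
combined with Lemma~\ref{lem:W-monotone}. In this decomposition the remaining creators $[n]\setminus\{i,j\}$, including those of other types, enter identically for $i$ and $j$. Everything that depended only on the two-player swap therefore carries over verbatim. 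First I will show $u_i=u_j$ for $i,j\in S$, exactly as in Step~1: if $u_k>u_l$, then $l$ deviates to $\overline{x}_k$, and since $\gamma_k=\gamma_l$ she obtains at least $u_k$.

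Next I need the no-atom property. Within $S$ I will rule out atoms by the Step~3 argument. If creator $i$ has an atom at $x_i$ at which others also place mass, they gain by jumping just above it. If no one else is near $x_i$, creator $i$ can move slightly down or up. The one delicate case is an atom at the global infimum with expected bias $p_n$. I expect that in the general game an atom is only possible at the lowest effort level, and only for a creator who is alone there. Two creators in $S$ cannot both hold such an atom, since tying there is strictly dominated by moving just above. If exactly one member of $S$ holds it, then another member $j\in S$ obtains a strictly higher payoff by bidding just above it. This contradicts $u_i=u_j$, because $\gamma_i=\gamma_j$ and $j$'s bias just above the atom strictly exceeds $i$'s bias at the atom.

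With equal utilities and continuity in hand, Steps~4 and~5 transfer directly. If $\overline{x}_i<\overline{x}_j$, then $j$ deviating to $\overline{x}_i$ faces $F_i(\overline{x}_i)=1>F_j(\overline{x}_i)$, so she gets a strictly higher bias than $i$ does at the same point and at the same cost, and hence strictly more than $u_i=u_j$. Symmetrically, if $\underline{x}_i<\underline{x}_j$, then at $\underline{x}_j$ creator $j$ strictly beats what $i$ would get, while $i$ has mass just below $\underline{x}_j$. By continuity, $i$ then earns strictly less than $u_i$ near $\underline{x}_j$, a contradiction. This replaces the Step~5 detour through $\arg\max_x xp_n-\gamma g(x)$, which is no longer available because the global infimum may belong to another type.

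Finally, on $\textup{supp}(F_i)\cap\textup{supp}(F_j)$ the two indifference equations have identical left-hand sides except for $F_j(x)$ versus $F_i(x)$, and $W(\cdot,\overline{\bm{p}}-\underline{\bm{p}})>0$, so $F_i=F_j$ there. The main obstacle is extending this to all of $[\underline{x}_S,\overline{x}_S]$, since supports of other types may create gaps in the union of the supports within $S$. My approach is to handle a point $x$ in $\textup{supp}(F_i)\setminus\textup{supp}(F_j)$ lying in a gap $(a,b)$ of $\textup{supp}(F_j)$, with $a,b\in\textup{supp}(F_j)$ (both exist by Steps~4 and~5). On $(a,b)$ the function $F_j$ is constant while $F_i$ increases. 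At the endpoints $a$ and $b$, where both supports meet (or by taking limits), I would use the deviation inequality. At $a$ creator $i$ cannot exceed $u$, which forces $F_i(a)\ge F_j(a)$ in the indifference comparison, and the same comparison at $b$ in the other direction gives a strict contradiction. This is because $j$ would earn more than $u$ by bidding inside $(a,b)$ near $b$, where $F_i$ is larger, unless $F_i=F_j$ throughout. I would then conclude with a sup-of-agreement argument as in Step~6: starting from the common infimum, the two CDFs coincide, and a first point of disagreement yields a profitable deviation for whichever of the two creators has the smaller CDF there.
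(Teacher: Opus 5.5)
Your plan is the paper's own: transplant the six steps of Lemma~\ref{lem:symmetric-equilibrium} to $S$ via the two-player decomposition. You also correctly see that two pieces do not survive in $\mathcal{G}^{(n)}$: the Step~5 detour through $\max_x xp_n-\gamma g(x)$, and the ``no gap in the union of supports'' claim of Step~6.

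The genuine gap is the no-atom step, on which your Steps~4--6 rest. Your expectation that in $\mathcal{G}^{(n)}$ an atom can only sit at the global lowest effort level, held by a creator alone there, is false. In the third regime of Proposition~\ref{prop:2-player-equ}, creator~1 has an atom at $\underline{x}_1>\underline{x}_2$ while creator~2 has one at $\underline{x}_2$; the three-creator examples in Appendix~\ref{subsec:ap-asym-three} show the same pattern.

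Such an atom sits at $x_0=\arg\max_x xq-\gamma g(x)$ with $q=W(\bm{F}_{-i}(x_0),\bm{p})>p_n$. Local up/down moves therefore only pin down its location and do not exclude it. In the symmetric proof the contradiction came from $u=\max_x xp_n-\gamma g(x)$, which is unavailable for a general class $S$.

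Your remaining deviation, $j\in S$ bidding just above $x_0$, beats $u$ only if $F_i(x_0)>F_j(x_0)$. That is automatic at the global infimum, where $F_j(x_0)=0$, but not elsewhere. The case $F_j(x_0)\geqslant F_i(x_0)$ is left open, so continuity within $S$ is never established. The paper's sketch is equally terse here: it only notes that two atoms in $S$ would share $q$ and location and hence tie.

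Your Step~5 has a smaller hole: $i$ need not have mass just below $\underline{x}_j$. Instead, let $j$ deviate to a point $x<\underline{x}_j$ of $\textup{supp}(F_i)$, where $F_i(x)>0=F_j(x)$.

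The missing atom case can be closed within your framework. Suppose $F_j(x_0)\geqslant F_i(x_0)>0$. Since no one else has mass in $[x_0-\delta,x_0]$, $j$ has mass below $x_0-\delta$. Let $y$ be the largest point of $\textup{supp}(F_j)$ below $x_0$. Then $F_j(y)=F_j(x_0)\geqslant F_i(x_0)>F_i(y)$, so $i$ bidding just above $y$ beats $j$'s payoff $u$ there.

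More economically, the swap identity gives the whole proposition from Step~1 alone, with no need for continuity, Steps~4--5, or your gap analysis in Step~6. Write $v_k(x)=x\,W(\bm{F}_{-k}(x),\bm{p})-\gamma g(x)$ for creator $k$'s payoff from bidding just above $x$, using right-continuous CDFs. Then for $i,j\in S$,
\[ v_j(x)-v_i(x)=x\,\bigl(F_i(x)-F_j(x)\bigr)\,W\bigl(\bm{F}_{-\{i,j\}}(x),\overline{\bm{p}}-\underline{\bm{p}}\bigr),\qquad W\bigl(\bm{F}_{-\{i,j\}}(x),\overline{\bm{p}}-\underline{\bm{p}}\bigr)>0. \]
Suppose $F_i(x)>F_j(x)$ for some $x$, and let $a=\max(\textup{supp}(F_i)\cap[0,x])$. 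Then $F_i(a)=F_i(x)>F_j(x)\geqslant F_j(a)$. At $a$ (if $i$ has an atom there), or at $F_i$-typical points arbitrarily close to $a$, creator $i$ earns $u$ while $F_i>F_j$. So $v_j>u=u_j$ there, which is a profitable deviation for $j$. Exchanging the roles of $i$ and $j$ gives $F_i=F_j$.
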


\begin{proof}
    It is not hard to see that Step~1, Step~4, Step~5 and Step~6 in the proof of Lemma~\ref{lem:symmetric-equilibrium} still hold for any subset of creators with identical cost parameters. In fact, Step~2 is not important for this corollary, and Step~3 can be modified as follows. First, $x_i = x_{\max} := \arg\max_x x q_i - \gamma g(x)$ still holds for any creator $i$ in the subset. Thus, the point mass of any creator in the subset can only be on the same point $\underline{x} := \arg\max_x x q - \gamma g(x)$, otherwise $\max_x x q_i - \gamma g(x) \neq \max_x x q_j - \gamma g(x)$ for some $q_i \neq q_j$, which contradicts the result of Step~1. Then following the argument in Step~3 we can show that there is no point mass in the probability distribution for any creator in the subset. Combining the results of Step~1, Step~4, Step~5 and Step~6, we can derive that the equilibrium must be among all creators with the same cost parameter.
\end{proof}

The next proposition establishes stochastic dominance among the equilibrium strategies of different creators.

\begin{proposition} \label{prop:asym-stochastic-dominance}
    In game $\mathcal{G}^{(n)}$, if $i < j$, then $F_i(x) \leqslant F_j(x)$ for any $x \in \textup{supp}(F_i)$.
\end{proposition}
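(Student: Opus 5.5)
We argue by contradiction, using the pairwise decomposition of the expected position bias $W$ that already appears in the proofs of Lemma~\ref{lem:symmetric-equilibrium} and Proposition~\ref{prop:asym-utility-order}. The labeling convention gives $\gamma_i \leqslant \gamma_j$ for $i<j$. If $\gamma_i=\gamma_j$, Proposition~\ref{prop:not-all-symmetric-equilibrium} gives $F_i=F_j$ and there is nothing to prove, so assume $\gamma_i<\gamma_j$. By Proposition~\ref{prop:asym-utility-order} we then have $u_i>u_j$.

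Suppose there is a point $x\in\textup{supp}(F_i)$ with $F_i(x)>F_j(x)$. For any creator $k\in\{i,j\}$, write the other as $k'$ and decompose
\[ W(\bm{F}_{-k}(x),\bm{p}) = W(\bm{F}_{-\{i,j\}}(x),\underline{\bm{p}}) + F_{k'}(x)\,W(\bm{F}_{-\{i,j\}}(x),\overline{\bm{p}}-\underline{\bm{p}}). \]
The second coefficient is strictly positive by Lemma~\ref{lem:W-monotone}. The common base term cancels, so $F_i(x)>F_j(x)$ gives
\[ W(\bm{F}_{-j}(x),\bm{p}) > W(\bm{F}_{-i}(x),\bm{p}). \]
In words, at effort $x$ creator $j$ faces a strictly larger expected position bias than $i$ does.

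Creator $j$ can deviate to $x$, which yields $x W(\bm{F}_{-j}(x),\bm{p})-\gamma_j g(x)\leqslant u_j$. Since $x$ lies in the support of $F_i$ and the CDFs are continuous, the indifference condition gives $u_i = xW(\bm{F}_{-i}(x),\bm{p})-\gamma_i g(x)$. Subtracting the two relations,
\[ u_i-u_j \leqslant x\bigl[W(\bm{F}_{-i}(x),\bm{p})-W(\bm{F}_{-j}(x),\bm{p})\bigr] + (\gamma_j-\gamma_i)g(x). \]
The bracket is negative, but the cost term is positive, so this inequality does not by itself contradict $u_i>u_j$.

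The main obstacle is therefore to close this gap, and I would do it by a crossing argument on the set $A=\{x\in\textup{supp}(F_i): F_i(x)>F_j(x)\}$. First, $A$ cannot contain the top of the supports. Near $\overline{x}_j$ (the analogue of Step~4 in Lemma~\ref{lem:symmetric-equilibrium}), the argument of Proposition~\ref{prop:asym-utility-order} shows that $\overline{x}_i\geqslant\overline{x}_j$; otherwise $i$ would profitably deviate to $\overline{x}_j$. So let $x_0=\sup A$ and compare the two creators on the interval just below $x_0$, where $F_i>F_j$ and both are indifferent (after first handling gaps as in Step~6).

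On that interval I would differentiate the two indifference conditions. Because $x W(\cdot)$ is linear in the rival's CDF through the pair decomposition, differentiation yields linear relations between $f_i$ and $f_j$. These should let me show that $F_i-F_j$ is monotone there. Reaching the boundary point $x_0$, where $F_i(x_0)=F_j(x_0)$, then gives the contradiction; the alternative is that $x_0$ is the lower endpoint of $\textup{supp}(F_j)$, which is handled as in Step~5. The differentiation step is the one I expect to require care, since the derivative of $W$ also involves the densities of the third creators. I would manage this by differencing the two equations, so that the common terms $\partial W(\bm{F}_{-\{i,j\}},\cdot)/\partial x$ enter symmetrically and cancel up to sign.
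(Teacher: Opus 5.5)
Your first inequality, from $j$'s deviation to $x$ and $i$'s indifference at $x$, is correct. It is exactly the left half of the paper's argument. The missing piece is a matching \emph{lower} bound on $u_i-u_j$, which the paper obtains at the single point $\overline{x}_j$, the supremum of $\textup{supp}(F_j)$. Three facts hold there: $j$ is indifferent, $i$ cannot gain by deviating, and $F_j(\overline{x}_j)=1\geqslant F_i(\overline{x}_j)$. By your pair decomposition the last fact gives $W(\bm{F}_{-i}(\overline{x}_j),\bm{p})\geqslant W(\bm{F}_{-j}(\overline{x}_j),\bm{p})$. Hence
\[ u_i-u_j\geqslant \overline{x}_j\bigl[W(\bm{F}_{-i}(\overline{x}_j),\bm{p})-W(\bm{F}_{-j}(\overline{x}_j),\bm{p})\bigr]+(\gamma_j-\gamma_i)g(\overline{x}_j)\geqslant(\gamma_j-\gamma_i)g(\overline{x}_j). \]
Chaining this with your bound gives, for every $x\in\textup{supp}(F_i)$ with $x<\overline{x}_j$,
\[ x\bigl[W(\bm{F}_{-i}(x),\bm{p})-W(\bm{F}_{-j}(x),\bm{p})\bigr]\geqslant(\gamma_j-\gamma_i)\bigl(g(\overline{x}_j)-g(x)\bigr)>0, \]
that is, $F_i(x)<F_j(x)$. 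For $x\geqslant\overline{x}_j$ the claim is trivial because $F_j(x)=1$. The positive cost term you saw as the obstacle is exactly what this comparison dominates, since $g$ is strictly increasing. That is the paper's entire proof, and no crossing argument is needed.

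The crossing argument you propose instead is not a proof as written.
\begin{itemize}
\item It needs both creators to be indifferent on an interval just below $x_0=\sup A$. In $\mathcal{G}^{(n)}$ a higher-cost creator can have gaps and atoms, as the three-creator regimes in the appendix show. Your appeal to continuous CDFs and to the gap handling of Step~6 comes from the identical-cost case and does not transfer.
\item The claim that differentiation ``should'' make $F_i-F_j$ monotone is asserted, not derived. Differentiation is also not needed: where both creators are indifferent, differencing the undifferentiated conditions already gives $x\,(F_j-F_i)\,W(\bm{F}_{-\{i,j\}}(x),\overline{\bm{p}}-\underline{\bm{p}})=(u_i-u_j)-(\gamma_j-\gamma_i)g(x)$. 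The real difficulty is at points where $j$ is not indifferent, and your sketch leaves those unhandled.
\item Your justification of $\overline{x}_i\geqslant\overline{x}_j$ (``otherwise $i$ would profitably deviate to $\overline{x}_j$'') is incomplete. If $F_i(\overline{x}_j)=F_j(\overline{x}_j)=1$, then $i$'s payoff at $\overline{x}_j$ is $u_j+(\gamma_j-\gamma_i)g(\overline{x}_j)$. Nothing in Proposition~\ref{prop:asym-utility-order} shows this exceeds $u_i$. In the paper this ordering of suprema is Corollary~\ref{cor:asym-support-monotone}, a consequence of this proposition rather than an input to it.
\end{itemize}
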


\begin{proof}
    If $\gamma_i = \gamma_j$, Proposition~\ref{prop:not-all-symmetric-equilibrium} implies $F_i = F_j$. Hence, it suffices to consider the case $\gamma_i < \gamma_j$. Consider the supremum of the support of $j$, denoted by $\overline{x}_j$, choose any point $x_i$ in the support of $i$ such that $x_i < \overline{x}_j$ (the proposition is trivial if no such point exists). Using the notation introduced in equation \eqref{eq:W-def}, we have
    \begin{align*}
        &x_i W(\bm{F}_{-i}(x_i), \bm{p}) - \gamma_i g(x_i) = u_i, \overline{x}_j W(\bm{F}_{-i}(\overline{x}_j), \bm{p}) - \gamma_i g(\overline{x}_j) \leqslant u_i, \\
        &\overline{x}_j W(\bm{F}_{-j}(\overline{x}_j), \bm{p}) - \gamma_j g(\overline{x}_j) = u_j, x_i W(\bm{F}_{-j}(x_i), \bm{p}) - \gamma_j g(x_i) \leqslant u_j.
    \end{align*}
    Combining these inequalities yields
    \begin{equation} \label{asym-stochastic-dominance-1}
        \begin{aligned}
            &\quad x_i (W(\bm{F}_{-i}(x_i), \bm{p}) - W(\bm{F}_{-j}(x_i), \bm{p})) + (\gamma_j - \gamma_i) g(x_i) \\
            &\geqslant u_i - u_j \\
            &\geqslant \overline{x}_j (W(\bm{F}_{-i}(\overline{x}_j), \bm{p}) - W(\bm{F}_{-j}(\overline{x}_j), \bm{p})) + (\gamma_j - \gamma_i) g(\overline{x}_j).
        \end{aligned}
    \end{equation}
    
    Because $\overline{x}_j$ is the supremum of the support of $j$, we have $F_j(\overline{x}_j) = 1$ and $F_i(\overline{x}_j) \leqslant 1$, thus
    \begin{equation} \label{asym-stochastic-dominance-2}
        \begin{aligned}
            W(\bm{F}_{-i}(\overline{x}_j), \bm{p}) &= W(\bm{F}_{-\{i,j\}}(\overline{x}_j), \underline{\bm{p}}) + F_j(\overline{x}_j) W(\bm{F}_{-\{i,j\}}(\overline{x}_j), \overline{\bm{p}} - \underline{\bm{p}}) \\
            &\geqslant W(\bm{F}_{-\{i,j\}}(\overline{x}_j), \underline{\bm{p}}) + F_i(\overline{x}_j) W(\bm{F}_{-\{i,j\}}(\overline{x}_j), \overline{\bm{p}} - \underline{\bm{p}}) \\
            &= W(\bm{F}_{-j}(\overline{x}_j), \bm{p}).
        \end{aligned}
    \end{equation}
    Combining \eqref{asym-stochastic-dominance-1} with the assumption $\gamma_i < \gamma_j$ and $x_i < \overline{x}_j$ yields $W(\bm{F}_{-i}(x_i), \bm{p}) > W(\bm{F}_{-j}(x_i), \bm{p})$. A computation analogous to equation \eqref{asym-stochastic-dominance-2} then gives $F_i(x_i) < F_j(x_i)$. Since $x_i$ was arbitrary, the proposition follows.
\end{proof}

The intuition behind Proposition~\ref{prop:asym-stochastic-dominance} is straightforward: creators with lower cost parameters can afford to exert higher effort levels, leading to stochastically dominant strategies.

Moreover, the proof above implies that when $i < j$, if there exists some $x \in \textup{supp}(F_i)$ for which $F_i(x) = F_j(x)$, then either $\gamma_i = \gamma_j$ or $x = \overline{x}_i = \overline{x}_j$.

Denote the support of $F_i$ as $[\underline{x}_i, \overline{x}_i]$. The following corollary is a direct consequence of Proposition~\ref{prop:asym-stochastic-dominance}.

\begin{corollary} \label{cor:asym-support-monotone}
    In game $\mathcal{G}^{(n)}$, the supports of the equilibrium strategies satisfy the following properties:
    \begin{enumerate}
        \item If $i < j$ and $\gamma_i \neq \gamma_j$, then $\underline{x}_i > \underline{x}_j$ and $\overline{x}_i \geqslant \overline{x}_j$.
        \item $\underline{x}_n = \arg\max_x x p_n - \gamma_n g(x)$.
    \end{enumerate}
\end{corollary}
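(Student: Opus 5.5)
Part 1 is a direct consequence of Proposition~\ref{prop:asym-stochastic-dominance}. Take $i<j$ with $\gamma_i<\gamma_j$. The proposition gives $F_i(x)\leqslant F_j(x)$ on $\textup{supp}(F_i)$, and the remark after its proof says equality forces $x=\overline{x}_i=\overline{x}_j$.

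For the upper endpoints, suppose $\overline{x}_i<\overline{x}_j$. Then $\overline{x}_i\in\textup{supp}(F_i)$ and $F_i(\overline{x}_i)=1\leqslant F_j(\overline{x}_i)$, so $F_j(\overline{x}_i)=1$. This contradicts $\overline{x}_j$ being the supremum of $\textup{supp}(F_j)$ (using atomlessness: $F_j$ cannot reach $1$ before $\overline{x}_j$). Hence $\overline{x}_i\geqslant\overline{x}_j$.

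For the lower endpoints, suppose first $\underline{x}_i<\underline{x}_j$. Then at $x=\underline{x}_i\in\textup{supp}(F_i)$ we have $F_j(x)=0$. The inequality $F_i\leqslant F_j$ forces $F_i(x)=0$, which is equality, so $x=\overline{x}_i=\overline{x}_j$. That would make $\overline{x}_j=\underline{x}_i<\underline{x}_j$, which is absurd. Points of $\textup{supp}(F_i)$ slightly above $\underline{x}_i$ but below $\underline{x}_j$ give the same strict contradiction, since there $0<F_i\leqslant F_j=0$ once $F_i$ is positive.

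Now suppose $\underline{x}_i=\underline{x}_j=:x_0$. I would show that creator $j$ could profitably deviate, in the spirit of Step~5 of Lemma~\ref{lem:symmetric-equilibrium}. The common infimum point carries expected position bias $p_n$-like ranking for both creators. From the indifference conditions at $x_0$, both $x_0=\arg\max_x x\,W-\gamma_i g(x)$ and $x_0=\arg\max_x x\,W-\gamma_j g(x)$ must hold with the same $W$. But the first-order conditions $W=\gamma_i g'(x_0)$ and $W=\gamma_j g'(x_0)$ are incompatible when $\gamma_i\neq\gamma_j$ and $g'(x_0)>0$. This step, which needs both creators to sit at an interior optimum at the bottom, is the main obstacle. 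I would justify it the same way Steps~2 and 5 did: if $x_0$ were not the unconstrained maximizer for the bias faced there, shifting mass slightly would strictly raise utility.

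For part 2, I would mimic Step~2 of Lemma~\ref{lem:symmetric-equilibrium} for the highest-cost creator $n$. Let $m$ be the creator attaining $\min_k\underline{x}_k$. By part~1 this is creator $n$, or a creator tied with $n$'s type, which is symmetric by Proposition~\ref{prop:not-all-symmetric-equilibrium} and atomless. At $\underline{x}_n$ all other creators lie above a.s., so the expected bias is exactly $p_n$.

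If $\underline{x}_n>\hat x:=\arg\max_x xp_n-\gamma_n g(x)$, then moving mass from near $\underline{x}_n$ to $\hat x$ keeps the bias $p_n$ and strictly raises utility. If $\underline{x}_n<\hat x$, then deviating from $\underline{x}_n$ to $\hat x$ yields bias at least $p_n$ and strictly higher utility by uniqueness of the concave maximizer.

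Both cases contradict indifference, so $\underline{x}_n=\hat x$.
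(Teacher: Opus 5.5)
Your upper-endpoint argument, the case $\underline{x}_i<\underline{x}_j$, and Part~2 are fine and essentially follow the paper: dominance from Proposition~\ref{prop:asym-stochastic-dominance}, then Step~2 of Lemma~\ref{lem:symmetric-equilibrium} at the lowest infimum. A small correction: atomlessness is not needed in the upper-endpoint step, since $F_j(\overline{x}_i)=1$ already puts all of $j$'s mass in $(-\infty,\overline{x}_i]$. It is also not available in $\mathcal{G}^{(n)}$ for a creator whose cost type is a singleton.

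The gap is the tie case $\underline{x}_i=\underline{x}_j=x_0$. Your argument needs $x_0$ to be the \emph{unconstrained} maximizer of $xW_0-\gamma g(x)$ for both creators, and only half of that is justified.
\begin{itemize}
\item Upward deviations give $W_0\leqslant\gamma_i g'(x_0)$ and $W_0\leqslant\gamma_j g'(x_0)$, and these are perfectly compatible.
\item The reverse inequality requires that a small downward shift leaves the bias unchanged, i.e.\ that nobody places mass just below $x_0$.
\item In Step~5 of Lemma~\ref{lem:symmetric-equilibrium}, that $\delta$-gap was derived from Step~1, where all creators earn equal utility, which is specific to the symmetric game.
\item In $\mathcal{G}^{(n)}$, nothing in your argument prevents a creator $k$ with $\gamma_k>\gamma_j$ from being active right up to $x_0$. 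Then moving down lowers the bias, and no contradiction follows.
\end{itemize}

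The missing ingredient is the \emph{strict} form of the dominance you already cite. The proof of Proposition~\ref{prop:asym-stochastic-dominance} gives $F_i(x)<F_j(x)$ for every $x\in\textup{supp}(F_i)$ with $x<\overline{x}_j$. Apply it at $x=x_0$:
\begin{itemize}
\item $x_0<\overline{x}_j$, since otherwise $F_j$ is a point mass at $x_0$.
\item So $F_i(x_0)<F_j(x_0)$.
\item Both CDFs vanish at the common infimum. This is the same no-atom assumption you use implicitly when you say both creators face the same $W$.
\item This is a contradiction, and it is exactly how the paper concludes.
\end{itemize}
Your own restatement of the remark closes the case just as directly. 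Equality $F_i(x_0)=F_j(x_0)=0$ would force $x_0=\overline{x}_j$, hence $F_j(x_0)=1$.

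Equivalently, you can argue with utilities. Indifference at $x_0$, where both creators face the same bias, gives $u_i-u_j=(\gamma_j-\gamma_i)g(x_0)$. The comparison at $\overline{x}_j$ in that proof gives $u_i-u_j\geqslant(\gamma_j-\gamma_i)g(\overline{x}_j)$. Together these force $\overline{x}_j\leqslant x_0$. Either way the contradiction is global and requires no local optimality at $x_0$.
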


\begin{proof}
    \begin{enumerate}
        \item If $i < j$ and $\gamma_i \neq \gamma_j$, Proposition~\ref{prop:asym-stochastic-dominance} implies $F_i(x) \leqslant F_j(x)$ for any $x \in \textup{supp}(F_i) \cup \textup{supp}(F_j)$. Therefore, $\underline{x}_i \geqslant \underline{x}_j$ and $\overline{x}_i \geqslant \overline{x}_j$. To show the strict inequality $\underline{x}_i > \underline{x}_j$, suppose on the contrary that $\underline{x}_i = \underline{x}_j$. Then the proof of Proposition~\ref{prop:asym-stochastic-dominance} implies $F_i(\underline{x}_i) < F_j(\underline{x}_j) = 0$, a contradiction.
        \item The first part of the corollary implies that $\underline{x}_n = \min_{i \in [n]} \underline{x}_i$. Following the argument in Step~2 of the proof of Theorem~\ref{thm:sym-equ-property}, we can show that $\underline{x}_n = \arg\max_x x p_n - \gamma_n g(x)$.
    \end{enumerate}
\end{proof}

To state the next result, we recall the notions of participating and active players introduced in \cite{xiao2016asymmetric}. Given an equilibrium, define $\mathcal{P}^*(x) = \{i \mid \underline{x}_i \leqslant x \leqslant \overline{x}_i\}$ and $\mathcal{A}^*(x)$ as the set of players with increasing $F_i$ in a neighborhood of $x$. Therefore, $\mathcal{A}^*(x) \subseteq \mathcal{P}^*(x)$. We refer to $\mathcal{P}^*(x)$ as the set of participating players at effort level $x$, and $\mathcal{A}^*(x)$ as the set of active players at effort level $x$. The following proposition implies that the active players form a nested structure within the participating players.

\begin{proposition}(generalization of Proposition~1 in \cite{xiao2016asymmetric}) \label{prop:asym-nested-gaps}
    Given any equilibrium of game $\mathcal{G}^{(n)}$, suppose $i, j \in \mathcal{P}^*(x)$ with $i < j$. If $i \notin \mathcal{A}^*(x)$, then $j \notin \mathcal{A}^*(x)$.
\end{proposition}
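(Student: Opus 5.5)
The plan is to argue by contradiction, using the indifference conditions of both players and the decomposition of the expected position bias $W$ through the pair $\{i,j\}$ that appeared in the proofs of Proposition~\ref{prop:asym-utility-order} and Proposition~\ref{prop:asym-stochastic-dominance}.

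Suppose $i,j\in\mathcal{P}^*(x)$ with $i<j$, $i\notin\mathcal{A}^*(x)$, but $j\in\mathcal{A}^*(x)$.

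\textbf{Step 1 (the case $\gamma_i=\gamma_j$).} By Proposition~\ref{prop:not-all-symmetric-equilibrium}, $F_i=F_j$. Hence $i$ and $j$ are active at exactly the same points, which is already a contradiction. From now on assume $\gamma_i<\gamma_j$.

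\textbf{Step 2 (set-up near $x$).} Since $i\notin\mathcal{A}^*(x)$, $F_i$ is constant on some neighbourhood $(x-\delta,x+\delta)$; call its value $F_i(x)$. Since $i\in\mathcal{P}^*(x)$, we have $\underline{x}_i\le x\le\overline{x}_i$. Because $F_i$ is flat there, the support of $i$ has points on both sides of this gap: a point $a\le x-\delta$ and a point $b\ge x+\delta$ in $\textup{supp}(F_i)$, with $F_i(a)=F_i(b)=F_i(x)$. (If $x$ equals an endpoint, use one side only.) Since $j$ is active near $x$, pick points $y\in\textup{supp}(F_j)\cap(x-\delta,x+\delta)$.

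\textbf{Step 3 (two auxiliary functions).} For each player $k\in\{i,j\}$ define
\[
\phi_k(z)=z\,W(\bm{F}_{-k}(z),\bm{p})-\gamma_k g(z).
\]
By equilibrium, $\phi_k\le u_k$ everywhere, with equality on $\textup{supp}(F_k)$. Write
\[
W(\bm{F}_{-i}(z),\bm p)=A(z)+F_j(z)B(z),\qquad W(\bm{F}_{-j}(z),\bm p)=A(z)+F_i(z)B(z),
\]
where
\[
A(z)=W(\bm F_{-\{i,j\}}(z),\underline{\bm p})\quad\text{and}\quad B(z)=W(\bm F_{-\{i,j\}}(z),\overline{\bm p}-\underline{\bm p})>0 .
\]
Then
\[
\phi_i(z)-\phi_j(z)=z\,(F_j(z)-F_i(z))\,B(z)+(\gamma_j-\gamma_i)g(z).
\]

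\textbf{Step 4 (the contradiction).} The step I expect to be the crux is showing that this difference cannot stay compatible with both indifference relations. On $(x-\delta,x+\delta)$ we have $F_i$ constant while $F_j$ strictly increases. At a point $y$ of $j$'s support,
\[
\phi_j(y)=u_j\quad\text{and}\quad \phi_i(y)\le u_i .
\]
At $b$ (in $i$'s support),
\[
\phi_i(b)=u_i\quad\text{and}\quad \phi_j(b)\le u_j .
\]
Hence
\[
\phi_i(b)-\phi_j(b)\;\ge\; u_i-u_j\;\ge\;\phi_i(y)-\phi_j(y).
\]
I would exploit the fact that over $[y,b]$ the term $F_j-F_i$ increases, since $F_i$ is frozen and $F_j$ rises. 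The cost term $(\gamma_j-\gamma_i)g$ also increases. So the inequality is consistent on this side, and I must instead use the lower gap endpoint $a$, together with Proposition~\ref{prop:asym-stochastic-dominance}, which gives $F_i\le F_j$.

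The first approach I would try is this. On the region where only $j$ (and possibly other players $k>i$) are active, $j$'s indifference pins down $\frac{d}{dz}\bigl[zW(\bm F_{-j}(z),\bm p)\bigr]=\gamma_j g'(z)$. Since $F_i$ is flat,
\[
\frac{d}{dz}\bigl[zW(\bm F_{-i}(z),\bm p)\bigr]
=\frac{d}{dz}\bigl[zW(\bm F_{-j}(z),\bm p)\bigr]+\frac{d}{dz}\bigl[z(F_j-F_i)B\bigr]
\ge \gamma_j g'(z) > \gamma_i g'(z),
\]
because $F_j-F_i\ge 0$ is increasing and $B$ is nondecreasing by Lemma~\ref{lem:W-monotone}. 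Therefore $\phi_i$ is strictly increasing across the gap. Then $\phi_i(y)>\phi_i(a)=u_i$ for $y$ slightly above $a$, which contradicts the equilibrium condition for player $i$.

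This requires the activity of $j$ to extend down to $a$. To get that, I would take the maximal subinterval, starting at the left end of $i$'s gap, on which $i$ is inactive. The delicate part is controlling the other players' densities inside $B$ and $A$. The derivative bound only needs monotonicity of $A$ and $B$, which Lemma~\ref{lem:W-monotone} supplies, so I expect it to go through.
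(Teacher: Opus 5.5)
Your route is the paper's. Both proofs write $W(\bm F_{-i}(z),\bm p)=A(z)+F_j(z)B(z)$ and $W(\bm F_{-j}(z),\bm p)=A(z)+F_i(z)B(z)$. Both use $F_i\le F_j$ (Proposition~\ref{prop:asym-stochastic-dominance}) and $\gamma_i<\gamma_j$ to make $\phi_i-\phi_j$ strictly increasing across a flat stretch of $F_i$, and both conclude $\phi_i>u_i$ at a point where $j$ is active. Your $a,y$ play the role of the paper's $x_1,x_2$.

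\textbf{The final step, as written.} It does not close. The derivative version needs $j$ to be indifferent on all of $[a,y]$, and it needs $F_j$ and $B$ to be differentiable. Taking ``the maximal subinterval on which $i$ is inactive'' says nothing about $j$. Its activity inside $i$'s gap may start strictly above $a$. If you move the left endpoint up to that point, it leaves $\textup{supp}(F_i)$ and you lose $\phi_i=u_i$ there.

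\textbf{The repair.} Run your own Step~4 chain at $a$ instead of at $b$. From $\phi_i(a)=u_i$, $\phi_j(a)\le u_j$, $\phi_j(y)=u_j$ and $\phi_i(y)\le u_i$ you get $D(a)\ge u_i-u_j\ge D(y)$ for $D=\phi_i-\phi_j$. On the other hand, with $F_i\equiv c$ on $[a,y]$,
\[
D(y)-D(a)=y\bigl(F_j(y)-c\bigr)B(y)-a\bigl(F_j(a)-c\bigr)B(a)+(\gamma_j-\gamma_i)\bigl(g(y)-g(a)\bigr).
\]
This is strictly positive as soon as $B(y)\ge B(a)$, because $F_j(y)\ge F_j(a)\ge c$ and the cost term is positive. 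You do not even need $F_j$ to rise on $[a,y]$. The paper has the same loose step: it asserts without justification that $j$ is active on all of $(x_1,x_2)$ and uses $\phi_j(x_1)=\phi_j(x_2)$. Only $\phi_j(x_1)\le u_j$ is needed.

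\textbf{A remaining gap shared with the paper.} You need $B(y)\ge B(a)$, and Lemma~\ref{lem:W-monotone} does not give it.
\begin{itemize}
\item That lemma concerns a decreasing bias vector. $B$ is instead $W$ evaluated at the gap vector $(p_1-p_2,\dots,p_{n-1}-p_n)$.
\item By the same one-player decomposition, $\partial B/\partial F_k$ is $W$ evaluated at the second differences $p_r-2p_{r+1}+p_{r+2}$. These can be negative; the biases in Table~\ref{tab:position-bias-no-ai-overview} do not have monotone gaps.
\item By the remark after Proposition~\ref{prop:asym-stochastic-dominance}, $F_j(a)>c$ strictly when $\gamma_i<\gamma_j$. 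So a decrease of $B$ on $[a,y]$ can in principle outweigh the other terms.
\end{itemize}
The paper's step ``since $F_i(x_2)=F_i(x_1)\le F_j(x_1)$'' requires exactly $B(x_2)\ge B(x_1)$ as well. So your argument is as complete as the paper's. It is rigorous when $p_r-2p_{r+1}+p_{r+2}\ge 0$ for all $r$ (e.g.\ arithmetic biases, or $n=2$). Otherwise it needs an extra argument that neither proof supplies.
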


\begin{proof}
    If $\gamma_i = \gamma_j$, Proposition~\ref{prop:not-all-symmetric-equilibrium} implies $F_i = F_j$, then the proposition holds trivially. Hence, it suffices to consider the case $\gamma_i < \gamma_j$. Proof by contradiction. Suppose that there exist $i, j \in \mathcal{P}^*(x)$ with $i < j$ such that $i \notin \mathcal{A}^*(x)$ but $j \in \mathcal{A}^*(x)$. Let $x_2 = \sup\{x \mid i, j \in \mathcal{P}^*(x), i \notin \mathcal{A}^*(x), j \in \mathcal{A}^*(x)\}$, $x_1 = \inf\{x < x_2 \mid F_i(x) = F_i(x_2)\ \textup{for}\ x' \in [x, x_2]\}$. Then we have
    \begin{align*}
        W(\bm{F}_{-i}(x_2), \bm{p}) - W(\bm{F}_{-i}(x_1), \bm{p}) &= (W(\bm{F}_{-\{i, j\}}(x_2), \underline{\bm{p}}) + F_j(x_2) W(\bm{F}_{-\{i,j\}}(x_2), \overline{\bm{p}} - \underline{\bm{p}})) \\
        &\quad - (W(\bm{F}_{-\{i, j\}}(x_1), \underline{\bm{p}}) + F_j(x_1) W(\bm{F}_{-\{i,j\}}(x_1), \overline{\bm{p}} - \underline{\bm{p}})) \\
        &> (W(\bm{F}_{-\{i, j\}}(x_2), \underline{\bm{p}}) + F_j(x_1) W(\bm{F}_{-\{i,j\}}(x_2), \overline{\bm{p}} - \underline{\bm{p}})) \\
        &\quad - (W(\bm{F}_{-\{i, j\}}(x_1), \underline{\bm{p}}) + F_j(x_1) W(\bm{F}_{-\{i,j\}}(x_1), \overline{\bm{p}} - \underline{\bm{p}})) \\
        &= W(\bm{F}_{-\{i, j\}}(x), \underline{\bm{p}})\big|_{x_1}^{x_2} + F_j(x_1) W(\bm{F}_{-\{i,j\}}(x), \overline{\bm{p}} - \underline{\bm{p}})\big|_{x_1}^{x_2}.
    \end{align*}
    
    Similarly,
    \[ W(\bm{F}_{-j}(x_2), \bm{p}) - W(\bm{F}_{-j}(x_1), \bm{p}) = W(\bm{F}_{-\{i, j\}}(x), \underline{\bm{p}})\big|_{x_1}^{x_2} + F_i(x_2) W(\bm{F}_{-\{i,j\}}(x), \overline{\bm{p}} - \underline{\bm{p}})\big|_{x_1}^{x_2}. \]
    
    Since $F_i(x_2) = F_i(x_1) \leqslant F_j(x_1)$, we have
    \[ W(\bm{F}_{-i}(x_2), \bm{p}) - W(\bm{F}_{-i}(x_1), \bm{p}) > W(\bm{F}_{-j}(x_2), \bm{p}) - W(\bm{F}_{-j}(x_1), \bm{p}). \]
    
    With the assumption that $\gamma_i < \gamma_j$, we have
    \[ [W(\bm{F}_{-i}(x), \bm{p}) - \gamma_i g(x)]\big|_{x_1}^{x_2} > [W(\bm{F}_{-j}(x), \bm{p}) - \gamma_j g(x)]\big|_{x_1}^{x_2}. \]

    Since $j \in \mathcal{A}^*(x)$ for any $x \in (x_1, x_2)$, we have $[W(\bm{F}_{-j}(x), \bm{p}) - \gamma_j g(x)]\big|_{x_1}^{x_2} = 0$, thus
    \[ W(\bm{F}_{-i}(x_2), \bm{p}) - \gamma_i g(x_2) > W(\bm{F}_{-i}(x_1), \bm{p}) - \gamma_i g(x_1) = u_i, \]
    which contradicts the requirement of equilibrium. Therefore, the proposition holds.
\end{proof}

Based on Proposition~\ref{prop:asym-stochastic-dominance} and Proposition~\ref{prop:asym-nested-gaps}, we can sketch the general form of a mixed Nash equilibrium in game $\mathcal{G}^{(n)}$. Higher-cost creators enter the competition at lower effort levels and also cease competing at lower effort levels; lower-cost creators concentrate their effort more heavily on high values; and higher-cost players may still choose high effort, but only when lower-cost players are active at the same effort level. Section~\ref{subsec:ap-asym-three} provides some concrete examples with three creators that illustrate the properties established in this subsection.

Specifically, for the binary-type setting (game $\mathcal{G}^{(2)}$) where $\gamma_1 = \cdots = \gamma_k = \gamma_H$ and $\gamma_{k+1} = \cdots = \gamma_n = \gamma_L$ with $\gamma_H < \gamma_L$, the support structure of the mixed Nash equilibrium can be sketched as shown in Figure~\ref{fig:2-type-support-sketch} (the ellipsis indicates that the equilibrium behavior in this interval cannot be characterized precisely until specific parameter values are given). By Proposition~\ref{prop:not-all-symmetric-equilibrium}, all creators of the same type employ identical equilibrium strategies. Let $[\underline{x}_H, \overline{x}_H]$ and $[\underline{x}_L, \overline{x}_L]$ denote the supports of type-$H$ and type-$L$ creators, respectively. Corollary~\ref{cor:asym-support-monotone} implies that $\underline{x}_H > \underline{x}_L = \arg\max_x x p_n - \gamma_L g(x)$ and $\overline{x}_H \geqslant \overline{x}_L$. Consequently, on the interval $[\underline{x}_L, \underline{x}_H)$ only type-$L$ creators compete. 

On the interval $[\underline{x}_H, \overline{x}_H]$, Proposition~\ref{prop:asym-nested-gaps} indicates that type-$L$ creators can be active only where type-$H$ creators are active. Together with Corollary~\ref{cor:symmetric-no-gap}, it follows that the support of type-$H$ creators is the continuous interval $[\underline{x}_H, \overline{x}_H]$ without any gaps. However, the equilibrium strategy of type-$L$ creators may contain gaps within $[\underline{x}_H, \overline{x}_H]$ or may even not be active on $[\underline{x}_H, \overline{x}_H]$ if $\overline{x}_L \leqslant \underline{x}_H$.

\begin{figure}[htbp]
    \centering
    \includegraphics[width=0.5\textwidth]{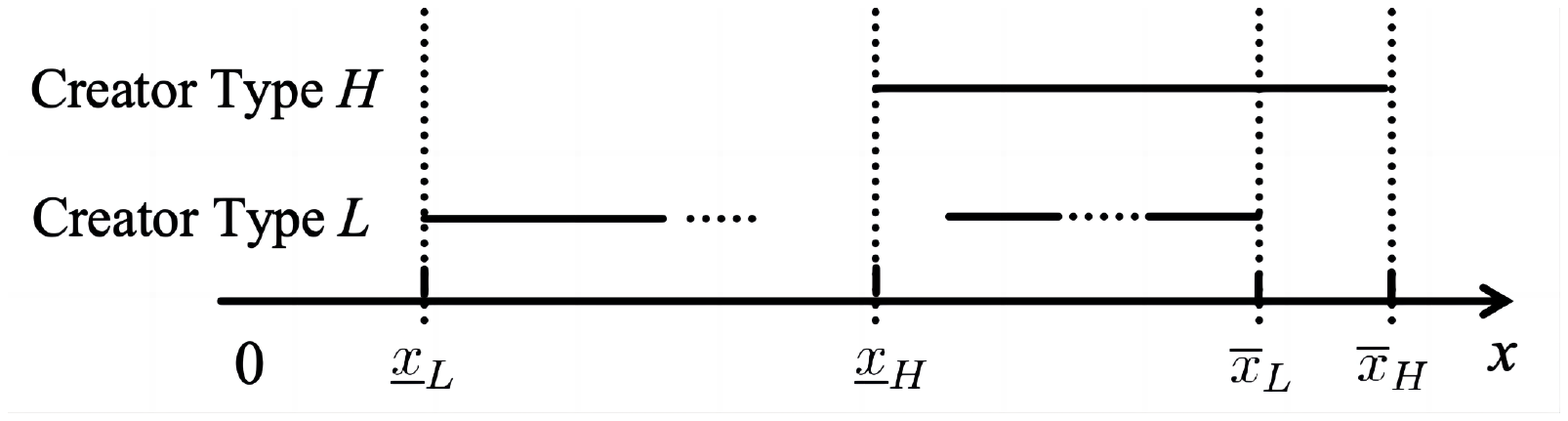} 
    \caption{Sketch of the support structure of the mixed Nash equilibrium for binary-type setting}
    \label{fig:2-type-support-sketch}
\end{figure}

It is worth noting that a pure Nash equilibrium may exist when $\gamma_1 < \gamma_2 < \cdots < \gamma_n$, which contrasts with previous results in asymmetric contests \cite{barut1998symmetric,xiao2016asymmetric}. The key distinction lies in the payoff structure: in our model, $x p-\gamma g(x)$ has a unique nonzero maximizer, whereas in standard rank-based contests the utility is simply a constant reward minus the cost, making zero the only maximizer. We refer to Appendix~\ref{subsec:ap-asym-pure} for a more detailed discussion on pure Nash equilibria in asymmetric settings.

\subsection{Results for Pure Nash Equilibrium} \label{subsec:ap-asym-pure}

The properties stated in Section~\ref{subsec:ap-asym-general} hold for any cost structure, thus also apply to the special case where $\gamma_1 < \gamma_2 < \cdots < \gamma_n$. In this subsection, we discuss additional properties that arise specifically in this scenario, especially in relation to the pure Nash equilibrium.

We begin with a proposition indicates that in any pure Nash equilibrium, creators with lower cost parameters will exert higher effort levels.

\begin{proposition} \label{prop:decrease_equilibrium}
    Assume that $\gamma_1 < \cdots < \gamma_n$ in game $\mathcal{G}^{(n)}$, if a pure Nash equilibrium exists, then the equilibrium strategy profile $(x_1, \ldots, x_n)$ must satisfies $x_1 > \cdots > x_n > 0$.
\end{proposition}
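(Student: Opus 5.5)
Fix a pure Nash equilibrium $(x_1,\ldots,x_n)$ with $\gamma_1<\cdots<\gamma_n$. The plan has three steps: show all efforts are positive, rule out ties, and then show that lower-cost creators rank strictly higher.

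\textbf{Positivity.} Any creator faces an expected position bias of at least $p_n>0$. She can therefore guarantee utility at least $\max_x\, x p_n-\gamma_i g(x)>0$. This maximum is attained at an interior point, because $g'(0)$ is finite (as $g$ is differentiable) and $x p_n-\gamma_i g(x)$ is strictly concave. Choosing $x_i=0$ yields utility $0$, so $x_i>0$ for every $i$.

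\textbf{No ties.} Suppose $k\ge 2$ creators tie at a common level $x>0$ and occupy ranks $j,\ldots,j+k-1$. Each tied creator receives the average bias $\bar p=\frac1k\sum_{l=j}^{j+k-1}p_l$, and $\bar p<p_j$ because $\bm p$ is strictly decreasing. Consider one tied creator $i$ deviating to $x+\varepsilon$. For $\varepsilon$ small enough, no other creator's effort lies in $(x,x+\varepsilon]$, since there are finitely many pure efforts. She then obtains rank $j$ with certainty, and her utility is $(x+\varepsilon)p_j-\gamma_i g(x+\varepsilon)$. As $\varepsilon\to0^+$ this tends to $x p_j-\gamma_i g(x)$, which strictly exceeds $x\bar p-\gamma_i g(x)$ because $x>0$ and $p_j>\bar p$. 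This is a profitable deviation, so all efforts are distinct.

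\textbf{Ordering.} This is the core step. Suppose for contradiction that $i<j$, so $\gamma_i<\gamma_j$, but $x_i<x_j$. Let creator $i$ hold rank $a$ and creator $j$ hold rank $b$; distinctness gives $b<a$. The plan is to use a swap (revealed-preference) argument.

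\emph{Deviation by $j$.} If $j$ deviates to $x_i-\varepsilon$, she ranks just below $x_i$. Every creator strictly between the two of them moves up one place, so $j$ lands at rank $a$. Letting $\varepsilon\to0$, the no-profitable-deviation condition gives
\[ x_j p_b-\gamma_j g(x_j)\ \ge\ x_i p_a-\gamma_j g(x_i). \]
\emph{Deviation by $i$.} If $i$ deviates to $x_j+\varepsilon$, she takes exactly rank $b$, since creator $j$ now sits below her. This gives
\[ x_i p_a-\gamma_i g(x_i)\ \ge\ x_j p_b-\gamma_i g(x_j). \]
\emph{Combining.} Adding the two inequalities yields
\[ (\gamma_j-\gamma_i)\bigl(g(x_j)-g(x_i)\bigr)\le 0. \]
Since $\gamma_j>\gamma_i$ and $g$ is strictly increasing with $x_j>x_i$, the left-hand side is strictly positive, which is a contradiction. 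Hence $x_i>x_j$ whenever $i<j$, and together with positivity this gives $x_1>\cdots>x_n>0$.

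\textbf{Main obstacle.} The delicate point is the limiting argument in the deviation by $j$. Deviating to exactly $x_i$ would create a tie, and deviating to a point below $x_i$ requires that no other creator's effort lie in the gap $[x_i-\varepsilon,x_i)$. Both issues are handled by taking $\varepsilon$ smaller than the minimum gap between distinct pure efforts and using the continuity of $g$ to pass to the limit.
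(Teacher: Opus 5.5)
Your argument is correct, but it takes a different route from the paper's. The paper gets the weak ordering $x_1 \geqslant \cdots \geqslant x_n$ by specializing Proposition~\ref{prop:asym-stochastic-dominance} to degenerate CDFs; that proposition is a general mixed-strategy stochastic-dominance result stated through the expected-bias function $W$. The paper then removes ties with the same upward-deviation argument you use, and obtains positivity from the first-order condition $p_i = \gamma_i g'(x_i)$ at rank $i$.

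You instead rule out ties first and then prove the strict ordering directly by a pairwise swap. In a pure profile, each deviation places the deviator exactly in the other creator's rank. The position-bias terms therefore cancel, leaving only the single-crossing inequality $(\gamma_j-\gamma_i)(g(x_j)-g(x_i))\leqslant 0$. This is the same revealed-preference idea behind the proof of Proposition~\ref{prop:asym-stochastic-dominance}. Your version, however, is self-contained and handles the atoms of a pure profile explicitly through the $\varepsilon$-deviations, whereas $W$ is defined only at effort levels where no opponent has an atom.

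Your positivity step compares the guaranteed payoff $\max_x x p_n-\gamma_i g(x)$ with the payoff $0$ at zero effort. It does not need to know which rank the creator occupies, whereas the FOC route presupposes the ordering and an interior optimum. The paper's route buys brevity by reusing general machinery; yours gives a cleaner, fully explicit proof for the pure case.

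One correction: a finite $g'(0)$ does not make the maximizer of $x p_n-\gamma_i g(x)$ interior. What you need is $\gamma_i g'(0) < p_n$, which holds for instance when $g'(0)=0$, as for $g(x)=x^{\beta}$. Under the model's stated assumptions alone this can fail. With $g(x)=x+x^2$ and $\gamma_n \geqslant p_1$, every positive effort gives creator $n$ negative utility, so she plays $0$. For example, $n=2$, $p_1=0.8$, $p_2=0.5$, $\gamma_1=0.5$, $\gamma_2=1$ has the pure equilibrium $(0.3,0)$. The statement therefore tacitly requires this condition. The paper's FOC step relies on the same implicit assumption, so this is a misstated justification rather than a gap relative to the paper.
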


\begin{proof}
    Note that in a pure Nash equilibrium, the CDF of the strategy of each creator degenerates to a step function at her chosen effort level, hence $x_1 \geqslant \cdots \geqslant x_n$ is a direct consequence of Proposition~\ref{prop:asym-stochastic-dominance}. If there exists some $i \in [n - 1]$ such that $x_i = x_{i + 1}$, then both creators can increase her utility by slightly increasing her effort level, since $g$ is a continuous function and $x_i > x_{i + 1}$. This contradicts the assumption that the strategy profile is a pure Nash equilibrium, and thus we have $x_1 > \cdots > x_n$.
    
    Then, according to the first order condition for pure Nash equilibrium, we have
    \begin{equation} \label{eq:asym-pure-foc}
        p_i - \gamma_i g'(x_i) = 0.
    \end{equation}

    Recall that $p_i > 0$ for any $i \in [n]$, thus equation \eqref{eq:asym-pure-foc} implies $x_i > 0$ for any $i \in [n]$.
\end{proof} 

The following proposition provides a sufficient condition for the existence of pure Nash equilibrium when $\gamma_1 < \cdots < \gamma_n$.

\begin{proposition}
    Denote $x_i = \arg\max_x x p_i - \gamma_i g(x)$ for any $i \in [n]$. If $\gamma_1 < \cdots < \gamma_n$, then game $\mathcal{G}^{(n)}$ admits a pure Nash equilibrium if 
    \[ \frac{\gamma_{i + 1}}{\gamma_i} \geqslant \frac{x_i g'(x_i)}{g(x_i)}. \]
\end{proposition}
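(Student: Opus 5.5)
Our approach is to take the natural candidate profile $\bm{x}^* = (x_1, \ldots, x_n)$ with $x_i = \arg\max_x x p_i - \gamma_i g(x)$ and verify directly that no creator has a profitable unilateral deviation. Each $x_i$ solves the first-order condition $p_i = \gamma_i g'(x_i)$, as in equation~\eqref{eq:asym-pure-foc}. We first check that the ranking is consistent: creator $i$ must actually occupy rank $i$. Since $p_i > p_{i+1}$, $\gamma_i < \gamma_{i+1}$ and $g'$ is strictly increasing, $g'(x_i) = p_i/\gamma_i > p_{i+1}/\gamma_{i+1} = g'(x_{i+1})$. Hence $x_1 > \cdots > x_n > 0$, there are no ties, and creator $i$ earns $u_i = x_i p_i - \gamma_i g(x_i)$. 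This matches the necessary structure in Proposition~\ref{prop:decrease_equilibrium}.

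Next we classify deviations of creator $i$ to an effort level $y$ by the rank $k$ it would obtain. If $y$ keeps her at rank $i$ (that is, $x_{i+1} < y < x_{i-1}$), her payoff is at most $\max_x x p_i - \gamma_i g(x) = u_i$, so there is no gain. If $y < x_{i+1}$ or $y$ ties with someone below, she obtains rank $k > i$ with bias $p_k \le p_i$. The payoff is then at most $\max_y y p_k - \gamma_i g(y) \le \max_y y p_i - \gamma_i g(y) = u_i$, so downward deviations are never profitable, and ties only average ranks in this way. The only nontrivial case is an upward deviation to rank $k < i$. This requires $y \ge x_k$; a tie yields an average of $p_k, \dots, p_i$, which is dominated by the strict case. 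Because $y p_k - \gamma_i g(y)$ is concave with maximizer $\hat y$ satisfying $\gamma_i g'(\hat y) = p_k$, and $\gamma_i < \gamma_k$ gives $\hat y > x_k$, the best upward deviation may lie strictly above $x_k$. We therefore need to bound $\max_{y \ge x_k} y p_k - \gamma_i g(y)$ by $u_i$.

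This upward-deviation bound is the main obstacle, and it is where the hypothesis $\gamma_{i+1}/\gamma_i \ge x_i g'(x_i)/g(x_i)$ must be used. We would first reduce to adjacent ranks: it suffices to show that creator $i+1$ cannot profitably jump to just above $x_i$, and then chain these comparisons. Our plan is to use the elasticity condition to show that even at the point $y = x_i$ the deviation payoff $x_i p_i - \gamma_{i+1} g(x_i)$ is bounded by $u_{i+1}$. The reason is that $p_i x_i = \gamma_i g'(x_i) x_i \le \gamma_{i+1} g(x_i)$ by the hypothesis, so $x_i p_i - \gamma_{i+1} g(x_i) \le 0 \le u_{i+1}$; note that $u_{i+1} > 0$ since $x_{i+1} > 0$.

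To handle $y > x_i$ (and more generally rank $k$ far above), we need the map $y \mapsto y p_k - \gamma_j g(y)$ to be non-positive on $[x_k, \infty)$. We would try to show that $y p_k - \gamma_j g(y) \le 0$ on $y \ge x_k$ for every $j > k$. This holds if $\gamma_j g(y)/y \ge p_k$ there. Since $g(y)/y$ is increasing (by convexity and $g(0)=0$), it suffices to check $\gamma_j g(x_k)/x_k \ge p_k = \gamma_k g'(x_k)$, i.e.\ $\gamma_j/\gamma_k \ge x_k g'(x_k)/g(x_k)$. This follows from the hypothesis at index $k$ together with $\gamma_j \ge \gamma_{k+1}$. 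Combining this with the at-most-zero bound and $u_j > 0$ rules out all upward deviations, including ties. We expect the care needed here to be in handling ties and in confirming that the monotonicity of $g(y)/y$ suffices without any $\beta$-specific structure.
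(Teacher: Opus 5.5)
Your argument is correct, but it handles upward deviations differently from the paper.

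The paper proceeds in three steps:
\begin{enumerate}
\item It notes that for $j>i$ the deviator's unconstrained optimum $\arg\max_x x p_i-\gamma_j g(x)$ lies below $x_i$. By concavity, the best jump into rank $i$ is therefore to (just above) $x_i$.
\item It chains adjacent comparisons to reduce everything to creator $i+1$ jumping to $x_i$. The inequality $x_{i-1}p_{i-1}-\gamma_i g(x_{i-1})\leqslant x_ip_i-\gamma_i g(x_i)$ survives replacing $\gamma_i$ by $\gamma_{i+1}$ because $x_{i-1}>x_i$.
\item It uses the first-order conditions to rewrite that adjacent condition as \eqref{eq:asym-pure-sufficient-1}. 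The left side is exactly $x_ip_i-\gamma_{i+1}g(x_i)$, which is nonpositive under the elasticity hypothesis.
\end{enumerate}

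You skip both the localization and the chaining. Monotonicity of $g(y)/y$, together with $\gamma_j\geqslant\gamma_{k+1}$, shows $yp_k-\gamma_j g(y)\leqslant 0$ on the entire ray $y\geqslant x_k$ for every $j>k$. Comparing against $u_j>0$, this disposes of all upward jumps and ties at once.

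Your route is shorter and does not need to know where the optimal deviation sits. The paper's route yields a sharper intermediate fact: the candidate profile is an equilibrium if and only if the adjacent inequalities \eqref{eq:asym-pure-sufficient-1} hold for all $i$. The elasticity ratio is just one sufficient condition for those.

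One slip to fix: for $k<i$ you have $\gamma_k<\gamma_i$, not $\gamma_i<\gamma_k$. So $\gamma_i g'(\hat y)=p_k=\gamma_k g'(x_k)$ gives $\hat y<x_k$, and the best jump into rank $k$ is at $x_k$ itself, exactly as the paper uses. This is harmless because your final bound covers all $y\geqslant x_k$. For the same reason, the adjacent-rank reduction you announce in your third paragraph is never actually needed and can be dropped.
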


Intuitively, the larger the gap between the cost parameters of adjacent creators, the more likely a pure Nash equilibrium exists, as higher-cost creators then have a weaker incentive to deviate toward the effort levels chosen by lower-cost creators.

\begin{proof}
    Our proof can be divided into three steps:
    
    \textbf{Step 1: No creator has the incentive to unilaterally decrease her effort level.}

    Proof by contradiction. Suppose that creator $i \in [n]$ decreases her effort to $x < x_i$. There are two cases to consider:
    \begin{enumerate}
        \item Recall that $x_i$ in equilibrium maximize $x_i p_i - \gamma_i g(x_i)$, so creator $i$ does not have the incentive to decrease her effort to $x \in [x_{i + 1}, x_i)$;
        \item When creator $i$ decreases her effort to $x \in [0, x_{i + 1})$, then we have
        \[ x p_j - \gamma_i g(x) < x p_i - \gamma_i g(x) < x_i p_i - \gamma_i g(x_i) \]
        for any $j > i$, so creator $i$ does not have the incentive to decrease her effort to $x \in [0, x_{i + 1})$.
    \end{enumerate}
    
    Combining the two cases above, we can conclude that no creator has the incentive to unilaterally decrease her effort level.

    \textbf{Step 2: We only need to consider the case that for any $i \in [n]$, creator $i + 1$ seeks to increase her effort to $x_i$.}

    Note that for all $j > i, \arg\max_x x p_i - \gamma_j g(x) < \arg\max_x x p_i - \gamma_i g(x)$. Due to the concavity of $x p_i - \gamma_j g(x)$, the most profitable deviation for creator $j$ toward position $i$ is exactly $x_i$ (we ignore the infinitesimal perturbation $\varepsilon \to 0^+$). Then if creator $i + 1$ does not have the incentive to deviate to $x_i$, the above analysis implies that
    \[ x_i p_i - \gamma_{i + 1} g(x_i) \leqslant x_{i + 1} p_{i + 1} - \gamma_{i + 1} g(x_{i + 1}). \]
    
    If creator $i$ does not have the incentive to deviate to $x_{i - 1}$, similarly we have 
    \[ x_{i - 1} p_{i - 1} - \gamma_i g(x_{i - 1}) \leqslant x_i p_i - \gamma_i g(x_i). \]

    Recall that $\gamma_i < \gamma_{i + 1}$, combine the two inequalities above, we have
    \[ x_{i - 1} p_{i - 1} - \gamma_{i + 1} g(x_{i - 1}) \leqslant x_i p_i - \gamma_{i + 1} g(x_i) \leqslant x_{i + 1} p_{i + 1} - \gamma_{i + 1} g(x_{i + 1}). \]
    
    So creator $i + 1$ does not have the incentive to deviate to $x_{i - 1}$. Inductively, we can conclude that if creator $i + 1$ does not have the incentive to deviate to $x_i$, then she does not have the incentive to deviate to any $x_j$ with $j < i$. Combined with the result of Step~1, we only need to consider the case that for any $i \in [n]$, creator $i + 1$ seeks to increase her effort to $x_i$.

    \textbf{Step 3: Prove the proposition.}

    Recall the result in Proposition \ref{prop:decrease_equilibrium}, if $\gamma_1 < \ldots < \gamma_n$, then in equilibrium, the page created by creator $i \in [n]$ will occupy position $i$. So, the best response of any creator $i \in [n]$ is $x_i = \arg\max_x x p_i - \gamma_i g(x)$.

    Based on the results established in Step 2, the strategy profile constitutes a pure Nash equilibrium if and only if for every creator $ i + 1 $ with $ i \in [n] $, there exists no incentive to unilaterally deviate to the effort level $ x_i $. Formally,
    \[ x_i p_i - \gamma_{i + 1} g(x_i) \leqslant x_{i + 1} p_{i + 1} - \gamma_{i + 1} g(x_{i + 1}), \]
    associated with the first order conditions (F.O.C.)
    \[ p_i - \gamma_i g'(x_i) = 0, \quad p_{i + 1} - \gamma_{i + 1} g'(x_{i + 1}) = 0, \]
    we can derive that
    \begin{align*}
        x_i p_i - x_{i + 1} p_{i + 1} &\leqslant \gamma_{i + 1} g(x_i) - \gamma_{i + 1} g(x_{i + 1}) \\
        &= \frac{\gamma_{i + 1}}{\gamma_i} \frac{g(x_i)}{g'(x_i)} p_i - \frac{g(x_{i + 1})}{g'(x_{i + 1})} p_{i + 1},
    \end{align*}
    then we can derive that
    \begin{equation} \label{eq:asym-pure-sufficient-1}
        \left(x_i - \frac{\gamma_{i + 1}}{\gamma_i} \frac{g(x_i)}{g'(x_i)} \right) p_i \leqslant \left( x_{i + 1} - \frac{g(x_{i + 1})}{g'(x_{i + 1})} \right) p_{i + 1}.
    \end{equation}

    Because $g$ is a strictly convex function, we have
    \[ g(x_{i + 1}) < g(0) + g'(x_{i + 1}) x_{i + 1} = g'(x_{i + 1}) x_{i + 1}, \]
    thus $x_{i + 1} - \frac{g(x_{i + 1})}{g'(x_{i + 1})} > 0$. Therefore, if $\frac{\gamma_{i + 1}}{\gamma_i} \geqslant \frac{x_i g'(x_i)}{g(x_i)}$, the equation~\eqref{eq:asym-pure-sufficient-1} must hold, thus completing the proof.
\end{proof}

\begin{example}
    Consider the cost function $g(x) = x^2$. Then, we have $\frac{x_i g'(x_i)}{g(x_i)} = 2$. If $n = 3$, $\gamma_1 = 1$, $\gamma_2 = 2$, $\gamma_3 = 4$ and $p_1 = 0.8$, $p_2 = 0.6$, $p_3 = 0.4$, then the game admits a pure Nash equilibrium with effort levels $x_1 = 0.4$, $x_2 = 0.15$, and $x_3 = 0.05$. 
    
    We can easily verify that no creator has the incentive to unilaterally deviate from her effort level. For example, we can compute the utility of creator $3$ when she chooses $x_3$ is $0.05 \times 0.4 - 4 \times (0.05)^2 = 0.01$, while her utility when deviating to $x_2$ is $0.15 \times 0.6 - 4 \times (0.15)^2 = 0 < 0.01$, and when deviating to $x_1$ is $0.4 \times 0.8 - 4 \times (0.4)^2 = -0.32 < 0.01$.
\end{example}

A natural question is whether a pure Nash equilibrium, when it exists, is the unique equilibrium of game $\mathcal{G}^{(n)}$. The following proposition provides a sufficient condition for the pure Nash equilibrium to be unique.

\begin{proposition} \label{prop:pure-unique}
    In game $\mathcal{G}^{(n)}$, if the elasticity of $g$,
    \[ \eta_g(x) = \frac{x g'(x)}{g(x)}, \]
    is a non-decreasing function of $x$, then the pure Nash equilibrium is the unique equilibrium of game $\mathcal{G}^{(n)}$ whenever it exists.
\end{proposition}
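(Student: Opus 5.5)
The plan is to work under the standing assumption $\gamma_1<\cdots<\gamma_n$ of this subsection. Fix a pure Nash equilibrium $(x_1^*,\ldots,x_n^*)$; by Proposition~\ref{prop:decrease_equilibrium} it satisfies $x_1^*>\cdots>x_n^*>0$ and $p_i=\gamma_i g'(x_i^*)$. Let $(F_1,\ldots,F_n)$ be an arbitrary equilibrium with supports $[\underline{x}_i,\overline{x}_i]$. I would show $F_i=\mathbbm{1}\{x\geqslant x_i^*\}$ by induction from the bottom creator $n$ upward.

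\textbf{Step 1: the base case.} By Corollary~\ref{cor:asym-support-monotone}, $\underline{x}_n=\arg\max_x x p_n-\gamma_n g(x)=x_n^*$. Moreover $\underline{x}_n<\underline{x}_j$ for all $j<n$, so at $\underline{x}_n$ creator $n$ is ranked last, and her equilibrium utility is $u_n=x_n^*p_n-\gamma_n g(x_n^*)$.

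\textbf{Step 2: no creator spreads above its pure point.} Suppose creators $k+1,\ldots,n$ already play their pure points. I would show that creator $k$ cannot place mass above $x_k^*$, arguing by contradiction. Assume $\overline{x}_k>x_k^*$ (or that $k$ is mixed at all). Take the topmost creator $i$ that is active strictly above its pure point, and let $x$ be a point of its support with $x>x_i^*$. The indifference condition gives $W(\bm{F}_{-i}(x),\bm{p})=(u_i+\gamma_i g(x))/x$. By the stochastic dominance of Proposition~\ref{prop:asym-stochastic-dominance} and the nested activity structure of Proposition~\ref{prop:asym-nested-gaps}, some lower-cost creator $i-1$ must also be active near $x$; otherwise $i$ alone occupies a band and would profit by moving to its best point. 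I would then compare the payoffs of $i-1$ and $i$ at $x$ and at $x_{i-1}^*$. This uses the algebra of Steps~1--2 of Proposition~\ref{prop:decrease_equilibrium}'s companion proposition (the sufficient condition), which rewrites the no-deviation inequality as a statement about $x-\tfrac{\gamma_{i}}{\gamma_{i-1}}\tfrac{g(x)}{g'(x)}$.

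\textbf{Step 3: the main obstacle, a single-crossing property.} The hard step is to turn the pure-equilibrium inequality $x_{i-1}^*p_{i-1}-\gamma_i g(x_{i-1}^*)\leqslant u_i$, which holds only at the specific point $x_{i-1}^*$, into a statement valid at every $x$ in the overlap of supports. Concretely, I need to show that the gap between the indifference curves
\[ \frac{u_{i-1}+\gamma_{i-1}g(x)}{x}-\frac{u_i+\gamma_i g(x)}{x} \]
changes sign at most once, and in the direction that makes $i$ strictly prefer to leave the overlap. Here the hypothesis that $\eta_g$ is non-decreasing enters. It gives $\tfrac{d}{dx}\log g(x)=\eta_g(x)/x$, with $g(x)/x^{\eta}$ monotone, so $(\gamma_i-\gamma_{i-1})g(x)/x$ grows fast enough relative to $(u_{i-1}-u_i)/x$. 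I would first try to differentiate this gap and use $\eta_g$ non-decreasing to bound its derivative below. Combined with the inequality at $x_{i-1}^*$ obtained from the pure-NE condition, this should give $x W(\bm{F}_{-i}(x),\bm{p})-\gamma_i g(x)<u_i$ for $x$ above $x_i^*$, which contradicts indifference.

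\textbf{Conclusion.} Once every creator is shown to have no mass away from $x_i^*$, the equilibrium coincides with the pure one. Uniqueness of utilities then follows, and hence so does uniqueness of the equilibrium.
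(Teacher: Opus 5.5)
Your outer structure (pin $\underline{x}_n=x_n^*$ and $u_n=x_n^*p_n-\gamma_n g(x_n^*)$ via Corollary~\ref{cor:asym-support-monotone}, then peel creators off from the bottom) is the same as the paper's. However, the core step is never carried out, and the tool you propose for it does not deliver it. Consider the quantity you want to study:
\[ \frac{u_{i-1}+\gamma_{i-1}g(x)}{x}-\frac{u_i+\gamma_i g(x)}{x}=\frac{(u_{i-1}-u_i)-(\gamma_i-\gamma_{i-1})g(x)}{x}. \]
Its numerator is strictly decreasing for every increasing $g$, so its single crossing is automatic and never uses the elasticity hypothesis. On the common support it equals $W(\bm{F}_{-(i-1)}(x),\bm{p})-W(\bm{F}_{-i}(x),\bm{p})\geqslant 0$, which is just stochastic dominance. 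It only confines the overlap to $\{x: g(x)\leqslant (u_{i-1}-u_i)/(\gamma_i-\gamma_{i-1})\}$, a set governed by the unknown utility $u_{i-1}$, and no strict upper bound on creator $i$'s payoff comes out of it.

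There is also a bookkeeping problem. The inequality $x_{i-1}^*p_{i-1}-\gamma_i g(x_{i-1}^*)\leqslant u_i$ is a statement about $i$'s pure-equilibrium utility. It transfers to the mixed equilibrium only for the bottom creator of the current reduced game, whose utility is pinned as in your Step~1. So the argument has to be run for that bottom creator against every lower-cost creator, not for some creator active above its pure point against its neighbour $i-1$.

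The missing idea is a first-order inequality for the lower-cost creator. Let $j$ be the bottom creator of the reduced game, so $u_j=x_j^*p_j-\gamma_j g(x_j^*)\geqslant x_i^*p_i-\gamma_j g(x_i^*)$ for every $i<j$.

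\begin{itemize}
\item Wherever creator $i$ is active, differentiating $xW(\bm{F}_{-i}(x),\bm{p})-\gamma_i g(x)=u_i$ and using that $W(\bm{F}_{-i}(x),\bm{p})$ is non-decreasing in $x$ gives $W(\bm{F}_{-i}(x),\bm{p})\leqslant\gamma_i g'(x)$. By Proposition~\ref{prop:asym-nested-gaps}, $i$ is active wherever $j$ is active above $\underline{x}_i$.
\item At such points $F_i(x)<F_j(x)$, so $W(\bm{F}_{-j}(x),\bm{p})<W(\bm{F}_{-i}(x),\bm{p})$. Hence $j$'s payoff at $x$ is below $\gamma_i xg'(x)-\gamma_j g(x)=g(x)\bigl(\gamma_i\eta_g(x)-\gamma_j\bigr)$. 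This is where the elasticity enters.
\item For $x<x_i^*$, monotonicity of $\eta_g$ and the first-order condition $p_i=\gamma_i g'(x_i^*)$ bound this by $\frac{g(x)}{g(x_i^*)}\bigl(x_i^*p_i-\gamma_j g(x_i^*)\bigr)\leqslant\frac{g(x)}{g(x_i^*)}u_j<u_j$.
\item For $x\geqslant x_i^*$ the elasticity comparison runs the wrong way. Instead, for $x<\underline{x}_{i-1}$ creator $j$ is outranked by creators $1,\dots,i-1$, so $W(\bm{F}_{-j}(x),\bm{p})\leqslant p_i$. Concavity of $xp_i-\gamma_j g(x)$, whose maximiser lies below $x_i^*$, together with $x_i^*p_i-\gamma_j g(x_i^*)\leqslant u_j$, does the job.
\end{itemize}

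Running this over the intervals $[\underline{x}_i,\underline{x}_{i-1}]$ for all $i<j$ (with $\underline{x}_0=\infty$) removes $j$ from $[\underline{x}_{j-1},\infty)$. On $[\underline{x}_j,\underline{x}_{j-1})$ she is ranked last, so strict concavity of $xp_j-\gamma_j g(x)$ forces a point mass at $x_j^*$. This is the content your Steps~2 and~3 are missing.
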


Intuitively, the elasticity of the cost function measures its degree of convexity. For instance, the power function $g(x) = x^{\beta}$ has constant elasticity $\beta$, whereas the function $g(x) = x e^{-1 / x}$, even though it is convex, has a decreasing elasticity $1 + \frac{1}{x}$. Indeed, as $x \to +\infty$, $g(x)$ approaches the linear asymptote $y = x - 1$, which is consistent with the idea that elasticity declines when convexity weakens.

If the elasticity is non-decreasing, the convexity of cost function does not diminish with increasing effort. Consequently, a creator with a higher cost parameter encounters a sharper rise in marginal cost when deviating toward the effort level of a lower-cost competitor. This reduces the incentive to deviate from the pure-strategy profile, and thus ensures that any pure Nash equilibrium is the unique equilibrium.

\begin{proof}
    According to the first order condition for pure Nash equilibrium in equation \eqref{eq:asym-pure-foc}, since $g$ is strictly convex, the effort level of each creator in any pure Nash equilibrium is unique, thus there is only one pure Nash equilibrium. So we only need to show that no mixed strategy equilibrium exists.

    Proof by contradiction. Assume a mixed Nash equilibrium exists. First note that in a mixed Nash equilibrium, for every creator $i$ and any point $x$ in the support of her strategy, the indifference condition yields  
    \[ x W(\bm{F}_{-i}(x), \bm{p}) - \gamma_i g(x) = u_i. \]
   
    By Lemma~\ref{lem:W-monotone} and the fact that each $F_j$ is non-decreasing, the term $W(\bm{F}_{-i}(x),\bm{p})$ is non-decreasing in $x$. Differentiating both sides of the indifference condition therefore gives  
    \begin{equation} \label{eq:pure-unique-1}
        W(\bm{F}_{-i}(x), \bm{p}) \leqslant \gamma_i g'(x).
    \end{equation}

    Now suppose $\gamma_i < \gamma_j$. Because a pure Nash equilibrium exists, we have $u_j \geqslant x_i p_i - \gamma_j g(x_i)$, where $x_i = \arg\max_x x p_i - \gamma_i g(x)$. Let $\underline{x}_i$ and $\underline{x}_{i - 1}$ be the infimum of the supports of creators $i$ and $i - 1$ in the mixed Nash equilibrium, respectively. For any $x \in [\underline{x}_i, \underline{x}_{i - 1}]$, we consider the following two cases.
    \begin{enumerate}
        \item \xhdr{Case $x \geqslant x_i$} Since $W(\bm{F}_{-j}(x), \bm{p})\leqslant p_i$,
        \[ x_i p_i - \gamma_i g(x_i) > x p_i - \gamma_i g(x) \geqslant x W(\bm{F}_{-j}(x), \bm{p}) - \gamma_i g(x), \]
        i.e.  
        \[ \gamma_i (g(x) - g(x_i)) > x W(\bm{F}_{-j}(x), \bm{p}) - x_i p_i. \]
        Because $\gamma_j > \gamma_i$ and $x \geqslant x_i$,  
        \[ \gamma_j (g(x) - g(x_i)) > x W(\bm{F}_{-j}(x), \bm{p}) - x_i p_i, \]
        Hence,  
        \[ u_j \geqslant x_i p_i - \gamma_j g(x_i) > x W(\bm{F}_{-j}(x), \bm{p}) - \gamma_j g(x). \]
        So $x$ cannot belong to the support of the equilibrium strategy of creator $j$.

        \item \xhdr{Case $x < x_i$} If $x$ were in the support of the equilibrium strategy of creator $j$, then
        \begin{align*}
            u_j &= x W(\bm{F}_{-j}(x), \bm{p}) - \gamma_j g(x) < x W(\bm{F}_{-i}(x), \bm{p}) - \gamma_j g(x) \\
            &\leqslant x \gamma_i g'(x) - \gamma_j g(x) = g(x)\left(\frac{x \gamma_i g'(x)}{g(x)} - \gamma_j\right) \leqslant g(x)\left(\frac{x_i \gamma_i g'(x_i)}{g(x_i)} - \gamma_j\right) \\
            &= \frac{g(x)}{g(x_i)}(x_i p_i - \gamma_j g(x_i)) \leqslant \frac{g(x)}{g(x_i)} u_j < u_j,
        \end{align*}
        where the first strict inequality follows from $F_i(x) < F_j(x)$ (Proposition~\ref{prop:asym-stochastic-dominance}) and the derivation of equation~\eqref{eq:symmetric-equilibrium-1}, the second inequality follows from inequality \eqref{eq:pure-unique-1}, and the third inequality follows from the assumption that the elasticity $\eta_g(x)$ is a non-decreasing function.
    \end{enumerate}


    In both cases, for every $x \in [\underline{x}_i, \underline{x}_{i - 1}]$, creator $j$ fails to satisfy the indifference condition at $x$; hence, such an $x$ cannot belong to her equilibrium support. Since $i$ and $j$ were arbitrary, we first set $j = n$, then creator $n$ must choose $\arg\max_x x p_n - \gamma_n g(x)$ with probability one, which leaves the equilibrium strategies of the first $n - 1$ creators unchanged. Proceeding recursively, we successively set $j = n - 1, n - 2, \dots, 2$. In each step, each creator $j$ is forced to select $\arg\max_x x p_j - \gamma_j g(x)$ with probability one. Hence, no mixed Nash equilibrium can exist; thus, the pure Nash equilibrium (when it exists) is the unique Nash equilibrium of game $\mathcal{G}^{(n)}$.
\end{proof}

\subsection{An Example for \texorpdfstring{$n = 3$}{n = 3}} \label{subsec:ap-asym-three}

In this subsection, we present some examples of the mixed Nash equilibrium for the case where there are three creators ($n = 3$) with distinct cost parameters. The purpose of this example is to illustrate the complexity of the equilibrium structure when $n \geqslant 3$. Consider the case where $p_1 = 0.8$, $p_2 = 0.6$, $p_3 = 0.4$, and $g(x) = x^5$. Let $\gamma_1 = 1$, $\gamma_3 = 1.2$, and let $\gamma_2$ vary from $1$ to $1.2$.

Using numerical methods, we find that the equilibrium shifts smoothly from the $F_1 = F_2$ case to the $F_2 = F_3$ case as $\gamma_2$ increases from $1$ to $1.2$ with six distinct regimes. Note that the equilibria illustrated here may not be the unique equilibrium, but are intuitively reasonable and consistent with the results in Section~\ref{subsec:ap-asym-general} and Section~\ref{subsec:ap-asym-n2}.

\xhdr{Regime 1 ($\gamma_2 = 1$)} In this regime, the equilibrium satisfies $F_1 = F_2$. Numerical results show that the infimums of the supports are $\underline{x}_3 \approx 0.508$ and $\underline{x}_1 = \underline{x}_2 \approx 0.573$, and the supremum of the support is $\overline{x}_1 = \overline{x}_2 = \overline{x}_3 \approx 0.843$. Creator 1 and creator 2 are active between the infimum and supremum of their strategies, while creator 3 is only active on $(0.726, 0.843]$ (we denote $x^*$ as the point where creator 3 starts to be active) and places an atom of probability approximately $0.749$ at $\underline{x}_3$. Figure~\ref{fig:three-player-combined-1} illustrates the support structure and CDFs of the mixed Nash equilibrium in this regime. The dots in the figure indicate points where the probability mass of a creator is strictly positive; that is, the CDF has an atom at those points. As shown in the proof of Theorem~\ref{thm:sym-equ-property}, such atoms cannot occur in a symmetric setting.
\begin{figure}[t]
    \centering
    \begin{subfigure}[b]{0.48\linewidth}
        \centering
        \includegraphics[width=\linewidth]{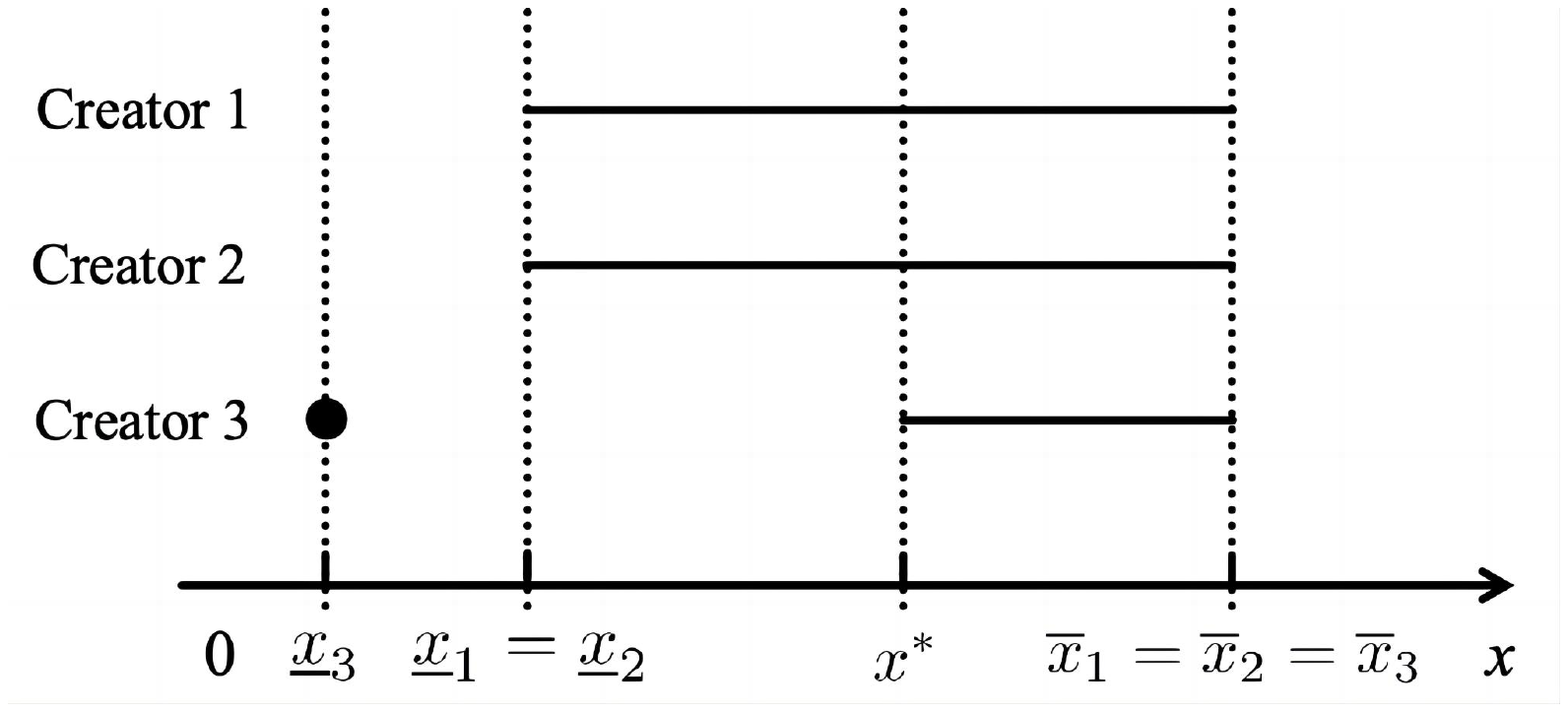}
        \caption{Support of mixed Nash equilibrium (Regime~1)}
        \label{fig:3-player-support-1}
    \end{subfigure}
    \hfill
    \begin{subfigure}[b]{0.48\linewidth}
        \centering
        \includegraphics[width=\linewidth]{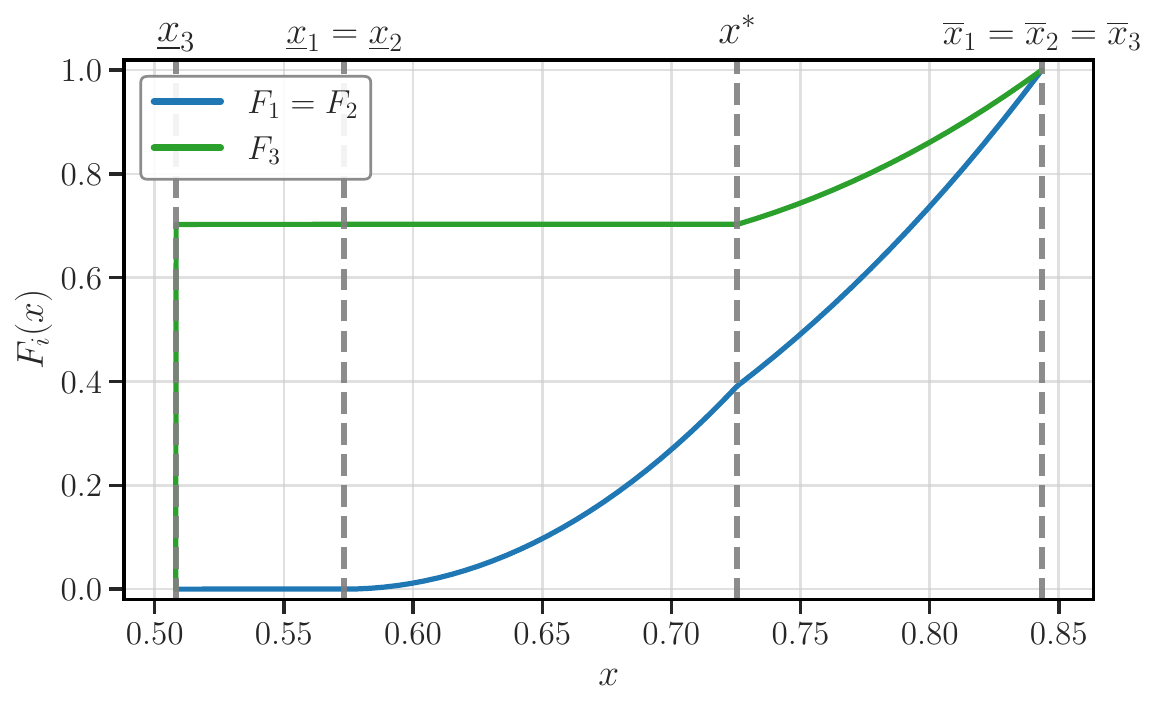}
        \caption{CDFs of mixed Nash equilibrium (Regime~1)}
        \label{fig:3-player-cdf-1}
    \end{subfigure}
    \caption{Three-player equilibrium illustrations (Regime~1)}
    \label{fig:three-player-combined-1}
\end{figure}
    
\xhdr{Regime 2 ($1 < \gamma_2 \leqslant 1.017$)} In this regime, we take $\gamma_2 = 1.015$ as an example. Numerical results show that the infimums of the supports are $\underline{x}_3 \approx 0.508$, $\underline{x}_2 \approx 0.569$, and $\underline{x}_1 \approx 0.573$, and the supremum of the support is $\overline{x}_1 = \overline{x}_2 = \overline{x}_3 \approx 0.843$. Creator 1 is active between the infimum and supremum of her strategy and places an atom of probability approximately $0.000361$ at $\underline{x}_1$. Creator 2 is active on the support of creator 1's strategy and places an atom of probability approximately $0.0481$ at $\underline{x}_2$. Creator 3 only active on $(0.698, 0.843]$ and places an atom of probability approximately $0.655$ at $\underline{x}_3$. Figure~\ref{fig:three-player-combined-2} illustrates the support structure and CDFs of the mixed Nash equilibrium in this regime.
\begin{figure}[t]
    \centering
    \begin{subfigure}[b]{0.48\linewidth}
        \centering
        \includegraphics[width=\linewidth]{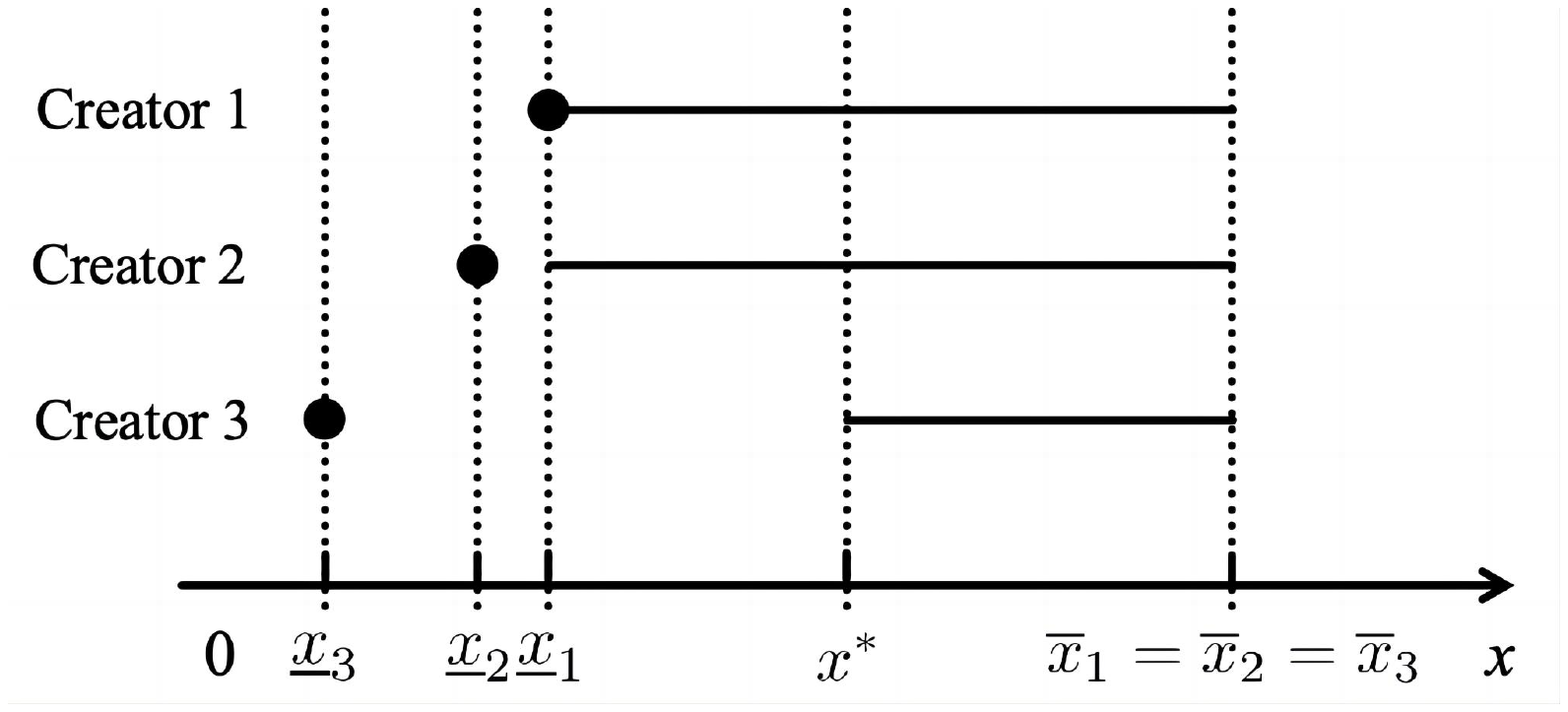}
        \caption{Support of mixed Nash equilibrium (Regime~2)}
        \label{fig:3-player-support-2}
    \end{subfigure}
    \hfill
    \begin{subfigure}[b]{0.48\linewidth}
        \centering
        \includegraphics[width=\linewidth]{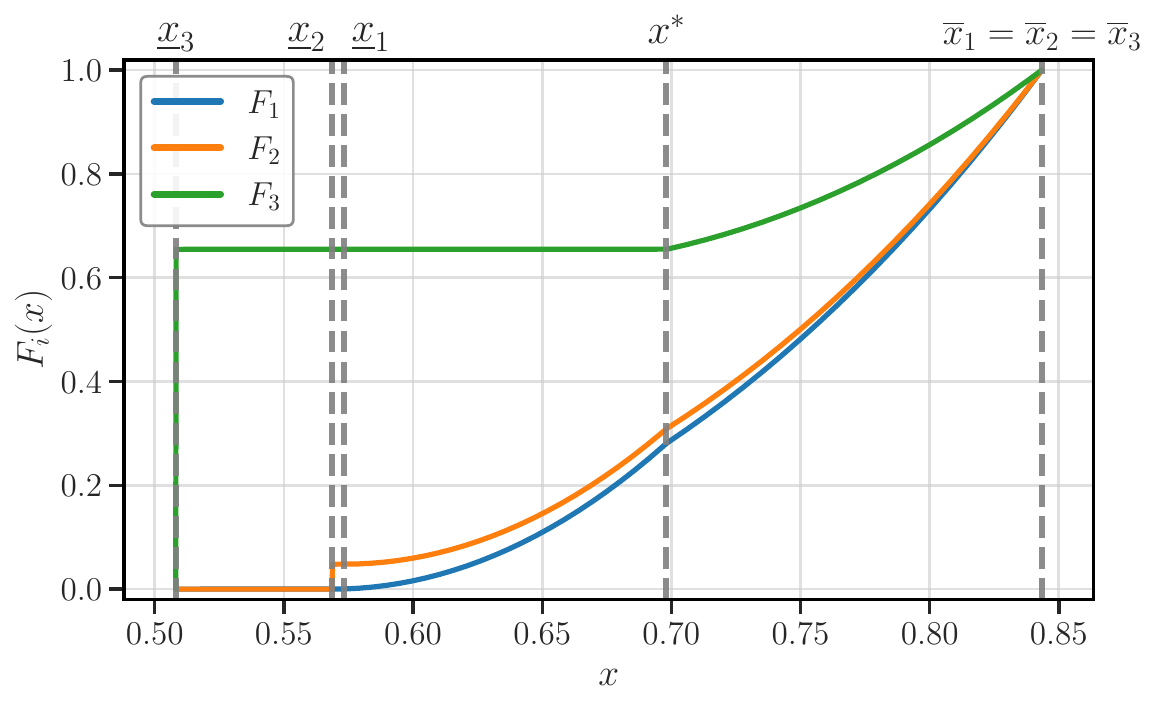}
        \caption{CDFs of mixed Nash equilibrium (Regime~2)}
        \label{fig:3-player-cdf-2}
    \end{subfigure}
    \caption{Three-player equilibrium illustrations (Regime~2)}
    \label{fig:three-player-combined-2}
\end{figure}

\xhdr{Regime 3 ($1.017 < \gamma_2 \leqslant 1.021$)} In this regime, we take $\gamma_2 = 1.02$ as an example. Numerical results show that the infimums of the supports are $\underline{x}_3 \approx 0.508$, $\underline{x}_2 \approx 0.567$, and $\underline{x}_1 \approx 0.573$, and the supremum of the support is $\overline{x}_1 = \overline{x}_2 = \overline{x}_3 \approx 0.843$. Creator 1 is active between the infimum and supremum of her strategy and places an atom of probability approximately $0.000642$ at $\underline{x}_1$. Creator 2 is active over $[0.567, 0.572]$ and $(0.573, 0.843]$, and places an atom of probability approximately $0.0543$ at $\underline{x}_2$. Creator 3 is active on $(0.567, 0.572]$ and $(0.687, 0.843]$, and places an atom of probability approximately $0.638$ at $\underline{x}_3$. Figure~\ref{fig:three-player-combined-3} illustrates the support structure (we denote $x^* = 0.572$ and $x^{**} = 0.687$) and CDFs of the mixed Nash equilibrium in this regime. Readers may note that in Figure~\ref{fig:3-player-cdf-3}, $F_3$ appears constant on the interval $[0.567, 0.572]$. This is because the probability mass placed on that interval is extremely small ($F_3$ only increases from $0.6384$ to $0.6388$), making the variation visually indistinguishable in the figure.
\begin{figure}[t]
    \centering
    \begin{subfigure}[b]{0.48\linewidth}
        \centering
        \includegraphics[width=\linewidth]{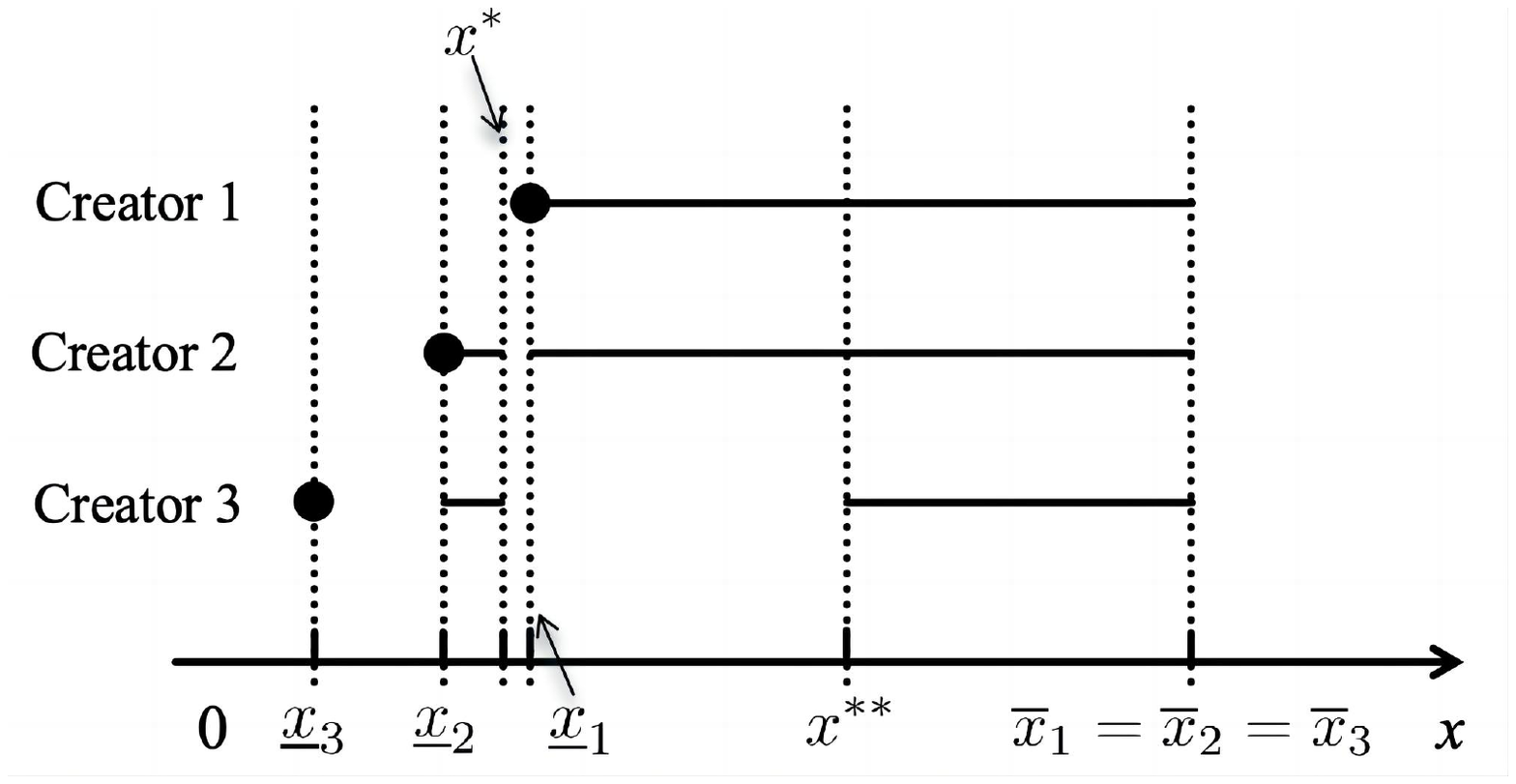}
        \caption{Support of mixed Nash equilibrium (Regime~3)}
        \label{fig:3-player-support-3}
    \end{subfigure}
    \hfill
    \begin{subfigure}[b]{0.48\linewidth}
        \centering
        \includegraphics[width=\linewidth]{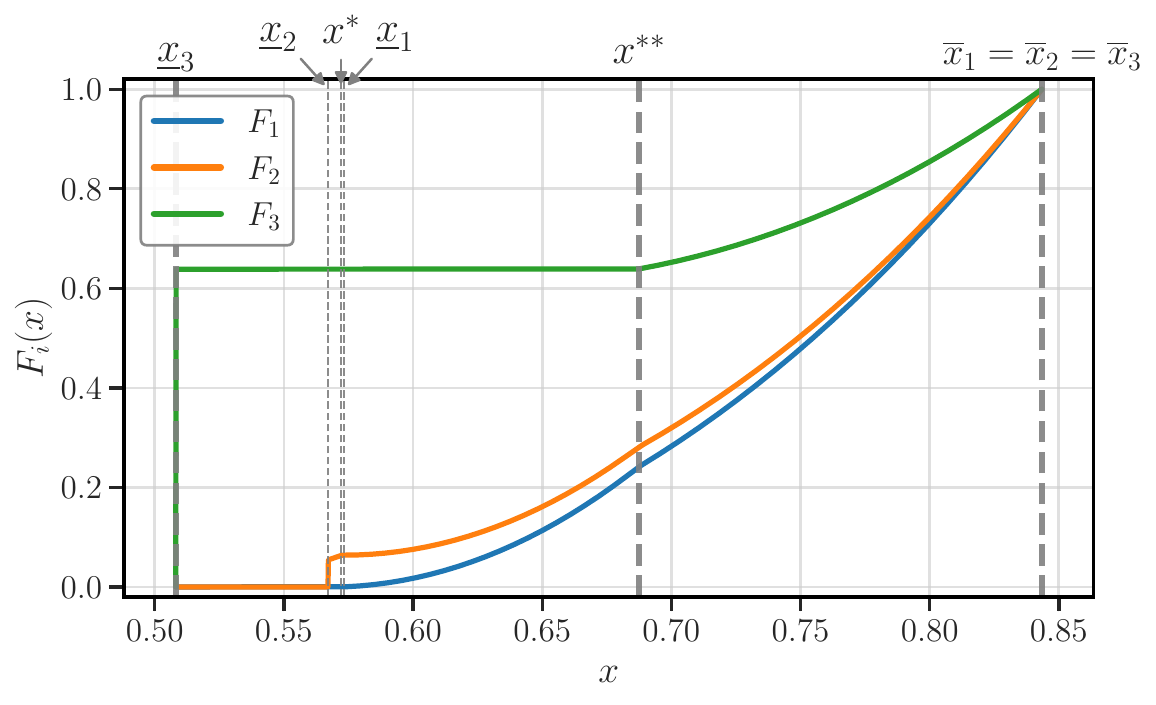}
        \caption{CDFs of mixed Nash equilibrium (Regime~3)}
        \label{fig:3-player-cdf-3}
    \end{subfigure}
    \caption{Three-player equilibrium illustrations (Regime~3)}
    \label{fig:three-player-combined-3}
\end{figure}

\xhdr{Regime 4 ($1.021 < \gamma_2 \leqslant 1.066$)} In this regime, we take $\gamma_2 = 1.05$ as an example. Numerical results show that the infimums of the supports are $\underline{x}_3 \approx 0.508$, $\underline{x}_2 \approx 0.558$, and $\underline{x}_1 \approx 0.604$, and the supremum of the support is $\overline{x}_1 = \overline{x}_2 = \overline{x}_3 \approx 0.843$. Creator 1 is active between the infimum and supremum of her strategy. Creator 2 is active between the infimum and supremum of her strategy and places an atom of probability approximately $0.0382$ at $\underline{x}_2$. Creator 3 is active on $(0.558, 0.604]$ and $(0.637, 0.843]$, and places an atom of probability approximately $0.540$ at $\underline{x}_3$. Figure~\ref{fig:three-player-combined-4} illustrates the support structure and CDFs of the mixed Nash equilibrium in this regime.
\begin{figure}[t]
    \centering
    \begin{subfigure}[b]{0.48\linewidth}
        \centering
        \includegraphics[width=\linewidth]{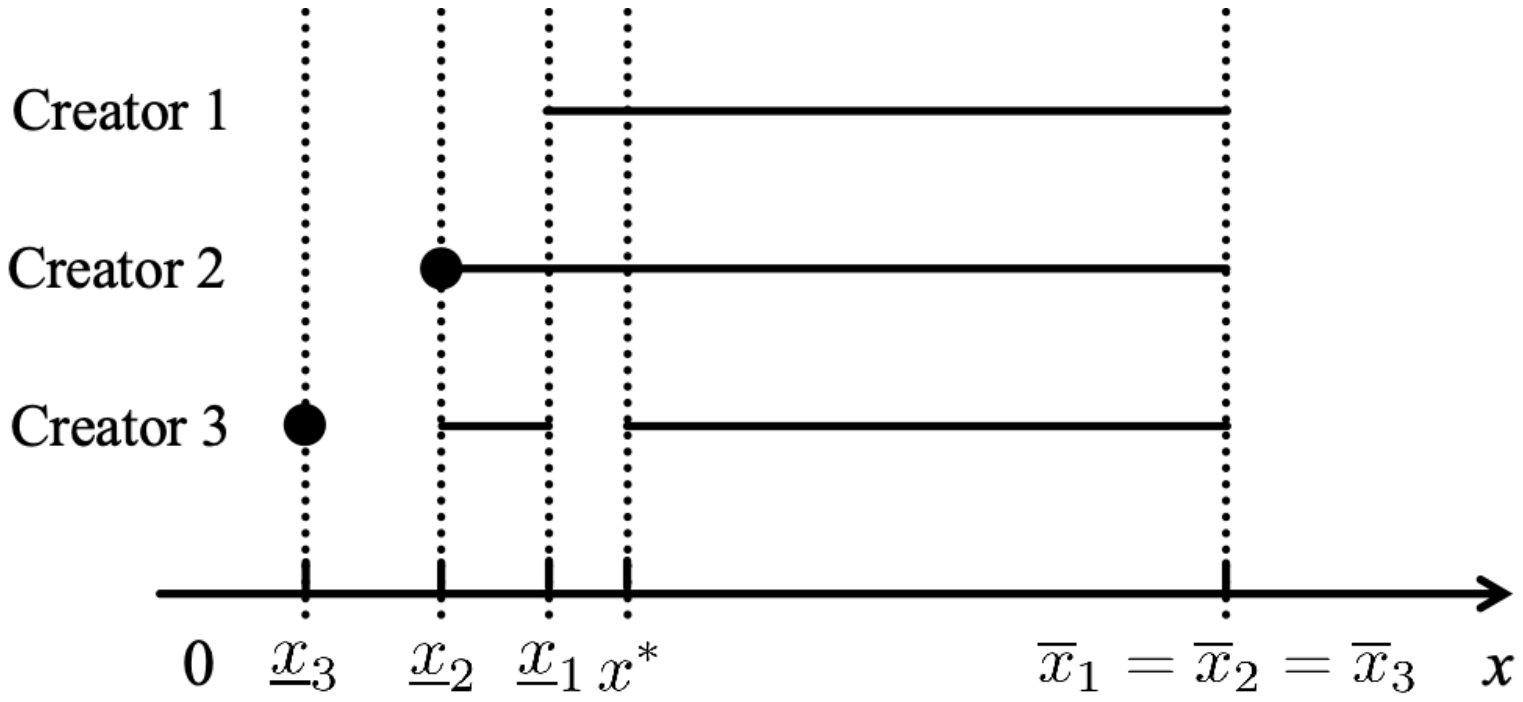}
        \caption{Support of mixed Nash equilibrium (Regime~4)}
        \label{fig:3-player-support-4}
    \end{subfigure}
    \hfill
    \begin{subfigure}[b]{0.48\linewidth}
        \centering
        \includegraphics[width=\linewidth]{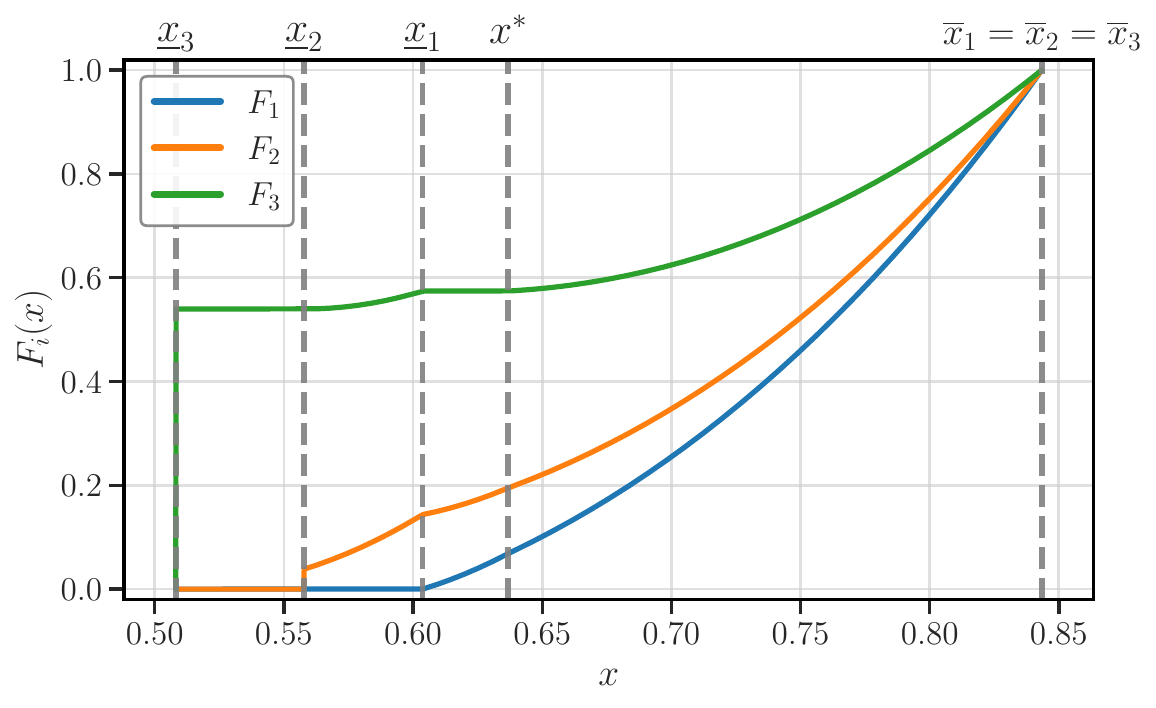}
        \caption{CDFs of mixed Nash equilibrium (Regime~4)}
        \label{fig:3-player-cdf-4}
    \end{subfigure}
    \caption{Three-player equilibrium illustrations (Regime~4)}
    \label{fig:three-player-combined-4}
\end{figure}

\xhdr{Regime 5 ($1.066 < \gamma_2 < 1.2$)} In this regime, we take $\gamma_2 = 1.10$ as an example. Numerical results show that the infimums of the supports are $\underline{x}_3 \approx 0.508$, $\underline{x}_2 \approx 0.542$, and $\underline{x}_1 \approx 0.634$, and the supremum of the support is $\overline{x}_1 = \overline{x}_2 = \overline{x}_3 \approx 0.843$. Creator 1 is active between the infimum and supremum of her strategy. Creator 2 is active between the infimum and supremum of her strategy and places an atom of probability approximately $0.0175$ at $\underline{x}_2$. Creator 3 is active on $(0.634, 0.843]$, and places an atom of probability approximately $0.368$ at $\underline{x}_3$. Figure~\ref{fig:three-player-combined-5} illustrates the support structure and CDFs of the mixed Nash equilibrium in this regime.
\begin{figure}[t]
    \centering
    \begin{subfigure}[b]{0.48\linewidth}
        \centering
        \includegraphics[width=\linewidth]{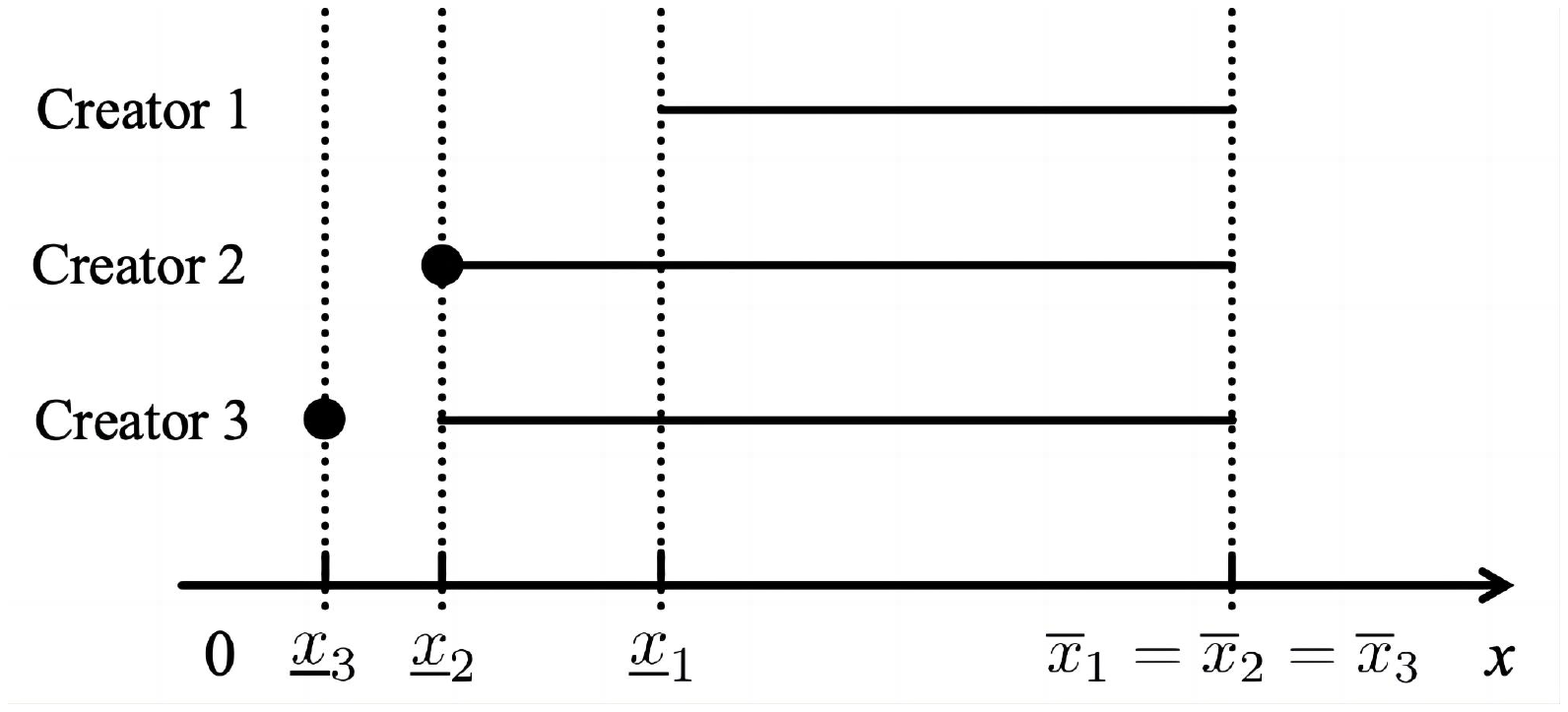}
        \caption{Support of mixed Nash equilibrium (Regime~5)}
        \label{fig:3-player-support-5}
    \end{subfigure}
    \hfill
    \begin{subfigure}[b]{0.48\linewidth}
        \centering
        \includegraphics[width=\linewidth]{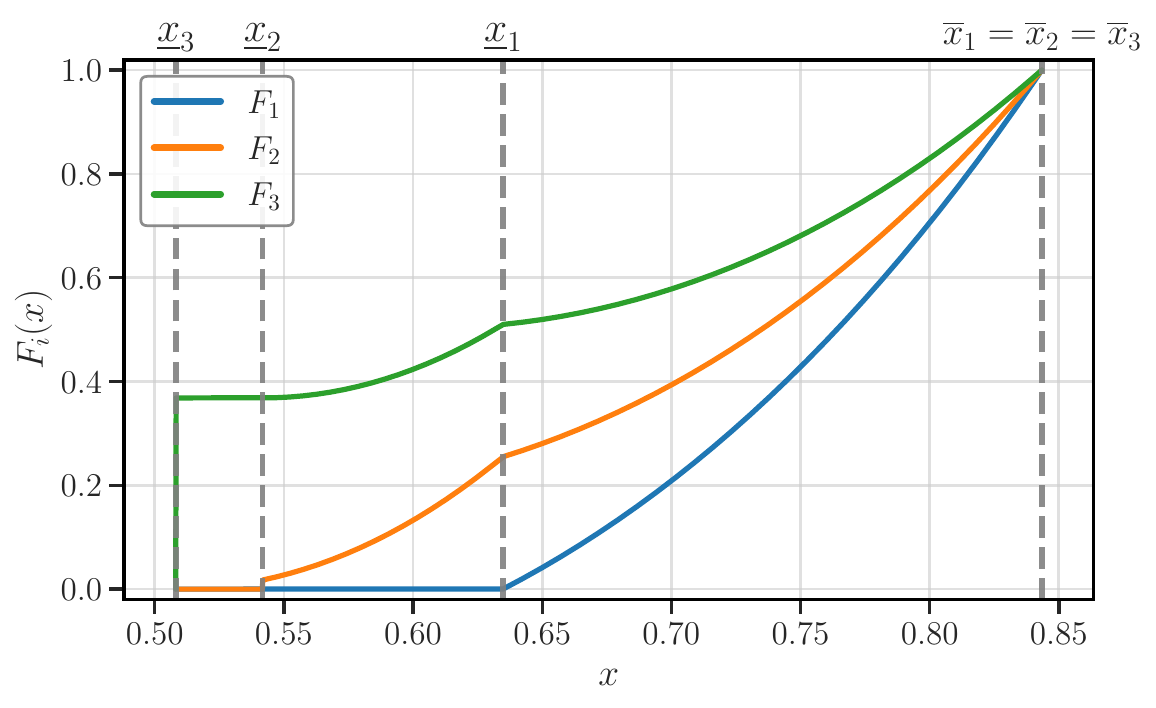}
        \caption{CDFs of mixed Nash equilibrium (Regime~5)}
        \label{fig:3-player-cdf-5}
    \end{subfigure}
    \caption{Three-player equilibrium illustrations (Regime~5)}
    \label{fig:three-player-combined-5}
\end{figure}

\xhdr{Regime 6 ($\gamma_2 = 1.2$)} In this regime, the equilibrium satisfies $F_2 = F_3$. Numerical results show that the infimums of the supports are $\underline{x}_2 = \underline{x}_3 \approx 0.508$ and $\underline{x}_1 \approx 0.672$, and the supremum of the support is $\overline{x}_1 = \overline{x}_2 = \overline{x}_3 \approx 0.843$. Three creators are all active between the infimum and supremum of their strategies, and there is no point mass in their strategies. Figure~\ref{fig:three-player-combined-6} illustrates the support structure and CDFs of the mixed Nash equilibrium in this regime.

\begin{figure}[t]
    \centering
    \begin{subfigure}[b]{0.48\linewidth}
        \centering
        \includegraphics[width=\linewidth]{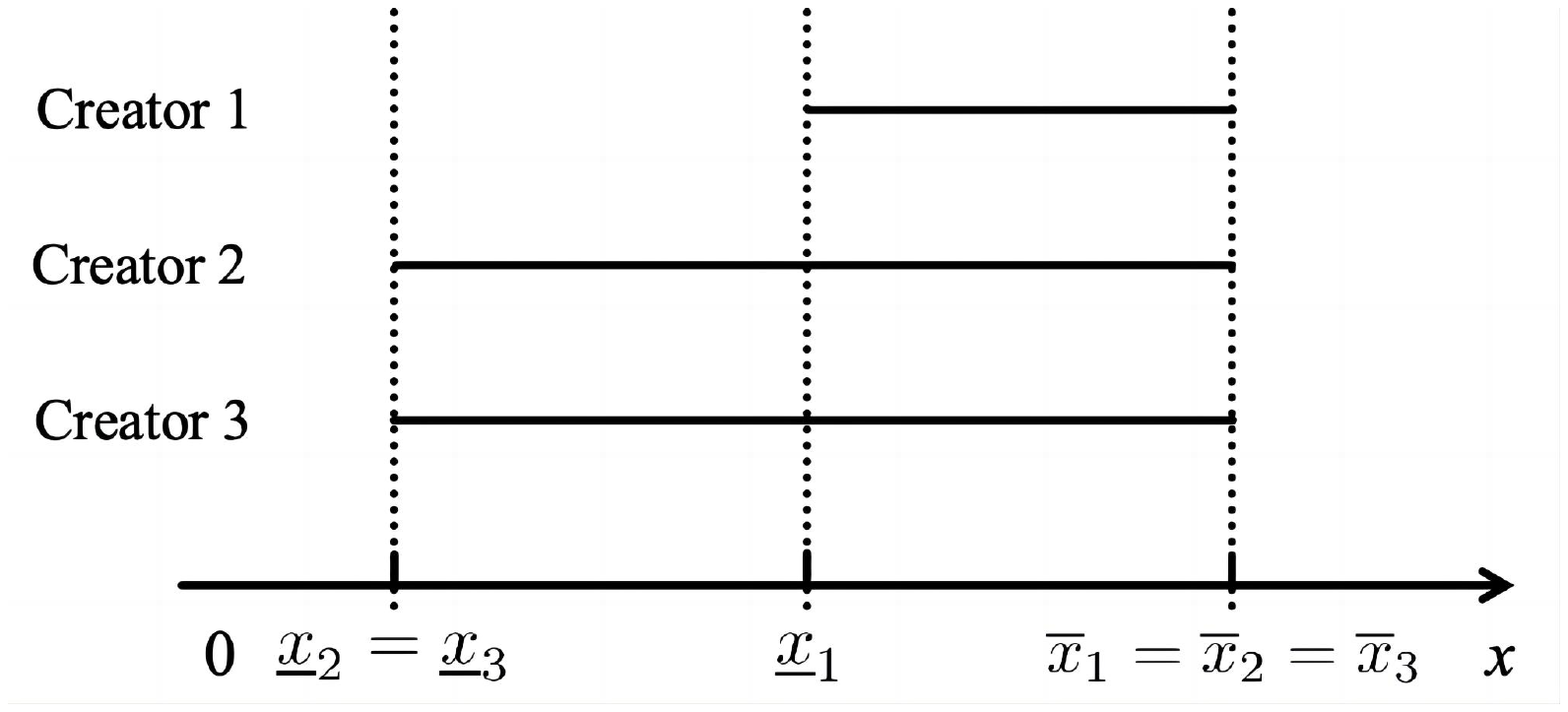}
        \caption{Support of mixed Nash equilibrium (Regime~6)}
        \label{fig:3-player-support-6}
    \end{subfigure}
    \hfill
    \begin{subfigure}[b]{0.48\linewidth}
        \centering
        \includegraphics[width=\linewidth]{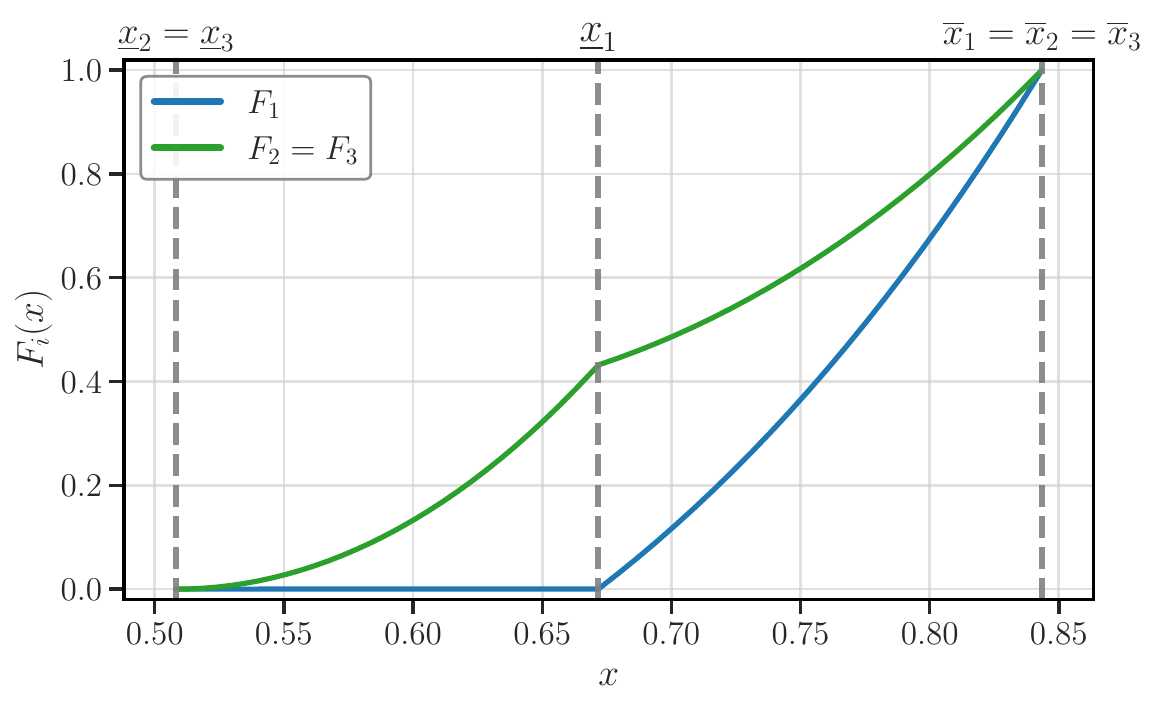}
        \caption{CDFs of mixed Nash equilibrium (Regime~6)}
        \label{fig:3-player-cdf-6}
    \end{subfigure}
    \caption{Three-player equilibrium illustrations (Regime~6)}
    \label{fig:three-player-combined-6}
\end{figure}

\subsection{Properties of Equilibrium with \texorpdfstring{$n = 2$}{n = 2}} \label{subsec:ap-asym-n2}

In this subsection, we focus on the special case where there are only two creators, i.e., $n = 2$. We provide a complete characterization of the Nash equilibrium configuration for any given cost parameters $\gamma_1, \gamma_2 > 0$. This characterization will serve as the foundation for the subsequent analysis of the binary-type setting.

Without loss of generality, consider the case where $\gamma_1 \leqslant \gamma_2$. Let $x_1 = \arg\max_x x p_1 - \gamma_1 g(x)$, $u_1 = x_1 p_1 - \gamma_1 g(x_1)$, $x_2 = \arg\max_x x p_2 - \gamma_2 g(x)$, $u_2 = x_2 p_2 - \gamma_2 g(x_2)$, $u_2' = x_1 p_1 - \gamma_2 g(x_1)$.

\begin{proposition} \label{prop:2-player-equ}
    For two creators ($n = 2$) with $\gamma_1 \leqslant \gamma_2$, the Nash equilibrium falls into three distinct regimes:
    \begin{enumerate}
        \item If $\gamma_1 = \gamma_2$, the unique equilibrium is a symmetric mixed Nash equilibrium.
        \item If $\gamma_1 < \gamma_2$ and $u_2' \leqslant u_2$, the unique equilibrium is the pure Nash equilibrium $(x_1, x_2)$.
        \item If $\gamma_1 < \gamma_2$ and $u_2' > u_2$, the equilibrium is unique and has the following structure:
        \begin{itemize}
            \item The infimum effort of Creator 2 satisfies
            \begin{equation} \label{eq:2-general-inf2}
                \underline{x}_2 = \arg\max_x x p_2 - \gamma_2 g(x),
            \end{equation}
            and is played with an atom of probability $q_2$.
            \item The infimum effort of Creator 1 satisfies 
            \begin{equation} \label{eq:2-general-inf1}
                \underline{x}_1 = \arg\max_x x (q_2 p_1 + (1 - q_2) p_2) - \gamma_1 g(x),
            \end{equation}
            and is played with an atom of probability $q_1$.
            \item Both creators share common mixed strategy support $(\underline{x}_1, \overline{x}]$, where the boundaries are determined by
            \begin{gather}
                \underline{x}_1 (q_1 p_1 + (1 - q_1) p_2) - \gamma_2 g(\underline{x}_1) = \underline{x}_2 p_2 - \gamma_2 g(\underline{x}_2) \label{eq:2-general-21}, \\
                \underline{x}_1 (q_2 p_1 + (1 - q_2) p_2) - \gamma_1 g(\underline{x}_1) = \overline{x} p_1 - \gamma_1 g(\overline{x}), \label{eq:2-general-sup1} \\
                \overline{x} p_1 - \gamma_2 g(\overline{x}) = \underline{x}_2 p_2 - \gamma_2 g(\underline{x}_2). \label{eq:2-general-sup2}
            \end{gather}
        \end{itemize}
        Let $u_1 = \max_x x (q_2 p_1 + (1 - q_2) p_2) - \gamma_1 g(x), u_2 = \max_x x p_2 - \gamma_2 g(x)$, and let the CDF of the strategy of creator $1$ and $2$ be $F_1$ and $F_2$. Then we have
        \begin{equation} \label{eq:2-general-cdf}
            \begin{aligned}
                F_1(x) = \frac{u_2 + \gamma_2 g(x)}{x (p_1 - p_2)} - \frac{p_2}{p_1 - p_2}, \\
                F_2(x) = \frac{u_1 + \gamma_1 g(x)}{x (p_1 - p_2)} - \frac{p_2}{p_1 - p_2}.
            \end{aligned}
        \end{equation}
    \end{enumerate}
\end{proposition}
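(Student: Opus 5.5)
Case 1 follows from Theorem~\ref{thm:sym-equ-property}: with $\gamma_1=\gamma_2$ the game is $\mathcal{G}^{(1)}$ with $n=2$, and Example~\ref{ex:sym-n2} gives the explicit CDF. For $\gamma_1<\gamma_2$, I would use the general structural results of Appendix~\ref{subsec:ap-asym-general}. Corollary~\ref{cor:asym-support-monotone} gives $\underline{x}_2=\arg\max_x x p_2-\gamma_2 g(x)$, hence \eqref{eq:2-general-inf2}, $\underline{x}_1>\underline{x}_2$ and $\overline{x}_1\geqslant\overline{x}_2$. Proposition~\ref{prop:asym-utility-order} gives $u_1>u_2$. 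I would then show that on $(\underline{x}_2,\underline{x}_1)$ creator $2$ places no mass. If she did, then by the indifference condition $x p_2-\gamma_2 g(x)=u_2$ at two points, because creator $1$ has no mass there. This contradicts the uniqueness of the maximizer of the strictly concave $xp_2-\gamma_2 g(x)$. Hence creator $2$ has an atom $q_2$ at $\underline{x}_2$, possibly $q_2=1$, and has support $\{\underline{x}_2\}\cup[\underline{x}_1,\overline{x}]$.

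\emph{Pure case.} If $q_2=1$, creator $1$ faces $p_1$ everywhere above $\underline{x}_2$. Her best response is $x_1$, and the profile is an equilibrium iff creator $2$ cannot profit by jumping just above $x_1$, i.e.\ iff $u_2'\leqslant u_2$. This gives regime 2. Conversely, when $u_2'\leqslant u_2$, uniqueness should follow from Proposition~\ref{prop:pure-unique}, provided the elasticity hypothesis holds. Without that hypothesis, I would argue directly. Suppose $q_2<1$. Creator $2$ is then active on some interval and, by indifference, her utility there equals $u_2$. At $\overline{x}$ her utility is $\overline{x}p_1-\gamma_2 g(\overline{x})\leqslant u_2'\leqslant u_2$, and equality is forced by indifference. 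Hence $\overline{x}=x_1$. But creator $1$ then also attains $\overline{x}p_1-\gamma_1 g(\overline{x})=u_1^{\max}$, and her strict-concavity argument on the atomless part forces her to be pure at $x_1$. This contradicts creator $2$ being active above $\underline{x}_1$.

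\emph{Mixed case.} If $u_2'>u_2$, then by the above $q_2<1$. Creator $1$'s expected bias on $[\underline{x}_2,\underline{x}_1]$ is constant, equal to $q_2p_1+(1-q_2)p_2$. Her infimum must be the maximizer of $x(q_2p_1+(1-q_2)p_2)-\gamma_1 g(x)$, by the argument of Step~2 of Lemma~\ref{lem:symmetric-equilibrium}; this is \eqref{eq:2-general-inf1}. Creator $1$ may carry an atom $q_1$ there. Creator $2$ cannot also have an atom at that point: she would strictly gain by moving just above it. Indifference for creator $2$ at $\underline{x}_1^+$ gives \eqref{eq:2-general-21}. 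Both supremums coincide by Step~4 of Lemma~\ref{lem:symmetric-equilibrium}, whose argument is adapted via \eqref{asym-stochastic-dominance-2}. Indifference of each creator at $\overline{x}$, where the opponent's CDF equals $1$, gives \eqref{eq:2-general-sup1} and \eqref{eq:2-general-sup2}. On $(\underline{x}_1,\overline{x}]$ both creators are active; gaps are excluded by Proposition~\ref{prop:asym-nested-gaps} together with the gap argument of Step~6. Solving the two linear indifference equations $x(F_{-i}(x)p_1+(1-F_{-i}(x))p_2)-\gamma_i g(x)=u_i$ yields \eqref{eq:2-general-cdf}.

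\emph{Uniqueness.} I view this as the main obstacle. The unknowns $(q_1,q_2,\underline{x}_1,\overline{x})$ are pinned down by a coupled system. My plan is to treat $q_2$ as the free parameter. Equations \eqref{eq:2-general-inf1}, \eqref{eq:2-general-sup1} and \eqref{eq:2-general-sup2} then determine $\underline{x}_1$, $u_1$ and $\overline{x}$. Note that $\overline{x}$ does not depend on $q_2$: it is the largest root of \eqref{eq:2-general-sup2}. The requirement $F_1(\overline{x})=1$, combined with continuity of $F_2$ at $\underline{x}_1$, i.e.\ $F_2(\underline{x}_1)=q_2$, then gives one scalar equation in $q_2$. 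I would show that its residual is strictly monotone in $q_2$. A larger $q_2$ raises creator $1$'s bias and hence $\underline{x}_1$ and $u_1$, so $F_2$ evaluated via \eqref{eq:2-general-cdf} moves in a definite direction. Strict monotonicity of the residual gives a unique root. Existence is supplied by Theorem~\ref{thm:mix-existence}.
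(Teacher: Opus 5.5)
Your derivation of the regime-3 structure follows the paper's Steps~1--4 closely. Your treatment of regime~2 is a genuinely better route than the paper's. The paper simply cites Proposition~\ref{prop:pure-unique}, whose non-decreasing-elasticity hypothesis is not part of this statement. Your direct argument needs only strict convexity of $g$: if $q_2<1$, then $\overline{x}\geqslant x_1$ forces $u_2=\overline{x}p_1-\gamma_2 g(\overline{x})\leqslant u_2'\leqslant u_2$, hence $\overline{x}=x_1$, creator~1 is pure at $x_1$, and creator~2 has nowhere to put her remaining mass.

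The step that fails as written is the uniqueness step, which you yourself flag as the crux. The two conditions you combine carry no information about $q_2$:
\begin{itemize}
\item By \eqref{eq:2-general-cdf}, $F_1(\overline{x})=1$ is literally \eqref{eq:2-general-sup2}, which does not involve $q_2$.
\item Once $u_1$ is the maximum value in \eqref{eq:2-general-inf1}, one has $(u_1+\gamma_1 g(\underline{x}_1))/\underline{x}_1=q_2p_1+(1-q_2)p_2$. So $F_2(\underline{x}_1^{+})=q_2$ holds identically.
\end{itemize}
If instead you let \eqref{eq:2-general-sup1} fix $u_1$, as your bookkeeping suggests, then $u_1$ no longer moves with $q_2$, which contradicts the monotonicity sentence that follows.

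The equation that actually pins $q_2$ is $F_2(\overline{x})=1$, i.e.\ \eqref{eq:2-general-sup1} itself. Set $V(q):=\max_x x(qp_1+(1-q)p_2)-\gamma_1 g(x)$. The envelope theorem gives $V'(q)=\underline{x}_1(p_1-p_2)>0$, while the right-hand side $\overline{x}p_1-\gamma_1 g(\overline{x})$ is fixed by \eqref{eq:2-general-sup2}. So $q_2$ is unique, and $\underline{x}_1$ then follows from \eqref{eq:2-general-inf1}. Finally, $q_1$ is determined by \eqref{eq:2-general-21}, whose left side is strictly increasing in $q_1$; you omit this step. This is exactly the paper's Step~5.

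Two structural citations also need repair, and both fixes use tools already in your proposal.
\begin{itemize}
\item Step~4 of Lemma~\ref{lem:symmetric-equilibrium} does not adapt to give $\overline{x}_1=\overline{x}_2$. It relies on equal equilibrium utilities, whereas here $u_1>u_2$; indeed the regime-2 equilibrium has $\overline{x}_1=x_1>x_2=\overline{x}_2$. The correct argument: on $(\overline{x}_2,\overline{x}_1]$ creator~1 faces the constant bias $p_1$, so by strict concavity she can only sit at $x_1$. That makes her utility $x_1p_1-\gamma_1 g(x_1)$ and hence makes her pure at $x_1>\overline{x}_2$, leaving creator~2 only her atom, i.e.\ $q_2=1$. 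So the suprema coincide exactly when $q_2<1$, which is the case you need.
\item Proposition~\ref{prop:asym-nested-gaps} with $i=1$, $j=2$ only excludes intervals on which creator~2 is active and creator~1 is not. Intervals on which creator~1 is active alone must be excluded by the same strict-concavity argument, since her bias is constant there; this is the paper's Step~1.
\end{itemize}
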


Figure~\ref{fig:2-player-support} illustrates the support structure of the mixed Nash equilibrium in the third regime. Figure~\ref{fig:2-player-gamma2} illustrates the variation in the support of creator 1 strategy with $\gamma_2$ where $\gamma_1 = 1$, $g(x) = x^5$, $p_1 = 0.8$ and $p_2 = 0.6$, The upper curve traces the supremum $\overline{x}_1$ of the support, while the lower curve traces its infimum $\underline{x}_1$. As $\gamma_2$ gradually increases from $1$ to $3$, the equilibrium strategy of Creator 1 shifts smoothly from the symmetric mixed Nash equilibrium form toward the pure Nash equilibrium form, with convergence to the pure strategy occurring around $\gamma_2 \approx 2.85$. This smooth transition illustrates the continuous connection among the three equilibrium regimes described in Proposition~\ref{prop:2-player-equ}, confirming the plausibility of the proposition.

\begin{figure}[t]
    \centering
    \begin{subfigure}[b]{0.4\linewidth}
        \centering
        \includegraphics[width=\linewidth]{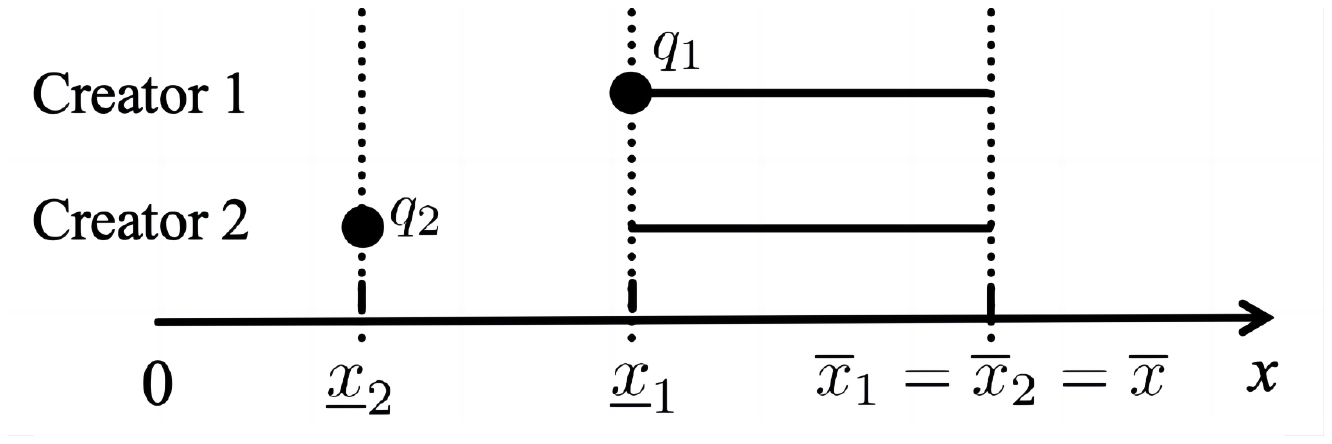}
        \caption{Support of mixed Nash equilibrium}
        \label{fig:2-player-support}
    \end{subfigure}
    \begin{subfigure}[b]{0.4\linewidth}
        \centering
        \includegraphics[width=\linewidth]{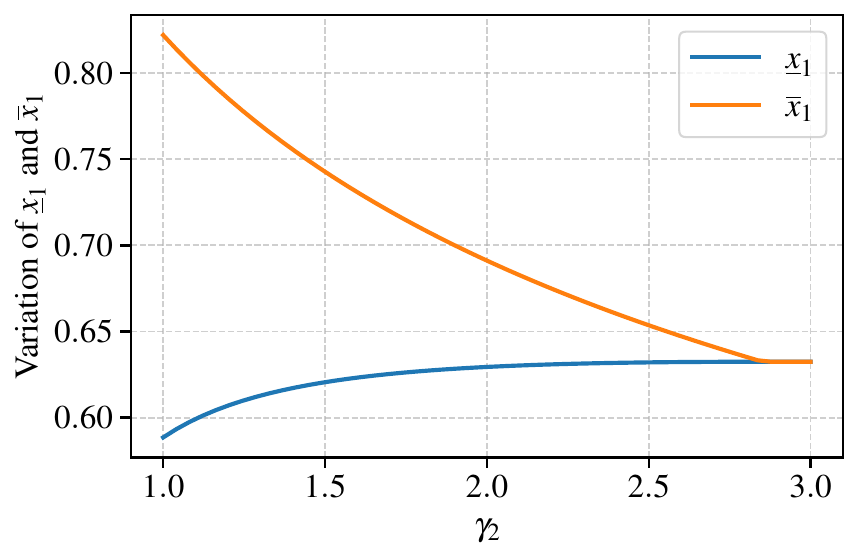}
        \caption{Variation in Creator 1's support with $\gamma_2$}
        \label{fig:2-player-gamma2}
    \end{subfigure}
    \caption{Two-player equilibrium illustrations}
    \label{fig:two-player-combined}
\end{figure}

\begin{proof}
    The first part of the proposition follows directly from Theorem~\ref{thm:sym-equ-property}. The second part is a direct consequence of Proposition~\ref{prop:decrease_equilibrium} and Proposition~\ref{prop:pure-unique}. Then, we focus on proving the third part of the proposition. The proof can be divided into five steps:

    \textbf{Step 1: The mixed strategy intervals of two creators are necessarily identical.}

    Without loss of generality, suppose that interval $(x', x'')$ is contained in the support of creator $1$ but not in that of creator $2$. Then for any $x \in (x', x'')$, the utility of creator $1$ satisfies
    \[ u_1(x) = x W(\bm{F}_{-1}(x'), \bm{p}) - \gamma_1 g(x). \]

    Due to the strict convexity of $g$, $u_1(x)$ is strictly concave in $x$, and hence is not constant on $(x', x'')$, contradicting the indifference condition for creator $1$. Therefore, the mixed strategy intervals of two creators must be identical.

    \textbf{Step 2: Equations~\eqref{eq:2-general-inf2} and \eqref{eq:2-general-inf1} hold.}

    According to Corollary~\ref{cor:asym-support-monotone}, eqution~\eqref{eq:2-general-inf2} holds. Then Step~1 implies that creator $2$ would not be active before creator $1$'s infimum strategy, thus similar to Step~2 in the proof of Theorem~\ref{thm:sym-equ-property}, we can show that equation~\eqref{eq:2-general-inf1} holds.

    \textbf{Step 3: Equations~\eqref{eq:2-general-21}-\eqref{eq:2-general-sup2} hold.}

    We first observe that after $\underline{x}_1$ neither creator can place further point masses, and there cannot be a gap before the supremum $\overline{x}$ of the joint support. Otherwise, the indifference condition would be violated, as shown in Step~3 of the proof of Theorem~\ref{thm:sym-equ-property} together with Corollary~\ref{cor:symmetric-no-gap}. Consequently, by the conclusion of Step~1, both creators must take mixed strategies over the common interval $(\underline{x}_1, \overline{x}]$.

    Then, according to the indifference condition of creator $2$ at point $\underline{x}_2$, $\underline{x}_1$ and $\overline{x}$, we have equations~\eqref{eq:2-general-21} and \eqref{eq:2-general-sup2}. Similarly, according to the indifference condition of creator $1$ at point $\underline{x}_1$ and $\overline{x}$, we have equation~\eqref{eq:2-general-sup1}.

    \textbf{Step 4: Equations~\eqref{eq:2-general-cdf} hold.}

    According to the indifference condition of creator $1$ at any point $x \in (\underline{x}_1, \overline{x}]$, we have
    \[ u_1 = x W(\bm{F}_{-1}(x), \bm{p}) - \gamma_1 g(x) = x (F_2(x) p_1 + (1 - F_2(x)) p_2) - \gamma_1 g(x). \]
    Thus, we can derive the expression of $F_2(x)$. Similarly, we can derive the expression of $F_1(x)$ according to the indifference condition of creator $2$.

    \textbf{Step 5: Equations~\eqref{eq:2-general-inf2}-\eqref{eq:2-general-sup2} admit a unique solution.}
    
    Due to the strict convexity of $g$, equation~\eqref{eq:2-general-inf2} admits a unique solution for $\underline{x}_2$. Then, given $\underline{x}_2$, combined with the fourth part of Proposition~\ref{prop:symmetric-equilibrium-properties}, the largest root of equation~\eqref{eq:2-general-sup2} gives a unique solution for $\overline{x}$.

    Next, from \eqref{eq:2-general-inf1} and the monotonicity of $g$, a larger $p_2$ leads to a larger $\underline{x}_1$. Note that the right-hand side of \eqref{eq:2-general-sup1} is constant once $\overline{x}$ is given; therefore, \eqref{eq:2-general-inf1} and \eqref{eq:2-general-sup1} together uniquely determine $\underline{x}_1$ and $q_2$.

    Finally, when $\underline{x}_1$ is known, the left-hand side of \eqref{eq:2-general-21} increases in $p_1$, while the right-hand side is constant once $\underline{x}_2$ is fixed. Therefore, \eqref{eq:2-general-21} uniquely determines the value of $q_1$.

    Combining the results of the five steps above, we complete the proof.
\end{proof}

\subsection{Results for Binary Types of Creators} \label{subsec:ap-asym-two}

We first prove the main theorem (Theorem~\ref{thm:asym-equ}) through Proposition~\ref{prop:asym-unique-uH} and Proposition~\ref{prop:asym-separated-equilibrium}. Firstly, Proposition~\ref{prop:asym-unique-uH} establishes the uniqueness of $u_H$ (Corollary~\ref{cor:asym-support-monotone} garentees the uniqueness of $u_L$). Secondly, Proposition~\ref{prop:asym-separated-equilibrium} characterizes the condition under which the equilibrium is separated, as well as the uniqueness of the equilibrium in the separated case. Moreover, Proposition~\ref{prop:asym-separated-equilibrium} also characterizes the equilibrium in the separated case more explicitly.

\begin{proposition} \label{prop:asym-unique-uH}
    In game $\mathcal{G}^{(2)}$, let $\left[\underline{x}_H^u, \overline{x}_H^u\right]$ be the support of $F_H^u$, $u_L = \max_x x p_n - \gamma_L g(x)$ be the equilibrium utility of type-$L$ creators, $u_L^u$ denote the utility that type-$L$ creators would obtain by the best response against $F_H^u$ on $\left[\underline{x}_H^u, \overline{x}_H^u\right]$, $u_H$ be the equilibrium utility of type-$H$ creators, and $\overline{u}_H = \max_x x p_{n_H} - \gamma_H g(x)$. Then
    \begin{enumerate}
        \item $u_L^u$ is strictly increasing in $u$.
        \item If $u_L^{\overline{u}_H} \leqslant u_L$, then the equilibrium is separated with $u_H = \overline{u}_H$.
        \item If $u_L^{\overline{u}_H} > u_L$, then $u_L^u = u_L$ if and only if $u = u_H$.
        \item $u_H$ is unique.
    \end{enumerate}
\end{proposition}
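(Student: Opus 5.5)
We organize the argument around the pseudo strategy $F_H^u$ defined by \eqref{eq:pseudo-strategy}. Rewriting it as $\sum_{i=1}^{n_H}\binom{n_H-1}{i-1}[F_H^u]^{n_H-i}[1-F_H^u]^{i-1}p_i = (u+\gamma_H g(x))/x$, Lemma~\ref{lem:K-monotone}-type reasoning shows that the left side is strictly increasing in $F_H^u$. The right side is strictly increasing in $u$ at fixed $x$. Hence $F_H^u(x)$ is pointwise strictly decreasing in $u$ wherever it lies in $(0,1)$. The endpoints $\underline{x}_H^u$ (where $F_H^u=0$, i.e.\ $x p_{n_H}-\gamma_H g(x)=u$ on the increasing branch) and $\overline{x}_H^u$ (the largest root of $x p_1-\gamma_H g(x)=u$) shift accordingly.

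\textbf{Part 1.} The type-$L$ best-response payoff against $F_H^u$ at a point $x$ in $[\underline{x}_H^u,\overline{x}_H^u]$ is $x\,W_L(F_H^u(x))-\gamma_L g(x)$. Here $W_L$ is the expected position bias when all $n_L$ type-$L$ rivals sit below $x$ and the $n_H$ type-$H$ rivals draw from $F_H^u$, with the ranks among ties of type $L$ handled via positions $p_{n_H+1},\dots$.

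Lemma~\ref{lem:W-monotone} gives that $W_L$ is strictly increasing in $F_H^u(x)$. The key identity compares the two types' payoffs at the same $x$: $x W_L - \gamma_L g = u + x(W_L - W_H) - (\gamma_L-\gamma_H)g(x)$, where $W_H$ is the type-$H$ expected bias. The difference $W_L-W_H$ is a one-step "swap" term, positive and controlled by Lemma~\ref{lem:W-monotone}.

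Raising $u$ has two effects. It adds to $u$ directly, and it decreases $F_H^u$, which lowers $W_L-W_H$. I would show that the direct effect dominates by differentiating in $u$ at fixed $x$. Using $\partial F_H^u/\partial u = -1/(x\,\partial_F W_H)$, the derivative is $1 - \partial_F(W_L-W_H)/\partial_F W_H$. Expressing both partial derivatives through the Bernstein-type expansions of Lemma~\ref{lem:KJ-monotone}, with $p_i-p_{i+1}>0$, I would aim to prove this ratio is $<1$. A pointwise strict increase then passes to the supremum over the shifting interval by an envelope argument. I expect this inequality to be the main obstacle. If a direct comparison fails, I would instead argue by contradiction with a coupling: a type-$L$ deviation that is profitable against $F_H^{u'}$ would, after reparametrizing $x$ so that $F_H^{u'}$ and $F_H^u$ agree, give type $H$ a payoff above $u'$, contradicting \eqref{eq:pseudo-strategy}.

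\textbf{Parts 2--4.} By Corollary~\ref{cor:asym-support-monotone}, $u_L=\max_x x p_n-\gamma_L g(x)$ is pinned down. If $u_L^{\overline{u}_H}\le u_L$, take type-$H$ creators playing $F_H^{\overline{u}_H}$ (the symmetric $n_H$-game equilibrium shifted down to bias $p_{n_H}$) and type-$L$ creators playing the $\mathcal{G}^{(1)}$ equilibrium on positions $n_H+1,\dots,n$. No type-$L$ creator gains by entering the $H$ region by assumption, and type $H$ cannot gain by dropping below, since $\overline{u}_H$ is the maximum at bias $p_{n_H}$. This is a separated equilibrium, and Theorem~\ref{thm:sym-equ-property} applied per block together with Proposition~\ref{prop:not-all-symmetric-equilibrium} yields $u_H=\overline{u}_H$.

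Otherwise, in any equilibrium type-$L$ creators must be indifferent at their active points inside the $H$ support. Proposition~\ref{prop:asym-nested-gaps} lets us read the type-$H$ strategy on that region as a pseudo strategy, which forces $u_L^{u_H}=u_L$. Conversely, strict monotonicity from Part~1 and continuity in $u$ (via the implicit function theorem on \eqref{eq:pseudo-strategy}) give a unique solution $u$, so $u=u_H$. Uniqueness of $u_H$ in Part~4 follows in both cases.
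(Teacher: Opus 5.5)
Your Part~1 rests on a sign error. In the rewritten equation the left side increases in $F_H^u(x)$ and the right side increases in $u$, so $F_H^u(x)$ is strictly \emph{increasing} in $u$. Indeed, differentiating $xW_H(F)-\gamma_H g(x)=u$ gives $\partial_u F_H^u=+1/(x\,\partial_F W_H)$. Also, $W_L-W_H=-(1-F_H^u(x))\,W(\bm{F}_{-\{h,l\}}(x),\overline{\bm p}-\underline{\bm p})\le 0$, because a type-$L$ creator faces one more type-$H$ rival, so it is not positive. With the correct signs there is no tradeoff: $\partial_u\bigl[xW_L(F_H^u(x))-\gamma_L g(x)\bigr]=\partial_F W_L/\partial_F W_H>0$, which is exactly the paper's one-line pointwise step. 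As written, the inequality you call the main obstacle, $\partial_F W_L<2\,\partial_F W_H$, is false in general. At $F=0$ it reads $n_H(p_{n_H}-p_{n_H+1})<2(n_H-1)(p_{n_H-1}-p_{n_H})$, which fails for the parameters of Example~\ref{ex:not-quasiconvex} ($1.3$ versus $0.1$).

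The real difficulty is the step you cover with an ``envelope argument''. As $u$ rises to $u'$ the pseudo support moves \emph{down}, since $\underline{x}_H^u$ and $\overline{x}_H^u$ are the larger roots of $xp_{n_H}-\gamma_H g(x)=u$ and $xp_1-\gamma_H g(x)=u$. So points of $(\overline{x}_H^{u'},\overline{x}_H^{u}]$ admit no pointwise comparison. The paper handles them via the bound $u_L^u(x)\le v(x):=xp_1-\gamma_L g(x)$. Here $v$ is strictly decreasing beyond $\overline{x}_H^{u'}\ge\arg\max_x\{xp_1-\gamma_H g(x)\}\ge\arg\max_x\{xp_1-\gamma_L g(x)\}$, and $v(\overline{x}_H^{u'})=u_L^{u'}(\overline{x}_H^{u'})$ because $F_H^{u'}=1$ there.

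Parts~2--3 also omit necessary steps. In Part~2 you construct a separated equilibrium but never exclude hybrid ones. Both ``the equilibrium is separated'' and Part~4 require that exclusion. The paper gets it from Part~1: a hybrid equilibrium has $u_H<\overline{u}_H$, so type $L$'s payoff $u_L^{u_H}$ would fall strictly below $u_L^{\overline{u}_H}\le u_L$, a contradiction.

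In Part~3, Proposition~\ref{prop:asym-nested-gaps} does not let you read the type-$H$ equilibrium strategy as the pseudo strategy where type $L$ is active. There, type-$H$ indifference involves $F_L(x)<1$, so the two CDFs differ; they agree only on $[\overline{x}_L,\overline{x}_H]$. The paper first shows $\overline{x}_L\ge\underline{x}_H$ (otherwise a type-$H$ creator gains by moving just below $\underline{x}_H$). Evaluating at $\overline{x}_L$ then gives $u_L^{u_H}\ge u_L$. It obtains the reverse inequality from $F_H\ge F_H^{u_H}$ below $\overline{x}_L$, which follows from type-$H$ indifference with $F_L<1$ and also places the pseudo support inside the true one. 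Your ``forces $u_L^{u_H}=u_L$'' supplies neither direction; only the ``only if'' half, via strict monotonicity, goes through as you describe.
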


Part~1 of Proposition~\ref{prop:asym-unique-uH} implies that a bisection search can be used to determine $u_H$. A lower bound of $u_H$ is $\underline{u}_H = \underline{x}_L p_n - \gamma_H g(\underline{x}_L)$ (by Lemma~\ref{lem:asym-uH-lower-bound} below); an upper bound is $\overline{u}_H = \max_x x p_{n_H} - \gamma_H g(x)$ (by Proposition~\ref{prop:symmetric-equilibrium-properties}). According to Part~2, we first compute $u_L^{\overline{u}_H}$ and compare it with $u_L$. If $u_L^{\overline{u}_H} \leqslant u_L$, the equilibrium is separated and $u_H = \overline{u}_H$. Otherwise, according to Part~3, we perform a bisection on the interval $[\underline{u}_H, \overline{u}_H]$: for each trial value $u_H$ we compute $u_L^{u_H}$ and compare it with $u_L$, thus shrinking the interval until a value satisfying $u_L^{u_H} = u_L$ is located. In practice, due to finite numerical precision, the bisection can be terminated when $u_L^{u_H}$ is sufficiently close to $u_L$. The algorithm for computing $u_H$ is summarized in Algorithm~\ref{alg:uH-computation}.

\begin{lemma} \label{lem:asym-uH-lower-bound}
    In game $\mathcal{G}^{(2)}$, the utility of type-$H$ creators (denoted as $u_H$) in any equilibrium satisfies
    \[ u_H > \underline{x}_L p_n - \gamma_H g(\underline{x}_L), \]
    where $\underline{x}_L = \arg\max_x x p_n - \gamma_L g(x)$.
\end{lemma}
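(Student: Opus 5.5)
The plan is to have a type-$H$ creator deviate to the single effort level $\underline{x}_L$ and bound the payoff she obtains there from below. Her equilibrium utility $u_H$ must be at least the payoff of any pure deviation. By Corollary~\ref{cor:asym-support-monotone} (together with Proposition~\ref{prop:not-all-symmetric-equilibrium}, so that all type-$L$ creators share a strategy $F_L$ and all type-$H$ creators share $F_H$), we have $\underline{x}_L = \arg\max_x x p_n - \gamma_L g(x)$ and $\underline{x}_H > \underline{x}_L$. The deviation payoff is
\[
\underline{x}_L\, W(\bm{F}_{-i}(\underline{x}_L), \bm{p}) - \gamma_H g(\underline{x}_L),
\]
where $i$ is any type-$H$ creator. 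Since the expected position bias is always at least $p_n$, this payoff is at least $\underline{x}_L p_n - \gamma_H g(\underline{x}_L)$, which already gives the weak inequality. The work is in making the inequality strict.

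To get strictness, I would split into two cases according to whether type-$L$ creators place an atom at $\underline{x}_L$.

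\emph{No atom.} Deviate instead to $\underline{x}_L + \varepsilon$ with $\varepsilon>0$ small and $\underline{x}_L+\varepsilon < \underline{x}_H$. On $(\underline{x}_L, \underline{x}_H)$ only type-$L$ creators are present, and $\underline{x}_L$ is the infimum of their support. Hence $F_L(\underline{x}_L+\varepsilon) > 0$, and by Lemma~\ref{lem:W-monotone} the expected position bias at $\underline{x}_L + \varepsilon$ is strictly larger than $p_n$; call it $p_n + \delta(\varepsilon)$ with $\delta(\varepsilon)>0$. The deviation payoff is then
\[
(\underline{x}_L+\varepsilon)(p_n+\delta(\varepsilon)) - \gamma_H g(\underline{x}_L+\varepsilon).
\]
I then need to show this exceeds $\underline{x}_L p_n - \gamma_H g(\underline{x}_L)$. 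Because $\gamma_H<\gamma_L$, we have $\gamma_H g'(\underline{x}_L) < \gamma_L g'(\underline{x}_L) = p_n$. So the map $x\mapsto x p_n - \gamma_H g(x)$ is strictly increasing at $\underline{x}_L$, and a small upward move already gains strictly, even ignoring the extra $\delta(\varepsilon)$. This observation also settles the atom case.

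\emph{Atom.} Deviating to $\underline{x}_L+\varepsilon$ beats the atom and all lower type-$L$ mass, so the bias there is again at least $p_n$, and the same derivative argument gives a strict gain.

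So in both cases the cleanest route is the following. Observe that $\underline{x}_L < \arg\max_x x p_n - \gamma_H g(x)$ by strict convexity of $g$ and $\gamma_H<\gamma_L$. Pick $x' \in (\underline{x}_L, \underline{x}_H)$ close to $\underline{x}_L$. The deviation payoff at $x'$ is then
\[
\geqslant x' p_n - \gamma_H g(x') > \underline{x}_L p_n - \gamma_H g(\underline{x}_L).
\]
The main obstacle is justifying the equilibrium comparison $u_H \geqslant$ the deviation payoff at $x'$, including handling any tie-breaking at $x'$. This is resolved by noting that $x'$ can be chosen off the at-most-countable set of atoms. Choosing $x'$ close to $\underline{x}_L$ is harmless, since only the lower bound $p_n$ on the bias is used.
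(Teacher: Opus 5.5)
Your proof is correct and is essentially the paper's argument: the expected position bias at any effort level is at least $p_n$, and since $\gamma_H<\gamma_L$ the point $\underline{x}_L$ is not the maximizer of $x p_n-\gamma_H g(x)$, so a deviation to a suitable effort level gives a strict gain. The paper deviates to that maximizer itself, while you use a point just above $\underline{x}_L$; your atom/no-atom case split and the tie-breaking concern are unnecessary, because the lower bound $p_n$ on the expected position bias holds for any tie configuration under uniform tie-breaking.
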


\begin{proof}
    By Corollary~\ref{cor:asym-support-monotone}, the infimum of the support of type-$L$ creators is just $\underline{x}_L$. It is obvious that in any equilibrium, type-$H$ creators must achieve a utility of at least $\underline{x}_L p_n - \gamma_H g(\underline{x}_L)$; otherwise, a type-$H$ creator can deviate by choosing effort level $\underline{x}_L$ and obtain a higher utility.

    Let $x' = \arg\max_x x p_n - \gamma_H g(x)$, then $x' > \underline{x}_L$ since $\gamma_H < \gamma_L$. Consider a deviation where a type-$H$ creator (denoted as creator $h$) chooses effort level $x'$, the resulting utility is
    \[ u_H' = x' W(\bm{F}_{-h}(x'), \bm{p}) - \gamma_H g(x') \geqslant x' p_n - \gamma_H g(x') > \underline{x}_L p_n - \gamma_H g(\underline{x}_L). \]
    Thus in any Nash equilibrium, type-$H$ creators must achieve a utility strictly greater than $\underline{x}_L p_n - \gamma_H g(\underline{x}_L)$.
\end{proof}

Then we provide the proof of Proposition~\ref{prop:asym-unique-uH}.

\begin{proof}
    \begin{enumerate}
        \item First, we rewrite equation~\eqref{eq:pseudo-strategy} as
        \[ \sum_{i = 1}^{n_H} \binom{n_H - 1}{i - 1} [F_H^u(x)]^{n_H - i} [1 - F_H^u(x)]^{i - 1} p_i = \frac{u + \gamma_H g(x)}{x}. \]
        By Lemma~\ref{lem:KJ-monotone}, the left-hand side is strictly increasing in $F_H^u(x)$, while the right-hand side is clearly strictly increasing in $u$. Therefore, when $u$ increases to $u'$, we have $F_H^{u'}(x) > F_H^u(x)$. Consequently, $\underline{x}_H^{u'} < \underline{x}_H^u$ and $\overline{x}_H^{u'} < \overline{x}_H^u$.

        Let $u_L^u(x)$ denote the utility of a type-$L$ creator at effort level $x \in \left[\underline{x}_H^u, \overline{x}_H^u\right]$ when type-$H$ creators adopt the pseudo strategy $F_H^u$. Then we have
        \begin{equation} \label{eq:asym-uLu}
            u_L^u(x) = x \left(\sum_{i = 1}^{n_H + 1} \binom{n_H}{i - 1} [F_H^u(x)]^{n_H - i + 1} [1 - F_H^u(x)]^{i - 1} p_i\right) - \gamma_L g(x).
        \end{equation}
        According to Lemma~\ref{lem:KJ-monotone}, the summation term is strictly increasing in $F_H^u(x)$. Thus for any $x \in [\underline{x}_H^u, \overline{x}_H^{u'}]$, the inequality $F_H^{u'}(x) > F_H^u(x)$ implies $u_L^{u'}(x) > u_L^u(x)$.

        Now take any $x \in (\overline{x}_H^{u'}, \overline{x}_H^u]$. Define the auxiliary function $v(x) = x p_1 - \gamma_L g(x)$, we claim that $v(x)$ is strictly decreasing in $x$ over the interval $(\overline{x}_H^{u'}, \overline{x}_H^u]$. To see this, note that $\overline{u}_H = \max_x x p_n - \gamma_H g(x)$ is the highest attainable utility of type-$H$ creators in any equilibrium, then by the forth part of Proposition~\ref{prop:symmetric-equilibrium-properties}, we have $\overline{x}_H^{u'} \geqslant \overline{x}$, where $\overline{x}$ is defined in the forth part of Proposition~\ref{prop:symmetric-equilibrium-properties}. Since $v(x)$ is strictly decreasing for $x > \overline{x}$, it remains strictly decreasing on $(\overline{x}_H^{u'}, \overline{x}_H^u]$. Hence, for any $x \in (\overline{x}_H^{u'}, \overline{x}_H^u]$,
        \[ u_L^{u'}(\overline{x}_H^{u'}) = \overline{x}_H^{u'} p_1 - \gamma_L g(\overline{x}_H^{u'}) > x p_1 - \gamma_L g(x) \geqslant u_L^u(x). \]
        Combining the two parts above, we conclude that the maximum of $u_L^{u'}(x)$ is strictly larger than the maximum of $u_L^u(x)$. Conversely, if $u' < u$, a symmetric argument shows that the maximum of $u_L^{u'}(x)$ is strictly smaller than the maximum of $u_L^u(x)$.
        
        \item Let $F_L$ denote the strategy that satisfies
        \begin{equation} \label{eq:asym-FL}
            x \left(\sum_{i = 1}^{n_L} \binom{n_L - 1}{i - 1} [F_L(x)]^{n_L - i} [1 - F_L(x)]^{i - 1} p_{n_H + i}\right) - \gamma_L g(x) = u_L,
        \end{equation}
        then the pair where type-$L$ creators adopt $F_L$ and type-$H$ creators adopt the pseudo-strategy $F_H^{\overline{u}_H}$ constitutes an equilibrium. Indeed, by the discussion in Section~\ref{subsec:sym-equilibrium}, $F_L$ and $F_H^{\overline{u}_H}$ are the mixed Nash equilibrium strategies for type-$L$ and type-$H$ creators in their respective games. Moreover, the inequality $u_L \geqslant u_L^{\overline{u}_H}$ implies that type-$L$ creators have no incentive to deviate into the interval $[\underline{x}_H^{\overline{u}_H}, \overline{x}_H^{\overline{u}_H}]$. As for type-$H$ creators, any deviation into the support of $F_L$ would yield a utility strictly lower than $u_H = \max_x x p_{n_H} - \gamma_H g(x)$; consequently, they also have no incentive to deviate into the support of $F_L$. Thus, the profile $(F_L, F_H^{\overline{u}_H})$ forms an equilibrium.

        To complete the proof of this part, it remains to show that this equilibrium is the unique equilibrium of game $\mathcal{G}^{(2)}$. It is clearly the only separated equilibrium, so we only need to rule out the existence of any hybrid equilibrium. Suppose, for contradiction, that a hybrid equilibrium does exist. In such an equilibrium,  type-$L$ creators no longer choose effort levels strictly below the lower bound of the support of type-$H$ creators with probability 1, and therefore the equilibrium utility of type-$H$ creators would then be strictly lower than $\overline{u}_H$. By Part~1, the equilibrium utility of type-$L$ creators would then be strictly smaller than $u_L^{\overline{u}_H}$ and hence also smaller than $u_L$, This contradicts the definition of an equilibrium. Hence, no hybrid equilibrium exists.

        \item If $u_L^{\overline{u}_H} > u_L$, then Part~1 implies that $u_H < \overline{u}_H$. Denote $x_H^* = \arg\max_x x p_{n_H} - \gamma_H g(x)$, and let $[\underline{x}_H, \overline{x}_H]$ and $[\underline{x}_L, \overline{x}_L]$ denote the supports of type-$H$ and type-$L$ creators' equilibrium strategies, respectively (note that for type-$H$ creators, $\left[\underline{x}_H^u, \overline{x}_H^u\right]$ denotes the support of pseudo strategy, $[\underline{x}_H, \overline{x}_H]$ denotes the support of equilibrium strategy). By the proof of Part~1, $u_H < \overline{u}_H$ implies that $\underline{x}_H > x_H^*$. We claim that $\overline{x}_L \geqslant \underline{x}_H$, otherwise any type-$H$ creator can deviate to choose effort level $\underline{x}_H - \varepsilon$ ($\varepsilon \to 0^+$) to obtain a higher utility.
        
        With the above observations, we now prove the ``if'' part. If $u = u_H$, we first show that $\left[\underline{x}_H^{u_H}, \overline{x}_H^{u_H}\right] \subseteq [\underline{x}_H, \overline{x}_H]$. Consider the interval $[\overline{x}_L, \overline{x}_H]$. Since the support of type-$L$ creators lies entirely below this interval, the indifference condition implies that the pseudo strategy of type-$H$ creators coincides with their equilibrium strategy on this interval, hence $\overline{x}_H^{u_H} = \overline{x}_H$. Let $F_L$ be the CDF of the equilibrium strategy of type-$L$ creators, $F_L^{u_H}$ the CDF that type-$L$ creators adopt when type-$H$ creators adopt pseudo strategy $F_H^{u_H}$. By definition of the pseudo strategy, $F_L^{u_H}(x) \equiv 1$ for $x \geqslant \underline{x}_H^{u_H}$. For $x \in [\underline{x}_H^{u_H}, \overline{x}_L)$, we have $F_L(x) < 1 = F_L^{u_H}(x)$. Because the utilities of type-$L$ creators under the equilibrium and under the pseudo strategy are equal on $[\underline{x}_H^{u_H}, \overline{x}_L)$, it must be that $F_H(x) \geqslant F_H^{u_H}(x)$ (where $F_H$ is the equilibrium CDF of type-$H$ creators), and thus $\underline{x}_H \leqslant \underline{x}_H^{u_H}$. Together with $\overline{x}_H^{u_H} = \overline{x}_H$, we obtain $\left[\underline{x}_H^{u_H}, \overline{x}_H^{u_H}\right] \subseteq [\underline{x}_H, \overline{x}_H]$.
        
        Consider the interval $[\overline{x}_L, \overline{x}_H]$, as argued above, the equilibrium strategy and the pseudo strategy coincide on this interval. Therefore, at $\overline{x}_L$ the utility of type-$L$ creators is $u_L^{u_H}(\overline{x}_L) = u_L$, implying $u_L^{u_H} \geqslant u_L$. It remains to show that $u_L^{u_H}$ cannot exceed $u_L$. Take any $x \in [\underline{x}_H^{u_H}, \overline{x}_H^{u_H}]$, then $x \in [\underline{x}_H, \overline{x}_H]$ according to the above discussion. In equilibrium the utility of type-$L$ creators at $x$ is $u_L$, while the inequality $F_H(x) \geqslant F_H^{u_H}(x)$ established above yields $u_L^{u_H}(x) \leqslant u_L$. Hence $u_L^{u_H} = u_L$. 
        
        We now prove the ``only if'' part. Suppose, for contradiction, that there exists some $u \neq u_H$ such that $u_L^u = u_L$. By Part~1 and Part~2, $u_L^u$ is strictly monotone in $u$, therefore $u_L^u \neq u_L^{u_H} = u_L$, a contradiction.
        
        \item According to Part~2, when $u_L^{\overline{u}_H} \leqslant u_L$, the equilibrium is unique, and consequently the equilibrium utility $u_H$ is unique as well. If instead $u_L^{\overline{u}_H} > u_L$, there must exist some $u$ for which $u_L^u < u_L$: Indeed, the proof of Part~1 implies that there exists sufficiently small $u$ such that $\underline{x}_H^u p_1 - \gamma_L g(\underline{x}_H^u) < u_L$. Moreover, by Part~1 together with the fact that the pseudo strategy $F_H^u$ varies continuously with $u$, there exists a unique $u^*$ such that $u_L^{u^*} = u_L$. Part~3 then implies that this $u^*$ coincides with the equilibrium utility $u_H$ and is unique. In all cases, therefore, the equilibrium utility $u_H$ is unique.
    \end{enumerate}
\end{proof}

Denote $[\underline{x}_L, \overline{x}_L]$ and $[\underline{x}_H, \overline{x}_H]$ as the supports of type-$L$ and type-$H$ creators, respectively. Based on the above proposition, the characterization of the separated equilibrium is straightforward, which we summarize in the following proposition.

\begin{proposition} \label{prop:asym-separated-equilibrium}
    In game $\mathcal{G}^{(2)}$, the equilibrium is separated if and only if one of the following two conditions holds:
    \begin{enumerate}
        \item $u_L^{\overline{u}_H} \leqslant u_L$. In this case if $u_L^{\overline{u}_H} < u_L$, or $u_L^{\overline{u}_H} = u_L$ but $u_L^{\overline{u}_H}$ is attained at some $x' \in (\underline{x}_H^{\overline{u}_H}, \overline{x}_H^{\overline{u}_H}]$, then $\overline{x}_L < \underline{x}_H$.
        \item There exists a $u$ such that $u_L^u = u_L$ and $u_L^u$ is attained at $\underline{x}_H^u$. In this case $\overline{x}_L = \underline{x}_H$.
    \end{enumerate}

    Moreover, $\underline{x}_H = \underline{x}_H^{u_H}$ and $\overline{x}_H = \overline{x}_H^{u_H}$ in both cases, where $u_H$ is the equilibrium utility of type-$H$ creators determined in Proposition~\ref{prop:asym-unique-uH}. Whenever the equilibrium is separated, it is the unique equilibrium of game $\mathcal{G}^{(2)}$.
\end{proposition}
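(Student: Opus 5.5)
I will derive Proposition~\ref{prop:asym-separated-equilibrium} directly from Proposition~\ref{prop:asym-unique-uH}, treating the two regimes $u_L^{\overline{u}_H} \leqslant u_L$ and $u_L^{\overline{u}_H} > u_L$ separately.

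\emph{Condition 1.} If $u_L^{\overline{u}_H} \leqslant u_L$, Part~2 of Proposition~\ref{prop:asym-unique-uH} already tells us the following.
\begin{itemize}
\item The unique equilibrium is the profile $(F_L, F_H^{\overline{u}_H})$, where $F_L$ solves \eqref{eq:asym-FL}.
\item Hence $u_H = \overline{u}_H$, and the type-$H$ support is exactly $[\underline{x}_H^{u_H}, \overline{x}_H^{u_H}]$.
\end{itemize}
To place $\overline{x}_L$ relative to $\underline{x}_H$, I compare type-$L$ payoffs at $\underline{x}_H^{\overline{u}_H}$.
\begin{itemize}
\item By continuity of the indifference condition \eqref{eq:asym-FL} at $F_L = 1$, a type-$L$ creator at $\overline{x}_L$ earns $u_L = \overline{x}_L p_{n_H+1} - \gamma_L g(\overline{x}_L)$.
\item At $\underline{x}_H^{\overline{u}_H}$ the pseudo-strategy value is $0$, so $u_L^{\overline{u}_H}(\underline{x}_H^{\overline{u}_H})$ equals the same function $x p_{n_H+1} - \gamma_L g(x)$ evaluated there.
\item That function is strictly decreasing past $\overline{x}_L$, because $\overline{x}_L$ is the largest root.
\item If $u_L^{\overline{u}_H} < u_L$, or if equality holds but is attained only in the interior, then $u_L^{\overline{u}_H}(\underline{x}_H^{\overline{u}_H}) < u_L$, which forces $\underline{x}_H > \overline{x}_L$.
\end{itemize}

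\emph{Condition 2.} Now suppose $u_L^{\overline{u}_H} > u_L$, so Part~3 gives a unique $u = u_H$ with $u_L^u = u_L$.
\begin{itemize}
\item \emph{Sufficiency.} Assume the maximum of $u_L^{u_H}(\cdot)$ is attained at $\underline{x}_H^{u_H}$. I construct the profile in which type $L$ plays the symmetric $n_L$-player equilibrium on $[\underline{x}_L, \underline{x}_H^{u_H}]$ and type $H$ plays $F_H^{u_H}$.
\item The upper endpoint of the type-$L$ support is pinned down by the same largest-root identity: $\underline{x}_H^{u_H} p_{n_H+1} - \gamma_L g(\underline{x}_H^{u_H}) = u_L$.
\item Type-$L$ deviations into the $H$-support earn at most $u_L^{u_H} = u_L$.
\item Type-$H$ deviations below $\underline{x}_H$ are unprofitable. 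This uses concavity of $x p_{n_H} - \gamma_H g(x)$ together with the fact that $u_H$ is achieved at $\underline{x}_H^{u_H}$ with expected bias $p_{n_H}$.
\item \emph{Necessity.} If a separated equilibrium exists with $u_L^{\overline{u}_H} > u_L$, then type-$H$ play on $[\underline{x}_H, \overline{x}_H]$ coincides with $F_H^{u_H}$, so $\underline{x}_H = \underline{x}_H^{u_H}$. Moreover $\overline{x}_L \geqslant \underline{x}_H$, by the argument in Part~3 that otherwise type $H$ would shift down. Separation then forces $\overline{x}_L = \underline{x}_H$, so the indifference of type $L$ at $\overline{x}_L$ means $u_L^{u_H}$ is attained at $\underline{x}_H^{u_H}$.
\end{itemize}

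\emph{Uniqueness.} In both regimes $u_L$ and $u_H$ are unique (Proposition~\ref{prop:asym-unique-uH}). In a separated equilibrium, each type's strategy is then fixed by its own indifference equation, which has a unique CDF solution by Lemma~\ref{lem:KJ-monotone}. A hybrid equilibrium with the same utilities is excluded in Condition~1 by Part~2. In Condition~2 I would exclude it as follows.
\begin{itemize}
\item In a hybrid equilibrium, $F_H$ would lie strictly above $F_H^{u_H}$ somewhere in the overlap, since type-$L$ mass there has $F_L < 1$.
\item The maximizer property at $\underline{x}_H^{u_H}$ then makes the type-$L$ payoff at interior overlap points strictly less than $u_L$, contradicting indifference.
\end{itemize}

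The main obstacle is this exclusion of hybrid equilibria in Condition~2. I would first try a monotone comparison of $F_H$ against $F_H^{u_H}$, modeled on Part~3 of Proposition~\ref{prop:asym-unique-uH}.
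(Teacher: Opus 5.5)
Your handling of Condition~1 and of the necessity half of Condition~2 is fine. Reading Condition~1 as ``attained only in the interior'' is also the right reading: if the maximum is attained at $\underline{x}_H^{\overline{u}_H}$ as well, one gets $\overline{x}_L=\underline{x}_H$.

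\textbf{Where the proof breaks.} The genuine gap is the sufficiency bullet for Condition~2 asserting that type-$H$ deviations below $\underline{x}_H$ are unprofitable. Concavity of $x p_{n_H}-\gamma_H g(x)$ works against you here. When $u_H<\overline{u}_H$ you have $\underline{x}_H^{u_H}>x_H^*:=\arg\max_x x p_{n_H}-\gamma_H g(x)$. So $\underline{x}_H^{u_H}$ lies on the decreasing branch, and $x p_{n_H}-\gamma_H g(x)>u_H$ just below it. The crude bound ``bias $\leqslant p_{n_H}$'' only suffices in Condition~1, where $u_H=\overline{u}_H$. The actual deviation payoff is $\pi_H(x)=x B_H(x)-\gamma_H g(x)$, with $B_H(x)=\sum_{k=0}^{n_L}\binom{n_L}{k}F_L(x)^{n_L-k}(1-F_L(x))^k p_{n_H+k}$. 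It equals $u_H$ at $\overline{x}_L=\underline{x}_H$, and its left derivative there is
\[ p_{n_H}-\gamma_H g'(\underline{x}_H)+n_L\,(p_{n_H}-p_{n_H+1})\,\underline{x}_H\,F_L'(\underline{x}_H^-). \]
The first term is strictly negative. So when $p_{n_H}-p_{n_H+1}$ is small, type $H$ strictly gains by shading below $\underline{x}_H$.

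\textbf{The claim itself is false.} Take $n_H=n_L=2$, $\bm{p}=(0.9,0.5,0.49,0.1)$, $g(x)=x^2$, $\gamma_H=1$, $\gamma_L=1.1$.
\begin{itemize}
\item Then $u_L=1/440$ and $\overline{x}_L=(0.49+\sqrt{0.2301})/2.2\approx 0.4408$.
\item For $u=0.5\,\overline{x}_L-\overline{x}_L^2\approx 0.0261$, the pseudo strategy starts at $\underline{x}_H^u=\overline{x}_L$, where $u_L^u(\cdot)=u_L$.
\item On the rest of $[\underline{x}_H^u,\overline{x}_H^u]$ we have $u_L^u(\cdot)<u_L$: the right derivative at $\underline{x}_H^u$ is about $-0.46$, and e.g.\ $u_L^u(0.5)\approx-0.025$. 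So Condition~2 holds.
\item Yet a type-$H$ creator deviating to $x=0.35$ faces $F_L(0.35)\approx 0.747$ and expected bias $\approx 0.471$. Her payoff is $\approx 0.042>u_H\approx 0.026$.
\end{itemize}
This is the only separated candidate. Type $L$'s subgame equilibrium is forced, and $\underline{x}_H=\overline{x}_L$ is forced because $\overline{x}_L>x_H^*=0.25$. Condition~1 fails, since $u_L^{\overline{u}_H}\geqslant 0.49\cdot 0.25-1.1\cdot 0.0625\approx 0.054>u_L$. Hence the game has no separated equilibrium, and by Theorem~\ref{thm:mix-existence} its equilibrium is hybrid.

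\textbf{Consequences.}
\begin{itemize}
\item Your hybrid-exclusion step must fail too. A larger $F_H$ raises a type-$L$ creator's bias while $F_L<1$ lowers it, so nothing forces her payoff below $u_L$.
\item The paper's proof does not avoid this. It never checks the downward type-$H$ deviation. Instead it rules out $\overline{x}_L>\underline{x}_H^u$ by appeal to the hybrid-equilibrium analysis, which the example contradicts.
\item What is provable along your lines is Condition~1 and the necessity of Condition~2.
\item Sufficiency of Condition~2, and the uniqueness claim in that case, need an extra hypothesis, for instance $\sup_{x<\overline{x}_L}\pi_H(x)\leqslant u_H$ for the candidate profile.
\end{itemize}
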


\begin{proof}
    \begin{enumerate}
        \item It follows directly from Part~2 of Proposition~\ref{prop:asym-unique-uH}.
        \item We first show that $\overline{x}_L = \underline{x}_H^u$. Otherwise, if $\overline{x}_L < \underline{x}_H^u$, denote $x_H^* = \arg\max_x x p_{n_H} - \gamma_H g(x)$. Then according to the proof of Part~1 of Proposition~\ref{prop:asym-unique-uH}, $u < \overline{u}_H$ implies that $\underline{x}_H^u > x_H^*$. Thus, any type-$H$ creator can deviate to choose effort level $\underline{x}_H - \varepsilon$ ($\varepsilon \to 0^+$) to obtain a higher utility. If instead $\overline{x}_L > \underline{x}_H^u$, since the support structure in Figure~\ref{fig:2-type-separated-support-2} can be seen as a special case of the support structure of a hybrid equilibrium shown in Figure~\ref{fig:2-type-hybrid-support-2}, by the argument in the proof of Proposition~\ref{prop:asym-hybrid-equilibrium}, the support structure cannot be the case shown in Figure~\ref{fig:2-type-hybrid-support-1}, thus no hybrid equilibrium exists in this case, which implies that $\overline{x}_L$ cannot be strictly greater than $\underline{x}_H^u$. Therefore, we have $\overline{x}_L = \underline{x}_H^u$.
        
        
        Let $F_L$ denote the strategy that satisfies equation~\eqref{eq:asym-FL}. Consider the profile $(F_L, F_H^u)$ in which type-$L$ creators adopt $F_L$ and type-$H$ creators adopt the pseudo-strategy $F_H^u$. The discussion above then demonstrates that $(F_L, F_H^u)$ is the only possible separated equilibrium (implied by the argument that rules out the case $\overline{x}_L < \underline{x}_H^u$), and there is no hybrid equilibrium (evident from the argument that excludes the case $\overline{x}_L > \underline{x}_H^u$). Hence, $(F_L, F_H^u)$ is indeed the unique equilibrium of game $\mathcal{G}^{(2)}$ with $\overline{x}_L = \underline{x}_H$, $\underline{x}_H = \underline{x}_H^{u_H}$ and $\overline{x}_H = \overline{x}_H^{u_H}$.
    \end{enumerate}

    If neither of the two cases holds, then $u_L^{\overline{u}_H} > u_L$ (thus $u_H < \overline{u}_H$) and there exists a $u$ such that $u_L^u = u_L$ and $u_L^u$ is attained at some $x' \in (\underline{x}_H^u, \overline{x}_H^u]$. We consider the following two sub-cases: (1) if $\overline{x}_L < \underline{x}_H$, then $u_H = \overline{u}_H$, which contradicts the assumption; (2) if $\overline{x}_L = \underline{x}_H$, the indifference condition of type-$L$ creators implies that $u_L = u_L^u$ is attained at $\underline{x}_H^u$, contradicting the assumption. Therefore, if neither of the two cases holds, the equilibrium must be hybrid.
\end{proof}

According to Proposition~\ref{prop:asym-separated-equilibrium}, Figure~\ref{fig:2-type-separated-support} illustrates the support structure of the separated equilibrium in the two cases. The key difference is whether type-$L$ creators exert effort at the infimum of the support of type-$H$ creators. We provide a numerical example of the separated equilibrium in Example~\ref{ex:asym-separated-equilibrium}.

\begin{figure}[t]
    \centering
    \begin{subfigure}[b]{0.4\linewidth}
        \centering
        \includegraphics[width=\linewidth]{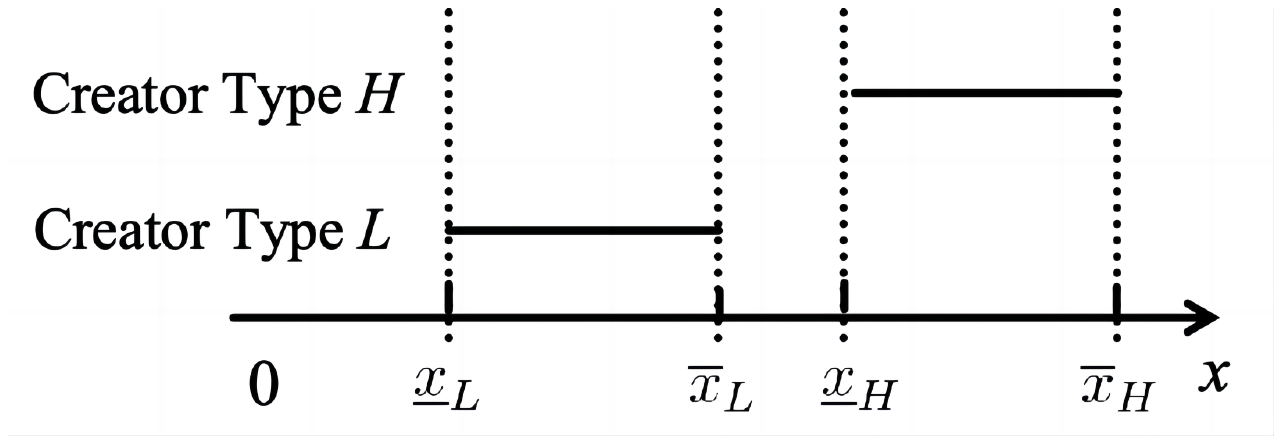}
        \caption{Case~1 of Proposition~\ref{prop:asym-separated-equilibrium}}
        \label{fig:2-type-separated-support-1}
    \end{subfigure}
    \begin{subfigure}[b]{0.4\linewidth}
        \centering
        \includegraphics[width=\linewidth]{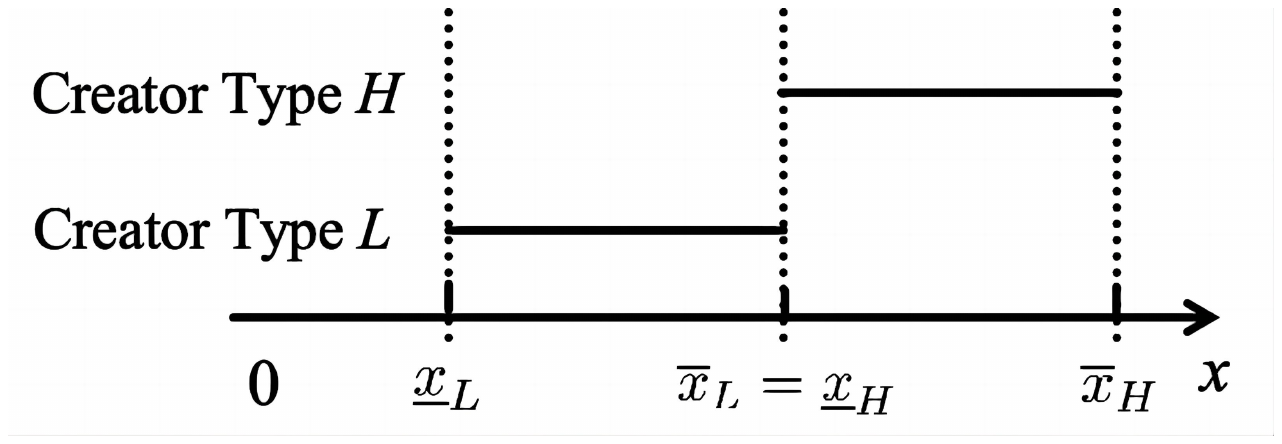}
        \caption{Case~2 of Proposition~\ref{prop:asym-separated-equilibrium}}
        \label{fig:2-type-separated-support-2}
    \end{subfigure}
    \caption{Support structure of separated equilibrium}
    \label{fig:2-type-separated-support}
\end{figure}

\begin{example} \label{ex:asym-separated-equilibrium}
    Consider a setting with $n_H = n_L = 5$, $p_1 = 0.9, p_2 = 0.75, p_3 = 0.63, p_4 = 0.52, p_5 = 0.42, p_6 = 0.34, p_7 = 0.28, p_8 = 0.24, p_9 = 0.20, p_{10} = 0.18$, and $g(x) = x^3$. Let $\gamma_H = 1$, then numerical computation yields:
    \begin{enumerate}
        \item If $\gamma_L = 3$, then the structure of the unique mixed Nash equilibrium corresponds to case~1 in Proposition~\ref{prop:asym-separated-equilibrium}, with the support of type-$H$ creators being approximately $[0.374, 0.884]$ and that of type-$L$ creators being approximately $[0.141, 0.308]$.
        \item If $\gamma_L = 2$, then the structure of the unique mixed Nash equilibrium corresponds to case~2 in Proposition~\ref{prop:asym-separated-equilibrium}, with the support of type-$H$ creators being approximately $[0.377, 0.884]$ and that of type-$L$ creators being approximately $[0.173, 0.377]$.
    \end{enumerate}
\end{example}

\begin{figure}[t]
    \centering
    \begin{subfigure}[b]{0.48\linewidth}
        \centering
        \includegraphics[width=\linewidth]{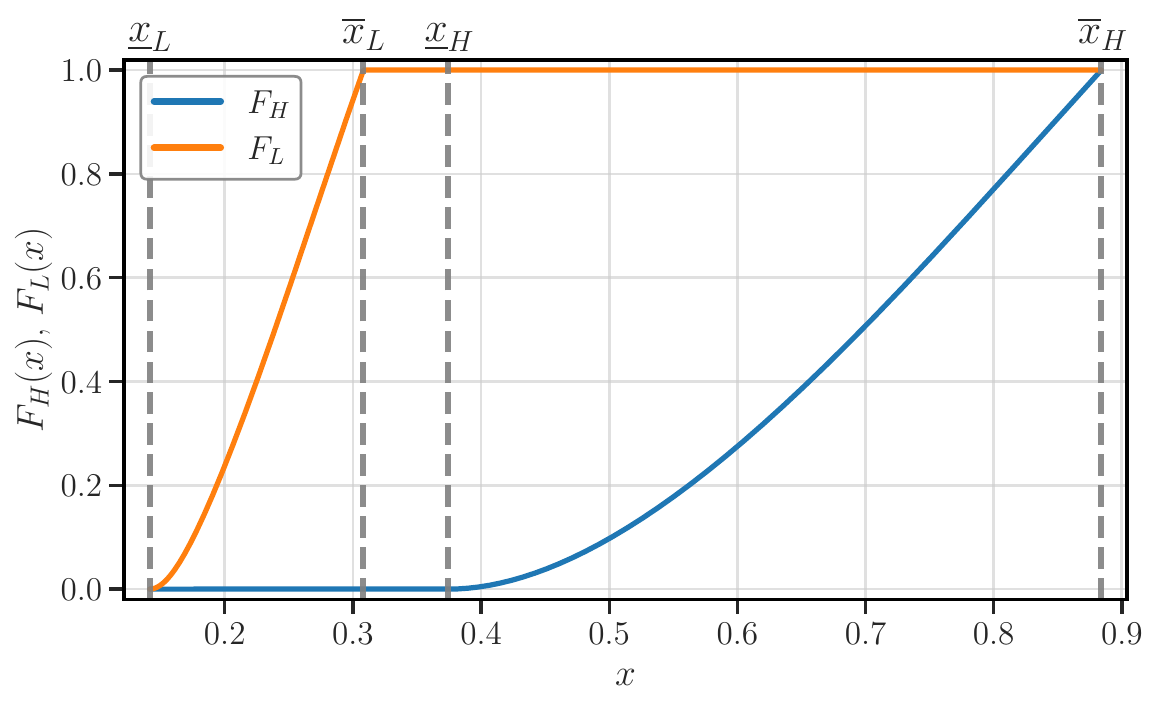}
        \caption{Equilibrium CDFs with $\gamma_L = 3$ in Example~\ref{ex:asym-separated-equilibrium}}
        \label{fig:separated-example-1}
    \end{subfigure}
    \hfill
    \begin{subfigure}[b]{0.48\linewidth}
        \centering
        \includegraphics[width=\linewidth]{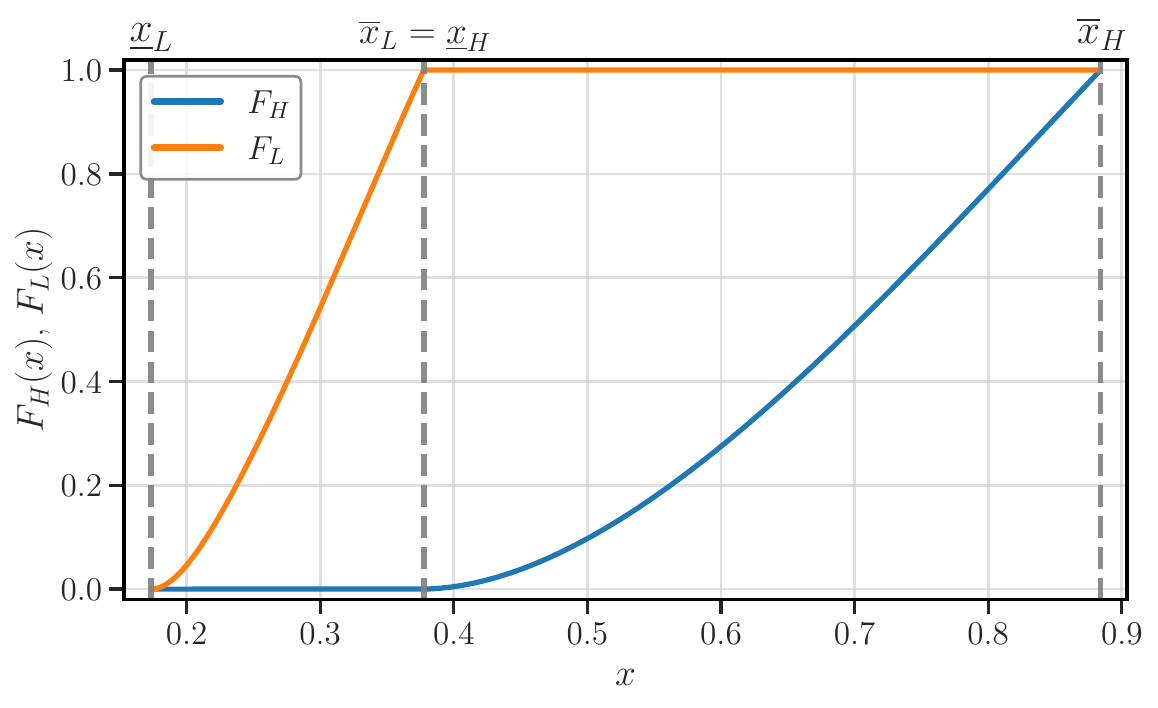}
        \caption{Equilibrium CDFs with $\gamma_L = 2$ in Example~\ref{ex:asym-separated-equilibrium}}
        \label{fig:separated-example-2}
    \end{subfigure}
    \caption{Equilibrium CDFs in Example~\ref{ex:asym-separated-equilibrium}}
    \label{fig:separated-example}
\end{figure}

Next, we turn to the explicit characterization of hybrid equilibria. Unlike separated equilibria, hybrid equilibria exhibit greater complexity, since the structure of the active interval for type-$L$ creators within $[\underline{x}_H, \overline{x}_H]$ can be highly intricate (for instance, it may contain multiple gaps). Therefore, we introduce the following two technical assumptions to make the analysis more tractable. The rationality for these two assumptions will be discussed in detail in Appendix~\ref{subsubsec:assump-1-rationale} and Appendix~\ref{subsubsec:assump-2-rationale}, respectively.

To state the first assumption, let $u_L^u(x)$ denote the utility of a type-$L$ creator at effort level $x \in \left[\underline{x}_H^u, \overline{x}_H^u\right]$ when type-$H$ creators adopt the pseudo strategy $F_H^u$. Based on this, we introduce the first assumption:
\begin{assumption} \label{assump:asym-hybrid-1}
    In game $\mathcal{G}^{(2)}$, $u_L^{u_H}(x)$ is quasiconvex on $\left[\underline{x}_H^{u_H}, \overline{x}_H^{u_H}\right]$, where $u_H$ is the equilibrium utility of type-$H$ creators determined in Proposition~\ref{prop:asym-unique-uH}.
\end{assumption}

This assumption implies that the maximum of $u_L^{u_H}(x)$ over $\left[\underline{x}_H^{u_H}, \overline{x}_H^{u_H}\right]$ must be attained at one of the endpoints of the interval. Given this assumption, we claim that the supremum of type-$L$ creators' equilibrium strategy is either $\underline{x}_H$ or $\overline{x}_H$. Indeed, according to Proposition~\ref{prop:asym-separated-equilibrium}, if $u_L^{u_H}(x)$ attains its maximum at $\underline{x}_H^{u_H}$, then the equilibrium is separated, and thus the supremum of the equilibrium strategy of type-$L$ creators is $\underline{x}_H$; if $u_L^{u_H}(x)$ reaches its maximum at $\overline{x}_H^{u_H}$, suppose by contradiction that the supremum of the equilibrium strategy of type-$L$ creators is some interior point $x'$. In this case, the indifference condition of equilibrium implies that the equilibrium strategy of type-$H$ creators over $[x', \overline{x}_H]$ coincides with the pseudo strategy. Since $u_L^{u_H}(x)$ attains its maximum at $\overline{x}_H^{u_H}$, the utility of type-$L$ creators at $\overline{x}_H$ would be strictly greater than that at $x'$, leading to a contradiction.

Under the preceding assumption, the supremum of the support of the equilibrium strategy of type-$L$ creators is restricted to either $\underline{x}_H$ or $\overline{x}_H$. However, even if the supremum equals $\overline{x}_H$, the structure of the active interval of type-$L$ creators over $[\underline{x}_H, \overline{x}_H]$ may still be quite complicated. To further simplify the analysis, we introduce the second assumption:
\begin{assumption} \label{assump:asym-hybrid-2}
    In game $\mathcal{G}^{(2)}$, if the supremum of the equilibrium strategy of type-$L$ creators is $\overline{x}_H$, then type-$L$ creators are active on the entire interval $[x^*, \overline{x}_H]$ for some $x^* \in [\underline{x}_H, \overline{x}_H)$.
\end{assumption}

The assumption implies that the active interval of type-$L$ creators on $[\underline{x}_H, \overline{x}_H]$ contains no gaps. This is a natural simplification, as the no-gap property already holds in symmetric settings (Corollary~\ref{cor:symmetric-no-gap}) and significantly simplifies the subsequent analysis.  

\begin{proposition} \label{prop:asym-hybrid-equilibrium}
    In game $\mathcal{G}^{(2)}$, under Assumption~\ref{assump:asym-hybrid-1} and \ref{assump:asym-hybrid-2}, if there exists a $u$ such that $u_L^u = u_L$ and $u_L^u$ is attained at $\overline{x}_H^u$, then the equilibrium is hybrid with equilibrium utility of type-$H$ creators $u_H = u$. The support of a hybrid equilibrium belongs to one of the two configurations displayed in Figure~\ref{fig:2-type-hybrid-support}; these two configurations are mutually exclusive for a given parameter tuple $(n_H, n_L, \gamma_H, \gamma_L, \bm{p}, g(x))$.
    
    Moreover, if the sufficient condition given in Proposition~\ref{prop:assumption-2-existence-uniqueness} holds, then the hybrid equilibrium with the support structure illustrated in Figure~\ref{fig:2-type-hybrid-support} is unique.
\end{proposition}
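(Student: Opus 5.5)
Under the hypothesis that some $u$ has $u_L^u = u_L$ with the maximum of $u_L^u(\cdot)$ attained at $\overline{x}_H^u$, I would first identify $u$ with the equilibrium utility of type-$H$ creators. Part~1 of Proposition~\ref{prop:asym-unique-uH} makes $u \mapsto u_L^u$ strictly increasing, so at most one $u$ satisfies $u_L^u = u_L$. We cannot have $u_L^{\overline{u}_H} \leqslant u_L$ with a maximizer in the interior or at the top: that would put us in Case~1 of Proposition~\ref{prop:asym-separated-equilibrium}. Hence $u_L^{\overline{u}_H} > u_L$, and Part~3 of Proposition~\ref{prop:asym-unique-uH} gives $u = u_H$. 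The boundary subcase $u_L^{\overline{u}_H} = u_L$ with maximum at $\overline{x}_H^{\overline{u}_H}$ needs a short separate check. There I would argue that type-$L$ creators are indifferent at $\overline{x}_H$, so they can profitably enter, which forces overlap.

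\textbf{Separated equilibria are excluded.} Next I would rule out every separated equilibrium, so that the unique equilibrium-utility pair forces a hybrid support. A separated equilibrium requires one of the two conditions in Proposition~\ref{prop:asym-separated-equilibrium}.
\begin{enumerate}
\item Condition~1 fails by the previous step.
\item Condition~2 asks for some $u'$ with $u_L^{u'} = u_L$ whose maximum is at $\underline{x}_H^{u'}$. By uniqueness $u' = u_H = u$, so the maximum would be attained at both endpoints. Assumption~\ref{assump:asym-hybrid-1} (quasiconvexity) permits this only when the maximizing set contains both endpoints.
\end{enumerate}
To break the tie I would compare values at the endpoints. At $\overline{x}_H^u$ we have $u_L^u = \overline{x}_H^u p_1 - \gamma_L g(\overline{x}_H^u)$. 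I would show that the equilibrium must then have type-$L$ creators active near $\overline{x}_H$, which is a hybrid configuration. I would treat this tie as a degenerate case to be checked rather than argued around.

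\textbf{Support structure.} By the discussion after Assumption~\ref{assump:asym-hybrid-1}, $\overline{x}_L = \overline{x}_H$. Assumption~\ref{assump:asym-hybrid-2} gives an active interval $[x^*, \overline{x}_H]$ for type $L$. On $[\underline{x}_L, \underline{x}_H)$ only type $L$ is active, by Corollary~\ref{cor:asym-support-monotone}. The nested-activity result, Proposition~\ref{prop:asym-nested-gaps}, shows that between $\underline{x}_H$ and $x^*$ type $L$ is inactive, except possibly at an atom or on a lower segment. This leaves exactly two patterns:
\begin{enumerate}
\item type $L$ has a gap between its lower segment and $x^*$, with $x^* > \underline{x}_H$;
\item type $L$'s support is continuous through $\underline{x}_H$, so $x^* = \underline{x}_H$.
\end{enumerate}
These correspond to the two panels of Figure~\ref{fig:2-type-hybrid-support}. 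For mutual exclusivity I would characterize the pattern by the sign of $u_L^{u_H}$ evaluated against the type-$L$ lower game at $\underline{x}_H$. Specifically, I would check whether the type-$L$ symmetric-game CDF from \eqref{eq:asym-FL} reaches its supremum before or after $\underline{x}_H^{u_H}$. Since $u_H$ and $u_L$ are unique, this comparison depends only on the parameter tuple.

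\textbf{Uniqueness, the main obstacle.} The last step is uniqueness under the sufficient condition of Proposition~\ref{prop:assumption-2-existence-uniqueness}. Given fixed $u_H, u_L$, the joint indifference conditions on $[x^*, \overline{x}_H]$ form a system of two equations in $F_H(x)$ and $F_L(x)$. I would show this system has a unique monotone solution, treating it as an ODE/implicit system and applying a uniqueness theorem in the style of Arnold. The cutoff $x^*$ is then pinned down by continuity with the pseudo strategy $F_H^{u_H}$ on $[\underline{x}_H, x^*]$. I expect the main difficulty to be proving that this $x^*$ is unique, which is presumably exactly what the cited sufficient condition provides. I would invoke that condition directly at this point.
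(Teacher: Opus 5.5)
Your reduction to $u=u_H$ through Part~3 of Proposition~\ref{prop:asym-unique-uH} and the separation conditions is consistent with what the paper relies on. Your boundary check does not work, however. If $u=\overline{u}_H$ and type $L$ is merely indifferent at $\overline{x}_H$, that is not a profitable deviation. Case~1 of Proposition~\ref{prop:asym-separated-equilibrium} in fact classifies that situation as separated, so it must be excluded rather than argued into overlap.

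The genuine gap is that you have misidentified the two configurations. The panels of Figure~\ref{fig:2-type-hybrid-support} differ \emph{below} $\underline{x}_H$, not at $x^*$. In Figure~\ref{fig:2-type-hybrid-support-2}, type $L$ is active on all of $[\underline{x}_L,\underline{x}_H]$. In Figure~\ref{fig:2-type-hybrid-support-1}, its lower segment ends at some $x^{**}<\underline{x}_H$, leaving an interval $(x^{**},\underline{x}_H)$ on which nobody is active. In both, type $L$ is inactive on $(\underline{x}_H,x^*)$. Your split $x^*>\underline{x}_H$ versus $x^*=\underline{x}_H$ puts both of the paper's cases into your first pattern and never considers the empty interval.

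Also, Proposition~\ref{prop:asym-nested-gaps} only says type $L$ cannot be active where type $H$ is inactive. It does not give inactivity of $L$ on $(\underline{x}_H,x^*)$; that comes from reading Assumption~\ref{assump:asym-hybrid-2}.

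The missing structural step is as follows. Only type $L$ participates below $\underline{x}_H$ (Corollary~\ref{cor:asym-support-monotone}). For $n_L\geqslant 2$, the symmetric no-gap property (Theorem~\ref{thm:sym-equ-property}, Corollary~\ref{cor:symmetric-no-gap}) makes this lower activity a single interval $[\underline{x}_L,x^{**}]$; for $n_L=1$ it is a point mass. One then splits on $x^{**}<\underline{x}_H$ versus $x^{**}=\underline{x}_H$.

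Your exclusivity argument therefore targets the wrong dichotomy, and it is not a proof in any case. Noting that a comparison depends only on the parameter tuple does not show that two equilibria of different shapes cannot coexist. You also give no argument that comparing where the lower-game CDF from \eqref{eq:asym-FL} reaches $1$ with $\underline{x}_H^{u_H}$ decides the configuration; in both hybrid cases the lower segment of $F_L$ ends at a mass $q<1$.

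What is needed is the paper's argument:
\begin{itemize}
\item In the first configuration, the empty interval forces $\underline{x}_H^1=\arg\max_x\, xW(\bm{F}^1_{-h}(x^{**}),\bm{p})-\gamma_H g(x)$.
\item Since $u_H$ is common to both configurations, this gives $q_1\leqslant q_2$.
\item The common type-$L$ lower-game CDF then gives $x^{**}\leqslant\underline{x}_H^2$.
\item Comparing $\underline{x}_H^1$ with $\underline{x}_H^2$ yields a profitable type-$H$ deviation in either order.
\end{itemize}

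Your uniqueness sketch has two concrete errors. First, on $[\underline{x}_H,x^*]$ the type-$H$ CDF is not the pseudo strategy $F_H^{u_H}$. The pseudo strategy presumes $F_L\equiv 1$, while here type $L$'s CDF is frozen at $q<1$. So $F_H$ solves a symmetric indifference equation with that frozen block, and it is unique by the argument of Theorem~\ref{thm:sym-equ-property}.

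Second, the sufficient condition of Proposition~\ref{prop:assumption-2-existence-uniqueness} does not pin down $x^*$. Jacobian invertibility only gives, via Picard--Lindel\"{o}f, uniqueness on $[x^*,\overline{x}_H]$ once the initial value $\bm{F}(x^*)$ is fixed.

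The pieces you are missing are:
\begin{itemize}
\item In the first configuration, $x^{**}$ (hence $\underline{x}_H$) is determined by $u_H=\max_x xW(\bm{F}_{-h}(x^{**}),\bm{p})-\gamma_H g(x)$, because $W$ is strictly increasing in $x^{**}$.
\item In the second configuration, $\underline{x}_H$ is determined by $u_H=\underline{x}_H W(\bm{F}_{-h}(\underline{x}_H),\bm{p})-\gamma_H g(\underline{x}_H)$.
\item $x^*$ is the unique root of the type-$L$ indifference condition $u_L=x^*W(\bm{F}_{-l}(x^*),\bm{p})-\gamma_L g(x^*)$. Its uniqueness comes from the quasiconvexity of Assumption~\ref{assump:asym-hybrid-1}, not from the Jacobian condition.
\end{itemize}
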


\begin{figure}[t]
    \centering
    \begin{subfigure}[b]{0.4\linewidth}
        \centering
        \includegraphics[width=\linewidth]{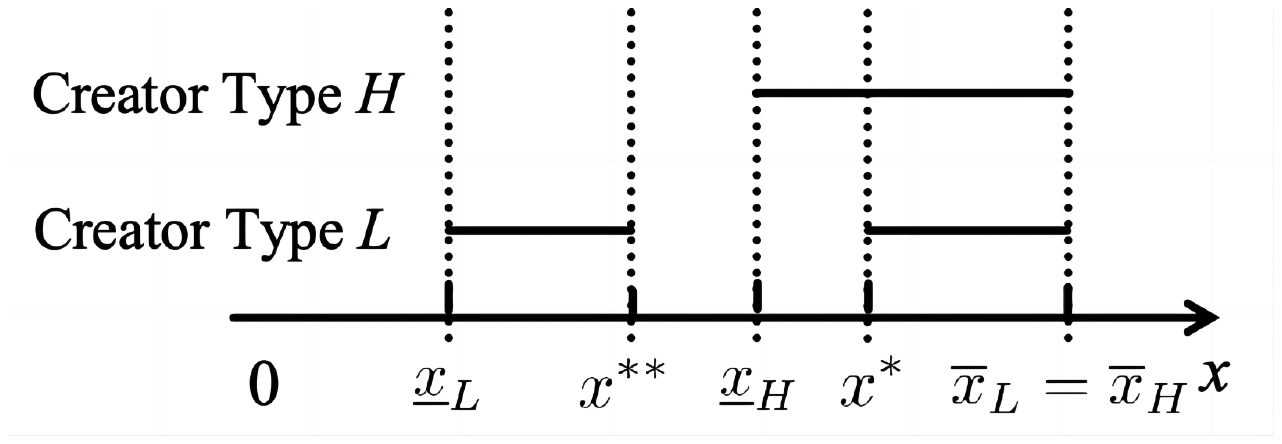}
        \caption{Case~1}
        \label{fig:2-type-hybrid-support-1}
    \end{subfigure}
    \begin{subfigure}[b]{0.4\linewidth}
        \centering
        \includegraphics[width=\linewidth]{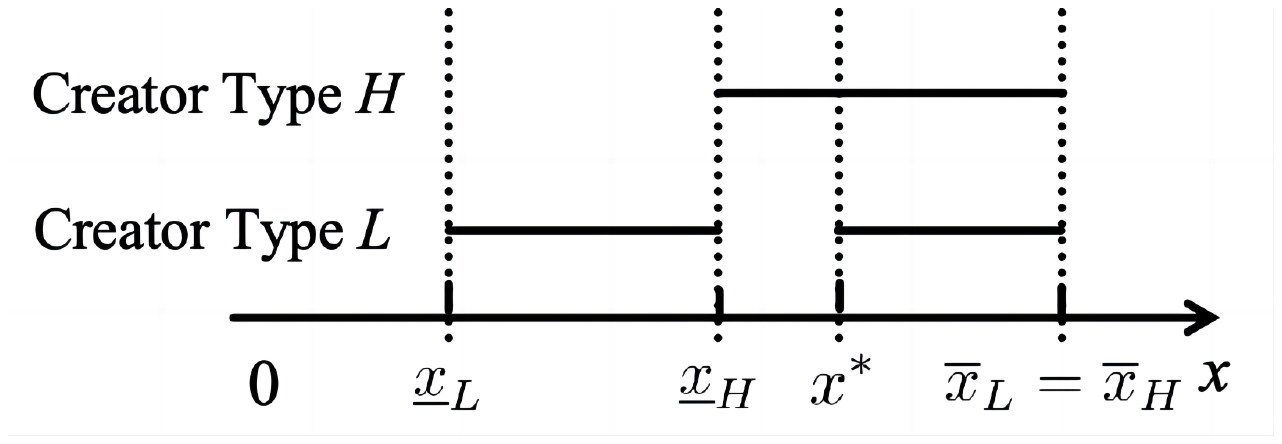}
        \caption{Case~2}
        \label{fig:2-type-hybrid-support-2}
    \end{subfigure}
    \caption{Support structure of hybrid equilibrium}
    \label{fig:2-type-hybrid-support}
\end{figure}

As can be observed, the two possible support structures for a hybrid equilibrium shown in Figure~\ref{fig:2-type-hybrid-support} differ in whether type-$L$ creators are active on the entire interval  $[\underline{x}_L, \underline{x}_H]$. We illustrate the hybrid equilibrium with a numerical example in Example~\ref{ex:asym-hybrid-equilibrium}.

\begin{proof}
    We first show that the the support structure of a hybrid equilibrium is one of the two configurations shown in Figure~\ref{fig:2-type-hybrid-support}. According to Assumption~\ref{assump:asym-hybrid-2}, type-$L$ creators are active on the entire interval $[x^*, \overline{x}_H]$ for some $x^* \in [\underline{x}_H, \overline{x}_H)$. Thus, the support structure on $[\underline{x}_H, \overline{x}_H]$ must align with the configurations in Figure~\ref{fig:2-type-hybrid-support}. Note that if $n_H = 1$, then analogous to Step~1 in the proof of Proposition~\ref{prop:2-player-equ}, type-$L$ creators must be active on the entire interval $(\underline{x}_H, \overline{x}_H]$. Moreover, if the support structure of Figure~\ref{fig:2-type-hybrid-support-1} occurs, then the equilibrium strategy of type-$H$ creators has an atom at $\underline{x}_H$, otherwise the utility of type-$L$ creators at $\underline{x}_H$ would be strictly lower than that at $x^{**}$, contradicting the indifference condition.
    
    Now consider the support structure on $[\underline{x}_L, \underline{x}_H]$. If $n_L = 1$, then Step~1 in the proof of Proposition~\ref{prop:2-player-equ} implies that the strategy of type-$L$ creators on $[\underline{x}_H, \overline{x}_H]$ reduces to a point mass, thus aligning with the configuration in Figure~\ref{fig:2-type-hybrid-support-1} with $x_L = x^{**}$. If $n_L \geqslant 2$, then Theorem~\ref{thm:sym-equ-property} implies that type-$L$ creators must be active on the entire interval $[\underline{x}_L, x^{**}]$ for some $x^{**} \in [\underline{x}_L, \underline{x}_H]$. Therefore, if $x^{**} < \underline{x}_H$, then the support structure on $[\underline{x}_L, \underline{x}_H]$ aligns with that in Figure~\ref{fig:2-type-hybrid-support-1}; otherwise, if $x^{**} = \underline{x}_H$, it aligns with that in Figure~\ref{fig:2-type-hybrid-support-2} with $x_L = \underline{x}_H$.

    Then we prove that the two configurations are mutually exclusive for a given parameter tuple $(n_H, n_L, \gamma_H, \gamma_L, \bm{p}, g(x))$. Suppose, for contradiction, that both configurations constitute equilibria. To distinguish the two configurations, we add superscripts $1$ and $2$ to the notations in Figure~\ref{fig:2-type-hybrid-support-1} and Figure~\ref{fig:2-type-hybrid-support-2}, respectively. Consider the configuration in Figure~\ref{fig:2-type-hybrid-support-1}, then Step~2 in the proof of Lemma~\ref{lem:symmetric-equilibrium} implies that
    \begin{equation} \label{eq:asym-hybrid-1}
        \underline{x}_H^1 = \arg\max_x x W(\bm{F}_{-h}^1(x^{**}), \bm{p}) - \gamma_H g(x), \quad \forall h \in \{1, \dots, n_H\}.
    \end{equation}
    
    Let $q_1$ denote the mass of type-$L$ creators on $[\underline{x}_L^1, x^{**}]$ in Figure~\ref{fig:2-type-hybrid-support-1}, $q_2$ the mass of type-$L$ creators on $[\underline{x}_L^2, \underline{x}_H^2]$ in Figure~\ref{fig:2-type-hybrid-support-2}. Recall that the equilibrium utility of type-$H$ creators $u_H$ is unique, the equation~\eqref{eq:asym-hybrid-1} yields $q_1 \leqslant q_2$. Then the uniqueness of symmetric equilibrium implies that $x^{**} \leqslant \underline{x}_H^2$. If $q_1 = q_2$, then $x^{**} = \underline{x}_H^2$, which aligns with the configuration in Figure~\ref{fig:2-type-hybrid-support-2} rather than that in Figure~\ref{fig:2-type-hybrid-support-1}. Therefore, we assume $q_1 < q_2$ (hence $x^{**} < \underline{x}_H^2$). We thus have two possible cases:
    \begin{enumerate}
        \item $\underline{x}_H^2 \leqslant \underline{x}_H^1$: Since $\underline{x}_H^2 > x^{**}$, we have $x' := \arg\max_x x W(\bm{F}_{-h}^2(\underline{x}_H^2), \bm{p}) - \gamma_H g(x) > \underline{x}_H^1 \geqslant \underline{x}_H^2$ for all $h \in \{1, \dots, n_H\}$. Thus, if any type-$H$ creator follows the strategy in Figure~\ref{fig:2-type-hybrid-support-2}, we have
        \begin{align*}
            x' W(\bm{F}_{-h}^2(x'), \bm{p}) - \gamma_H g(x') &> x' W(\bm{F}_{-h}^1(\underline{x}_H^2), \bm{p}) - \gamma_H g(x') \\
            &> \underline{x}_H^2 W(\bm{F}_{-h}^1(\underline{x}_H^2), \bm{p}) - \gamma_H g(\underline{x}_H^2),
        \end{align*}
        which contradicts the indifference condition of equilibrium.
        \item $\underline{x}_H^2 > \underline{x}_H^1$: In this case, we have $W(\bm{F}_{-h}^2(\underline{x}_H^1), \bm{p}) > W(\bm{F}_{-h}^1(\underline{x}_H^1), \bm{p})$ for all $h \in \{1, \dots, n_H\}$. Thus, if any type-$H$ creator follows the strategy in Figure~\ref{fig:2-type-hybrid-support-2}, a deviation to the effort level $\underline{x}_H^1$ would yield a strictly higher utility than $u_H$, which contradicts the indifference condition of equilibrium.
    \end{enumerate}
    
    Finally, we establish the uniqueness of the hybrid equilibrium when the sufficient condition stated in Proposition~\ref{prop:assumption-2-existence-uniqueness} holds. If the support structure follows Figure~\ref{fig:2-type-hybrid-support-1}, Theorem~\ref{thm:sym-equ-property} guarantees that the equilibrium strategy of type-$L$ creators is uniquely determined on $[\underline{x}_L^1, x^{**}]$. From equation~\eqref{eq:asym-hybrid-1}, we have $u_H = \max_x x W(\bm{F}_{-h}^1(x^{**}), \bm{p}) - \gamma_H g(x)$. Since $W(\bm{F}_{-h}^1(x^{**}), \bm{p})$ is strictly increasing in $x^{**}$, the value of $x^{**}$ (and thus $\underline{x}_H^1$) is uniquely determined. On $[\underline{x}_H^1, x^*]$, the equilibrium strategy of type-$H$ creators is also uniquely determined by Theorem~\ref{thm:sym-equ-property}. Due to the quasiconvexity required by Assumption~\ref{assump:asym-hybrid-1}, there exists a unique $x^*$ satisfying $u_L = x^* W(\bm{F}_{-l}^1(x^*), \bm{p}) - \gamma_L g(x^*)$. Then if the sufficient condition given in Proposition~\ref{prop:assumption-2-existence-uniqueness} holds, the equilibrium strategy of type-$L$ and type-$H$ creators are uniquely determined on $[x^*, \overline{x}_H^1]$. Hence, the equilibrium is unique in this configuration.

    A parallel argument applies when the support structure corresponds to Figure~\ref{fig:2-type-hybrid-support-2}, except the $\underline{x}_H^2$ is determined by $u_H = \underline{x}_H^2 W(\bm{F}_{-h}^2(\underline{x}_H^2), \bm{p}) - \gamma_H g(\underline{x}_H^2)$.
\end{proof}

\begin{remark}
    As shown in the proof, when $n_L = 1$, the strategy of type-$L$ creators on $[\underline{x}_H, \overline{x}_H]$ reduces to a point mass, so only the case depicted in Figure~\ref{fig:2-type-hybrid-support-1} can occur. When $n_H = 1$, type-$L$ creators must be active on the entire interval $(\underline{x}_H, \overline{x}_H]$; moreover, if the support structure of Figure~\ref{fig:2-type-hybrid-support-1} occurs, then the equilibrium strategy of type-$H$ creators has an atom at $\underline{x}_H$. Finally, when $n_H = n_L = 1$, we provide a complete characterization of the equilibrium configuration in Section~\ref{subsec:ap-asym-n2}, where we explicitly describe the equilibrium strategies and prove the uniqueness of the equilibrium without relying on Assumption~\ref{assump:asym-hybrid-1} and Assumption~\ref{assump:asym-hybrid-2}.
\end{remark}

\begin{example} \label{ex:asym-hybrid-equilibrium}
    Consider the following two cases:
    \begin{enumerate}
        \item Case~1: Consider a setting with $n_H = 6, n_L = 4$, $p_1 = 0.9, p_2 = 0.75, p_3 = 0.63, p_4 = 0.52, p_5 = 0.42, p_6 = 0.34, p_7 = 0.2, p_8 = 0.198, p_9 = 0.196, p_{10} = 0.195$, and $g(x) = x^3$. Let $\gamma_H = 2$ and $\gamma_L = 2.1$. The support structure of the hybrid equilibrium corresponds to Figure~\ref{fig:2-type-hybrid-support-1}, with $\underline{x}_L \approx 0.176$, $x^{**} \approx 0.205$, $\underline{x}_H \approx 0.231$, $x^* \approx 0.607$, and $\overline{x}_H = \overline{x}_L \approx 0.642$.
        \item Case~2: Consider a setting with $n_H = 5, n_L = 5$, $p_1 = 0.9, p_2 = 0.75, p_3 = 0.63, p_4 = 0.52, p_5 = 0.42, p_6 = 0.34, p_7 = 0.28, p_8 = 0.24, p_9 = 0.20, p_{10} = 0.18$, and $g(x) = x^3$. Let $\gamma_H = 1$ and $\gamma_L = 1.5$. The support structure of the hybrid equilibrium corresponds to Figure~\ref{fig:2-type-hybrid-support-2}, with $\underline{x}_L \approx 0.169$, $\underline{x}_H \approx 0.350$, $x^* \approx 0.549$, and $\overline{x}_H = \overline{x}_L \approx 0.643$.
    \end{enumerate}
\end{example}

Indeed, most parameter settings lead to the support structure depicted in Figure~\ref{fig:2-type-hybrid-support-2}, rather than the one in Figure~\ref{fig:2-type-hybrid-support-1}. This is because the condition $x^{**} < \underline{x}_H$ requires a large probability mass of $F_L$ to be concentrated on a small interval. This can only happen when the position biases $p_{n_H + 1}$ and $p_n$ are sufficiently close to each other.

\begin{figure}[t]
    \centering
    \begin{subfigure}[b]{0.48\linewidth}
        \centering
        \includegraphics[width=\linewidth]{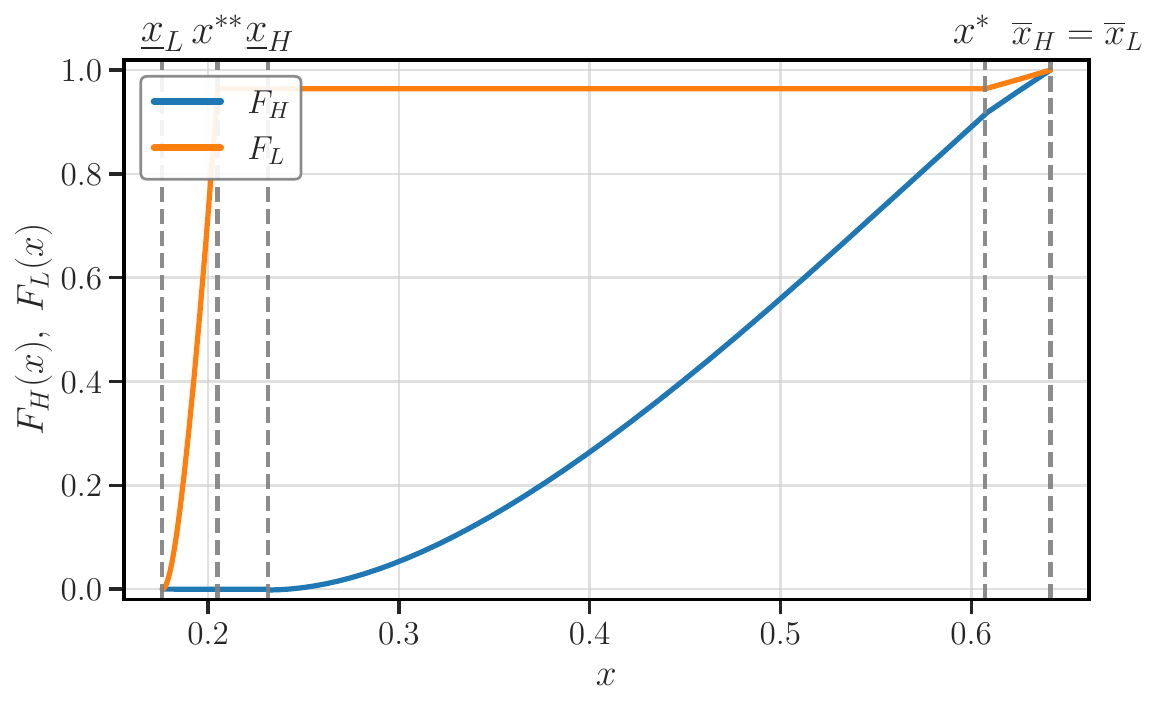}
        \caption{Equilibrium CDFs of Case~1 in Example~\ref{ex:asym-hybrid-equilibrium}}
        \label{fig:hybrid-example-1}
    \end{subfigure}
    \hfill
    \begin{subfigure}[b]{0.48\linewidth}
        \centering
        \includegraphics[width=\linewidth]{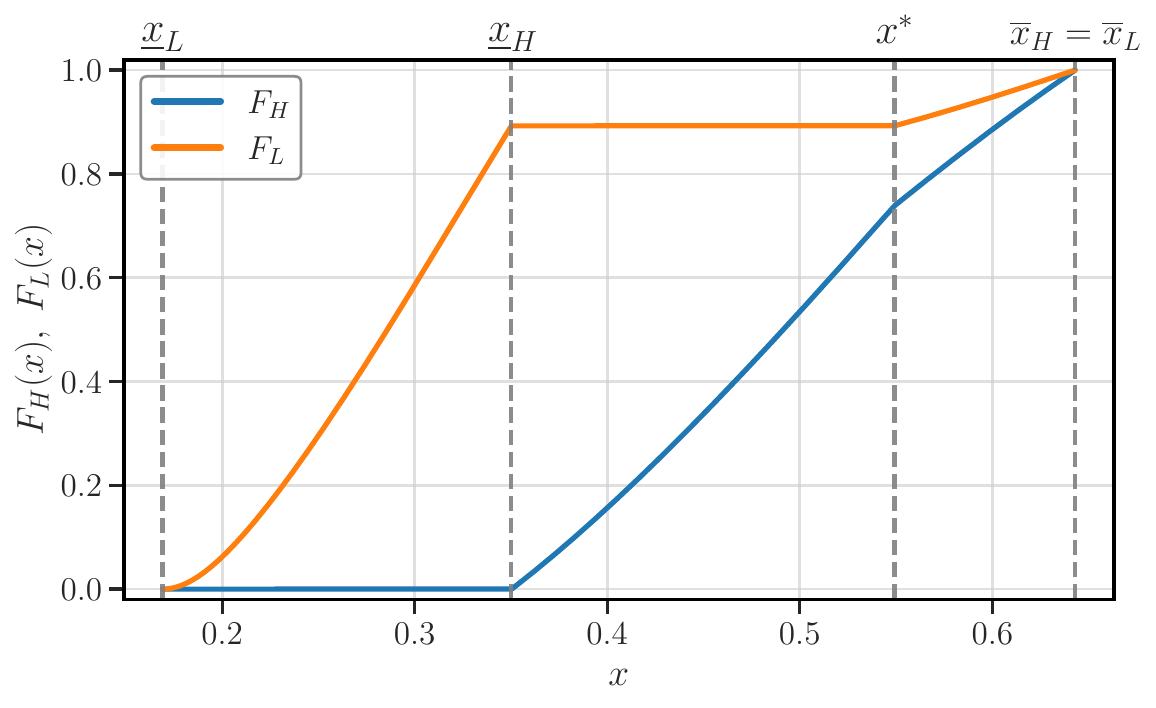}
        \caption{Equilibrium CDFs of Case~2 in Example~\ref{ex:asym-hybrid-equilibrium}}
        \label{fig:hybrid-example-2}
    \end{subfigure}
    \caption{Equilibrium CDFs in Example~\ref{ex:asym-hybrid-equilibrium}}
    \label{fig:hybrid-example}
\end{figure}

Building on the preceding discussion, we present Algorithm~\ref{alg:binary-type-equilibrium-computation} in Appendix~\ref{subsubsec:binary-type-algorithm} for computing the mixed Nash equilibrium for the binary-type setting under plausible parameter configurations. The algorithm incorporates the method for computing $u_H$ given in Proposition~\ref{prop:asym-unique-uH} and integrates the equilibrium structure characterized in Proposition~\ref{prop:asym-separated-equilibrium} and Proposition~\ref{prop:asym-hybrid-equilibrium}.

\subsubsection{Rationale for Assumption~\ref{assump:asym-hybrid-1}} \label{subsubsec:assump-1-rationale}

In this subsection, we first prove that the assumption holds in the special case where positional biases form an arithmetic sequence. Subsequently, we present a counterexample to show that this assumption may not hold universally and analyze the underlying reasons for its failure. Finally, we provide numerical evidence demonstrating that the assumption is robust across a broad range of parameter configurations commonly encountered in practice.

\begin{proposition} \label{prop:assumption-1-arithmetic}
    Suppose that the positional biases form an arithmetic sequence, i.e., there exists a constant $d > 0$ such that $p_i - p_{i + 1} = d$ for all $i = 1, 2, \dots, n - 1$, then Assumption~\ref{assump:asym-hybrid-1} holds.
\end{proposition}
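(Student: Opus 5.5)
The plan is to exploit the fact that, under an arithmetic sequence of position biases, every expected-position-bias polynomial becomes \emph{linear} in the CDF value. That lets us write $u_L^{u_H}(x)$ in closed form as a linear function of $x$ plus a multiple of $g(x)$, from which quasiconvexity follows directly.

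\textbf{Step 1 (linearize the Bernstein sums).} Write $p_i = p_1 - (i-1)d$. For $y \in [0,1]$,
\[
\sum_{i=1}^{n_H} \binom{n_H-1}{i-1} y^{n_H-i}(1-y)^{i-1} p_i = p_1 - d\,\mathbb{E}[Z],
\qquad Z \sim \mathrm{Bin}(n_H-1, 1-y).
\]
Hence this sum equals $A(y) := p_1 - d(n_H-1)(1-y)$. By the same computation, the type-$L$ sum in \eqref{eq:asym-uLu}, which has $n_H$ opponents, equals $B(y) := p_1 - d\,n_H(1-y)$. In particular $B(y) = A(y) - d(1-y)$.

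\textbf{Step 2 (explicit pseudo strategy).} Assume $n_H \geqslant 2$. Substituting $A$ into \eqref{eq:pseudo-strategy} with $u = u_H$ gives $x A(F_H^{u_H}(x)) = u_H + \gamma_H g(x)$. Solving for the CDF yields
\[
d\,x\bigl(1 - F_H^{u_H}(x)\bigr) = \frac{p_1 x - u_H - \gamma_H g(x)}{n_H - 1}
\]
on $[\underline{x}_H^{u_H}, \overline{x}_H^{u_H}]$.

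\textbf{Step 3 (closed form of $u_L^{u_H}$).} Plugging $B = A - d(1-F)$ into \eqref{eq:asym-uLu} and using the identity from Step~2 gives
\[
u_L^{u_H}(x) = u_H + \gamma_H g(x) - \gamma_L g(x) - \frac{p_1 x - u_H - \gamma_H g(x)}{n_H-1}
= \frac{n_H u_H}{n_H-1} - \frac{p_1}{n_H-1}\,x + \Bigl(\frac{n_H\gamma_H}{n_H-1} - \gamma_L\Bigr) g(x).
\]

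\textbf{Step 4 (case split on the sign of $\kappa := \frac{n_H\gamma_H}{n_H-1} - \gamma_L$).} If $\kappa \geqslant 0$, then $u_L^{u_H}$ is a linear function plus a nonnegative multiple of the convex function $g$. It is therefore convex, hence quasiconvex. If $\kappa < 0$, both $-\frac{p_1}{n_H-1}x$ and $\kappa g(x)$ are strictly decreasing, because $g$ is strictly increasing. So $u_L^{u_H}$ is strictly decreasing, and a monotone function is quasiconvex. In either case Assumption~\ref{assump:asym-hybrid-1} holds.

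\textbf{Degenerate case $n_H = 1$.} Here $A \equiv p_1$, so the pseudo strategy is not given by a continuous CDF. Equation \eqref{eq:pseudo-strategy} then pins down the support, and I would argue from that equation that the relevant interval degenerates to a point (or a set where the claim is vacuous), so quasiconvexity holds trivially.

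\textbf{Main obstacle.} I expect the only delicate point to be this boundary bookkeeping: the degenerate case $n_H = 1$, and checking that Steps~2--3 are valid on the whole closed interval $[\underline{x}_H^{u_H}, \overline{x}_H^{u_H}]$, including its endpoints. The core of the argument is the linearization in Step~1, after which the remaining steps are algebra.
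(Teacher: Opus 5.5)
Your proposal is correct and follows the paper's proof essentially step for step. Both arguments linearize the binomial sums using the arithmetic structure, solve the pseudo-strategy equation for $F_H^{u_H}$, and substitute to obtain $u_L^{u_H}(x)=\frac{n_H}{n_H-1}u_H-\frac{p_1}{n_H-1}x+\bigl(\frac{n_H\gamma_H}{n_H-1}-\gamma_L\bigr)g(x)$; your coefficient $\frac{p_1}{n_H-1}$ is just the paper's $\frac{n_H}{n_H-1}p_{n_H}-p_{n_H+1}$ rewritten. Both then split on the sign of the $g$-coefficient into a convex case and a decreasing case. Your remark on $n_H=1$ goes slightly beyond the paper, which divides by $n_H-1$ and so tacitly assumes $n_H\geqslant 2$; there the pseudo-strategy equation no longer involves $F_H^u$ and the relevant support degenerates, as you suggest.
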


\begin{proof}
    First, for any given $u$, the pseudo-strategy $F_H^u$ of type-$H$ creators satisfies
    \[ x \left(\sum_{i = 1}^{n_H} \binom{n_H - 1}{i - 1} [F_H^u(x)]^{n_H - i} [1 - F_H^u(x)]^{i - 1} p_i\right) - \gamma_H g(x) = u. \]
    Using properties of binomial summation, the above equation can be simplified to
    \[ x (d(n_H - 1)F_H^u(x) + p_{n_H}) - \gamma_H g(x) = u. \]
    Thus, we have
    \begin{align*}
        u_L^{u_H}(x) &= x \left(\sum_{i = 1}^{n_H + 1} \binom{n_H}{i - 1} F_H^u(x)^{n_H - i +1} (1 - F_H^u(x))^{i - 1} p_i\right) - \gamma_L g(x) \\
        &= x (d n_H F_H^u(x) + p_{n_H + 1}) - \gamma_L g(x) \\
        &= \left(\frac{n_H}{n_H - 1}\gamma_H - \gamma_L\right) g(x) - \left(\frac{n_H}{n_H - 1} p_{n_H} - p_{n_H + 1}\right) x + \frac{n_H}{n_H - 1} u.
    \end{align*}
    If $\frac{n_H}{n_H - 1}\gamma_H - \gamma_L \geqslant 0$, then $u_L^{u_H}(x)$ is convex in $x$, hence Assumption~\ref{assump:asym-hybrid-1} holds. If $\frac{n_H}{n_H - 1}\gamma_H - \gamma_L < 0$, then $u_L^{u_H}(x)$ is decreasing in $x$, which also guarantees that Assumption~\ref{assump:asym-hybrid-1} is satisfied.
\end{proof}

The result for arithmetic progressions, however, cannot be extended to arbitrary position biases. We now present a counterexample to show that this assumption may not hold universally.

\begin{example} \label{ex:not-quasiconvex}
    Consider a setting with $n_H = n_L = 2$, $p_1 = 0.8, p_2 = 0.75, p_3 = 0.1, p_4 = 0.05$, and $g(x) = x^8$. Let $\gamma_H = 1$ and $\gamma_L = 3$. Figure~\ref{fig:not-quasiconvex} illustrates the function $u_L^{u_H}(x)$ where $u_H$ is the utility of type-$H$ creators in the mixed Nash equilibrium. It is evident that $u_L^{u_H}(x)$ is not quasi-convex, thus violating Assumption~\ref{assump:asym-hybrid-1}.
\end{example}

\begin{figure}[t]
    \centering
    \includegraphics[width=0.4\textwidth]{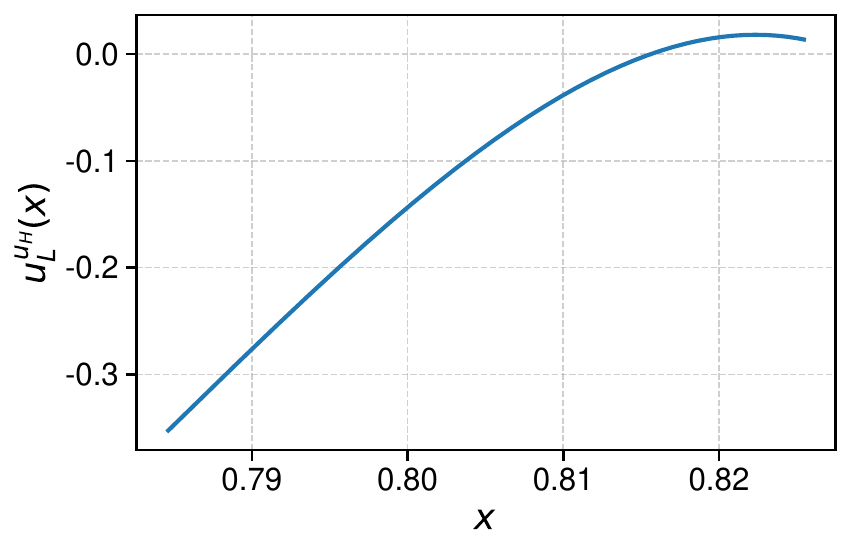} 
    \caption{An counterexample where Assumption~\ref{assump:asym-hybrid-1} fails to hold}
    \label{fig:not-quasiconvex}
\end{figure}

Two key factors underlie the non-quasiconvex pattern observed in Example~\ref{ex:not-quasiconvex}. First, following a derivation analogous to equation~\eqref{eq:sym-equ-cdf-n2}, we obtain
\[ F_H^{u_H}(x) = \frac{u_H + \gamma_H g(x)}{x (p_1 - p_2)} - \frac{p_2}{p_1 - p_2}. \]
Thus, $F_H^u$ is an increasing convex function. This property, coupled with the substantial gap between $p_2$ and $p_3$ in the position bias vector, naturally accounts for the initial monotonic increasing of $u_L^{u_H}(x)$. Second, the strongly convex cost function $g(x) = x^8$ and the large magnitudes of $p_1$ and $p_2$ together induce a sharp increase in cost near the supremum effort level,  causing the utility of type-$L$ creators to eventually decline after an initial ascent. The combination of these two factors explains the observed non-quasiconvex behavior of $u_L^{u_H}(x)$ in Example~\ref{ex:not-quasiconvex}.

We note, however, that the parameters adopted in Example~\ref{ex:not-quasiconvex} represent an extreme scenario. We then conduct numerical experiments to demonstrate that the assumption holds across a broad range of realistic settings. We take line~2 of Table~\ref{tab:position-bias-no-ai-overview} as the base position bias vector, and vary the cost parameter $\gamma_L$ together with the cost function $g(x)$. Specifically, we test four cost functions: $g(x) = x^2, x^4, x^6, x^8$. For each cost function, we vary $\gamma_L$ from $1.05$ to $1.2$ in steps of $0.05$, while fixing $\gamma_H = 1$ and $n_H = n_L = 5$. The results are presented in Figure~\ref{fig:quasiconvex-experiments-1}. It is clear that in all these scenarios, $u_L^{u_H}(x)$ exhibits quasiconvexity, thereby satisfying Assumption~\ref{assump:asym-hybrid-1}.

\begin{figure}[t]
    \centering
    \captionsetup[subfigure]{justification=centering}
    \begin{subfigure}[b]{0.24\linewidth}
        \centering
        \includegraphics[width=\linewidth]{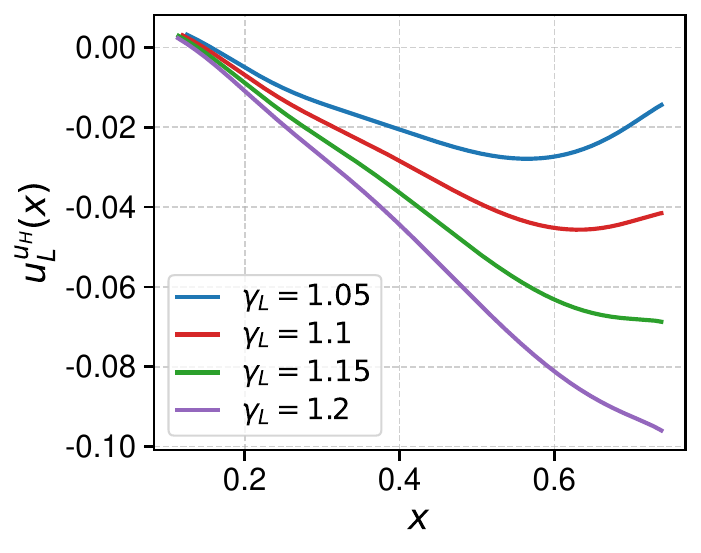}
        \caption{The effect of $\gamma_L$ for $g(x) = x^2$}
        \label{fig:quasiconvex-g2}
    \end{subfigure}
    \begin{subfigure}[b]{0.24\linewidth}
        \centering
        \includegraphics[width=\linewidth]{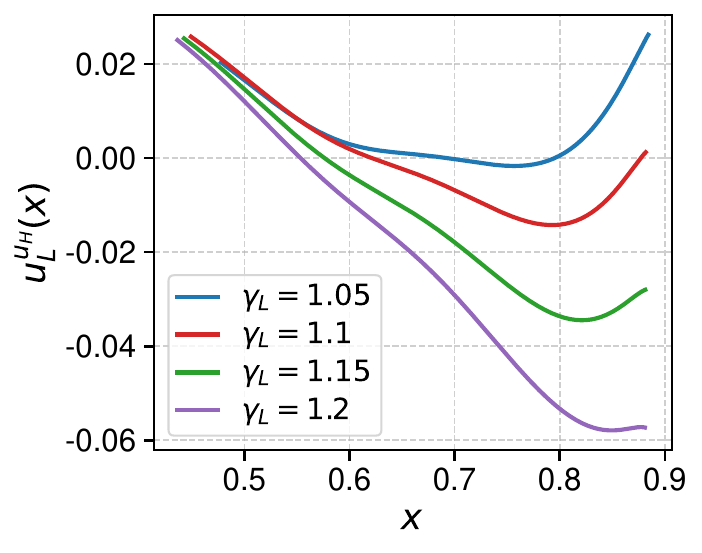}
        \caption{The effect of $\gamma_L$ for $g(x) = x^4$}
        \label{fig:quasiconvex-g4}
    \end{subfigure}
    \begin{subfigure}[b]{0.24\linewidth}
        \centering
        \includegraphics[width=\linewidth]{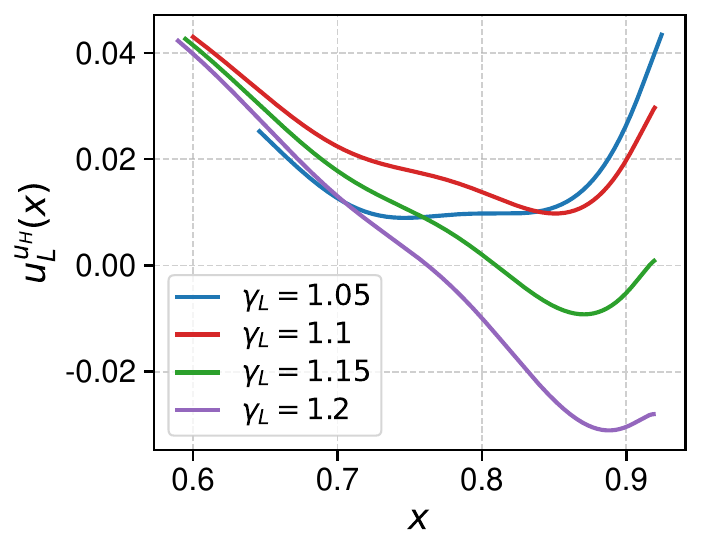}
        \caption{The effect of $\gamma_L$ for $g(x) = x^6$}
        \label{fig:quasiconvex-g6}
    \end{subfigure}
    \begin{subfigure}[b]{0.24\linewidth}
        \centering
        \includegraphics[width=\linewidth]{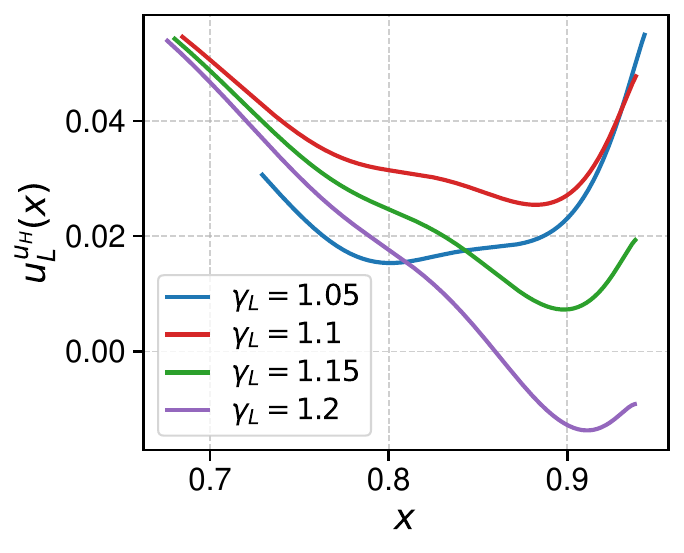}
        \caption{The effect of $\gamma_L$ for $g(x) = x^8$}
        \label{fig:quasiconvex-g8}
    \end{subfigure}
    \caption{Plots of $u_L^{u_H}(x)$ under different cost functions and $\gamma_L$ values}
    \label{fig:quasiconvex-experiments-1}
\end{figure}

Then we fix $\gamma_H = 1$ and $\gamma_L = 1.1$, and consider two different values of $n_H$: $n_H = 3$ and $n_H = 7$ (the case $n_H = 5$ has already been tested in previous experiments). For each $n_H$ value, we test the same four cost functions as the ones used above.. The results are presented in Figure~\ref{fig:quasiconvex-experiments-2}. Again, in all these scenarios, $u_L^{u_H}(x)$ exhibits quasiconvexity, thereby satisfying Assumption~\ref{assump:asym-hybrid-1}.

\begin{figure}[t]
    \centering
    \captionsetup[subfigure]{justification=centering}
    \begin{subfigure}[b]{0.36\linewidth}
        \centering
        \includegraphics[width=\linewidth]{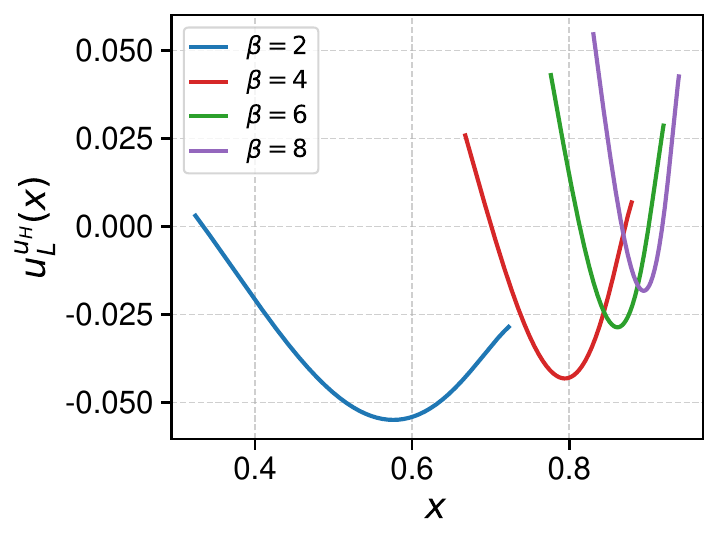}
        \caption{The effect of $g$ for $n_H = 3$}
        \label{fig:quasiconvex-m1}
    \end{subfigure}
    \begin{subfigure}[b]{0.36\linewidth}
        \centering
        \includegraphics[width=\linewidth]{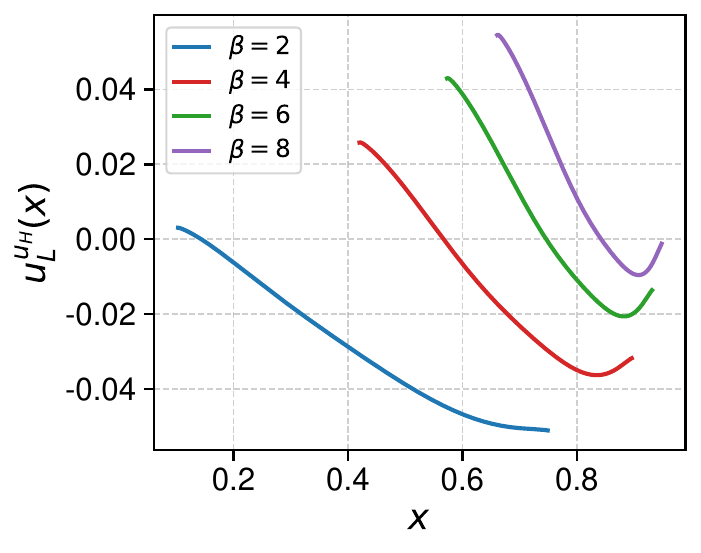}
        \caption{The effect of $g$ for $n_H = 7$}
        \label{fig:quasiconvex-m3}
    \end{subfigure}
    \caption{Plots of $u_L^{u_H}(x)$ under different cost functions and $n_H$ values}
    \label{fig:quasiconvex-experiments-2}
\end{figure}

\subsubsection{Rationale for Assumption~\ref{assump:asym-hybrid-2}} \label{subsubsec:assump-2-rationale}

In this subsection, we first provide sufficient conditions in Proposition~\ref{prop:assumption-2-existence-uniqueness} that guarantee the existence and uniqueness of a no-gap equilibrium. We then conduct numerical experiments to verify that the sufficient condition in Proposition~\ref{prop:assumption-2-existence-uniqueness} holds across a wide range of plausible parameter settings.

To state the following proposition, we first define the Jacobian matrix $J$ as follows:
\begin{equation} \label{eq:asym-hybrid-Jacobian}
    J_{ij} = \frac{\partial W(\bm{F}_{-i}(x), \bm{p})}{\partial F_j}, \quad \forall i, j = 1, 2, \dots, n.
\end{equation}

\begin{proposition} \label{prop:assumption-2-existence-uniqueness}
    Given $x^* \in [\underline{x}_H, \overline{x}_H)$ and the initial condition $\bm{F}(x^*)$, if the Jacobian matrix $J$ defined in equation~\eqref{eq:asym-hybrid-Jacobian} is invertible for any $x \in [x^*, \overline{x}_H]$, then equilibrium of the form described in Assumption~\ref{assump:asym-hybrid-2} exists and is unique.

    Specifically, the invertibility of $J$ is guaranteed under either of the following conditions:
    \begin{itemize}
        \item $n_H = 1$ or $n_L = 1$.
        \item The position biases form a quadratic sequence, i.e., there exists a constant $d$ such that $(p_{i + 1} - p_i) - (p_i - p_{i - 1}) = d$ for all $i = 2, 3, \dots, n - 1$.
    \end{itemize}
\end{proposition}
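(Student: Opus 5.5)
The plan is to recast the no-gap hybrid equilibrium on $[x^*, \overline{x}_H]$ as the solution of an ODE system and then invoke standard existence and uniqueness theory (Picard--Lindelöf, e.g.\ \cite{arnold1992ordinary}). On $[x^*, \overline{x}_H]$ every creator is active, so by Proposition~\ref{prop:not-all-symmetric-equilibrium} the unknowns reduce to the vector $\bm{F}(x)$, with all creators of a type sharing one CDF. The indifference conditions read
\[ x W(\bm{F}_{-i}(x), \bm{p}) - \gamma_i g(x) = u_i, \qquad i = 1, \dots, n, \]
where $u_H, u_L$ are already pinned down by Theorem~\ref{thm:asym-equ} and Proposition~\ref{prop:asym-unique-uH}. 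Differentiating in $x$ gives
\[ x \sum_j J_{ij} F_j'(x) = \gamma_i g'(x) - W(\bm{F}_{-i}(x), \bm{p}), \]
that is, $x J \bm{F}' = \bm{b}(x, \bm{F})$. Wherever $J$ is invertible this becomes the explicit system $\bm{F}' = x^{-1} J^{-1} \bm{b}$. The right-hand side is $C^1$ in $\bm{F}$, because $W$ is polynomial in $\bm{F}$ by \eqref{eq:W-def} and $g$ is $C^1$. It is therefore locally Lipschitz, so the solution starting from the initial condition $\bm{F}(x^*)$ exists and is unique.

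The next step is to show that this ODE solution is actually an equilibrium. Integrating the ODE back recovers the indifference identities, since they hold at $x^*$. We then continue the solution until the first point where a component reaches $1$; this point defines $\overline{x}_H$. Invertibility on the whole interval rules out blow-up, and monotonicity of each $F_i$ (i.e.\ $F_i' \ge 0$) has to be checked from the sign structure. Uniqueness of the equilibrium follows because any equilibrium of the stated form must satisfy the same initial value problem.

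The main obstacle is the second part, proving that $J$ is invertible under the listed conditions. By Lemma~\ref{lem:W-monotone}, $J_{ii} = 0$ and
\[ J_{ij} = W_{[n]\setminus\{i,j\}}(\cdot, \overline{\bm{p}} - \underline{\bm{p}}) > 0 \quad \text{for } i \ne j. \]
After reducing to the two type blocks, $J$ has the form
\[ \begin{pmatrix} a(\mathbf{1}\mathbf{1}^\top - I) & b\,\mathbf{1}\mathbf{1}^\top \\ b'\,\mathbf{1}\mathbf{1}^\top & c(\mathbf{1}\mathbf{1}^\top - I) \end{pmatrix}. \]
Its determinant factors into powers of $a$ and $c$ (both nonzero) times a $2\times 2$ determinant that has to be handled separately.

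If $n_H = 1$ or $n_L = 1$, one block is $1\times1$ with zero diagonal, and the $2\times2$ factor reduces to $-b b' \cdot(\text{positive}) \ne 0$.

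For a quadratic sequence, $\overline{\bm{p}} - \underline{\bm{p}}$ is itself an arithmetic sequence, and the binomial simplification used in Proposition~\ref{prop:assumption-1-arithmetic} makes the $J_{ij}$ affine in the $F_k$. I would try to show that the $2\times2$ factor is $-(n_H-1)(n_L-1)ac + n_H n_L b b'$ with a definite sign. Establishing that sign, most likely through the stochastic-dominance ordering $F_H \le F_L$ from Proposition~\ref{prop:asym-stochastic-dominance}, is the step I expect to be most delicate.
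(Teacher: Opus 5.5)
Your ODE step is the paper's argument: differentiate the indifference conditions to get $J\,\mathrm{d}\bm F/\mathrm{d}x=x^{-1}(\bm g-\bm W)$, then invoke Picard--Lindel\"of. The paper also stops at that point, so the monotonicity and terminal-condition checks you defer are not carried out there either.

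Your determinant factorization via invariant subspaces is a cleaner route than the paper's Schur complement. Note that $b=b'$, since $J_{ij}=W(\bm F_{-\{i,j\}}(x),\overline{\bm p}-\underline{\bm p})$ is symmetric in $i,j$. The factorization then reads $\det J=(-a)^{n_H-1}(-c)^{n_L-1}\bigl[(n_H-1)(n_L-1)ac-n_Hn_Lb^2\bigr]$. This settles the case $n_H=1$ or $n_L=1$ at once, including $n_H=n_L=1$, and it gives the correct threshold $k=b^2/(ac)\neq (n_H-1)(n_L-1)/(n_Hn_L)$. The paper's Schur step instead takes $t=n_L(n_L-1)k$, whereas $\mathbf 1^\top C'^{-1}\mathbf 1$ actually gives $t=n_Lk/(n_L-1)$. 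The discrepancy is harmless for the statement, because $k\ge 1$ clears either threshold.

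The genuine gap is the quadratic case: you never establish the sign of the $2\times 2$ factor, and the tool you point to cannot do it. Stochastic dominance $F_H\le F_L$ can at most tell you how $a$, $b$, $c$ are ordered. An ordering such as $a\ge b\ge c$ puts no lower bound on $b^2/(ac)$; take $a$ large relative to $b$ and $c$.

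What is needed is an exact midpoint identity, and your own affine observation delivers it. With $q_k=p_k-p_{k+1}$ you have $q_k-q_{k+1}=d$, so $J_{ij}=q_{n-1}+d\sum_{m\neq i,j}F_m(x)$. This gives
$a=q_{n-1}+d[(n_H-2)F_H+n_LF_L]$, $b=q_{n-1}+d[(n_H-1)F_H+(n_L-1)F_L]$, and $c=q_{n-1}+d[n_HF_H+(n_L-2)F_L]$, hence $a+c=2b$. Since $a,c>0$, AM--GM gives $b^2\ge ac$. Therefore $n_Hn_Lb^2\ge n_Hn_Lac>(n_H-1)(n_L-1)ac$ and $\det J\neq 0$, with no dominance argument needed.

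This is exactly how the paper closes the case. It derives $2b-(a+c)=-(F_H-F_L)^2\,W(\bm F_{-\{h_1,h_2,l_1,l_2\}}(x),\overline{\bm r}-\underline{\bm r})$ for general $\bm p$ by repeated linearity. It then notes that the third differences $\overline{\bm r}-\underline{\bm r}$ vanish for a quadratic sequence and concludes $k\ge 1$.
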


\begin{proof}
    Consider any $x \in [x^*, \overline{x}_H]$. According to the indifference condition of any type-$H$ creator $h \in \{1, \ldots, n_H\}$ and any type-$L$ creator $l \in \{n_H + 1, \ldots, n\}$, we have
    \begin{align*}
        x W(\bm{F}_{-h}(x), \bm{p}) - \gamma_H g(x) &= u_H, \quad \forall h = 1, \ldots, n_H, \\
        x W(\bm{F}_{-l}(x), \bm{p}) - \gamma_L g(x) &= u_L, \quad \forall l = n_H + 1, \ldots, n,
    \end{align*}
    where $u_H$ and $u_L$ are the equilibrium utilities of type-$H$ and type-$L$ creators, respectively. Differentiating both sides of each equation with respect to $x$ yields
    \begin{align*}
        W(\bm{F}_{-h}(x), \bm{p}) + x \sum_{i = 1}^n \frac{\partial W(\bm{F}_{-h}(x), \bm{p})}{\partial F_i} \frac{\textup{d}F_i}{\textup{d}x} - \gamma_H g'(x) &= 0, \quad \forall h = 1, \ldots, n_H, \\
        W(\bm{F}_{-l}(x), \bm{p}) + x \sum_{i = 1}^n \frac{\partial W(\bm{F}_{-l}(x), \bm{p})}{\partial F_i} \frac{\textup{d}F_i}{\textup{d}x} - \gamma_L g'(x) &= 0, \quad \forall l = n_H + 1, \ldots, n.
    \end{align*}

    Now define $\bm{F} = (F_1, \ldots, F_n)^{\mathrm{T}}$, $\bm{W} = (W(\bm{F}_{-1}(x), \bm{p}), \ldots, W(\bm{F}_{-n}(x), \bm{p}))^{\mathrm{T}}$, 
    $J = (J_{ij})_{n \times n}$ with $J_{ij}$ defined in equation~\eqref{eq:asym-hybrid-Jacobian}, and
    \[ \bm{g} = \bigl(\underbrace{\gamma_H g'(x), \ldots, \gamma_H g'(x)}_{n_H},
           \underbrace{\gamma_L g'(x), \ldots, \gamma_L g'(x)}_{n_L}\bigr)^{\mathrm{T}}. \]
    
    Then the system of equations for all creators can be written in compact matrix form
    \begin{equation} \label{eq:asym-hybrid-indifference-matrix}
        J \cdot \frac{\textup{d}\bm{F}}{\textup{d}x} = \frac{1}{x} \left(\bm{g} - \bm{W}\right).
    \end{equation}
    
    If $J$ is invertible for any $x \in [x^*, \overline{x}_H]$, then we have $\frac{\textup{d}\bm{F}}{\textup{d}x} = \frac{1}{x} J^{-1} \left(\bm{g} - \bm{W}\right)$. According to the Picard-Lindelöf theorem \cite{arnold1992ordinary}, the above ordinary differential equation admits a unique solution given the initial condition $\bm{F}(x^*)$. Consequently, Assumption~\ref{assump:asym-hybrid-2} holds if $J$ is invertible for any $x \in [x^*, \overline{x}_H]$.
    
    Then we prove the sufficient conditions for $J$ to be invertible. By Lemma~\ref{lem:W-monotone}, we know that
    \[ J_{ij} = \begin{cases}
        0, & \text{if } i = j, \\
        W(\bm{F}_{-\{i, j\}}(x), \overline{\bm{p}} - \underline{\bm{p}}) > 0, & \text{if } i \neq j,
    \end{cases} \]
    where $\overline{\bm{p}} = (p_1, \ldots, p_{n - 1})$ and $\underline{\bm{p}} = (p_2, \ldots, p_n)$. Let
    \begin{align*}
        a &= W(\bm{F}_{-\{i, j\}}(x), \overline{\bm{p}} - \underline{\bm{p}}), \quad \forall i, j \in \{1, \ldots, n_H\}, i \neq j, \\
        b &= W(\bm{F}_{-\{i, j\}}(x), \overline{\bm{p}} - \underline{\bm{p}}), \quad \forall i \in \{1, \ldots, n_H\}, j \in \{n_H + 1, \ldots, n\}, \\
        c &= W(\bm{F}_{-\{i, j\}}(x), \overline{\bm{p}} - \underline{\bm{p}}), \quad \forall i, j \in \{n_H + 1, \ldots, n\}, i \neq j.
    \end{align*}
    Then $J$ can be expressed as
    \[ J = \begin{pmatrix}
        A & B \\
        B^{\mathrm{T}} & C
    \end{pmatrix}, \]
    where $A$ is an $n_H \times n_H$ matrix with diagonal entries being $0$ and off-diagonal entries being $a$, $C$ is an $n_L \times n_L$ matrix with diagonal entries being $0$ and off-diagonal entries being $c$, and $B$ is an $n_H \times n_L$ matrix with all entries being $b$. Then the determinant of $J$ can be calculated as
    \[ \det(J) = a^{n_H} b^{n_L} \left(\frac{b}{a}\right)^{n_L} \det(J'), \]
    where
    \[ J' = \begin{pmatrix}
        A' & B' \\
        B'^{\mathrm{T}} & C'
    \end{pmatrix}, \]
    with $A'$ being an $n_H \times n_H$ matrix with diagonal entries being $0$ and off-diagonal entries being $1$, $C'$ being an $n_L \times n_L$ matrix with diagonal entries being $0$ and off-diagonal entries being $ac / b^2$, and $B'$ being an $n_H \times n_L$ matrix with all entries being $1$. Since $a, b, c > 0$, it suffices to determine whether $\det(J') \neq 0$. Let $k = b^2 / ac$, $t = n_L (n_L - 1) k$,  by the property of block matrices, we have
    \begin{align*}
        \det(J') &= |C'| \cdot |A' - B' C'^{-1} B'^{\mathrm{T}}| \\
        &= |C'| \cdot \det\left(\begin{pmatrix}
            0 & 1 & \cdots & 1 \\
            1 & 0 & \cdots & 1 \\
            \vdots & \vdots & \ddots & \vdots \\
            1 & 1 & \cdots & 0
        \end{pmatrix} - \begin{pmatrix}
            t & t & \cdots & t \\
            t & t & \cdots & t \\
            \vdots & \vdots & \ddots & \vdots \\
            t & t & \cdots & t
        \end{pmatrix}\right) \\
        &= |C'| \cdot \det\left(\begin{pmatrix}
            1 - t & 1 - t & \cdots & 1 - t \\
            1 - t & 1 - t & \cdots & 1 - t \\
            \vdots & \vdots & \ddots & \vdots \\
            1 - t & 1 - t & \cdots & 1 - t
        \end{pmatrix} - E\right) \\ 
        &= |C'| \cdot \det\left(\begin{pmatrix}
            1 \\ 1 \\ \vdots \\ 1
        \end{pmatrix} \begin{pmatrix}
            1 - t & 1 - t & \cdots & 1 - t
        \end{pmatrix} - E\right) = (-1)^{n_H} |C'| \cdot (1 - (1 - t) n_H).
    \end{align*}
    Note that $|C'| = (-1)^{n_L + 1}(n_L - 1) (ac / b^2)^{n_L - 1} \neq 0$, thus $\det(J') \neq 0$ if and only if $1 - (1 - t) n_H \neq 0$, i.e., 
    \begin{equation} \label{eq:asym-hybrid-assump-2-proof-1}
        n_L (n_L - 1) k \neq 1 - \frac{1}{n_H}.
    \end{equation}
    Since $k > 0$, then if $n_L = 1$ and $n_H \neq 1$, or $n_H = 1$ and $n_L \neq 1$, we have $\det(J') \neq 0$. If $n_L = n_H = 1$, the discussion in Section~\ref{subsec:ap-asym-n2} shows that the equilibrium is unique and has the structure described in Assumption~\ref{assump:asym-hybrid-2}. Thus, if $n_H = 1$ or $n_L = 1$, the existence and uniqueness of equilibrium are guaranteed.
    
    Assume now that $n_H \geqslant 2$ and $n_L \geqslant 2$, then equation~\eqref{eq:asym-hybrid-assump-2-proof-1} holds if 
    \begin{equation} \label{eq:asym-hybrid-assump-2-proof-2}
        k \neq \frac{1 - \frac{1}{n_H}}{n_L (n_L - 1)} < \frac{1}{2}.
    \end{equation}
    Next, we prove that if the position biases form a quadratic sequence or a convex sequence, then $\det(J') \neq 0$. Let $\bm{q} = \overline{\bm{p}} - \underline{\bm{p}}$, $\bm{r} = \overline{\bm{q}} - \underline{\bm{q}}$. Let $h_1, h_2 \in \{1, \ldots, n_H\}$ and $l_1, l_2 \in \{n_H + 1, \ldots, n\}$ be any two distinct creators, $F_H$ and $F_L$ be the CDFs of the strategies of type-$H$ and type-$L$ creators, respectively. Then, using the linearity of $W(\cdot, \cdot)$ repeatedly, we have
    \begin{equation} \label{eq:asym-hybrid-assump-2-proof-3}
        \begin{aligned}
            a &= W(F_{-\{h_1, h_2, l_1\}}(x), \underline{\bm{q}}) + F_L(x) W(F_{-\{h_1, h_2, l_1, l_2\}}(x), \underline{\bm{r}}) + F_L^2(x) W(F_{-\{h_1, h_2, l_1, l_2\}}(x), \overline{\bm{r}} - \underline{\bm{r}}), \\
            b &= W(F_{-\{h_1, h_2, l_1\}}(x), \underline{\bm{q}}) + F_H(x) W(F_{-\{h_1, h_2, l_1, l_2\}}(x), \underline{\bm{r}}) + F_H(x) F_L(x) W(F_{-\{h_1, h_2, l_1, l_2\}}(x), \overline{\bm{r}} - \underline{\bm{r}}), \\
            &= W(F_{-\{h_1, l_1, l_2\}}(x), \underline{\bm{q}}) + F_L(x) W(F_{-\{h_1, h_2, l_1, l_2\}}(x), \underline{\bm{r}}) + F_H(x) F_L(x) W(F_{-\{h_1, h_2, l_1, l_2\}}(x), \overline{\bm{r}} - \underline{\bm{r}}), \\
            c &= W(F_{-\{h_1, l_1, l_2\}}(x), \underline{\bm{q}}) + F_H(x) W(F_{-\{h_1, h_2, l_1, l_2\}}(x), \underline{\bm{r}}) + F_H^2(x) W(F_{-\{h_1, h_2, l_1, l_2\}}(x), \overline{\bm{r}} - \underline{\bm{r}}).
        \end{aligned}
    \end{equation}
    Then we have
    \begin{equation} \label{eq:asym-hybrid-assump-2-proof-4}
        2b - (a + c) = -(F_H(x) - F_L(x))^2 W(F_{-\{h_1, h_2, l_1, l_2\}}(x), \overline{\bm{r}} - \underline{\bm{r}}).
    \end{equation}
    If the position biases form a quadratic sequence, then $\overline{\bm{r}} - \underline{\bm{r}}$ is a zero vector, hence $2b - (a + c) = 0$, which implies $b \geqslant \sqrt{ac}$, and thus $k = b^2 / ac \geqslant 1$. Therefore, equation~\eqref{eq:asym-hybrid-assump-2-proof-2} holds. 
\end{proof}

Indeed, the proof of the proposition shows that whenever $W(F_{-\{h_1, h_2, l_1, l_2\}}(x), \overline{\bm{r}} - \underline{\bm{r}}) \leqslant 0$, equation~\eqref{eq:asym-hybrid-assump-2-proof-4} implies $2b - (a + c) \geqslant 0$, and hence $k = b^2 / ac \geqslant 1$. Consequently, the Jacobian matrix $J$ is invertible at such a point $x$. When $W(F_{-\{h_1, h_2, l_1, l_2\}}(x), \overline{\bm{r}} - \underline{\bm{r}}) \geqslant 0$, we first simplify the notation, rewrite equation~\eqref{eq:asym-hybrid-assump-2-proof-3} as
\[ a = A + F_L^2 D, \quad b = B_1 + F_H F_L D = B_2 + F_H F_L D, \quad c = C + F_H^2 D, \]
where $D = W(F_{-\{h_1, h_2, l_1, l_2\}}(x), \overline{\bm{r}} - \underline{\bm{r}}) \geqslant 0$, $A, B_1, B_2, C$ denote the first two terms in the expressions of $a, b, c$ in equation~\eqref{eq:asym-hybrid-assump-2-proof-3}, respectively. Next, we consider the case where $A, B_1, B_2, C > 0$ (this holds, for instance, when the position biases form a convex sequence, i.e., for all $i = 2, 3, \dots, n - 1$, we have $(p_{i + 1} - p_i) - (p_i - p_{i - 1}) \geqslant 0$). Note that $A + C = B_1 + B_2$, then we have
\begin{align*}
    \frac{2b^2}{ac} = \frac{(2b)^2}{2ac} &= \frac{(B_1 + B_2 + 2F_H F_L D)^2}{2 (A + F_L^2 D)(C + F_H^2 D)} = \frac{(A + C + 2F_H F_L D)^2}{2 (A + F_L^2 D)(C + F_H^2 D)} \\
    &= \frac{(A + C)^2 + 4 F_H F_L D (A + C) + 4 F_H^2 F_L^2 D^2}{2 AC + 2 F_H^2 A D + 2 F_L^2 C D + 2 F_H^2 F_L^2 D^2}.
\end{align*}

It is straightforward to verify that if $F_L \leqslant 2 F_H$ (i.e., the difference between $F_H$ and $F_L$ is not too large), then the above value should be greater than or equal to $1$, hence $k = b^2 / ac \geqslant 1 / 2$, and consequently the Jacobian matrix $J$ is invertible at this point $x$.

In summary, the preceding discussion indicates that the conditions for $J$ to be invertible are not very restrictive. Next, we will validate the applicability of this proposition under a broader range of realistic parameter settings through numerical experiments. Similar to the experiments conducted in Section~\ref{subsubsec:assump-1-rationale}, we take line~2 of Table~\ref{tab:position-bias-no-ai-overview} as the base position bias vector, and vary the cost parameter $\gamma_L$ together with the cost function $g(x)$. Again, we test four cost functions: $g(x) = x^2, x^4, x^6, x^8$. Unlike the experiments in Section~\ref{subsubsec:assump-1-rationale}, here we choose $\gamma_H$ and $\gamma_L$ such that the hybrid equilibrium exists, which requires the gap between $\gamma_H$ and $\gamma_L$ to be small. Specifically, we fix $n_H = n_L = 5$, $\gamma_H = 2$. For $g(x) = x^2$, we vary $\gamma_L$ from $2.005$ to $2.02$ in steps of $0.005$; for $g(x) = x^4, x^6, x^8$, we vary $\gamma_L$ from $2.02$ to $2.08$ in steps of $0.02$. For each parameter setting, we plot the curve of $k = b^2 / (ac)$ over the hybrid equilibrium interval. The results are presented in Figure~\ref{fig:invertible-experiments-1}. As the figure shows, we can see that under all tested parameter settings, $k$ is always greater than $0.5$ (even all greater than or equal to $1$), thus ensuring the invertibility of the Jacobian matrix $J$.

\begin{figure}[t]
    \centering
    \captionsetup[subfigure]{justification=centering}
    \begin{subfigure}[b]{0.24\linewidth}
        \centering
        \includegraphics[width=\linewidth]{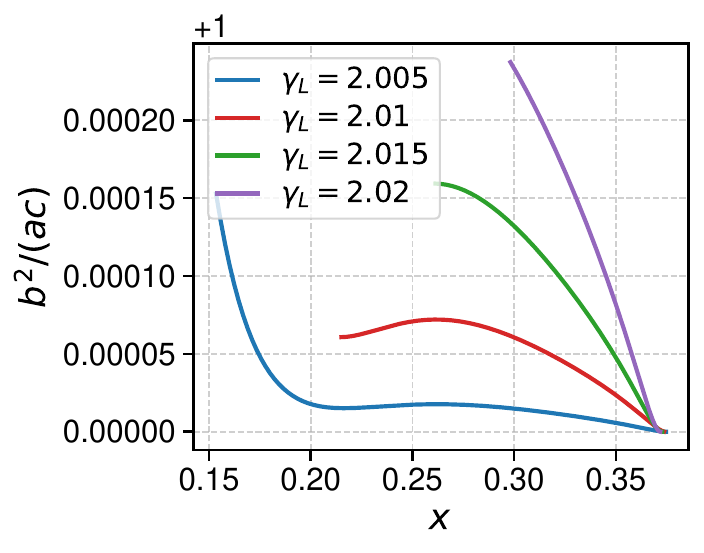}
        \caption{The effect of $\gamma_L$ for $g(x) = x^2$}
        \label{fig:invertible-g2}
    \end{subfigure}
    \begin{subfigure}[b]{0.24\linewidth}
        \centering
        \includegraphics[width=\linewidth]{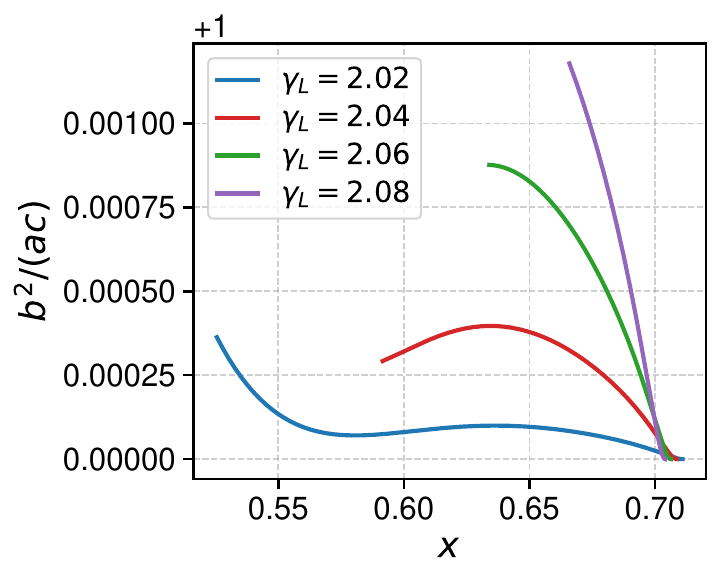}
        \caption{The effect of $\gamma_L$ for $g(x) = x^4$}
        \label{fig:invertible-g4}
    \end{subfigure}
    \begin{subfigure}[b]{0.24\linewidth}
        \centering
        \includegraphics[width=\linewidth]{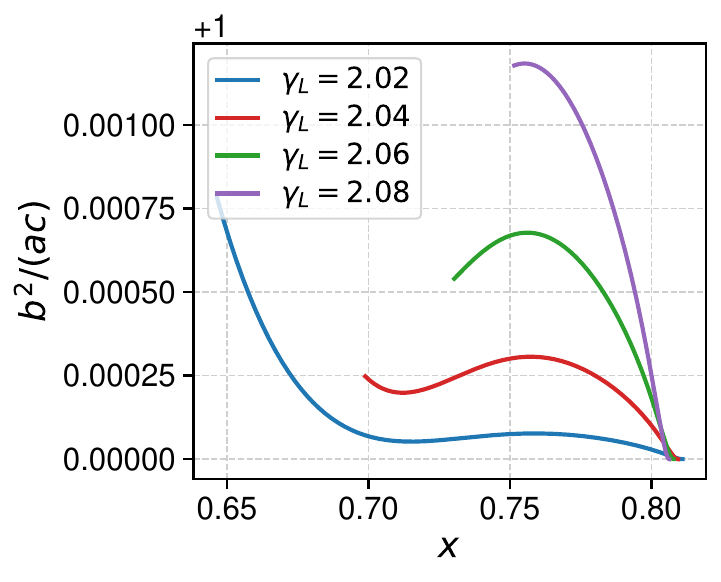}
        \caption{The effect of $\gamma_L$ for $g(x) = x^6$}
        \label{fig:invertible-g6}
    \end{subfigure}
    \begin{subfigure}[b]{0.24\linewidth}
        \centering
        \includegraphics[width=\linewidth]{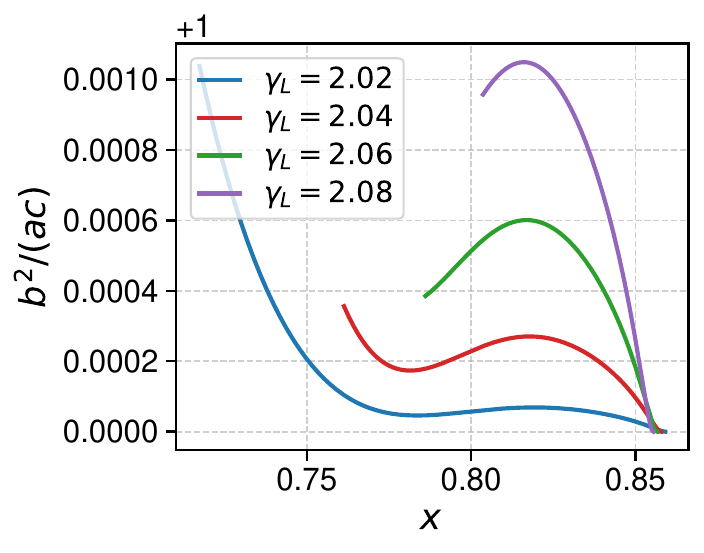}
        \caption{The effect of $\gamma_L$ for $g(x) = x^8$}
        \label{fig:invertible-g8}
    \end{subfigure}
    \caption{Plots of $b^2 / (ac)$ under different cost functions and $\gamma_L$ values}
    \label{fig:invertible-experiments-1}
\end{figure}

Next, we examine two different values of $n_H$: $n_H = 3$ and $n_H = 7$. For $n_H = 3$, we fix $\gamma_H = 2$ and $\gamma_L = 2.02$; for $n_H = 7$, we fix $\gamma_H = 2$ and choose $\gamma_L = 2.001$ to garentee the existence of hybrid equilibrium. For each $n_H$ value, we test the same four cost functions as those used above, except that for $n_H = 7$ we do not plot the case $g(x) = x^2$ because a hybrid equilibrium would require an even smaller $\gamma_L$; at such close cost parameters the equilibrium is almost symmetric, yielding $k \approx 1$, so verifying $k > 1 / 2$ is unnecessary. The results are presented in Figure~\ref{fig:invertible-experiments-2}. Again, in all these scenarios, $k$ is always greater than $0.5$, thus ensuring the invertibility of the Jacobian matrix $J$.


\begin{figure}[t]
    \centering
    \captionsetup[subfigure]{justification=centering}
    \begin{subfigure}[b]{0.36\linewidth}
        \centering
        \includegraphics[width=\linewidth]{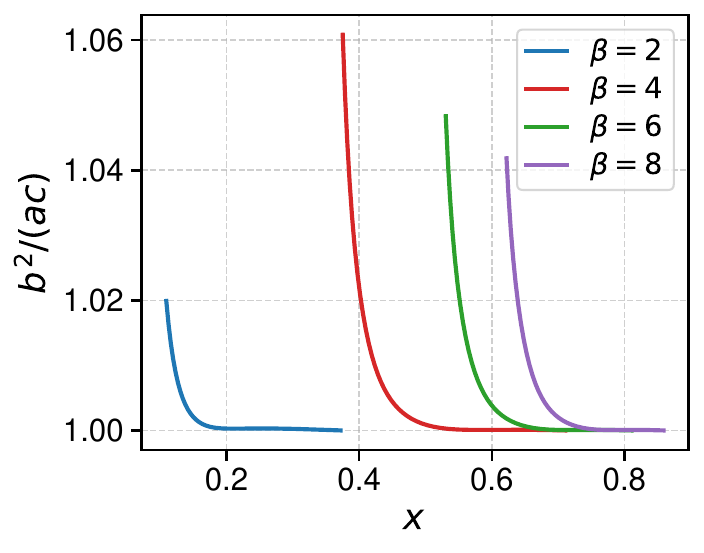}
        \caption{The effect of $g$ for $n_H = 3$}
        \label{fig:invertible-m1}
    \end{subfigure}
    \begin{subfigure}[b]{0.36\linewidth}
        \centering
        \includegraphics[width=\linewidth]{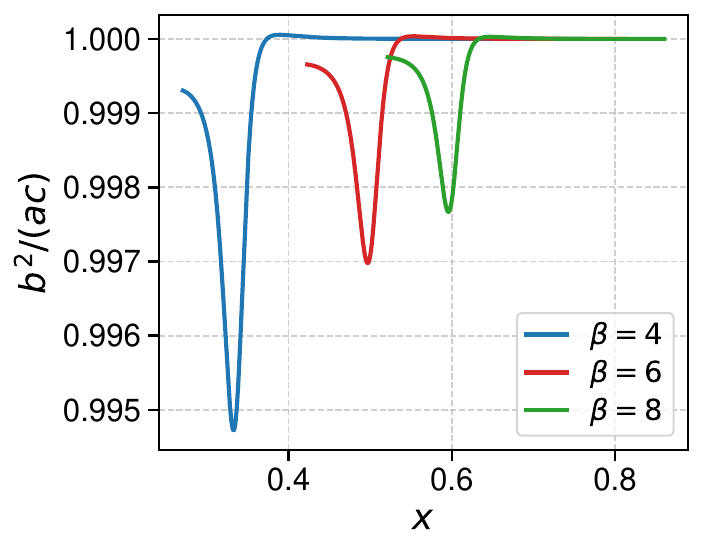}
        \caption{The effect of $g$ for $n_H = 7$}
        \label{fig:invertible-m3}
    \end{subfigure}
    \caption{Plots of $b^2 / (ac)$ under different cost functions and $n_H$ values}
    \label{fig:invertible-experiments-2}
\end{figure}

\subsubsection{Algorithm for Computing the Equilibrium for Binary-type Setting} \label{subsubsec:binary-type-algorithm}

We first present algorithms for computing the equilibrium utility of type-$H$ creators via bisection search in Algorithm~\ref{alg:uH-computation}. The key idea of the algorithm has been discussed in Section~\ref{subsec:asym-two}. Note that the algorithm returns two values: the equilibrium utility of type-$H$ creators $u_H$ and the equilibrium type, where the support structure of equilibrium type $0$ aligns with that in Figure~\ref{fig:2-type-separated-support-1}, type $1$ aligns with that in Figure~\ref{fig:2-type-separated-support-2}, and type $2$ aligns with those in Figure~\ref{fig:2-type-hybrid-support}.

For the special case where $n_H = 1$, the idea for this situation is similar to that of Section~\ref{subsec:ap-asym-n2}, but because $n_L$ can be larger than 1, an extra step is needed: we must check whether the supremum of the support of the low-cost creators’ equilibrium strategy does not exceed the pure strategy equilibrium effort level of the high-cost creator.

We also need to explain how the pseudo strategy $F_H^u$ is computed in Algorithm~\ref{alg:uH-computation}(lines 3, 10, 19). The pseudo strategy $F_H^u$ can be computed by solving the equation~\eqref{eq:pseudo-strategy}. Rewrite that equation as
\[ \sum_{i = 1}^{n_H} \binom{n_H - 1}{i - 1} [F_H^u(x)]^{n_H - i} [1 - F_H^u(x)]^{i - 1} p_i = \frac{u + \gamma_H g(x)}{x}. \]

According to Lemma~\ref{lem:KJ-monotone}, the left-hand side of the equation is strictly increasing in $F_H^u(x)$. Hence, for any fixed $x$, we can solve for $F_H^u(x)$ using a bisection search. In practical implementation, standard numerical libraries often provide root-finding routines that can be used directly to solve this equation. Because the root is unique, any well-chosen initial guess typically allows such a routine to converge efficiently. A useful heuristic is to use the value of $F_H^u$ obtained at the previous point $x - \Delta x$ as the initial guess for the next point $x$. Since $F_H^u$ usually varies smoothly with $x$, this strategy significantly accelerates the root-finding process.

The underlying idea discussed above can be applied directly to compute all CDFs in Algorithm~\ref{alg:binary-type-equilibrium-computation}; hence the algorithms for the remaining CDFs are not described in detail.

\begin{algorithm}[htbp]
    \caption{Computation of the Equilibrium Utility of Type-$H$ Creators}
    \label{alg:uH-computation}
    \begin{algorithmic}[1]
        \REQUIRE Parameters $n_H, \bm{p}, \gamma_H, g(\cdot)$, $\varepsilon$
        \ENSURE Equilibrium utility of type-$H$ creators $u_H$
        \STATE Compute $\underline{x}_L = \arg\max_x x p_n - \gamma_L g(x)$, $u_L = \underline{x}_L p_n - \gamma_L g(\underline{x}_L)$
        \IF{$n_H = 1$}
            \STATE Compute $x_H = \arg\max_x x p_{n_H} - \gamma_H g(x)$, $u_H = x_H p_{n_H} - \gamma_H g(x_H)$
            \STATE Compute $u_L^{u_H} = x_H p_1 - \gamma_L g(x_H)$, and $\overline{x}_L$ as the largest root of $u_L = x p_2 - \gamma_L g(x)$
            \IF{$\overline{x}_L \leqslant x_H$ \textbf{and} $u_L^{u_H} \leqslant u_L$}
                \STATE $equilibrium\_type \leftarrow 0$
                \RETURN $u_H, equilibrium\_type$
            \ELSE
                \STATE $u_H = \overline{x}_L p_1 - \gamma_H g(\overline{x}_L)$
                \STATE $equilibrium\_type \leftarrow 2$
                \RETURN $u_H, equilibrium\_type$
            \ENDIF
        \ENDIF
        \STATE Compute $\underline{u}_H = \underline{x}_L p_n - \gamma_H g(\underline{x}_L)$, $\overline{u}_H = \max_x x p_{n_H} - \gamma_H g(x)$
        \STATE Compute the pseudo strategy $F_H^{\overline{u}_H}$
        \STATE Compute $u_L^{\overline{u}_H}(x)$ using equation~\eqref{eq:asym-uLu}, find $u_L^{\overline{u}_H} = \max_x u_L^{\overline{u}_H}(x)$
        \IF{$u_L^{\overline{u}_H} < u_L$}
            \STATE $equilibrium\_type \leftarrow 0$
            \RETURN $\overline{u}_H, equilibrium\_type$
        \ENDIF
        \STATE $u_H = (\underline{u}_H + \overline{u}_H) / 2$
        \STATE Compute the pseudo strategy $F_H^{u_H}$ and its support $[\underline{x}_H^{u_H}, \overline{x}_H^{u_H}]$
        \STATE Compute $u_L^{u_H}(x)$ using equation~\eqref{eq:asym-uLu}, find $x_m = \arg\max_x u_L^{u_H}(x)$ and $u_L^{u_H} = \max_x u_L^{u_H}(x)$
        \WHILE{$|u_L^{u_H} - u_L| > \varepsilon$}
            \IF{$u_L^{u_H} < u_L$}
                \STATE $\underline{u}_H = u_H$
            \ELSE
                \STATE $\overline{u}_H = u_H$
            \ENDIF
            \STATE $u_H = (\underline{u}_H + \overline{u}_H) / 2$
            \STATE Compute the pseudo strategy $F_H^{u_H}$ and its support $[\underline{x}_H^{u_H}, \overline{x}_H^{u_H}]$
            \STATE Compute $u_L^{u_H}(x)$ using equation~\eqref{eq:asym-uLu}, find $x_m = \arg\max_x u_L^{u_H}(x)$ and $u_L^{u_H} = \max_x u_L^{u_H}(x)$
        \ENDWHILE
        \IF{$x_m = \underline{x}_H^{u_H}$}
            \STATE $equilibrium\_type \leftarrow 1$
        \ELSE
            \STATE $equilibrium\_type \leftarrow 2$
        \ENDIF
        \RETURN $u_H, equilibrium\_type$
    \end{algorithmic}
\end{algorithm}

We then present the complete algorithm for computing the equilibrium strategies of both types of creators in Algorithm~\ref{alg:binary-type-equilibrium-computation}. Note that the algorithm assumes $n_H \geqslant 2$ and $n_L \geqslant 2$. When the number of creators of one type is equal to $1$, the algorithm requires minor adjustments; we will explain these modifications after describing the main procedure. 

In Algorithm~\ref{alg:binary-type-equilibrium-computation}, lines 3-8 handle the cases in which the equilibrium support structure corresponds to type $0$ or type $1$, i.e., the separated equilibrium depicted in Figures~\ref{fig:2-type-separated-support}. In these cases, the equilibrium strategies can be computed directly since $F_H$ and $F_L$ do not overlap on any interval.

Then because the support structure of Figure~\ref{fig:2-type-hybrid-support-2} occurs far more frequently than that of Figure~\ref{fig:2-type-hybrid-support-1}, we first address the former in lines 9-14, if the computed strategies satisfies the monotonicity and indifference conditions, we return them directly (lines 15-17). Otherwise, we proceed to the latter case in lines 18-21. The basic idea of the algorithm is to start from $\underline{x}_L$, then iteratively use the indifference condition to compute the values of $F_L$ and $F_H$ based on the support structure, continuing until the upper bound $\overline{x}_L = \overline{x}_H$ is reached. Several details require clarification:
\begin{enumerate}
    \item The explicit forms of $W(\bm{F}_{-h}(x), \bm{p})$ and $W(\bm{F}_{-l}(x), \bm{p})$ are as follows:
    \begin{align*}
        W(\bm{F}_{-h}(x), \bm{p}) &= \sum_{k = 0}^{n - 1} p_{k + 1} \sum_{j = \max\{0, k - n_L\}}^{\min\{k, n_H - 1\}} \binom{n_H - 1}{j} (1 - F_H(x))^j F_H(x)^{n_H - 1 - j} \\ 
        &\quad\quad\quad\quad\quad\quad\quad\quad\quad\quad\cdot \binom{n_L}{k - j} (1 - F_L(x))^{k - j} F_L(x)^{n_L - (k - j)},
    \end{align*}
    \begin{align*}
        W(\bm{F}_{-l}(x), \bm{p}) &= \sum_{k = 0}^{n - 1} p_{k + 1} \sum_{j = \max\{0, k - (n_L - 1)\}}^{\min\{k, n_H\}} \binom{n_H}{j} (1 - F_H(x))^j F_H(x)^{n_H - j} \\
        &\quad\quad\quad\quad\quad\quad\quad\quad\quad\quad\cdot \binom{n_L - 1}{k - j} (1 - F_L(x))^{k - j} F_L(x)^{n_L - 1 - (k - j)}.
    \end{align*}
    \item In line 11 it suffices to start from $\underline{x}_L$ and increase $x$  until a value $\underline{x}_H$ satisfying the equation is found. This is because on the interval $[\underline{x}_L, \underline{x}_H]$, $F_L$ keeps type-$L$ creators indifferent. For type-$H$ creators, the reward includes an extra term involving $F_L$, while the cost coefficient satisfies $\gamma_H < \gamma_L$. Hence, the utility of type-$H$ creators is strictly increasing in $x$ over this interval, guaranteeing a unique $\underline{x}_H$ that satisfies the equation in line 11.
    \item Solving the system of equations for the CDFs in line 13 is implemented by standard numerical libraries. To accelerate convergence, we employ the same strategy as used to compute $F_H^u$: the solution obtained at the previous point is used as the initial guess for the next point.
    \item In line 18, denote $v(x) := x W(\bm{F}_{-h}(x^{**}), \bm{p}) - \gamma_H g(x)$. Clearly, $\max_x v(x)$ increases monotonically with $x^{**}$. Therefore, we can start from $\underline{x}_L$ and gradually increase $x^{**}$ until the system of equations in line 18 is satisfied.
    \item If $n_L = 1$, for the separated equilibrium cases (equilibrium type $0$ or $1$), lines 5-6 are not needed, and $F_L$ in line 7 is a degenerate distribution that places all mass at $\underline{x}_L$. For hybrid equilibrium cases (equilibrium type $2$), we first combine lines 10-11 to compute the point mass of $F_L$ at $\underline{x}_L$. Since the support structure aligns with Figure~\ref{fig:2-type-hybrid-support-1}, we directly proceed to line 18 after line 11. If $n_H = 1$, the analysis in Section~\ref{subsec:ap-asym-n2} implies that a single type-$H$ creator cannot take the mixed strategy alone. Consequently, line 12, which computes the lower bound of the support of hybrid region, becomes unnecessary. Instead, before line 19, we should compute the point mass of the type-$H$ creator at $\underline{x}_H$ to ensure that $u_L = \underline{x}_H W(\bm{F}_{-l}(\underline{x}_H), \bm{p}) - \gamma_L g(\underline{x}_H) \ (l \in \{n_H + 1, \ldots, n\})$.
\end{enumerate}

\begin{algorithm}[htbp]
    \caption{Computation of the Equilibrium for Binary-type Setting}
    \label{alg:binary-type-equilibrium-computation}
    \begin{algorithmic}[1]
        \REQUIRE Parameters $n_H \geqslant 2, n_L \geqslant 2, \bm{p}, \gamma_H, \gamma_L, g(\cdot)$
        \ENSURE Equilibrium strategies $F_H, F_L$
        \STATE Compute $\underline{x}_L = \arg\max_x x p_n - \gamma_L g(x)$, $u_L = \underline{x}_L p_n - \gamma_L g(\underline{x}_L)$
        \STATE Use Algorithm~\ref{alg:uH-computation} to compute $u_H$ and $equilibrium\_type$
        \IF{$equilibrium\_type = 0$ or $1$}
            \STATE Compute the pseudo strategy $F_H^{u_H}$ and its support $[\underline{x}_H^{u_H}, \overline{x}_H^{u_H}]$
            \STATE Compute $F_L$ using equation~\eqref{eq:asym-FL}
            \STATE Compute $\overline{x}_L$ as the largest root of $u_L = x p_{n_H + 1} - \gamma_L g(x)$
            \RETURN $F_H^{u_H}, F_L$
        \ENDIF
        \STATE $\overline{x}_H, \overline{x}_L \leftarrow \overline{x}_H^{u_H}$
        \STATE Compute $F_L$ using equation~\eqref{eq:asym-FL}
        \STATE Find $\underline{x}_H$ such that $u_H = \underline{x}_H W(\bm{F}_{-h}(\underline{x}_H), \bm{p}) - \gamma_H g(\underline{x}_H)$ ($h \in \{1, \ldots, n_H\}$)
        \STATE Compute $F_H$ from $\underline{x}_H$ using $u_H = x W(\bm{F}_{-h}(x), \bm{p}) - \gamma_H g(x)$ until $x^*$, where $x^*$ satisfies $u_L = x^* W(\bm{F}_{-l}(x^*), \bm{p}) - \gamma_L g(x^*) \ (l \in \{n_H + 1, \ldots, n\})$
        \STATE Compute $F_H$ and $F_L$ on $[x^*, \overline{x}_H]$ using $\begin{cases}
            u_H = x W(\bm{F}_{-h}(x), \bm{p}) - \gamma_H g(x), \\
            u_L = x W(\bm{F}_{-l}(x), \bm{p}) - \gamma_L g(x).
        \end{cases}$
        \STATE $F_L = \begin{cases}
            0, & x \leqslant \underline{x}_L, \\
            \text{result from line 10}, & \underline{x}_L < x \leqslant \underline{x}_H, \\
            F_L(\underline{x}_H), & \underline{x}_H < x \leqslant x^*, \\
            \text{result from line 13}, & x^* < x \leqslant \overline{x}_L, \\
            1, & x > \overline{x}_L,
        \end{cases} \ F_H = \begin{cases}
            0, & x \leqslant \underline{x}_H, \\
            \text{result from line 12}, & \underline{x}_H < x \leqslant x^*, \\
            \text{result from line 13}, & x^* < x \leqslant \overline{x}_H, \\
            1, & x > \overline{x}_H.
        \end{cases}$
        \IF {$(F_H, F_L)$ satisfy monotonicity and indifference conditions}
            \RETURN $F_H, F_L$
        \ENDIF
        \STATE Find $x^{**}$ satisfying $\begin{cases}
            \underline{x}_H = \arg\max_x x W(\bm{F}_{-h}(x^{**}), \bm{p}) - \gamma_H g(x), \\
            u_H = \underline{x}_H W(\bm{F}_{-h}(x^{**}), \bm{p}) - \gamma_H g(\underline{x}_H).
        \end{cases}$
        \STATE Repeat lines 12-13 with $\underline{x}_H$ computed in line 18
        \STATE $F_L = \begin{cases}
            0, & x \leqslant \underline{x}_L, \\
            \text{result from line 10}, & \underline{x}_L < x \leqslant x^{**}, \\
            F_L(x^{**}), & x^{**} < x \leqslant x^*, \\
            \text{result from line 19}, & x^* < x \leqslant \overline{x}_L, \\
            1, & x > \overline{x}_L,
        \end{cases} \ F_H = \begin{cases}
            0, & x \leqslant \underline{x}_H, \\
            \text{result from line 19}, & \underline{x}_H < x \leqslant x^*, \\
            \text{result from line 19}, & x^* < x \leqslant \overline{x}_H, \\
            1, & x > \overline{x}_H.
        \end{cases}$
        \RETURN $F_H, F_L$
    \end{algorithmic}
\end{algorithm}
\section{Omitted Proofs in Section~\ref{subsec:asym-mechanism}} \label{sec:ap-asym-proofs}

\begin{proofof}{Proposition~\ref{prop:asym-separated-compensation-md}}
    In a separated equilibrium, type-$H$ and type-$L$ creators choose their strategies from disjoint supports. Consequently, in a separated equilibrium with a compensation vector $\bm{c}$, the equilibrium strategies $F_H$ and $F_L$ must satisfy the following conditions:
    \begin{align*}
        & x \left(\sum_{i = 1}^{n_H} \binom{n_H - 1}{i - 1} [F_H(x)]^{n_H - i} [1 - F_H(x)]^{i - 1} (p_i + c_i)\right) - \gamma_H g(x) = u_H, \quad \forall x \in [\underline{x}_H, \overline{x}_H], \\
        & x \left(\sum_{i = 1}^{n_L} \binom{n_L - 1}{i - 1} [F_L(x)]^{n_L - i} [1 - F_L(x)]^{i - 1} (p_{n_H + i} + c_{n_H + i})\right) - \gamma_L g(x) = u_L, \quad \forall x \in [\underline{x}_L, \overline{x}_L],
    \end{align*}
    where $u_H$ and $u_L$ are the equilibrium utilities of type-$H$ and type-$L$ creators, respectively, and $[\underline{x}_H, \overline{x}_H]$ and $[\underline{x}_L, \overline{x}_L]$ denote the supports of their equilibrium strategies. 
    
    Let
    \begin{equation} \label{eq:KHL-def}
        K_H(x) = \frac{u_H + \gamma_H g(x)}{x}, \quad K_L(x) = \frac{u_L + \gamma_L g(x)}{x}.
    \end{equation}
    Similar to the proof of Lemma~\ref{lem:K-monotone}, we can show that $K_H(x)$ and $K_L(x)$ are both strictly increasing on their respective domains. Denote $\bm{p}_H = (p_1, p_2, \dots, p_{n_H})$ and $\bm{p}_L = (p_{n_H + 1}, p_{n_H + 2}, \dots, p_n)$, and denote $\bm{c}_H$ and $\bm{c}_L$ similarly. Then we can define
    \begin{equation} \label{JHL-def}
        \begin{aligned}
            x = K_H^{-1}\left(\sum_{i = 1}^{n_H} \binom{n_H - 1}{i - 1} [F_H(x)]^{n_H - i} [1 - F_H(x)]^{i - 1} (p_i + c_i)\right) &:= J_H(F_H(x), \bm{p}_H + \bm{c}_H), \\
            x = K_L^{-1}\left(\sum_{i = 1}^{n_L} \binom{n_L - 1}{i - 1} [F_L(x)]^{n_L - i} [1 - F_L(x)]^{i - 1} (p_{n_H + i} + c_{n_H + i})\right) &:= J_L(F_L(x), \bm{p}_L + \bm{c}_L).
        \end{aligned}
    \end{equation}

    Based on Lemma~\ref{lem:compensation-md-reform}, we can reformulate the Problem~\eqref{eq:mechanism-design} as follows:
    \begin{align*}
        \max_{\bm{c}} \ & n_H \int_{0}^{1} K_H(J_H(y, \alpha\bm{p}_H - \bm{c}_H)) J_H(y, \bm{p}_H + \bm{c}_H) \, \textup{d}y \\
        &+ n_L \int_{0}^{1} K_L(J_L(y, \alpha\bm{p}_L - \bm{c}_L)) J_L(y, \bm{p}_L + \bm{c}_L) \, \textup{d}y \\
        &+ h\left(\frac{n_H}{n} \int_{0}^{1} J_H(y, \bm{p}_H + \bm{c}_H) \, \textup{d}y + \frac{n_L}{n} \int_{0}^{1} J_L(y, \bm{p}_L + \bm{c}_L) \, \textup{d}y\right) \\
            \textup{s.t.} \quad & c_1 \geqslant c_2 \geqslant \dots \geqslant c_n \geqslant 0.
    \end{align*}

    Analogous to the proof of Theorem~\ref{thm:compensation-md-sym}, we can show that the symmetric form of the objective function is concave with respect to $(c_1, \dots, c_{n_H - 1})$ and $(c_{n_H + 1}, \dots, c_{n - 1})$. Therefore, the optimal compensation mechanism must satisfy $c_1 = \cdots = c_{n_H - 1}$ and $c_{n_H + 1} = \cdots = c_{n - 1}$.
\end{proofof}

\begin{proposition} \label{prop:asym-optimal-ch-cl}
    Under the assumptions of Theorem~\ref{prop:asym-separated-compensation-md}, if $n_H = n_L$, the position biases satisfy
    \[ \frac{p_1}{p_{n_H} + 1} = \frac{p_2}{p_{n_H} + 2} = \cdots = \frac{p_{n_H}}{p_n} := r, \]
    (for instance, this holds when the position biases form an arithmetic sequence), $h(x)$ is a linear function, and the support structure of the separated equilibrium corresponds to Figure~\ref{fig:2-type-separated-support}, then the optimal compensation mechanism must satisfy $c_H = r c_L > c_L$.
\end{proposition}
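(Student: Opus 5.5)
The plan is a scaling argument. With the equilibrium separated before and after the mechanism, and $h$ linear, the reformulated objective in the proof of Proposition~\ref{prop:asym-separated-compensation-md} splits additively into a type-$H$ block and a type-$L$ block. The type-$H$ block depends only on $(\bm{p}_H,\bm{c}_H,\gamma_H)$ and the type-$L$ block only on $(\bm{p}_L,\bm{c}_L,\gamma_L)$. This is because $h(\mu)=a\mu+b$ turns the AI-Overview term into $\alpha p_0 a\bigl(\tfrac{n_H}{n}\mathbb{E}[x_H]+\tfrac{n_L}{n}\mathbb{E}[x_L]\bigr)$ plus a constant. By Proposition~\ref{prop:asym-separated-compensation-md} and the parametrization \eqref{eq:asym-optimal-compensation-vector}, each block is then a one-dimensional problem: in $c_H$ for the $H$ block (with the last $H$ slot at $c_L$, as forced by monotonicity) and in $c_L$ for the $L$ block (with last slot $0$). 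I read the hypothesis as $p_i = r\,p_{n_H+i}$ for $i\in[n_H]$.

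The key step is two invariances of the symmetric equilibrium of Section~\ref{sec:symmetric} under $g(x)=x^\beta$.
\begin{itemize}
\item Dividing a creator's utility by $\lambda>0$ shows that the game with biases $\lambda(\bm{p}+\bm{c})$ and cost $\gamma$ has the same equilibrium CDF as the game with biases $\bm{p}+\bm{c}$ and cost $\gamma/\lambda$.
\item Substituting $x\mapsto sx$ in the indifference condition \eqref{eq:indifference-sym} shows that changing $\gamma$ to $\gamma'$ rescales every equilibrium effort by the constant $s=(\gamma/\gamma')^{1/(\beta-1)}$. This is presumably the content of the lemma cited as Lemma~\ref{lem:asym-optimal-ch-cl-2}.
\end{itemize}
Apply these to the $H$ block with $\bm{p}_H=r\bm{p}_L$ and compensation $c_H=r\tilde c$. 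Its equilibrium efforts equal $s$ times those of the $L$ block run with compensation $\tilde c$, for an explicit $s$ depending on $r,\gamma_H,\gamma_L,\beta$. Hence $\sum_i\mathbb{E}[x_{(i)}](\alpha p_i-c_i)$ over $H$ equals $rs$ times the corresponding $L$ quantity evaluated at $\tilde c$. If the $H$-block objective, as a function of $\tilde c$, is a positive multiple of the $L$-block objective plus a constant, the two blocks share a maximizer. That gives $c_H^*/r=c_L^*$, i.e.\ $c_H=rc_L$. Since $r>1$ (because $p_1>p_{n_H+1}$), this yields $c_H>c_L$ as soon as $c_L^*>0$. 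I would obtain $c_L^*>0$ from the derivative formula \eqref{eq:W-derivative-c_i} at $c=0$, or else state the result for the interior case.

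The main obstacle is making the proportionality exact. The AI-Overview term scales like $s$, while the main term scales like $rs$, so a naive comparison leaves a mismatch factor $r$ on the $h$-contribution. My first attempt would be to track exactly where $\alpha$ and $p_0$ enter the reformulation of Lemma~\ref{lem:compensation-md-reform} in the separated case, and check whether the $h$-term's first-order contribution in $\tilde c$ is matched once the first-order conditions are written with the concavity from Theorem~\ref{thm:compensation-md-sym}. If it is not matched, I would refine the conclusion to the regime where the $h$-term is negligible or absorbed.

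Two coupling issues also need care: the boundary slot $c_{n_H}=c_L$ links the two blocks, and the support structure must stay as in Figure~\ref{fig:2-type-separated-support}. For both I would use the separated-case hypothesis together with the fact that $c_{n_H}$ enters the $H$ block only through its bottom-rank term, and check that this term respects the same scaling.
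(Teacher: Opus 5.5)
Your route is the paper's route. Its proof uses exactly your two invariances: Lemma~\ref{lem:asym-optimal-ch-cl-1} for $(\bm p,\gamma)\mapsto(s\bm p,s\gamma)$, and Lemma~\ref{lem:asym-optimal-ch-cl-2} for rescaling $\gamma$ under $g(x)=x^\beta$. It then declares $\bm c_L=\bm c_H/r$ optimal for the $L$ block. However, the step you flag as the main obstacle is a genuine gap, and it cannot be closed under the stated hypotheses.

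Take the cleanest case: compensation $(r\tilde c,\dots,r\tilde c,0)$ on the $H$ block and $(\tilde c,\dots,\tilde c,0)$ on the $L$ block. Write $h(\mu)=a\mu+b$. Let $\Phi_H,\Phi_L$ be the block objectives, each including its share $\alpha a p_0\frac{n_H}{n}\mathbb{E}[x_H]$, respectively $\alpha a p_0\frac{n_L}{n}\mathbb{E}[x_L]$, of the AI-Overview term. With $s=(r\gamma_L/\gamma_H)^{1/(\beta-1)}$, your invariances give exactly
\[ \Phi_H(r\tilde c)=rs\,\Phi_L(\tilde c)-(r-1)s\,\alpha a p_0\tfrac{n_L}{n}\,\mathbb{E}[x_L](\tilde c). \]
Here $\mathbb{E}[x_L](\tilde c)=\int_0^1 J(y,\bm p_L+\tilde{\bm c})\,\textup{d}y$ is strictly increasing in $\tilde c$. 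Indeed, each $\partial J/\partial c_i=b_i(y)/K'(J)>0$, and $u_L$ does not move because the bottom slot is fixed.

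Suppose $c_L^*>0$. Then the derivative of $\tilde c\mapsto\Phi_H(r\tilde c)$ at $c_L^*$ equals $-(r-1)s\,\alpha a p_0\frac{n_L}{n}\frac{\textup{d}}{\textup{d}\tilde c}\mathbb{E}[x_L](c_L^*)<0$. So $rc_L^*$ is not even a stationary point of the $H$ problem. With the concavity from the proof of Theorem~\ref{thm:compensation-md-sym}, this gives $c_H^*<rc_L^*$. If instead $c_L^*=0$, the same computation forces $c_H^*=0=c_L^*$. So already in this case, whenever $ap_0>0$, the conclusion $c_H=rc_L>c_L$ is false. The standing assumptions force $ap_0>0$, since $h$ is strictly increasing and $p_0>0$. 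No first-order bookkeeping will make the $h$-term match.

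The coupling you mention is a second, independent failure. Under \eqref{eq:asym-optimal-compensation-vector}, the $H$ block's bottom bias is $p_{n_H}+c_L=rp_n+c_L\neq r(p_n+0)$. That bottom bias pins $u_H$, and with it the whole $H$ distribution (part~3 of Proposition~\ref{prop:symmetric-equilibrium-properties}). So the $H$ equilibrium is not a rescaling of the $L$ one, and the two one-dimensional problems do not separate. Likewise, in Case~2 of Proposition~\ref{prop:asym-separated-equilibrium} ($\overline{x}_L=\underline{x}_H$), $u_H$ is fixed by the type-$L$ deviation constraint, so the blocks are not decoupled at all.

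The paper's proof passes over all of this in two ways:
\begin{itemize}
\item It writes $W(\bm p,\gamma)=W(s\bm p,s\gamma)$ for the objective, although only the equilibrium is invariant: $\sum_i\mathbb{E}[x_{(i)}]p_i$ scales by $s$, while $h(\mu)p_0$ does not.
\item It takes $\bm c_L=\bm c_H/r$, which is incompatible with \eqref{eq:asym-optimal-compensation-vector}.
\end{itemize}

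Your fallback is the right repair, but it changes the statement. You would need to:
\begin{itemize}
\item assume the AI-Overview term does not respond to effort ($ap_0=0$; since $h$ is strictly increasing, this means $p_0=0$);
\item decouple the blocks by setting $c_{n_H}=0$, i.e.\ drop monotonicity across the block boundary;
\item restrict to Case~1 ($\overline{x}_L<\underline{x}_H$);
\item assume $c_L^*>0$ outright. It does not follow from the derivative at $\bm c=\bm 0$ in general; the experiments in Table~\ref{tab:optimal-cHcL} already produce $c_L^*=0$.
\end{itemize}
Under these hypotheses your computation gives $\Phi_H(r\tilde c)=rs\,\Phi_L(\tilde c)$ exactly. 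Hence $c_H^*=rc_L^*>c_L^*$, because $r=p_1/p_{n_H+1}>1$.
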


The proposition shows that, when $n_H = n_L$ and $h$ is linear, if the position bias vectors faced by type-$H$ and type-$L$ creators in a 
separated equilibrium (with $\overline{x}_L < \underline{x}_H$) are proportional, the optimal compensation mechanism must treat the two creator types differently, and the ratio between their compensations coincides with the ratio of the position biases.

To prove Proposition~\ref{prop:asym-optimal-ch-cl}, we first introduce two auxiliary lemmas. In what follows, the tuple $(m, \bm{p}, \gamma, g(\cdot))$ denotes a symmetric setting with $m$ creators sharing the same cost parameter $\gamma$ and cost function $g(\cdot)$, and compete under position bias vector $\bm{p}$ (which may already reflect the effect of the mechanism). We examine properties of the symmetric equilibrium under this configuration.

\begin{lemma} \label{lem:asym-optimal-ch-cl-1}
    Configurations $(m, \bm{p}, \gamma, g(\cdot))$ (denoted as $C_1$) and $(m, s \bm{p}, s \gamma, g(\cdot))$ (denoted as $C_2$) yield the same equilibrium strategy for any scalar $s > 0$.
\end{lemma}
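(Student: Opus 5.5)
The plan is to show that a creator's payoff under $C_2$ is exactly $s$ times her payoff under $C_1$, for every pure profile and hence for every mixed profile. Best responses and equilibria are invariant under multiplying all payoffs by the same positive constant, so the two configurations then share the same equilibria. Combining this with the uniqueness in Theorem~\ref{thm:sym-equ-property} gives the claim.

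Fix a pure profile $\bm{x}$ and a creator $i$, and suppose her page occupies rank $j$ after ties are broken. Under $C_2$ her utility is
\[ x_i (s p_j) - s\gamma g(x_i) = s\bigl(x_i p_j - \gamma g(x_i)\bigr), \]
which is $s$ times her utility under $C_1$. The ranking depends only on effort levels, and the uniform tie-breaking rule is the same in both configurations. Taking expectations over tie-breaking and over the opponents' mixed strategies $\bm{F}_{-i}$ therefore gives $u_i^{C_2}(F_i, \bm{F}_{-i}) = s\, u_i^{C_1}(F_i, \bm{F}_{-i})$ for every mixed profile. Equivalently, the expected position bias $W(\bm{F}_{-i}(x), s\bm{p}) = s W(\bm{F}_{-i}(x), \bm{p})$ by the linearity in Lemma~\ref{lem:W-monotone}.

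Since $s > 0$, a profile satisfies the equilibrium inequalities in the definition of mixed Nash equilibrium under $C_1$ if and only if it satisfies them under $C_2$. The two games therefore have identical equilibrium sets. By Theorem~\ref{thm:sym-equ-property}, each configuration has a unique (symmetric) equilibrium, so the two equilibrium strategies coincide.

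As a consistency check through the indifference condition \eqref{eq:indifference-sym}, the equilibrium utility scales to $su$. Dividing the $C_2$ version of \eqref{eq:indifference-sym} by $s$ recovers the $C_1$ equation, so the same $F$ solves both. The endpoints $\underline{x}$ and $\overline{x}$ from Proposition~\ref{prop:symmetric-equilibrium-properties} are likewise unchanged, since the conditions $s p_m = s\gamma g'(x)$ and $x s p_1 - s\gamma g(x) = su$ reduce to the $C_1$ conditions. The only point needing care is that the standing assumption $u_i(1,\bm{x}_{-i})<0$ and the ordering $p_1 > \cdots > p_m > 0$ also scale correctly, which they do. No step here is a real obstacle.
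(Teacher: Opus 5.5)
Your proof is correct, but your main argument takes a different route from the paper's. The paper works through the equilibrium characterization. It uses Proposition~\ref{prop:symmetric-equilibrium-properties}(3) to get $\underline{x}_1=\underline{x}_2$ and $u_2=s u_1$. It then notes that the $C_2$ version of \eqref{eq:indifference-sym} is $s$ times the $C_1$ version, so the same $F$ solves both. Finally it invokes the uniqueness in Theorem~\ref{thm:sym-equ-property}. That is exactly what you relegate to a ``consistency check.''

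Your primary argument instead observes that the strategy sets are the same and every payoff in $C_2$ is $s>0$ times the corresponding payoff in $C_1$. The two games therefore have identical best-response correspondences and identical equilibrium sets. This buys you three things:
\begin{itemize}
\item It is more elementary and handles off-support deviations automatically. The paper's step ``this feasible solution constitutes an equilibrium strategy'' implicitly relies on solving the indifference equation on the support being enough to certify an equilibrium.
\item It never uses symmetry of costs, so it applies verbatim to $\mathcal{G}^{(n)}$ with every $\gamma_i$ multiplied by $s$.
\item Uniqueness is needed only to speak of ``the'' equilibrium strategy, not to establish that the two equilibria coincide.
\end{itemize}

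The paper's route, in turn, shows explicitly how $\underline{x}$, $\overline{x}$ and $u$ transform. It is also the template reused in Lemma~\ref{lem:asym-optimal-ch-cl-2}, where changing $\gamma$ alone rescales efforts ($x_1=kx_2$) rather than merely rescaling payoffs, so there one must track the equilibrium objects directly.
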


\begin{proof}
    Let $F_1$ and $F_2$ denote the equilibrium strategies under configurations $C_1$ and $C_2$, respectively, with corresponding equilibrium utilities $u_1$ and $u_2$ and supports $[\underline{x}_1,\overline{x}_1]$ and $[\underline{x}_2,\overline{x}_2]$. By the third part of Proposition~\ref{prop:symmetric-equilibrium-properties}, 
    \[ \underline{x}_1 = \arg\max_x x p_m - \gamma g(x) = \arg\max_x x s p_m - s \gamma g(x) = \underline{x}_2, \]
    thus we have
    \[ u_2 = \underline{x}_2 s p_m - s \gamma g(\underline{x}_2) = s (\underline{x}_1 p_m - \gamma g(\underline{x}_1)) = s u_1. \]
    
    Note that
    \begin{align*}
        \sum_{i = 1}^{m} \binom{m - 1}{i - 1} [F_1(x)]^{m - i} [1 - F_1(x)]^{i - 1} p_i &= \frac{u_1 + \gamma g(x)}{x}, \\
        \sum_{i = 1}^{m} \binom{m - 1}{i - 1} [F_2(x)]^{m - i} [1 - F_2(x)]^{i - 1} s p_i &= \frac{u_2 + s \gamma g(x)}{x} = s \cdot \frac{u_1 + \gamma g(x)}{x}.
    \end{align*}
    
    Therefore, if $F_1(x_1) = F_2(x_2)$, then $x_1 = x_2$ is a feasible solution to both equations above. Since this feasible solution constitutes an equilibrium strategy and Theorem~\ref{thm:sym-equ-property} ensures that the equilibrium strategy is unique, we conclude that $F_1 = F_2$.
\end{proof}

Denote the objective value of Problem~\eqref{eq:mechanism-design} under configuration $C$ as $W(\bm{p}, \gamma)$, then Lemma~\ref{lem:asym-optimal-ch-cl-1} implies that for any scalar $s > 0$, $W(\bm{p}, \gamma) = W(s \bm{p}, s \gamma)$.

\begin{lemma} \label{lem:asym-optimal-ch-cl-2}
    Let $C_1 = (m, \bm{p}, \gamma_1, g(\cdot))$ and $C_2 = (m, \bm{p}, \gamma_2, g(\cdot))$, where $g(x) = x^{\beta}$. Then
    \[ \frac{W(\bm{p}, \gamma_1)}{W(\bm{p}, \gamma_2)} = \left(\frac{\gamma_2}{\gamma_1}\right)^{\frac{1}{\beta - 1}}. \]
\end{lemma}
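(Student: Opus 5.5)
The plan is a scaling argument: with $g(x)=x^{\beta}$, changing the cost parameter is equivalent to rescaling all effort levels by a common factor, and every term of the objective is homogeneous of degree one in effort. Let
\[ \lambda=\left(\frac{\gamma_1}{\gamma_2}\right)^{\frac{1}{\beta-1}} , \qquad\text{so that}\qquad \gamma_2\lambda^{\beta-1}=\gamma_1 . \]
For any effort $y$ and any effective position bias $q$ (for example $p_j$, or $p_j+c_j$ when compensation is present),
\[ (\lambda y)\,q-\gamma_2(\lambda y)^{\beta}=\lambda\bigl(yq-\gamma_2\lambda^{\beta-1}y^{\beta}\bigr)=\lambda\bigl(yq-\gamma_1 y^{\beta}\bigr). \]
Ranking by effort is invariant under multiplying all efforts by $\lambda>0$, and ties are preserved. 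Hence the map $y\mapsto\lambda y$ sends the game under $C_1$ to the game under $C_2$ with every creator's payoff multiplied by the constant $\lambda$. In particular, for any profile of opponents' mixed strategies, best responses correspond under this map.

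First I will show that if $F_1$ is the equilibrium CDF under $C_1$, then $F_2(x):=F_1(x/\lambda)$ is an equilibrium under $C_2$. A profitable deviation $x'$ against $F_2$ would give the profitable deviation $x'/\lambda$ against $F_1$. Since Theorem~\ref{thm:sym-equ-property} gives uniqueness of the equilibrium, this $F_2$ is \emph{the} equilibrium of $C_2$. As a consistency check, $\underline{x}_2=\lambda\underline{x}_1$ follows directly from Proposition~\ref{prop:symmetric-equilibrium-properties} part~3, since $\arg\max_x xp_m-\gamma_2x^{\beta}=\lambda\arg\max_y yp_m-\gamma_1y^{\beta}$.

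Second, under $F_2$ every order statistic satisfies $\mathbb{E}[x_{(i)}]=\lambda\,\mathbb{E}[y_{(i)}]$, and likewise $\mu_2=\lambda\mu_1$. The welfare term $\sum_i\mathbb{E}[x_{(i)}]p_i$ and the compensation term $\sum_i\mathbb{E}[x_{(i)}]c_i$ therefore both scale by exactly $\lambda$. The AI Overview term $h(\mu)p_0$ scales by $\lambda$ only if $h$ is linear through the origin. I will invoke the linearity of $h$ assumed in Proposition~\ref{prop:asym-optimal-ch-cl}, where this lemma is used, and read it as $h(\mu)=k\mu$; any additive constant in $h$ does not depend on the mechanism and can be dropped from the objective. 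This gives $W(\bm p,\gamma_2)=\lambda W(\bm p,\gamma_1)$, that is, $W(\bm p,\gamma_1)/W(\bm p,\gamma_2)=\lambda^{-1}=(\gamma_2/\gamma_1)^{1/(\beta-1)}$. The same identity then holds after maximizing over $\bm c$, because the feasible set of $\bm c$ does not depend on $\gamma$.

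The main obstacle is the strategy-space boundary: the rescaling must map $[0,1]$ into itself, or at least map the equilibrium supports and all relevant deviations into it. I will handle this with the standing assumption that effort $1$ is prohibitively costly, together with Proposition~\ref{prop:symmetric-equilibrium-properties} part~4. These imply that equilibrium supports and every profitable deviation lie strictly inside the region where $xq-\gamma x^{\beta}\ge u>0$. Consequently, when $\lambda>1$, deviations to efforts beyond $\lambda$ are never profitable in either game, so truncating at $1$ does not change either equilibrium.
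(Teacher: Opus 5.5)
Your proposal is correct and takes essentially the same route as the paper. Effort rescales by the factor $(\gamma_2/\gamma_1)^{1/(\beta-1)}$, uniqueness of the symmetric equilibrium (Theorem~\ref{thm:sym-equ-property}) identifies the rescaled profile as the equilibrium, and then every order statistic and the mean effort, hence $W$, scale by that same factor; the paper checks the rescaled CDF through the indifference condition while you check it through payoff-equivalence of the two games, which is only a presentational difference. Two points the paper's proof leaves implicit are made explicit in your version: the $h(\mu)p_0$ term scales only when $h$ is linear through the origin (as in Proposition~\ref{prop:asym-optimal-ch-cl}, where the lemma is used; strictly, the exact ratio needs $h(0)=0$ rather than dropping an intercept), and the $[0,1]$ truncation is harmless by the standing assumption.
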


\begin{proof}
    Let $F_1$ and $F_2$ denote the equilibrium strategies under configurations $C_1$ and $C_2$, respectively, with corresponding equilibrium utilities $u_1$ and $u_2$ and supports $[\underline{x}_1,\overline{x}_1]$ and $[\underline{x}_2,\overline{x}_2]$. By the third part of Proposition~\ref{prop:symmetric-equilibrium-properties},
    \[ \underline{x}_1 = \arg\max_x x p_m - \gamma_1 x^{\beta} = \left(\frac{p_m}{\gamma_1 \beta}\right)^{\frac{1}{\beta - 1}}, \quad \underline{x}_2 = \arg\max_x x p_m - \gamma_2 x^{\beta} = \left(\frac{p_m}{\gamma_2 \beta}\right)^{\frac{1}{\beta - 1}}, \]
    thus we have
    \[ \frac{u_1}{u_2} = \frac{\underline{x}_1 p_m - \gamma_1 \underline{x}_1^{\beta}}{\underline{x}_2 p_m - \gamma_2 \underline{x}_2^{\beta}} = \left(\frac{\gamma_2}{\gamma_1}\right)^{\frac{1}{\beta - 1}} := k. \]

   Following a line of argument similar to the proof of Lemma~\ref{lem:asym-optimal-ch-cl-1}, we can show that if $F_1(x_1) = F_2(x_2)$, then $x_1 = k x_2$. Consequently, the expectation of quality under configuration $C_1$ (denoted as $\mathbb{E}_1[X]$) and that under configuration $C_2$ (denoted as $\mathbb{E}_2[X]$) satisfy $\mathbb{E}_1[X] = k \mathbb{E}_2[X]$, and for any $i \in [m]$, and the expectation of $i$-th order statistic $X_{(i)}$ under configuration $C_1$ (denoted as $\mathbb{E}_1[X_{(i)}]$) and that under configuration $C_2$ (denoted as $\mathbb{E}_2[X_{(i)}]$) satisfy $\mathbb{E}_1[X_{(i)}] = k \mathbb{E}_2[X_{(i)}]$. Therefore, $W(\bm{p}, \gamma_1) = k W(\bm{p}, \gamma_2)$.
\end{proof}

\begin{proofof}{Proposition~\ref{prop:asym-optimal-ch-cl}}
    Let $\bm{p}_H$, $\bm{p}_L$, $\bm{c}_H$, and $\bm{c}_L$ be the position bias vectors and compensation vectors faced by type-$H$ and type-$L$ creators in the separated equilibrium, respectively. By Lemma~\ref{lem:asym-optimal-ch-cl-1} and Lemma~\ref{lem:asym-optimal-ch-cl-2}, we have
    \[ W(\bm{p}_L, \gamma_L) = W(\bm{p}_H, r \gamma_L) = \left(\frac{\gamma_H}{r \gamma_L}\right)^{\frac{1}{\beta - 1}} W(\bm{p}_H, \gamma_H). \]
    Therefore, if $\bm{c}_H$ is an optimal compensation vector for type-$H$ creators, then setting $\bm{c}_L = \frac{1}{r} \bm{c}_H$ yields an optimal compensation vector for type-$L$ creators, since
    \[ W(\bm{p}_L + \bm{c}_L, \gamma_L) = W\left(\bm{p}_H + \bm{c}_H, r \gamma_L\right) = \left(\frac{\gamma_H}{r \gamma_L}\right)^{\frac{1}{\beta - 1}} W(\bm{p}_H + \bm{c}_H, \gamma_H). \]
\end{proofof}

Although the assumption required by Proposition~\ref{prop:asym-optimal-ch-cl} is relatively strong: requiring equal group sizes, a specific form of the position bias vectors, a linear function $h$, and a particular separated equilibrium structure, it illustrates that in realistic settings, the optimal compensation mechanism may indeed need to differentiate between creator types to achieve a more efficient incentive effect. In Appendix~\ref{subsec:ap-exp-cHcL} we will use numerical experiments to further explore the relation between $c_H$ and $c_L$ under a wider range of plausible parameter configurations.

\begin{proofof}{Proposition~\ref{prop:TPBL-inc-asym}}
    The proof is similar to that of Proposition~\ref{prop:UBL-inc-sym}, thus we use the same notations as in that proof. First, we derive the partial derivatives of $W(\bm{p}, \bm{c})$ with respect to $p_i$ and $c_i$ for $i \in [n]$. Then, we compare these two derivatives to establish equation~\eqref{eq:UBL-inc-sym-1}. Then we note that the definition of TPBL citation mechanism ensures that
    \[ \frac{s_1}{c_1} = \cdots = \frac{s_{n - 1}}{c_{n - 1}}, \quad \text{and} \quad s_n = c_n = 0. \]
    Therefore, following an argument analogous to the proof of Proposition~\ref{prop:UBL-inc-sym}, we can consider the two cases $\bm{r} = \bm{s}$ and $\bm{r} = \bm{c}$ respectively to prove the conclusion.
\end{proofof}
\section{Details on Data Collection} \label{sec:ap-data-collection}

\subsection{The 24 Search Queries} \label{subsec:ap-search-queries}

The 24 search queries used in our user click experiment are listed below. The first 12 are fact-based queries and the last 12 are open-ended queries.

\begin{enumerate}
    \item What is the capital city of Kenya?
    \item What is the population of Russia?
    \item What was the contribution that earned the 1994 Nobel Prize in Economics?
    \item When is the deadline of ICML 2026?
    \item Calculate the integral $\int_0^{\pi/2} \sin^3 x \ \textup{d}x$.
    \item How can I fix the image size in Markdown?
    \item What is the fine for crossing a double yellow line?
    \item What is defined as the normal blood pressure range for adults?
    \item Detailed overview of the Falklands War.
    \item Prove the Cauchy's Mean Value Theorem.
    \item Algorithm implementations for the various variants of the Knapsack Problem.
    \item The distinct periods in Vincent van Gogh's paintings.
    \item Impact of Zohran Mamdani's election as Mayor of New York City.
    \item How to plan a three-day trip for Beijing?
    \item How can I improve my sleep quality?
    \item Key points to note for a thesis defense.
    \item What are some effective methods and resources for self-learning a new language, such as Japanese?
    \item What books should I read to understand the Chinese economy?
    \item How to design a home study room?
    \item How to appreciate the painting "Mona Lisa"?
    \item What are the fundamental differences between Marxist economics and Western economics?
    \item What key factors should I consider when buying my first car?
    \item How to prepare a creative birthday gift for my mother?
    \item Suggestions for beginners learning to play the piano.
\end{enumerate}

\subsection{Examples of the Three Search Interfaces} \label{subsec:ap-search-interfaces}

The following figures illustrate examples of the three types of search engine interfaces used in our user click experiment.

The Type A interface is shown in Figure~\ref{fig:interface-A}. The name Baigle is a fictional search engine brand formed by combining the first three letters of ``Baidu'' and the last three letters of ``Google''. The layout follows a conventional search engine design, with all advertisements and other elements that might affect clicking behavior removed. In the Type A interface, users see only the traditional list of organic results; no AI Overview appears. A ``Notice'' box in the upper-right corner reminds participants of the interface type. An answer box in the bottom-right corner promotes careful engagement: users are required to enter their answer after finding a satisfactory result.

\begin{figure}[t]
    \centering
    \includegraphics[width=0.7\textwidth]{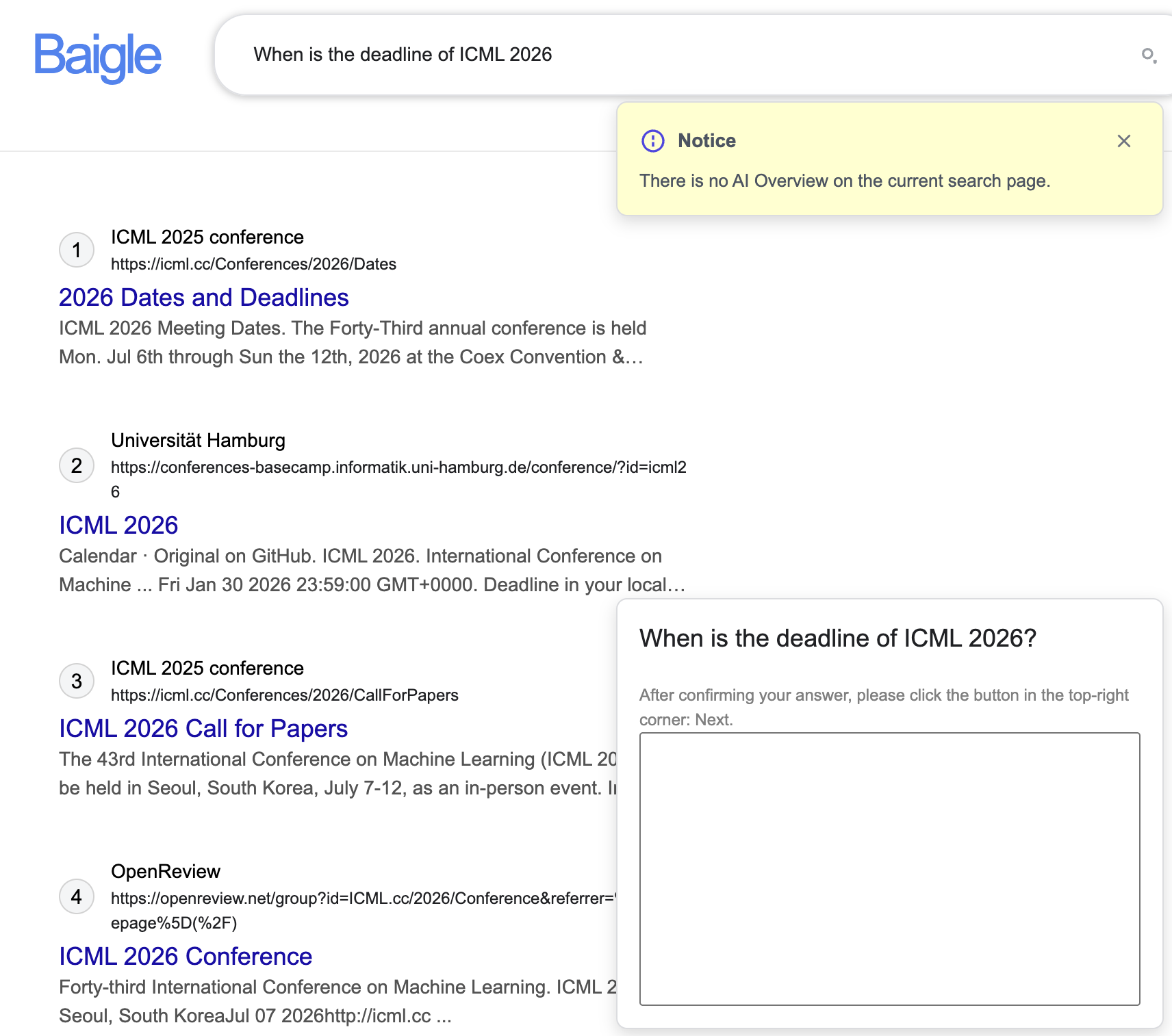} 
    \caption{Example of Type A interface}
    \label{fig:interface-A}
\end{figure}

Figure~\ref{fig:interface-B} presents the Type B interface, which includes an AI Overview at the top of the search results but does not cite any specific web pages. Figure~\ref{fig:interface-C} shows the AI Overview with citations in the Type C interface; here, four relevant web pages are cited below the AI Overview.

\begin{figure}[t]
    \centering
    \includegraphics[width=0.75\textwidth]{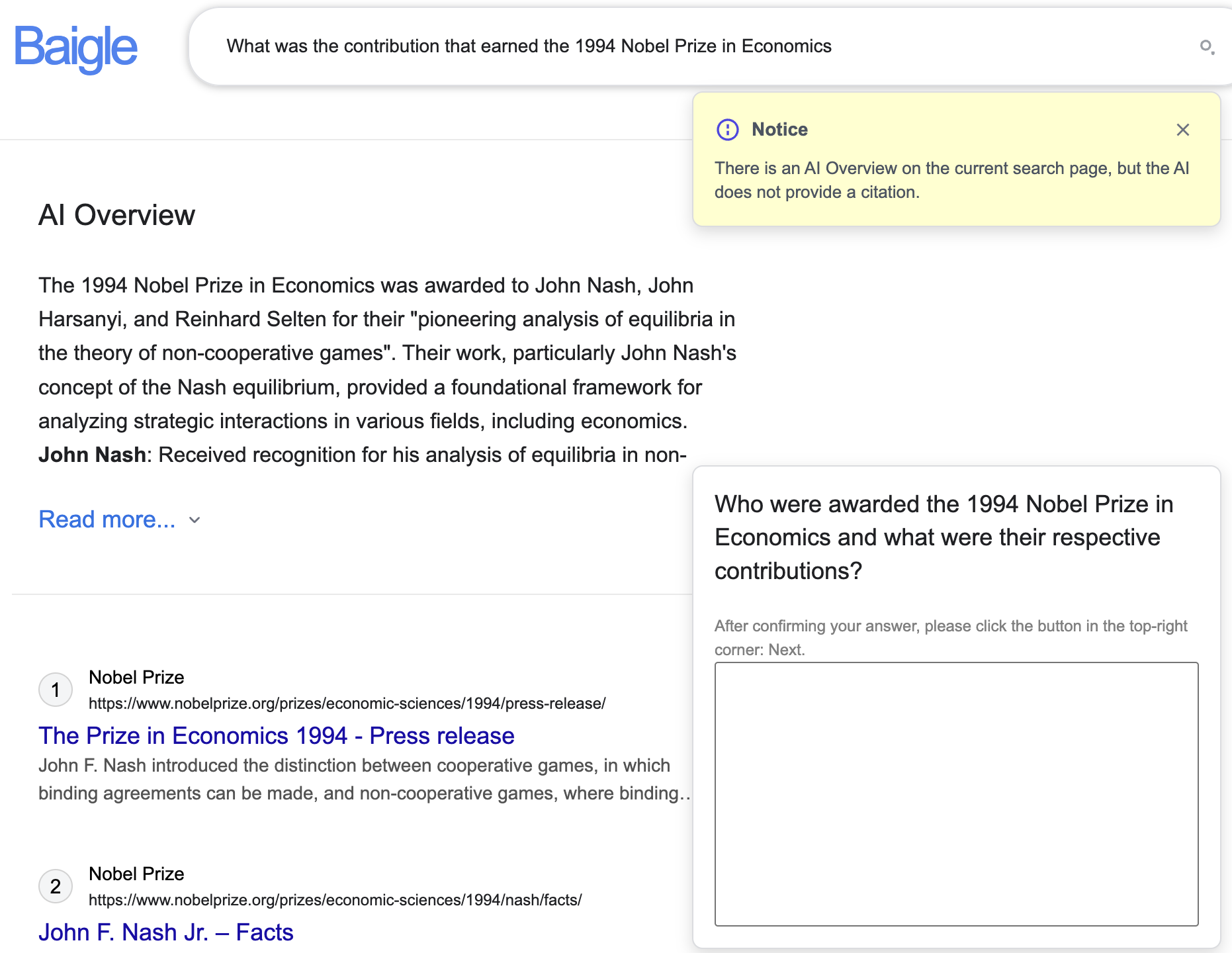} 
    \caption{Example of Type B interface}
    \label{fig:interface-B}
\end{figure}

\begin{figure}[t]
    \centering
    \includegraphics[width=0.9\textwidth]{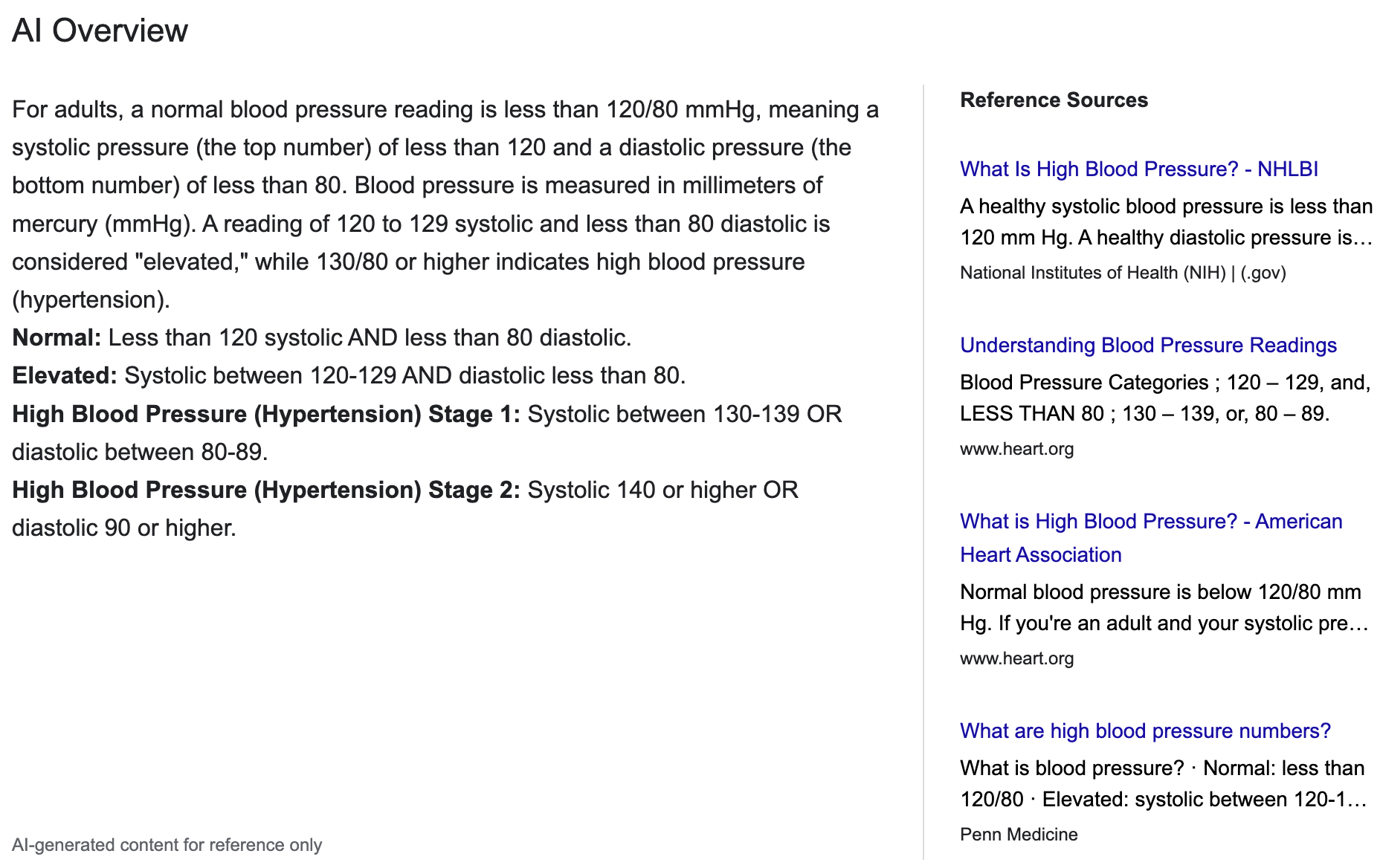} 
    \caption{Example of AI Overview with citations in Type C interface}
    \label{fig:interface-C}
\end{figure}

\subsection{Position Bias Computation Details} \label{subsec:ap-position-bias-calculation}

The code for computing position bias is based on open source resources provided in \cite{chuklin2015click}, which use the EM algorithm to estimate position bias from user click data. Here we explain the adjustments we made to their method in order to obtain the results in our paper. Recall that in the PBM model we adopt, the click-through rate depends on two factors: the attractiveness of the page (the effort of creator) and the position bias of its rank.

\begin{enumerate}
    \item Because the search results are identical across the three interface types, the attractiveness of each page should be the same in all three cases. Consequently, when estimating position bias, we ensure that the attractiveness values used for the three interfaces are consistent.
    \item The position bias of the AI Overview itself requires a separate estimation. Our experiment recorded dwell times for every page, including the AI Overview. We consider the AI Overview to have been examined if the dwell time is at least 2 seconds. For dwell times shorter than 2 seconds, we further check whether any other page was clicked. If no other page was clicked, we assume the user obtained the answer from the AI Overview and therefore treat it as examined. Otherwise, we inspect whether the participant’s answer indicates that they referred to the AI Overview. Applying these criteria to the 800 search tasks that contained an AI Overview, we identified only 9 instances where the AI Overview was not examined. This yields an estimated position bias for the AI Overview of $1 - 9/800 \approx 0.9888$. This value is very close to the position bias of the top-ranked page in the absence of an AI Overview, and because the AI Overview effectively occupies the first position, the estimate is reasonable.
    \item We also need to estimate the position bias of pages cited in the AI Overview. Since these cited pages also appear in the traditional list of organic results, we ensure that their attractiveness is kept consistent with the attractiveness of the corresponding pages in the organic list.
\end{enumerate}

Taking these three points into account, we adapted the code from \cite{chuklin2015click} to compute the position bias estimates presented in Table~\ref{tab:position-bias-no-ai-overview} and Table~\ref{tab:position-bias-ai-overview}.

\subsection{Additional Data and Analyses} \label{subsec:ap-more-data-analysis}

Tables~\ref{tab:ctr-q1-12} and \ref{tab:ctr-ai-overview-q1-12} report the click-through rates (CTRs) for the first 12 fact-based queries, while Tables~\ref{tab:ctr-q13-24} and \ref{tab:ctr-ai-overview-q13-24} present the CTRs for the remaining 12 open-ended queries. The tables show that the changes in CTR following the introduction of the AI Overview and the citation mechanism do not differ significantly between fact-based and open-ended queries. The principal difference between the two query types is the overall CTR level, which is higher for open-ended queries, an intuitive result because users typically click more when addressing this type of questions. Since our discussion centers on trends in position bias, we do not distinguish between the two query types and instead combine them for a consolidated analysis.

\begin{table}[htpb]
    \centering
    \caption{CTR of pages excluded from the AI Overview for queries 1-12}
    \begin{tabular}{ccccccccccc}
    \hline
    Type & page1 & page2 & page3 & page4 & page5 & page6 & page7 & page8 & page9 & page10 \\
    \hline
    A & 0.7108 & 0.3775 & 0.3137 & 0.2353 & 0.0686 & 0.0637 & 0.0392 & 0.0294 & 0.0294 & 0.0343 \\
    B & 0.5500 & 0.3400 & 0.1550 & 0.1000 & 0.0450 & 0.0450 & 0.0500 & 0.0500 & 0.0450 & 0.0300 \\
    C & 0.3400 & 0.2050 & 0.1350 & 0.1200 & 0.0350 & 0.0450 & 0.0400 & 0.0350 & 0.0250 & 0.0100 \\
    \hline
    \end{tabular}
    \label{tab:ctr-q1-12}
\end{table}

\begin{table}[htbp]
    \centering
    \caption{CTR of pages included in the AI Overview for queries 1-12}
    \begin{tabular}{cccc}
    \hline
    page1 & page2 & page3 & page4 \\
    \hline
    0.3700 & 0.1800 & 0.1300 & 0.0400 \\
    \hline
    \end{tabular}
    \label{tab:ctr-ai-overview-q1-12}
\end{table}

\begin{table}[htpb]
    \centering
    \caption{CTR of pages excluded from the AI Overview for queries 13-24}
    \begin{tabular}{ccccccccccc}
    \hline
    Task & page1 & page2 & page3 & page4 & page5 & page6 & page7 & page8 & page9 & page10 \\
    \hline
    A & 0.8578 & 0.6176 & 0.2990 & 0.2794 & 0.2157 & 0.1176 & 0.1127 & 0.1078 & 0.0882 & 0.0686 \\
    B & 0.5980 & 0.3668 & 0.1859 & 0.1809 & 0.1256 & 0.0854 & 0.0704 & 0.0653 & 0.0704 & 0.0603 \\
    C & 0.3850 & 0.2500 & 0.1500 & 0.1400 & 0.0750 & 0.0650 & 0.0550 & 0.0300 & 0.0350 & 0.0300 \\
    \hline
    \end{tabular}
    \label{tab:ctr-q13-24}
\end{table}

\begin{table}[htbp]
    \centering
    \caption{CTR of pages included in the AI Overview for queries 13-24}
    \begin{tabular}{cccc}
    \hline
    page1 & page2 & page3 & page4 \\
    \hline
    0.4250 & 0.2850 & 0.1800 & 0.0950 \\
    \hline
    \end{tabular}
    \label{tab:ctr-ai-overview-q13-24}
\end{table}
\section{Details and Additional Results of Experiment } \label{sec:ap-experiment}

\subsection{Parameters in Experiments} \label{subsec:ap-exp-para}

We first outline the adjustable parameters in our experiments: the profitability $\alpha$ of search engine, the quality function $h$ of the AI Overview, the exponent $\beta \geqslant 2$ of the creator cost function $g$, the cost coefficient $\gamma_L$ for high-cost creators (with the low-cost creator coefficient $\gamma_H$ fixed due to Lemma~\ref{lem:asym-optimal-ch-cl-2}), and the number of low-cost creators $n_H$ (the total number is fixed at $10$, so $n_L = 10 - n_H$).

The parameters $\alpha$ and $h$ can be easily estimated by the search engine, whereas $\beta$, $\gamma_L$, and $n_H$ require the search engine to infer the equilibrium by aggregating page attraction data ($x_i$ in our model) across many queries from the same category. Using this estimate together with Propositions~\ref{prop:asym-separated-equilibrium} and~\ref{prop:asym-hybrid-equilibrium}, the search engine can infer these parameters. Such inference, however, necessitates large-scale, real-world data on creator behavior, to which we do not have access. Therefore, in our experiments we simulate a range of plausible values for these parameters to examine how they affect the optimal compensation structure.

In practice, these parameters typically differ across search topics. For specialized queries such as academic searches, content creation requires more specialized expertise, resulting in a lower value of $\beta$ (note that $g$ is defined on $[0, 1]$). Moreover, in such contexts, the gap between $\gamma_L$ and $\gamma_H$ tends to be larger. In contrast, for general-interest topics like entertainment or daily life, the barrier to creation is lower, leading to a relatively higher $\beta$ and a narrower difference between $\gamma_L$ and $\gamma_H$. Therefore, a search engine needs to estimate separate sets of parameters for different types of queries and tailor its mechanism accordingly.

\subsection{Experiment Evaluation of Conjecture~\ref{conj:asym-hybrid-compensation-md}} \label{subsec:ap-exp-c}

In this subsection, we analyze the structure of the optimal compensation vector $\bm{c}$ under various parameter configurations through numerical experiments, thereby supporting Conjecture~\ref{conj:asym-hybrid-compensation-md}.

Specifically, we fix $\alpha = 1$, since it appears linearly in the objective and does not affect Schur-concavity. We set $h(x) = x$, because the proof of Theorem~\ref{thm:compensation-md-sym} implies that any increasing concave $h$ yields the same optimal compensation structure and a linear $h$ avoids introducing extra concavity into the objective. We fix $\gamma_L = 2.01$ and choose $\beta \in \{4, 6, 8\}$ to ensure a hybrid equilibrium occurs. We set $n_H \in \{3, 5, 7\}$ to cover different proportions of low- and high-cost creators.

We evaluate Conjecture~\ref{conj:asym-hybrid-compensation-md} by testing the Schur-concavity of the objective function with respect to two sets of compensation variables: $c_1, c_2, \ldots, c_{n_H - 1}$ and $c_{n_H + 1}, c_{n_H + 2}, \ldots, c_{n - 1}$. Specifically, when testing Schur-concavity in $c_1, c_2, \ldots, c_{n_H - 1}$, we fix $c_{n_H + 1}, c_{n_H + 2}, \ldots, c_{n - 1}$ at $0$ and set a constant total sum $c_{\text{sum}}$, and vary $c_1, c_2, \ldots, c_{n_H - 1}$ with a step size of $0.05$ under the constraints $\sum_{i = 1}^{n_H - 1} c_i = c_{\text{sum}}$ and $c_i \in [0, 1]$. Similarly, when testing Schur-concavity with respect to $c_{n_H + 1}, c_{n_H + 2}, \ldots, c_{n - 1}$, we fix $c_1, c_2, \ldots, c_{n_H}$ at a constant value large enough to maintain the monotonicity requirement of compensation ($0.6$ or $0.8$ based on equilibrium properties, we directly add this value to the position biases in Table~\ref{tab:exp-c-parameters}). The remaining steps follow the same procedure as above.

To ensure robustness, for each parameter combination we conduct experiments under three different position bias vectors, constructed by varying $q_1$ and $q_{n_H}$ in the TPBL mechanism. The only exception occurs when testing Schur-concavity in $c_1, c_2, \ldots, c_{n_H - 1}$ with $n_H = 7$, where the hybrid equilibrium is more difficult to attain; in this case, only two position bias vectors are used. Moreover, three different values of $c_{\text{sum}}$ are selected under each position bias vector according to the value of $n_H$. The complete parameter settings are provided in Table~\ref{tab:exp-c-parameters}.

\begin{table}[t]
\centering
\footnotesize
\setlength{\tabcolsep}{4pt}
\renewcommand{\arraystretch}{1.1}
\setlength{\tabcolsep}{3pt} 
\renewcommand{\arraystretch}{1.1}

\caption{Parameter configurations of the experiments on the optimal $c$\label{tab:exp-c-parameters}}
\begin{tabular}{c c c c p{10.5cm}}
\toprule
$n_H$ & tested $c$ & $c_{\text{sum}}$ & vector & values \\
\midrule
\multirow{3}{*}{3} &
\multirow{3}{*}{$(c_1, c_2)$} &
\multirow{3}{*}{$\{0.6, 1.2, 1.8\}$} & $\bm{p}_1$ &
$[0.8357, 0.7302, 0.4774, 0.4461, 0.3080, 0.2372, 0.2319, 0.2230, 0.2192, 0.1158]$ \\
 &  &  & $\bm{p}_2$ & $[0.9057, 0.8002, 0.4574, 0.4261, 0.2880, 0.2172, 0.2119, 0.2030, 0.1992, 0.1158]$ \\
 &  &  & $\bm{p}_3$ & $[0.9757, 0.8702, 0.4374, 0.4061, 0.2680, 0.1972, 0.1919, 0.1830, 0.1792, 0.1158]$ \\
\midrule
\multirow{3}{*}{5} &
\multirow{3}{*}{$(c_1, c_2, c_3, c_4)$} &
\multirow{3}{*}{$\{0.8, 1.2, 1.6\}$} & $\bm{p}_1$ &
$[0.8357, 0.7302, 0.4774, 0.4461, 0.3080, 0.2372, 0.2319, 0.2230, 0.2192, 0.1158]$ \\
 &  &  & $\bm{p}_2$ & $[0.8857, 0.7802, 0.5274, 0.4961, 0.2680, 0.1972, 0.1919, 0.1830, 0.1792, 0.1158]$ \\
 &  &  & $\bm{p}_3$ & $[0.9357, 0.8302, 0.5774, 0.5461, 0.2280, 0.1572, 0.1519, 0.1430, 0.1392, 0.1158]$ \\
\midrule
\multirow{3}{*}{7} &
\multirow{3}{*}{\begin{tabular}{@{}c@{}}$(c_1, c_2, c_3,$\\$c_4, c_5, c_6)$\end{tabular}} %
&
\multirow{3}{*}{$\{1.2, 1.8, 2.4\}$} & $\bm{p}_1$ &
$[0.8357, 0.7302, 0.4774, 0.4461, 0.3080, 0.2372, 0.2319, 0.2230, 0.2192, 0.1158]$ \\
 &  &  & $\bm{p}_2$ & $[0.8557, 0.7502, 0.4974, 0.4661, 0.3280, 0.2572, 0.1919, 0.1830, 0.1792, 0.1158]$ \\
 &  &  & $\bm{p}_3$ & -- \\
\midrule
\multirow{3}{*}{3} &
\multirow{3}{*}{\begin{tabular}{@{}c@{}}$(c_4, c_5, c_6,$\\$c_7, c_8, c_9)$\end{tabular}} %
&
\multirow{3}{*}{$\{0.6, 1.2, 1.8\}$} & $\bm{p}_1$ &
$[1.4357, 1.3302, 1.0774, 0.4461, 0.3080, 0.2372, 0.2319, 0.2230, 0.2192, 0.1158]$ \\
 &  &  & $\bm{p}_2$ & $[1.5057, 1.4002, 1.0574, 0.4261, 0.2880, 0.2172, 0.2119, 0.2030, 0.1992, 0.1158]$ \\
 &  &  & $\bm{p}_3$ & $[1.5757, 1.4702, 1.0374, 0.4061, 0.2680, 0.1972, 0.1919, 0.1830, 0.1792, 0.1158]$ \\
\midrule
\multirow{3}{*}{5} &
\multirow{3}{*}{$(c_6, c_7, c_8, c_9)$} &
\multirow{3}{*}{$\{0.4, 0.8, 1.2\}$} & $\bm{p}_1$ &
$[1.4357, 1.3302, 1.0774, 1.0461, 0.9080, 0.2372, 0.2319, 0.2230, 0.2192, 0.1158]$ \\
 &  &  & $\bm{p}_2$ & $[1.4857, 1.3802, 1.1274, 1.0961, 0.8680, 0.1972, 0.1919, 0.1830, 0.1792, 0.1158]$ \\
 &  &  & $\bm{p}_3$ & $[1.5357, 1.4302, 1.1774, 1.1461, 0.8280, 0.1572, 0.1519, 0.1430, 0.1392, 0.1158]$ \\
\midrule
\multirow{3}{*}{7} &
\multirow{3}{*}{$(c_8, c_9)$} &
\multirow{3}{*}{$\{0.4, 0.6, 0.8\}$} & $\bm{p}_1$ &
$[1.6357, 1.5302, 1.2774, 1.2461, 1.1080, 1.0372, 1.0319, 0.2230, 0.2192, 0.1158]$ \\
 &  &  & $\bm{p}_2$ & $[1.6557, 1.5502, 1.2974, 1.2661, 1.1280, 1.0572, 0.9919, 0.1830, 0.1792, 0.1158]$ \\
 &  &  & $\bm{p}_3$ & $[1.6757, 1.5702, 1.3174, 1.2861, 1.1480, 1.0772, 0.9519, 0.1430, 0.1392, 0.1158]$ \\
\bottomrule
\end{tabular}
\end{table}

We test all pairs of compensation vectors that satisfy the majorization ordering, comparing their respective objective values to empirically verify Schur‑concavity. The experimental results demonstrate that, under all tested parameter configurations, the objective function exhibits Schur-concavity with respect to $c_1, c_2, \ldots, c_{n_H - 1}$ and $c_{n_H + 1}, c_{n_H + 2}, \ldots, c_{n - 1}$, thereby supporting Conjecture~\ref{conj:asym-hybrid-compensation-md}.

\subsection{The Effectiveness of the UBL and TPBL Mechanism} \label{subsec:ap-exp-p}

In this subsection, we use numerical experiments to demonstrate that the objective value achieved by the UBL (Section~\ref{subsec:md-sym}) and TPBL (Section~\ref{subsec:asym-mechanism}) mechanism is close to that of the optimal citation mechanism under symmetric case across a wide range of parameter configurations. This validates the effectiveness of these mechanisms and thus provides practically feasible and effective citation mechanisms.

To quantify the gap between the UBL/TPBL mechanisms and the optimal citation mechanism, we first define $W^*(\bm{p}) = \max_{\bm{c}} W(\bm{p}, \bm{c})$ as the optimal objective value given a position bias vector $\bm{p}$. Let $\bm{p}_{\text{opt}}$ denote the position bias vector induced by the optimal citation mechanism, $\bm{p}_{\text{UBL}}$ and $\bm{p}_{\text{TPBL}}$ those induced by the UBL and TPBL mechanisms, respectively, and $\bm{p}_C$ the position bias vector corresponding to Type C in Table~\ref{tab:position-bias-no-ai-overview}. We define the approximation ratio as
\begin{equation} \label{eq:approx-ratio}
    \rho = \frac{W^*(\bm{p}_{\text{mechanism}}) - W^*(\bm{p}_C)}{W^*(\bm{p}_{\text{opt}}) - W^*(\bm{p}_C)}, \quad \text{mechanism} \in \{\text{UBL}, \text{TPBL}\}.
\end{equation}

This ratio measures the relative improvement of UBL or TPBL over the baseline ($\bm{p}_C$) compared to the improvement achieved by the optimal citation mechanism. A value of $\rho$ closer to 1 indicates that the mechanism performs nearly optimally.

However, computing this approximation ratio poses a challenge. According to Section~\ref{sec:model}, under our experimental configuration ($n = 10$, $m = 4$), directly determining the optimal citation mechanism requires solving an optimization problem with 40 variables subject to 45 constraints. Such a problem is computationally intractable without further structural assumptions. To address this, we utilize the following simplified model: let $p_i^C$ be the position biases of the $i$-th ranked organic search results when they are not cited, $\Delta p_i^C$ be the increase in position bias for the $i$-th ranked organic results due to the citation mechanism, so that $\bm{p}$ satisfies:
\begin{equation} \label{eq:position-bias-structure-simplified}
    \begin{aligned}
        &\qquad \quad p_i = p_i^C + \Delta p_i^C, \quad \forall i \in [n], \\
        &\text{where } \sum_{i = 1}^n \Delta p_i^C = \sum_{i = 1}^m q_i, \\
        &\qquad \quad \Delta p_i^C \in [0, q_1], \quad \forall i \in [n], \\
        &\qquad \quad p_1 \geqslant p_2 \geqslant \dots \geqslant p_n > 0.
    \end{aligned}
\end{equation}
The constraint $\Delta p_i^C \leqslant q_1$ arises because the maximum possible value of $\Delta p_i^C$ occurs when the page is cited with probability $1$ and placed in the first position within the AI Overview. Denote $\mathcal{P}'$ as the feasible set defined in equation~\eqref{eq:position-bias-structure-simplified}, then $\mathcal{P} \subsetneq \mathcal{P}'$: The configuration $\Delta p_1 = 0.6$, $\Delta p_2 = 0.6$, $\Delta p_3 = 0.4407$, and $\Delta p_4 = \cdots = \Delta p_{10} = 0$ is in $\mathcal{P}'$ but is not in $\mathcal{P}$. Therefore, the objective value derived under this simplified model (denoted as $W^*(\tilde{\bm{p}}_{\text{opt}})$) serves as an upper bound on the true optimal value. Consequently, the approximation ratio $\rho$ computed using this approach provides a conservative lower bound on the actual approximation ratio.

We first evaluate the UBL mechanism under symmetric equilibrium conditions. We conduct experiments across various parameter settings, varying $\alpha \in \{1, 3, 5\}$, $h(x) \in \{x, \sqrt{x}, \sqrt[3]{x}\}$, and $\beta \in \{2, 4, 6\}$. We compute $W^*(\bm{p}_C)$, $W^*(\bm{p}_{\text{UBL}})$, and $W^*(\tilde{\bm{p}}_{\text{opt}})$ using a grid search with a step size of $0.1$ for position biases and $0.02$ for compensation values. The resulting approximation ratios are summarized in Table~\ref{tab:approx-ratio-sym}. The results indicate that the UBL mechanism consistently achieves approximation ratios above $0.6$ across all tested configurations, with many instances exceeding $0.8$. Note that these ratios are conservative estimates due to the use of the simplified model for the optimal mechanism, thus the UBL mechanism effectively approximates the optimal citation mechanism in symmetric scenarios.

\begin{table}[t]
\centering
\footnotesize
\setlength{\tabcolsep}{1pt}
\caption{Approximate ratios under symmetric equilibrium}
\begin{tabular}{cc|cccc|cccc|cccc}
\toprule
$\alpha$ & $h$
& \multicolumn{4}{c|}{$\beta=2$}
& \multicolumn{4}{c|}{$\beta=4$}
& \multicolumn{4}{c}{$\beta=6$} \\
& &
$W^*(\bm{p}_C)$ & $W^*(\bm{p}_{\text{UBL}})$ & $W^*(\tilde{\bm{p}}_{\text{opt}})$ & $\rho$
& $W^*(\bm{p}_C)$ & $W^*(\bm{p}_{\text{UBL}})$ & $W^*(\tilde{\bm{p}}_{\text{opt}})$ & $\rho$
& $W^*(\bm{p}_C)$ & $W^*(\bm{p}_{\text{UBL}})$ & $W^*(\tilde{\bm{p}}_{\text{opt}})$ & $\rho$ \\
\midrule
1 & $x$              & 0.885 & 2.225 & 3.054 & 0.618 & 1.932 & 3.503 & 3.812 & 0.836 & 2.321 & 3.922 & 4.105 & 0.897 \\
1 & $\sqrt{x}$       & 1.135 & 2.451 & 3.282 & 0.613 & 2.124 & 3.652 & 3.975 & 0.825 & 2.464 & 4.028 & 4.227 & 0.887 \\
1 & $\sqrt[3]{x}$    & 1.263 & 2.554 & 3.384 & 0.609 & 2.202 & 3.709 & 4.039 & 0.820 & 2.519 & 4.067 & 4.271 & 0.884 \\
3 & $x$              & 4.005 & 9.615 & 12.112 & 0.692 & 5.987 & 10.706 & 11.608 & 0.839 & 6.981 & 11.765 & 12.323 & 0.896 \\
3 & $\sqrt{x}$       & 4.571 & 9.794 & 12.288 & 0.677 & 6.476 & 11.066 & 12.013 & 0.829 & 7.393 & 12.085 & 12.680 & 0.888 \\
3 & $\sqrt[3]{x}$    & 4.816 & 9.878 & 12.371 & 0.670 & 6.677 & 11.207 & 12.175 & 0.824 & 7.556 & 12.202 & 12.814 & 0.884 \\
5 & $x$              & 9.253 & 21.875 & 26.000 & 0.754 & 10.699 & 18.966 & 20.259 & 0.865 & 11.923 & 19.926 & 20.813 & 0.900 \\
5 & $\sqrt{x}$       & 9.617 & 21.006 & 25.266 & 0.728 & 11.322 & 19.321 & 20.693 & 0.854 & 12.486 & 20.324 & 21.276 & 0.892 \\
5 & $\sqrt[3]{x}$    & 9.806 & 20.809 & 25.063 & 0.721 & 11.573 & 19.461 & 20.866 & 0.849 & 12.705 & 20.475 & 21.453 & 0.888 \\
\bottomrule
\end{tabular}
\label{tab:approx-ratio-sym}
\end{table}

Moreover, as $\beta$ increases, creators tend to exert higher effort levels, which in turn diminishes the influence of mechanisms on their behavior. On the one hand, this reduces the differences among various citation mechanisms, leading to an increase in the approximation ratio. On the other hand, while the impact of the compensation mechanism weakens, its associated cost rises, thereby lowering the optimal compensation and lessening the effect of $\alpha$ on the approximation ratio. These trends are consistently reflected in Table~\ref{tab:approx-ratio-sym}.

Then we evaluate the TPBL mechanism under asymmetric equilibrium conditions. We first consider the separated equilibrium case. Specifically, we consider three different values of $n_H$ (i.e., $n_H = 3, 5, 7$) while varying $\alpha \in \{1, 3, 5\}$, $\beta \in \{2, 4, 6\}$ and $\gamma_L \in \{1.5, 2.0, 2.5\}$. We fix $\gamma_H = 1$, and fix $h = \sqrt{x}$ since the impact of $h$ is not significant as observed in symmetric cases. 

The approximate ratios under these configurations are summarized in Tables~\ref{tab:approx-ratio-separated-nH3}, \ref{tab:approx-ratio-separated-nH5} and \ref{tab:approx-ratio-separated-nH7}.  The results indicate that the TPBL mechanism consistently achieves approximation ratios above $0.6$ across all tested configurations, with many scenarios achieving ratios above $0.8$ or even $0.9$. This demonstrates the effectiveness of the TPBL mechanism in enhancing search engine profit in asymmetric settings under separated equilibrium. The effect of $\gamma_L$ on the approximation ratio is not significant, and the effects of $\alpha$ and $\beta$ align with the observations in symmetric cases.

\begin{table}[t]
\centering
\footnotesize
\setlength{\tabcolsep}{1pt}
\caption{Approximate ratios under separated equilibrium with $n_H = 3$}
\begin{tabular}{cc|cccc|cccc|cccc}
\toprule
$\alpha$ & $\gamma_L$
& \multicolumn{4}{c|}{$\beta=2$}
& \multicolumn{4}{c|}{$\beta=4$}
& \multicolumn{4}{c}{$\beta=6$} \\
& &
$W^*(\bm{p}_C)$ & $W^*(\bm{p}_{\text{TPBL}})$ & $W^*(\tilde{\bm{p}}_{\text{opt}})$ & $\rho$
& $W^*(\bm{p}_C)$ & $W^*(\bm{p}_{\text{TPBL}})$ & $W^*(\tilde{\bm{p}}_{\text{opt}})$ & $\rho$
& $W^*(\bm{p}_C)$ & $W^*(\bm{p}_{\text{TPBL}})$ & $W^*(\tilde{\bm{p}}_{\text{opt}})$ & $\rho$ \\
\midrule
1 & 1.5 & 0.973 & 2.871 & 3.267 & 0.827 & 1.950 & 3.636 & 3.743 & 0.940 & 2.334 & 3.975 & 4.035 & 0.965 \\
1 & 2.0 & 0.921 & 2.808 & 3.229 & 0.817 & 1.850 & 3.534 & 3.669 & 0.926 & 2.253 & 3.891 & 3.967 & 0.956 \\
1 & 2.5 & 0.896 & 2.774 & 3.207 & 0.813 & 1.786 & 3.469 & 3.627 & 0.914 & 2.198 & 3.835 & 3.923 & 0.949 \\
3 & 1.5 & 2.985 & 8.844 & 10.406 & 0.789 & 5.899 & 10.969 & 11.265 & 0.945 & 7.002 & 11.924 & 12.104 & 0.965 \\
3 & 2.0 & 2.984 & 8.737 & 11.331 & 0.689 & 5.657 & 10.650 & 11.121 & 0.914 & 6.771 & 11.673 & 11.900 & 0.956 \\
3 & 2.5 & 3.059 & 8.714 & 11.890 & 0.640 & 5.512 & 10.451 & 11.018 & 0.897 & 6.620 & 11.505 & 11.769 & 0.949 \\
5 & 1.5 & 5.022 & 14.829 & 17.180 & 0.807 & 9.895 & 18.355 & 18.847 & 0.945 & 11.744 & 19.974 & 20.239 & 0.969 \\
5 & 2.0 & 4.956 & 14.689 & 17.356 & 0.785 & 9.376 & 17.819 & 18.457 & 0.930 & 11.332 & 19.544 & 19.886 & 0.960 \\
5 & 2.5 & 5.027 & 14.692 & 18.363 & 0.725 & 9.138 & 17.484 & 18.314 & 0.910 & 11.119 & 19.255 & 19.662 & 0.952 \\
\bottomrule
\end{tabular}
\label{tab:approx-ratio-separated-nH3}
\end{table}

\begin{table}[t]
\centering
\footnotesize
\setlength{\tabcolsep}{1pt}
\caption{Approximate ratios under separated equilibrium with $n_H = 5$}
\begin{tabular}{cc|cccc|cccc|cccc}
\toprule
$\alpha$ & $\gamma_L$
& \multicolumn{4}{c|}{$\beta=2$}
& \multicolumn{4}{c|}{$\beta=4$}
& \multicolumn{4}{c}{$\beta=6$} \\
& &
$W^*(\bm{p}_C)$ & $W^*(\bm{p}_{\text{TPBL}})$ & $W^*(\tilde{\bm{p}}_{\text{opt}})$ & $\rho$
& $W^*(\bm{p}_C)$ & $W^*(\bm{p}_{\text{TPBL}})$ & $W^*(\tilde{\bm{p}}_{\text{opt}})$ & $\rho$
& $W^*(\bm{p}_C)$ & $W^*(\bm{p}_{\text{TPBL}})$ & $W^*(\tilde{\bm{p}}_{\text{opt}})$ & $\rho$ \\
\midrule
1 & 1.5 & 1.084 & 2.718 & 3.355 & 0.719 & 2.027 & 3.628 & 3.925 & 0.843 & 2.382 & 3.969 & 4.173 & 0.886 \\
1 & 2.0 & 1.070 & 2.697 & 3.343 & 0.716 & 1.990 & 3.592 & 3.891 & 0.843 & 2.348 & 3.935 & 4.141 & 0.885 \\
1 & 2.5 & 1.062 & 2.685 & 3.335 & 0.714 & 1.969 & 3.567 & 3.870 & 0.840 & 2.326 & 3.913 & 4.120 & 0.884 \\
3 & 1.5 & 4.098 & 10.265 & 13.273 & 0.672 & 6.156 & 10.930 & 11.804 & 0.845 & 7.145 & 11.908 & 12.518 & 0.887 \\
3 & 2.0 & 4.285 & 10.707 & 13.401 & 0.704 & 6.062 & 10.841 & 11.727 & 0.844 & 7.044 & 11.806 & 12.423 & 0.885 \\
3 & 2.5 & 4.331 & 10.872 & 13.391 & 0.722 & 6.000 & 10.774 & 11.664 & 0.843 & 6.978 & 11.739 & 12.361 & 0.884 \\
5 & 1.5 & 6.983 & 17.691 & 24.167 & 0.623 & 10.362 & 18.409 & 19.846 & 0.848 & 11.972 & 19.933 & 20.915 & 0.890 \\
5 & 2.0 & 7.847 & 19.771 & 27.533 & 0.606 & 10.388 & 18.252 & 19.932 & 0.824 & 11.863 & 19.743 & 20.743 & 0.887 \\
5 & 2.5 & 8.365 & 21.321 & 28.755 & 0.635 & 10.355 & 18.247 & 19.989 & 0.819 & 11.767 & 19.619 & 20.677 & 0.881 \\
\bottomrule
\end{tabular}
\label{tab:approx-ratio-separated-nH5}
\end{table}

\begin{table}[t]
\centering
\footnotesize
\setlength{\tabcolsep}{1pt}
\caption{Approximate ratios under separated equilibrium with $n_H = 7$}
\begin{tabular}{cc|cccc|cccc|cccc}
\toprule
$\alpha$ & $\gamma_L$
& \multicolumn{4}{c|}{$\beta=2$}
& \multicolumn{4}{c|}{$\beta=4$}
& \multicolumn{4}{c}{$\beta=6$} \\
& &
$W^*(\bm{p}_C)$ & $W^*(\bm{p}_{\text{TPBL}})$ & $W^*(\tilde{\bm{p}}_{\text{opt}})$ & $\rho$
& $W^*(\bm{p}_C)$ & $W^*(\bm{p}_{\text{TPBL}})$ & $W^*(\tilde{\bm{p}}_{\text{opt}})$ & $\rho$
& $W^*(\bm{p}_C)$ & $W^*(\bm{p}_{\text{TPBL}})$ & $W^*(\tilde{\bm{p}}_{\text{opt}})$ & $\rho$ \\
\midrule
1 & 1.5 & 1.103 & 2.571 & 3.302 & 0.668 & 2.033 & 3.639 & 3.908 & 0.857 & 2.384 & 3.998 & 4.161 & 0.908 \\
1 & 2.0 & 1.098 & 2.560 & 3.298 & 0.665 & 2.020 & 3.615 & 3.895 & 0.851 & 2.365 & 3.975 & 4.143 & 0.906 \\
1 & 2.5 & 1.095 & 2.554 & 3.295 & 0.663 & 2.011 & 3.601 & 3.887 & 0.848 & 2.356 & 3.961 & 4.135 & 0.902 \\
3 & 1.5 & 4.548 & 10.370 & 12.861 & 0.700 & 6.233 & 10.996 & 11.824 & 0.852 & 7.151 & 11.993 & 12.483 & 0.908 \\
3 & 2.0 & 4.534 & 10.349 & 12.850 & 0.699 & 6.197 & 10.930 & 11.789 & 0.846 & 7.104 & 11.926 & 12.428 & 0.906 \\
3 & 2.5 & 4.525 & 10.331 & 12.844 & 0.698 & 6.172 & 10.889 & 11.765 & 0.843 & 7.079 & 11.884 & 12.404 & 0.902 \\
5 & 1.5 & 9.680 & 22.207 & 27.368 & 0.708 & 10.944 & 19.127 & 20.478 & 0.858 & 12.112 & 20.120 & 20.952 & 0.906 \\
5 & 2.0 & 9.702 & 22.514 & 27.352 & 0.726 & 10.890 & 19.056 & 20.433 & 0.856 & 12.043 & 20.011 & 20.886 & 0.901 \\
5 & 2.5 & 9.694 & 22.486 & 27.342 & 0.725 & 10.847 & 18.987 & 20.391 & 0.853 & 12.004 & 19.945 & 20.847 & 0.898 \\
\bottomrule
\end{tabular}
\label{tab:approx-ratio-separated-nH7}
\end{table}

We then consider the hybrid equilibrium case. We again consider three different values of $n_H$ (i.e., $n_H = 3, 5, 7$) while varying $\alpha \in \{1, 3, 5\}$ and $\beta \in \{4, 6, 8\}$, with $h = \sqrt{x}$ and $\gamma_L = 2.01$ fixed. The approximate ratios under these configurations are summarized in Table~\ref{tab:approx-ratio-hybrid}.  Note that the range of $\beta$ differs from earlier experiments, $\gamma_L$ is chosen close to $\gamma_H$, and the case $n_H = 7$ with $\beta = 4$ is omitted—all these choices are intended to ensure the existence of a hybrid equilibrium in the simulations.

\begin{table}[t]
\centering
\footnotesize
\setlength{\tabcolsep}{1pt}
\caption{Approximate ratios under hybrid equilibrium}
\begin{tabular}{cc|cccc|cccc|cccc}
\toprule
$\beta$ & $\alpha$
& \multicolumn{4}{c|}{$n_H = 3$}
& \multicolumn{4}{c|}{$n_H = 5$}
& \multicolumn{4}{c}{$n_H = 7$} \\
& &
$W^*(\bm{p}_C)$ & $W^*(\bm{p}_{\text{TPBL}})$ & $W^*(\tilde{\bm{p}}_{\text{opt}})$ & $\rho$
& $W^*(\bm{p}_C)$ & $W^*(\bm{p}_{\text{TPBL}})$ & $W^*(\tilde{\bm{p}}_{\text{opt}})$ & $\rho$
& $W^*(\bm{p}_C)$ & $W^*(\bm{p}_{\text{TPBL}})$ & $W^*(\tilde{\bm{p}}_{\text{opt}})$ & $\rho$ \\
\midrule
4 & 1 & 1.578 & 2.932 & 3.066 & 0.910 & 1.576 & 2.886 & 3.062 & 0.882 & --- & --- & --- & --- \\
4 & 3 & 4.733 & 8.797 & 9.197 & 0.910 & 4.729 & 8.657 & 9.187 & 0.881 & --- & --- & --- & --- \\
4 & 5 & 7.888 & 14.661 & 15.328 & 0.910 & 7.881 & 14.428 & 15.311 & 0.881 & --- & --- & --- & --- \\
6 & 1 & 2.060 & 3.510 & 3.609 & 0.936 & 2.059 & 3.481 & 3.606 & 0.919 & 2.096 & 3.501 & 3.642 & 0.909 \\
6 & 3 & 6.180 & 10.529 & 10.827 & 0.936 & 6.177 & 10.443 & 10.818 & 0.919 & 6.288 & 10.504 & 10.925 & 0.909 \\
6 & 5 & 10.300 & 17.549 & 18.046 & 0.936 & 10.295 & 17.405 & 18.030 & 0.919 & 10.480 & 17.507 & 18.208 & 0.909 \\
8 & 1 & 2.319 & 3.812 & 3.888 & 0.952 & 2.318 & 3.792 & 3.886 & 0.940 & 2.349 & 3.777 & 3.916 & 0.911 \\
8 & 3 & 6.956 & 11.436 & 11.665 & 0.951 & 6.953 & 11.375 & 11.658 & 0.940 & 7.047 & 11.331 & 11.747 & 0.912 \\
8 & 5 & 11.593 & 19.060 & 19.442 & 0.951 & 11.589 & 18.959 & 19.430 & 0.940 & 11.745 & 18.884 & 19.578 & 0.911 \\
\bottomrule
\end{tabular}
\label{tab:approx-ratio-hybrid}
\end{table}

\subsection{Experiment Results on the Optimal \texorpdfstring{$\bm{c_H}$}{c\_H} and \texorpdfstring{$\bm{c_L}$}{c\_L}} \label{subsec:ap-exp-cHcL}

In this subsection, we present the property $c_H > c_L$ that Proposition~\ref{prop:asym-optimal-ch-cl} indicates is valid across various parameter configurations. We fix $\gamma_H = 1$, $h = \sqrt{x}$ and $\alpha = 3$, and consider three different values of $n_H$ (i.e., $n_H = 3, 5, 7$) while varying $\beta \in \{2, 4\}$ and $\gamma_L \in \{1.5, 2.0, 2.5\}$. Due to the step size of grid search cannot be too small in practical implementation, we fix $\alpha = 3$ and do not take $\beta = 6$ to avoid the case where the $c_H$ and $c_L$ are too small to be distinguished. For each parameter configuration, we take the position bias vector $\tilde{\bm{p}}_{\text{opt}}$. The optimal $c_H$ and $c_L$ under these configurations are summarized in Tables~\ref{tab:optimal-cHcL}.

\begin{table}[t]
\centering
\small
\setlength{\tabcolsep}{10pt}
\caption{Optimal $c_H$ and $c_L$ under various parameter configurations}
\begin{tabular}{cc|cc|cc|cc}
\toprule
$\beta$ & $\gamma_L$
& \multicolumn{2}{c|}{$n_H = 3$}
& \multicolumn{2}{c|}{$n_H = 5$}
& \multicolumn{2}{c}{$n_H = 7$} \\
& &
$c_H$ & $c_L$
& $c_H$ & $c_L$
& $c_H$ & $c_L$ \\
\midrule
2 & 1.5 & 0.12 & 0.12 & 0.72 & 0.08 & 0.72 & 0.00 \\
4 & 1.5 & 0.02 & 0.02 & 0.02 & 0.02 & 0.10 & 0.00 \\
2 & 2.0 & 0.36 & 0.12 & 0.88 & 0.06 & 0.72 & 0.00 \\
4 & 2.0 & 0.14 & 0.02 & 0.10 & 0.02 & 0.10 & 0.00 \\
2 & 2.5 & 0.54 & 0.16 & 0.92 & 0.04 & 0.72 & 0.00 \\
4 & 2.5 & 0.20 & 0.02 & 0.12 & 0.00 & 0.10 & 0.00 \\
\bottomrule
\end{tabular}
\label{tab:optimal-cHcL}
\end{table}

The results indicate that in all configurations except for $\gamma_L = 1.5$, we observe $c_H > c_L$ (in fact, if a finer grid search were employed, the case $\gamma_L = 1.5$ might also satisfy $c_H > c_L$). Moreover, as $\gamma_L$ increases, the gap between $c_H$ and $c_L$ generally widens. This trend aligns with the insight from Proposition~\ref{prop:asym-optimal-ch-cl}: according to Lemma~\ref{lem:asym-optimal-ch-cl-1}, raising $\gamma_L$ is equivalent to reducing the position biases, which increases the ratio $r$ in Proposition~\ref{prop:asym-optimal-ch-cl} and consequently enlarges the difference between $c_H$ and $c_L$.

\subsection{Additional Results of Search Engine Profit Analysis} \label{subsec:ap-exp-profit}

\begin{figure}[t]
    \centering
    \begin{subfigure}[b]{0.36\linewidth}
        \centering
        \includegraphics[width=\linewidth]{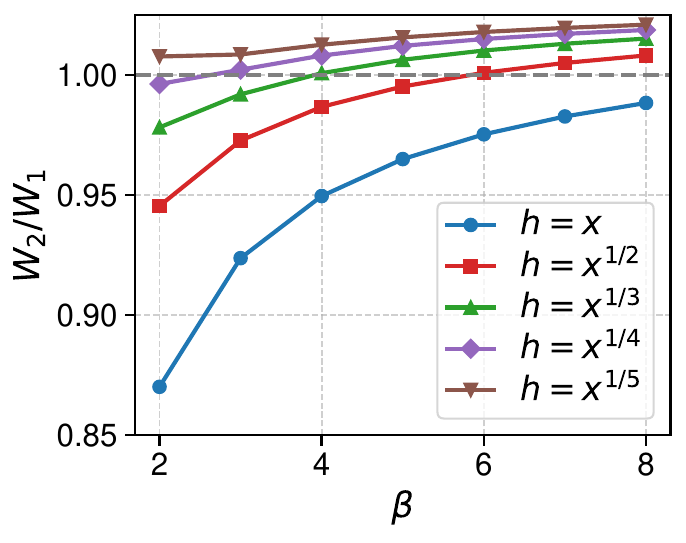}
        \caption{Short term effect}
        \label{fig:short-sym}
    \end{subfigure}
    \begin{subfigure}[b]{0.36\linewidth}
        \centering
        \includegraphics[width=\linewidth]{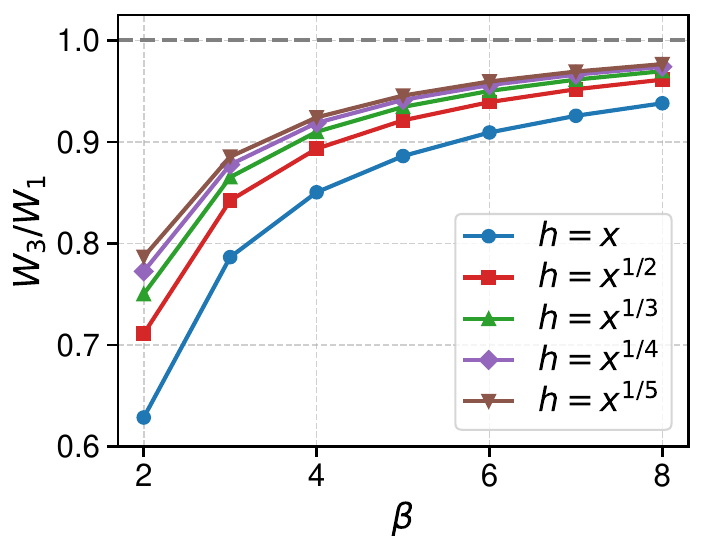}
        \caption{Long term effect}
        \label{fig:long-sym}
    \end{subfigure}
    \caption{Short- and long-term effects of AI Overview in symmetric case}
    \label{fig:short-long-sym}
\end{figure}

In this subsection, we present additional results of search engine profit in Section~\ref{subsec:experiment-profit}. We first demonstrate the results for symmetric case. Similar to Section~\ref{subsec:experiment-profit}, we first examine the short-term and long-term effect of introducing the AI Overview without mechanism design. We sweep over $\beta \in \{2, 3, 4, 5, 6, 7, 8\}$ and $h(x) \in \{x, \sqrt{x}, \sqrt[3]{x}, \sqrt[4]{x}, \sqrt[5]{x}\}$. Figure~\ref{fig:short-long-sym} plots $W_2 / W_1$ and $W_3 / W_1$ (defined as in Section~\ref{subsec:experiment-profit}) across the tested parameters. The observed patterns align with those reported in Section~\ref{subsec:experiment-profit}: a more capable AI Overview (reflected by a more concave $h$) and a larger $\beta$ yields a higher $W_2 / W_1$ and $W_3 / W_1$, and $W_3 / W_1$ remains below 1 under all parameter configurations in this setting, indicating that introducing an AI Overview without mechanism design would lower long-term search engine profit.

We then examine long-term search engine profit when both the citation mechanism (UBL) and the compensation mechanism are introduced. We sweep over $\beta \in \{2, 3, 4, 5, 6, 7, 8\}$, $\alpha \in \{1, 3, 5, 7, 9\}$, and $h(x) \in \{x, \sqrt[3]{x}, \sqrt[5]{x}\}$. Figure~\ref{fig:UBL-fig-ratio} plots the ratio $W_4 / (\alpha W_1)$ across the tested parameters. The observed patterns align with those reported in Section~\ref{subsec:experiment-profit}: after introducing the mechanisms, long-term search engine profit improves for all parameter settings, and $W_4 / (\alpha W_1) > 1$ is achieved under most configurations. $h$ has a limited effect on the overall trend, higher values of $\alpha$ correlate with increased $W_4 / (\alpha W_1)$, and $W_4 / (\alpha W_1)$ remains close to $W_3 / W_1$ for large $\beta$.

\begin{figure}[t]
    \centering
    \begin{subfigure}[b]{0.32\textwidth}
        \centering
        \includegraphics[width=\linewidth]{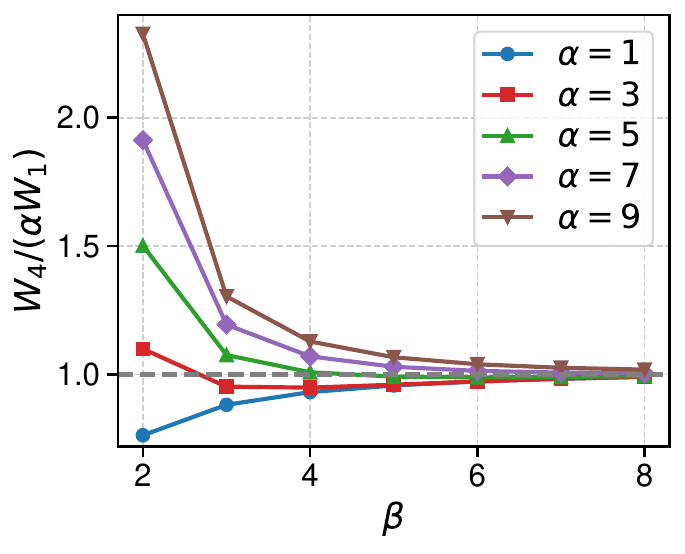}
        \caption{$h = x$}
        \label{fig:UBL-fig-1}
    \end{subfigure}
    \begin{subfigure}[b]{0.32\textwidth}
        \centering
        \includegraphics[width=\linewidth]{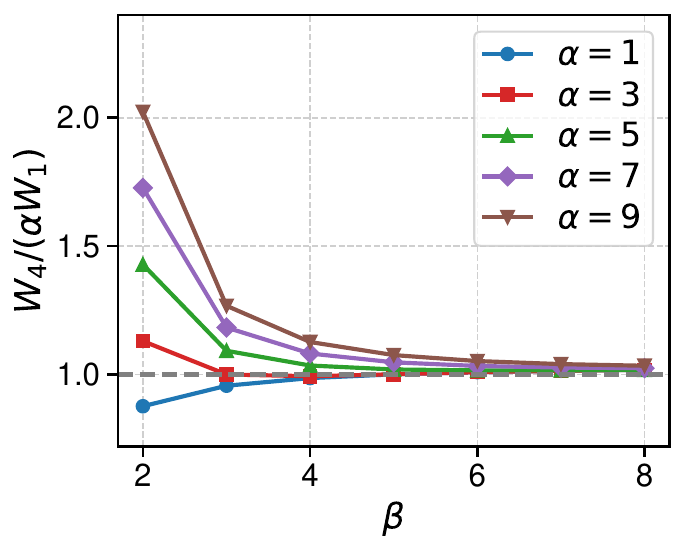}
        \caption{$h = \sqrt[3]{x}$}
        \label{fig:UBL-fig-1-3}
    \end{subfigure}
    \begin{subfigure}[b]{0.32\textwidth}
        \centering
        \includegraphics[width=\linewidth]{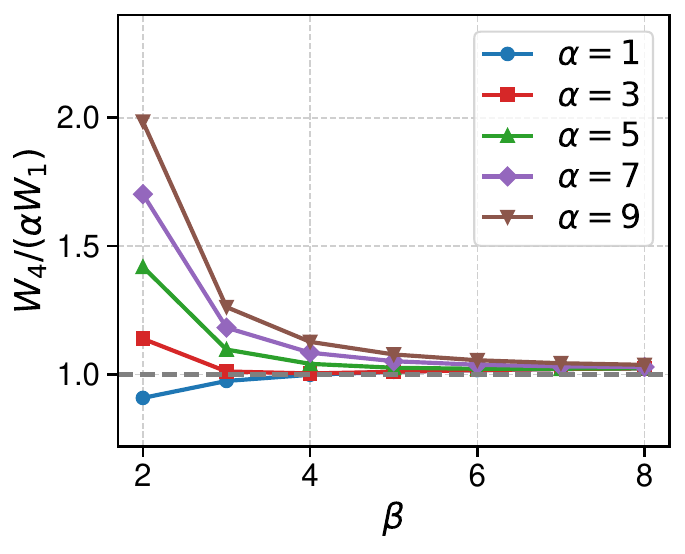}
        \caption{$h = \sqrt[5]{x}$}
        \label{fig:UBL-fig-1-5}
    \end{subfigure}
    \caption{Long-term search engine profit under mechanism design in symmetric case}
    \label{fig:UBL-fig-ratio}
\end{figure}

We then demonstrate the results for asymmetric case. We first consider the separated equilibrium case. For the short-term effect of introducing the AI Overview without mechanism design, we consider $n_H \in \{3, 5, 7, 9\}$ (we do not consider $n_H = 1$ as the equilibrium is always hybrid in this case) while varying $\beta \in \{2, 3, 4, 5, 6, 7, 8\}$, $\gamma_L \in \{1.5, 2.0, 2.5\}$ (we fix $\gamma_H = 1$), and $h(x) \in \{x, \sqrt{x}, \sqrt[3]{x}, \sqrt[4]{x}, \sqrt[5]{x}\}$. Figures~\ref{fig:short-asym-m5-ratio-c2-1p5}–\ref{fig:short-asym-m5-ratio-c2-2p5} report results for $n_H = 5$ across varying values of $\gamma_L$; $\gamma_L$ exhibits negligible influence on the outcome. Consequently, for $n_H = 3$ and $n_H = 9$ (the result for $\gamma = 7$ is in the main text), we present only results with $\gamma_L = 2.0$ (see Figures~\ref{fig:short-asym-m3-ratio-c2-2p0} and \ref{fig:short-asym-m9-ratio-c2-2p0}).

In general, the observed patterns align with those reported in Section~\ref{subsec:experiment-profit}, although in a few cases where $h$ is highly concave, the ratio first declines and then rises. This non-monotonic behavior arises because, when $h$ is strongly concave, the increment in the $h$-dependent term within $W_2$ due to a larger $\beta$ is relatively modest, whereas the linear term in $W_1$ grows more substantially. Finally, the overarching trend persists across different values of $n_H$.

\begin{figure}[t]
    \centering
    \begin{subfigure}[b]{0.32\textwidth}
        \centering
        \includegraphics[width=\linewidth]{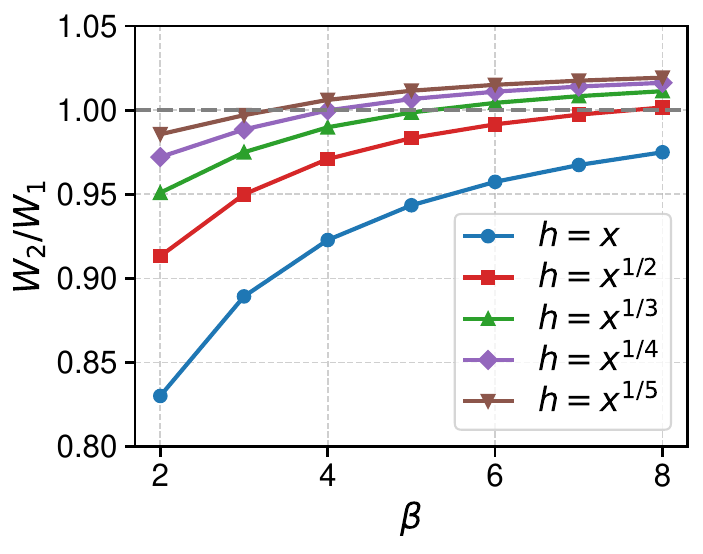}
        \caption{$n_H = 5$, $\gamma_L = 1.5$}
        \label{fig:short-asym-m5-ratio-c2-1p5}
    \end{subfigure}\hfill
    \begin{subfigure}[b]{0.32\textwidth}
        \centering
        \includegraphics[width=\linewidth]{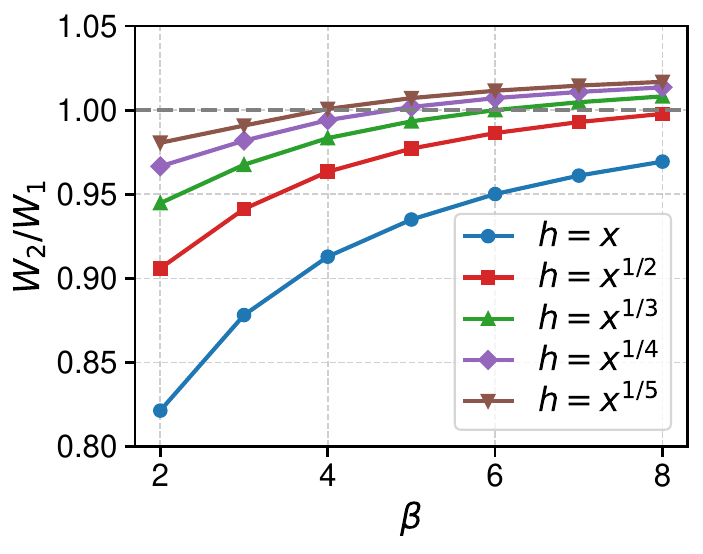}
        \caption{$n_H = 5$, $\gamma_L = 2.0$}
        \label{fig:short-asym-m5-ratio-c2-2p0}
    \end{subfigure}\hfill
    \begin{subfigure}[b]{0.32\textwidth}
        \centering
        \includegraphics[width=\linewidth]{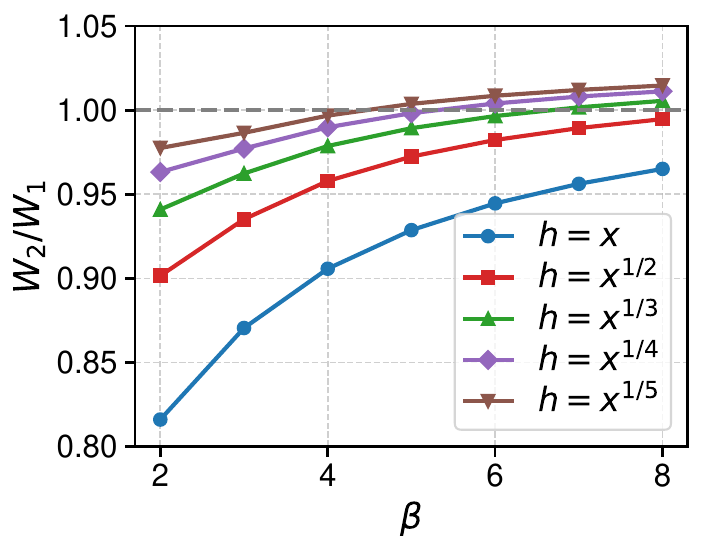}
        \caption{$n_H = 5$, $\gamma_L = 2.5$}
        \label{fig:short-asym-m5-ratio-c2-2p5}
    \end{subfigure}

    \vspace{0.8ex}

    \begin{subfigure}[b]{0.32\textwidth}
        \centering
        \includegraphics[width=\linewidth]{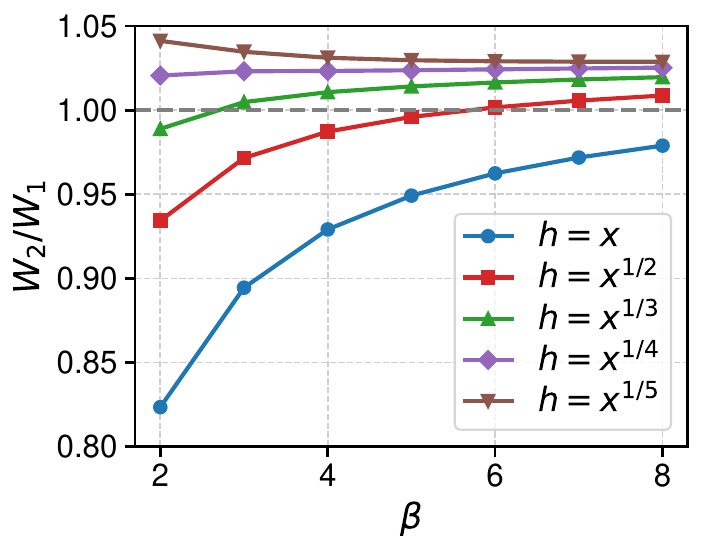}
        \caption{$n_H = 3$, $\gamma_L = 2.0$}
        \label{fig:short-asym-m3-ratio-c2-2p0}
    \end{subfigure}
    \begin{subfigure}[b]{0.32\textwidth}
        \centering
        \includegraphics[width=\linewidth]{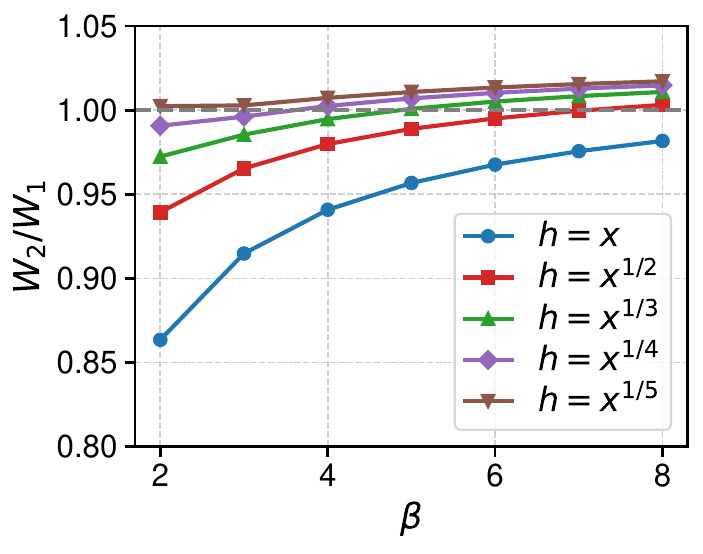}
        \caption{$n_H = 9$, $\gamma_L = 2.0$}
        \label{fig:short-asym-m9-ratio-c2-2p0}
    \end{subfigure}

    \caption{Short-term effect of AI Overview under separated equilibrium}
    \label{fig:short-term-profit-ratio-separated-asymmetric}
\end{figure}

Next, we analyze the long-term effect of introducing the AI Overview without mechanism design. We fix $\gamma_H = 1, \gamma_L = 2$, and vary $n_H \in \{3, 5, 7, 9\}$, $\beta \in \{2, 3, 4, 5, 6, 7, 8\}$, and $h(x) \in \{x, \sqrt{x}, \sqrt[3]{x}, \sqrt[4]{x}, \sqrt[5]{x}\}$. Figure~\ref{fig:long-term-profit-ratio-separated-asymmetric} illustrates the results. Consistent with the results in Section~\ref{subsec:experiment-profit}, $W_3 / W_1$ remains below 1 under all parameter configurations in this setting, the effects of $h$ and $\beta$ on $W_3 / W_1$ follow trends similar to those observed in the short-term analysis, and varying $n_H$ does not substantially alter these patterns. Moreover, the overarching trend persists across different values of $n_H$.

\begin{figure}[t]
    \centering
    \captionsetup[subfigure]{justification=centering}
    \begin{subfigure}[b]{0.32\linewidth}
        \centering
        \includegraphics[width=\linewidth]{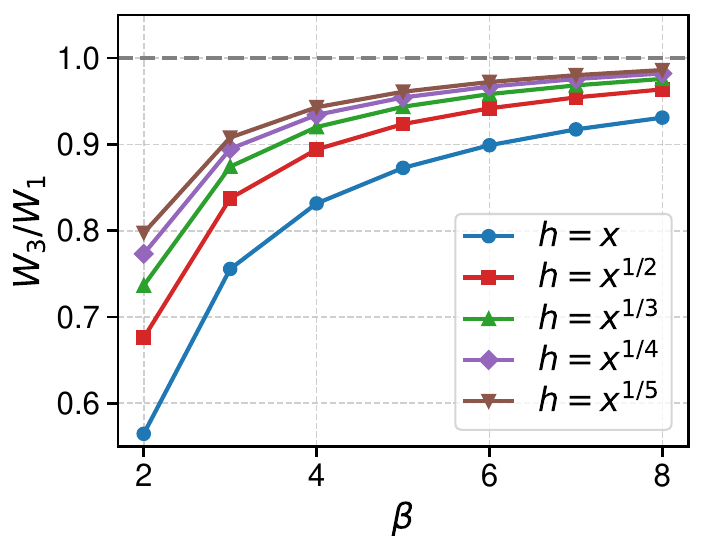}
        \caption{$n_H = 3$}
        \label{fig:long-asym-m3-ratio-c2-2p0}
    \end{subfigure}
    \begin{subfigure}[b]{0.32\linewidth}
        \centering
        \includegraphics[width=\linewidth]{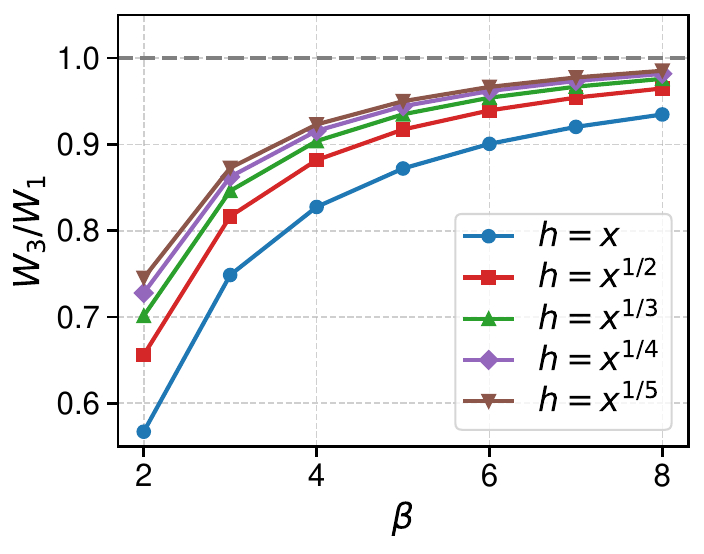}
        \caption{$n_H = 5$}
        \label{fig:long-asym-m5-ratio-c2-2p0}
    \end{subfigure}
    \begin{subfigure}[b]{0.32\linewidth}
        \centering
        \includegraphics[width=\linewidth]{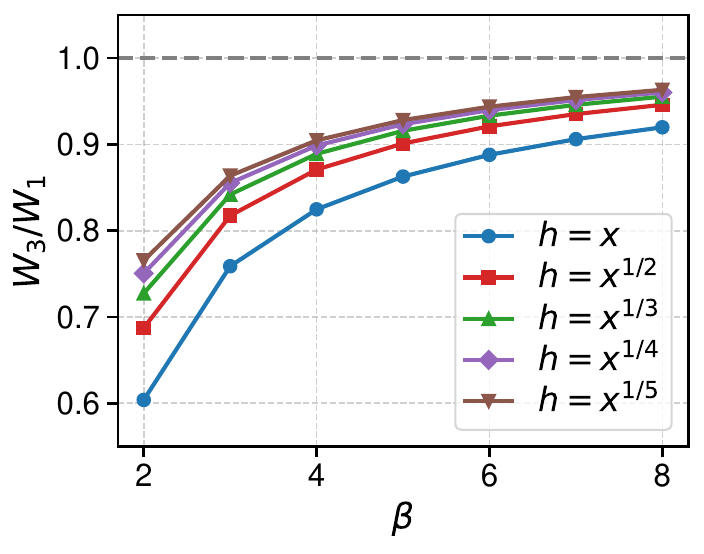}
        \caption{$n_H = 9$}
        \label{fig:long-asym-m9-ratio-c2-2p0}
    \end{subfigure}
    \caption{Long-term effect of AI Overview under separated equilibrium}
    \label{fig:long-term-profit-ratio-separated-asymmetric}
\end{figure}

We then analyze long-term search engine profit when both the citation mechanism (TPBL) and the compensation mechanism are introduced. We fix $\gamma_H = 1, \gamma_L = 2$, and vary $n_H \in \{3, 5, 7, 9\}$, $\beta \in \{2, 3, 4, 5, 6, 7, 8\}$, $h(x) \in \{x, \sqrt[3]{x}, \sqrt[5]{x}\}$, and $\alpha \in \{1, 3, 5, 7, 9\}$. The results are shown in Figure~\ref{fig:long-term-profit-tpbl-separated-asymmetric}. In addition to the general trends described in Section~\ref{subsec:experiment-profit}, we observe that the profit gain from mechanism design becomes more pronounced as $n_H$ increases (tested for $n_H = 3, 5, 7, 9, 10$; the case $n_H = 10$ corresponds to the symmetric setting). This highlights the importance of providing stronger incentives to high‑ability creators.

\begin{figure}[tb]
    \centering
    \begin{subfigure}[b]{0.32\textwidth}
        \centering
        \includegraphics[width=\linewidth]{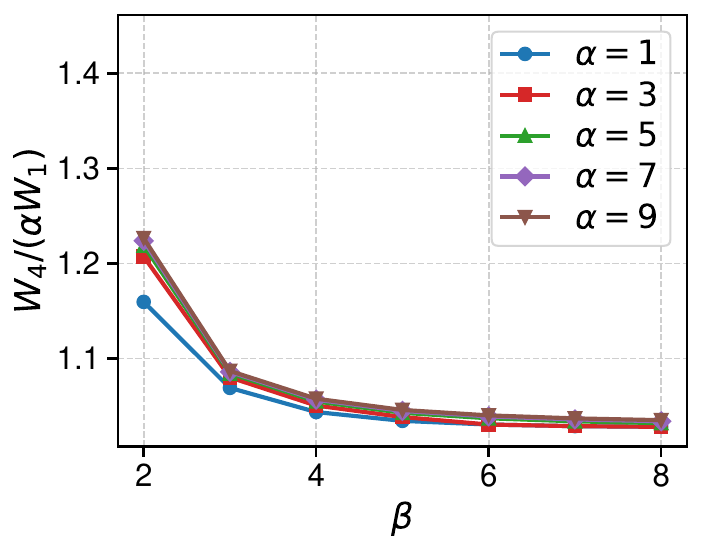}
        \caption{$n_H = 3, h = x$}
        \label{fig:TPBL-m3-h-alpha-c2-2p0}
    \end{subfigure}\hfill
    \begin{subfigure}[b]{0.32\textwidth}
        \centering
        \includegraphics[width=\linewidth]{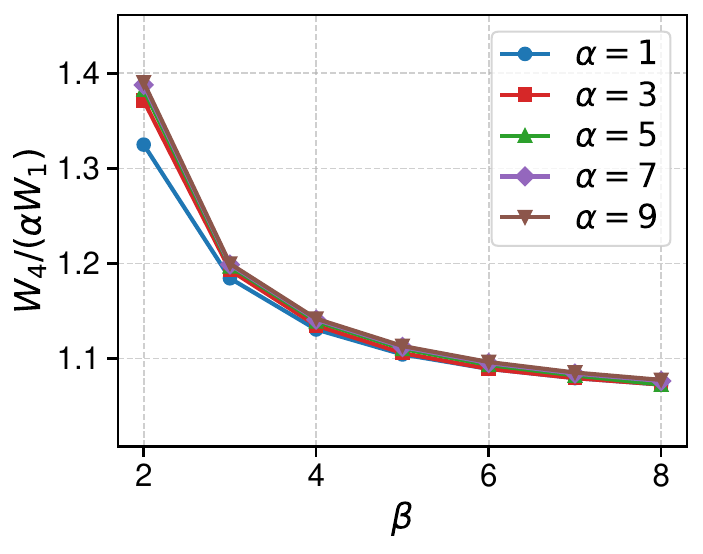}
        \caption{$n_H = 3, h = \sqrt[3]{x}$}
        \label{fig:TPBL-m3-h-alpha-1-3-c2-2p0}
    \end{subfigure}\hfill
    \begin{subfigure}[b]{0.32\textwidth}
        \centering
        \includegraphics[width=\linewidth]{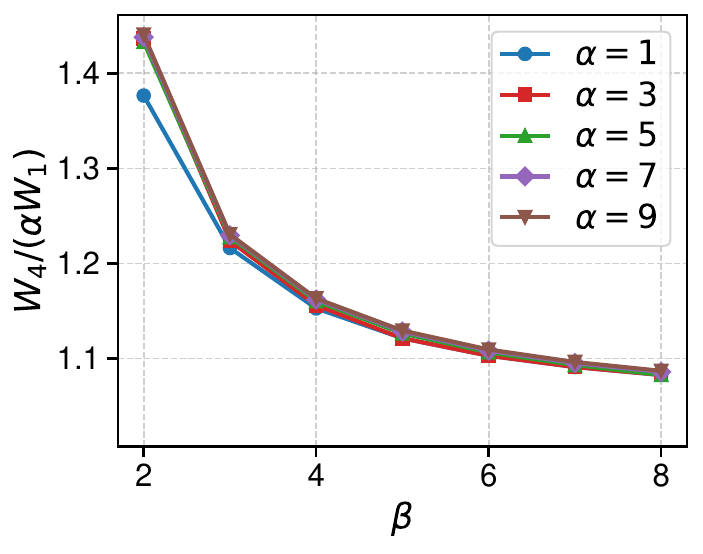}
        \caption{$n_H = 3, h = \sqrt[5]{x}$}
        \label{fig:TPBL-m3-h-alpha-1-5-c2-2p0}
    \end{subfigure}

    \vspace{0.8ex}
    
    \begin{subfigure}[b]{0.32\textwidth}
        \centering
        \includegraphics[width=\linewidth]{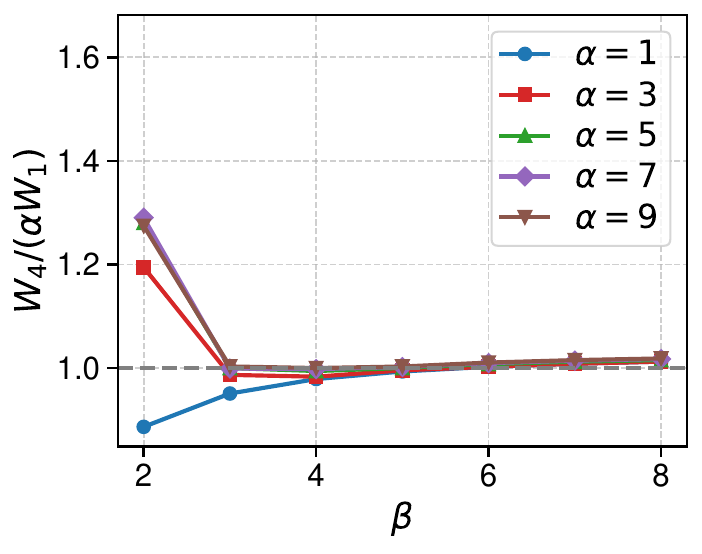}
        \caption{$n_H = 5, h = x$}
        \label{fig:TPBL-m5-h-alpha-c2-2p0}
    \end{subfigure}\hfill
    \begin{subfigure}[b]{0.32\textwidth}
        \centering
        \includegraphics[width=\linewidth]{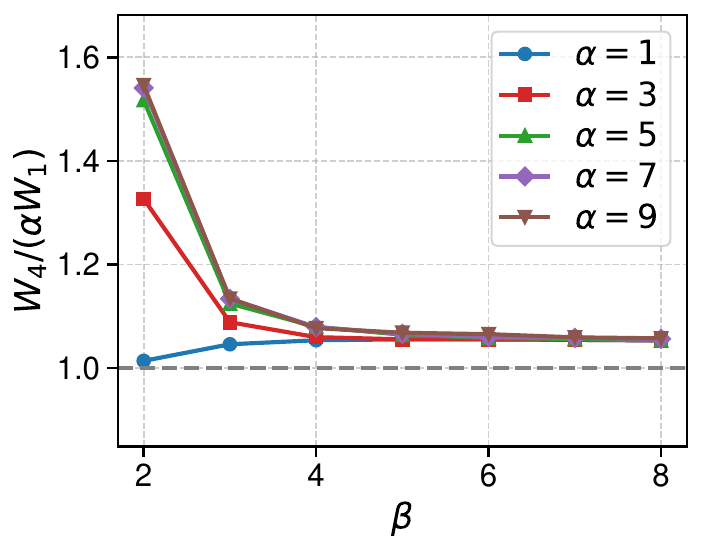}
        \caption{$n_H = 5, h = \sqrt[3]{x}$}
        \label{fig:TPBL-m5-h-alpha-1-3-c2-2p0}
    \end{subfigure}\hfill
    \begin{subfigure}[b]{0.32\textwidth}
        \centering
        \includegraphics[width=\linewidth]{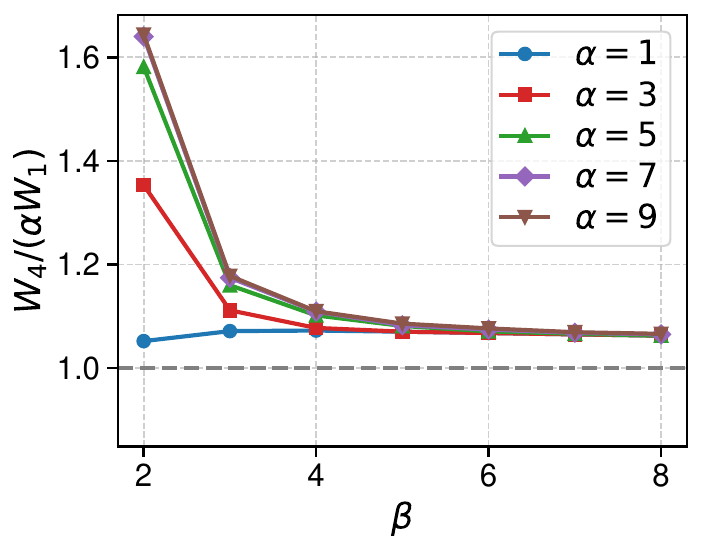}
        \caption{$n_H = 5, h = \sqrt[5]{x}$}
        \label{fig:TPBL-m5-h-alpha-1-5-c2-2p0}
    \end{subfigure}

    \vspace{0.8ex}

    \begin{subfigure}[b]{0.32\textwidth}
        \centering
        \includegraphics[width=\linewidth]{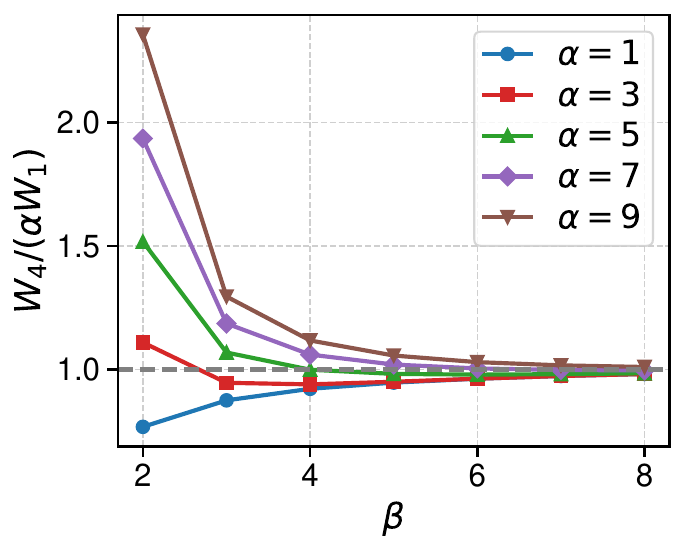}
        \caption{$n_H = 9, h = x$}
        \label{fig:TPBL-m9-h-alpha-c2-2p0}
    \end{subfigure}\hfill
    \begin{subfigure}[b]{0.32\textwidth}
        \centering
        \includegraphics[width=\linewidth]{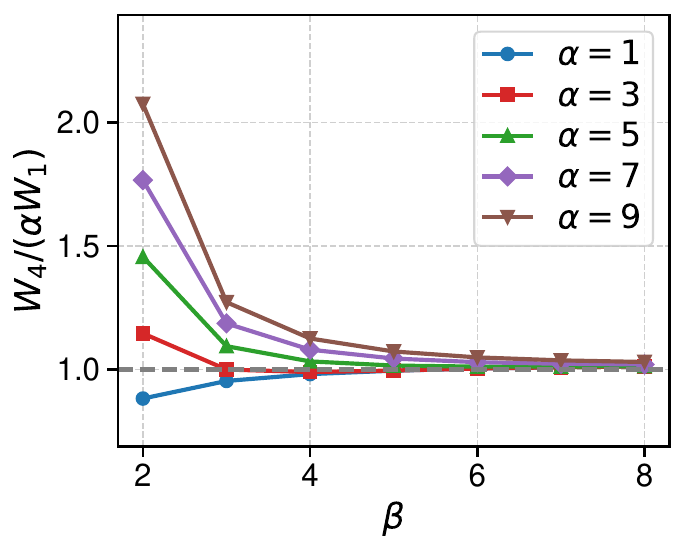}
        \caption{$n_H = 9, h = \sqrt[3]{x}$}
        \label{fig:TPBL-m9-h-alpha-1-3-c2-2p0}
    \end{subfigure}\hfill
    \begin{subfigure}[b]{0.32\textwidth}
        \centering
        \includegraphics[width=\linewidth]{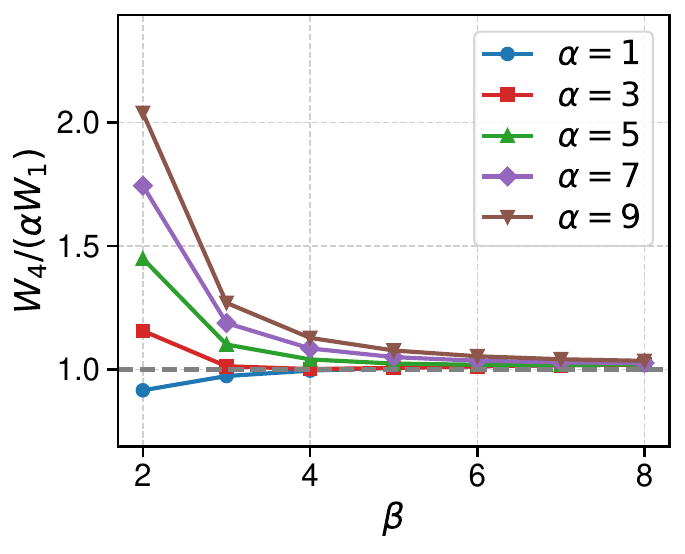}
        \caption{$n_H = 9, h = \sqrt[5]{x}$}
        \label{fig:TPBL-m9-h-alpha-1-5-c2-2p0}
    \end{subfigure}

    \caption{Long-term search engine profit under mechanism design with separated equilibrium}
    \label{fig:long-term-profit-tpbl-separated-asymmetric}
\end{figure}

Then we analyze the hybrid equilibrium case. We first examine the short-term and long-term effect of introducing the AI Overview without mechanism design. Specifically, we consider $n_H \in \{1, 3, 5, 7\}$ (we do not consider $n_H = 9$ as the equilibrium is usually separated in this case) while varying $\beta \in \{4, 5, 6, 7, 8\}$, and $h(x) \in \{x, \sqrt{x}, \sqrt[3]{x}, \sqrt[4]{x}, \sqrt[5]{x}\}$. To ensure the existence of hybrid equilibrium, we fix $\gamma_H = 2$ and $\gamma_L = 2.01$ for $n_H \in \{1, 3, 5\}$, and $\gamma_L = 2.001$ for $n_H = 7$. Figures~\ref{fig:short-term-profit-ratio-hybrid-asymmetric} and Figure~\ref{fig:long-term-profit-ratio-hybrid-asymmetric} illustrate the ratio $W_2 / W_1$ and $W_3 / W_1$ under the specified configurations, respectively. Overall, the observed patterns align with those reported in Section~\ref{subsec:experiment-profit} and the separated equilibrium case in this subsection, thus we omit further discussion for brevity.

\begin{figure}[tb]
    \centering
    \captionsetup[subfigure]{justification=centering}
    \begin{subfigure}[b]{0.24\linewidth}
        \centering
        \includegraphics[width=\linewidth]{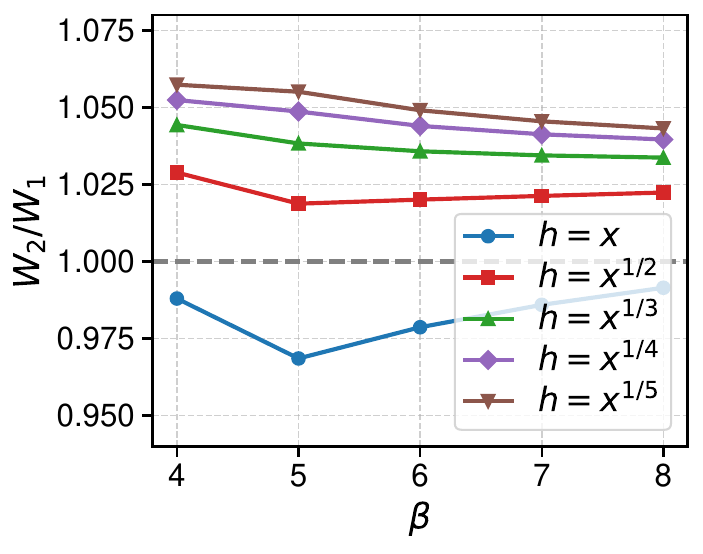}
        \caption{$n_H = 1$}
        \label{fig:short-hybrid-m1-ratio-c2-2p01}
    \end{subfigure}
    \begin{subfigure}[b]{0.24\linewidth}
        \centering
        \includegraphics[width=\linewidth]{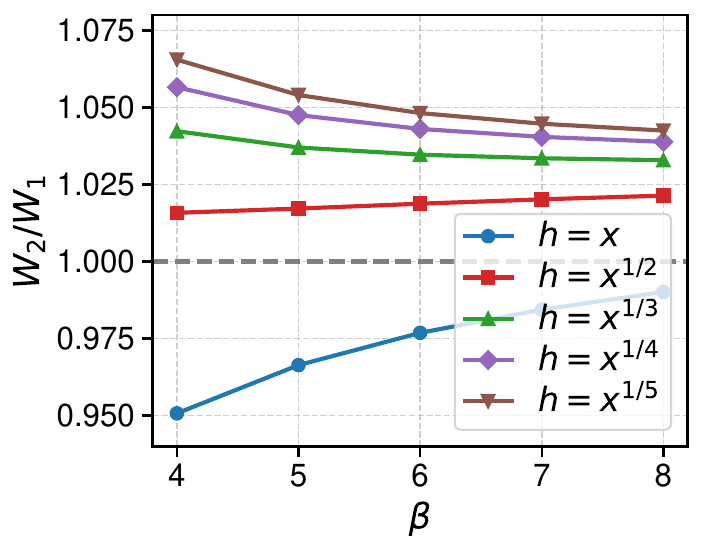}
        \caption{$n_H = 3$}
        \label{fig:short-hybrid-m3-ratio-c2-2p01}
    \end{subfigure}
    \begin{subfigure}[b]{0.24\linewidth}
        \centering
        \includegraphics[width=\linewidth]{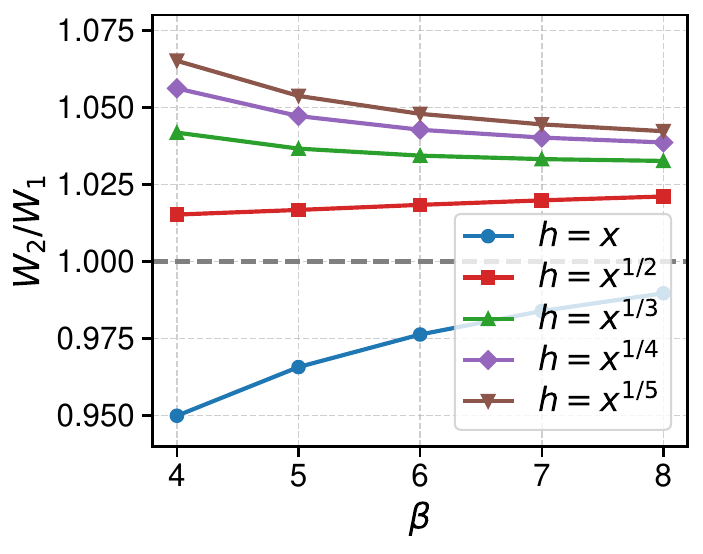}
        \caption{$n_H = 5$}
        \label{fig:short-hybrid-m5-ratio-c2-2p01}
    \end{subfigure}
    \begin{subfigure}[b]{0.24\linewidth}
        \centering
        \includegraphics[width=\linewidth]{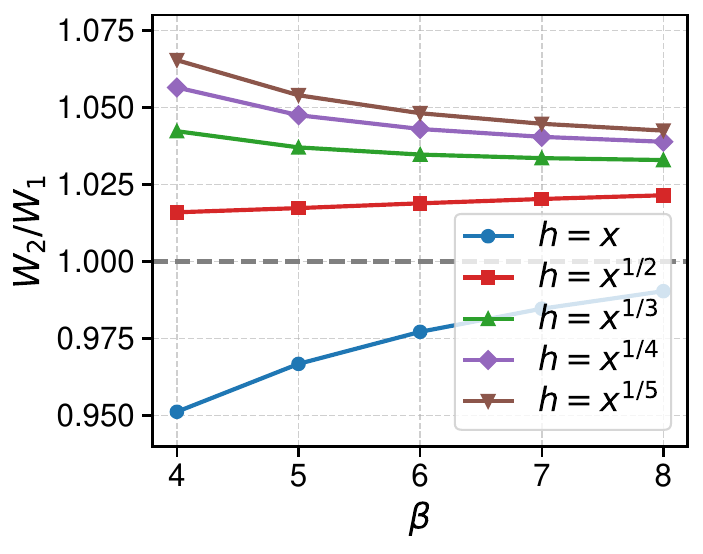}
        \caption{$n_H = 7$}
        \label{fig:short-hybrid-m7-ratio-c2-2p001}
    \end{subfigure}
    \caption{Short-term effect of AI Overview under hybrid equilibrium}
    \label{fig:short-term-profit-ratio-hybrid-asymmetric}
\end{figure}

\begin{figure}[tb]
    \centering
    \captionsetup[subfigure]{justification=centering}
    \begin{subfigure}[b]{0.24\linewidth}
        \centering
        \includegraphics[width=\linewidth]{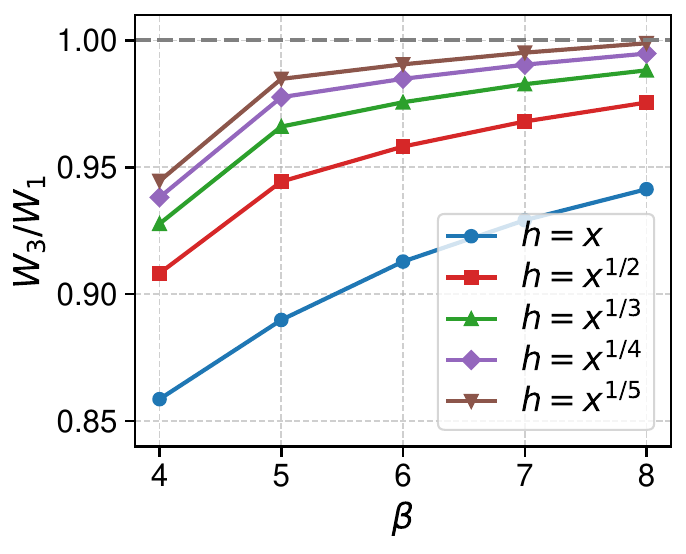}
        \caption{$n_H = 1$}
        \label{fig:long-hybrid-m1-ratio-c2-2p01}
    \end{subfigure}
    \begin{subfigure}[b]{0.24\linewidth}
        \centering
        \includegraphics[width=\linewidth]{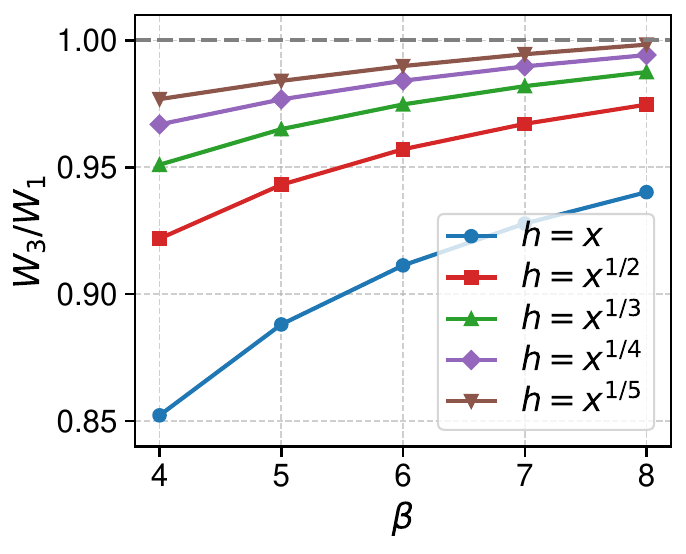}
        \caption{$n_H = 3$}
        \label{fig:long-hybrid-m3-ratio-c2-2p01}
    \end{subfigure}
    \begin{subfigure}[b]{0.24\linewidth}
        \centering
        \includegraphics[width=\linewidth]{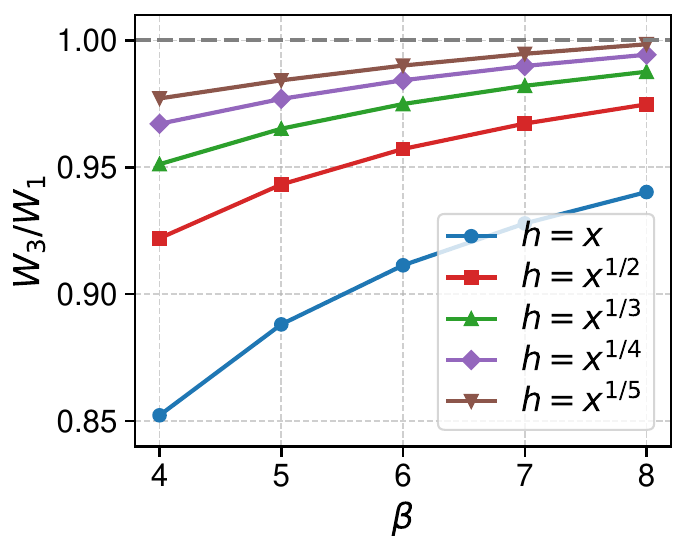}
        \caption{$n_H = 5$}
        \label{fig:long-hybrid-m5-ratio-c2-2p01}
    \end{subfigure}
    \begin{subfigure}[b]{0.24\linewidth}
        \centering
        \includegraphics[width=\linewidth]{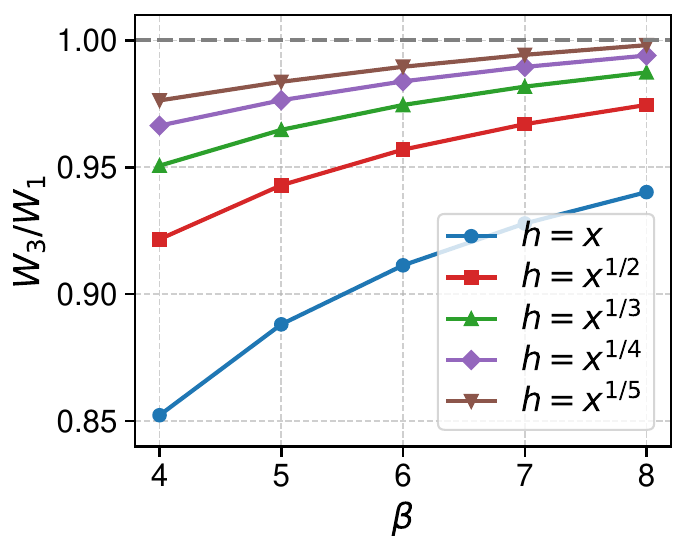}
        \caption{$n_H = 7$}
        \label{fig:long-hybrid-m7-ratio-c2-2p001}
    \end{subfigure}
    \caption{Long-term effect of AI Overview under hybrid equilibrium}
    \label{fig:long-term-profit-ratio-hybrid-asymmetric}
\end{figure}

Finally, we analyze long-term search engine profit when both the citation mechanism (TPBL) and the compensation mechanism are introduced. We fix $\gamma_H = 2, \gamma_L = 2.01$ for $n_H \in \{1, 3, 5\}$, and $\gamma_L = 2.001$ for $n_H = 7$. We then vary $\beta \in \{4, 5, 6, 7, 8\}$, $h(x) \in \{x, \sqrt[3]{x}, \sqrt[5]{x}\}$, and $\alpha \in \{1, 3, 5, 7, 9\}$. The results are shown in Figure~\ref{fig:long-term-profit-tpbl-hybrid-asymmetric}. In fact, the overall patterns align with those reported in Section~\ref{subsec:experiment-profit} and the separated equilibrium case in this subsection, thus we omit further discussion for brevity.

\begin{figure}[tb]
    \centering
    \begin{subfigure}[b]{0.32\textwidth}
        \centering
        \includegraphics[width=\linewidth]{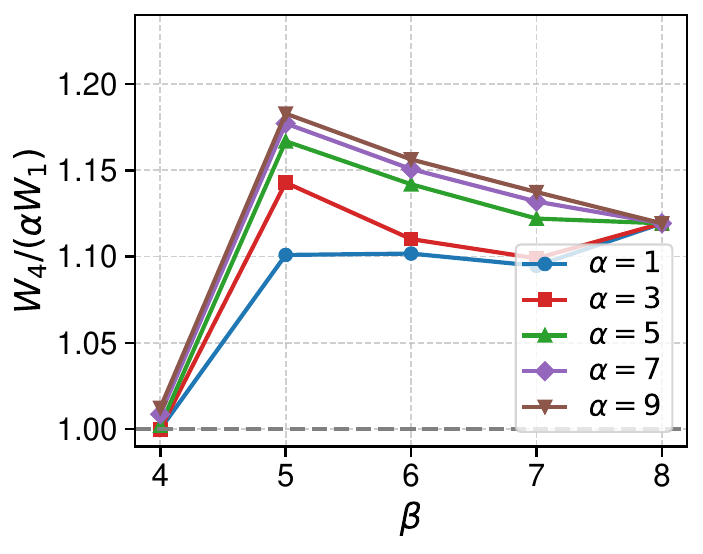}
        \caption{$n_H = 1, h = x$}
        \label{fig:md-hybrid-m1-h-alpha-c2-2p01}
    \end{subfigure}\hfill
    \begin{subfigure}[b]{0.32\textwidth}
        \centering
        \includegraphics[width=\linewidth]{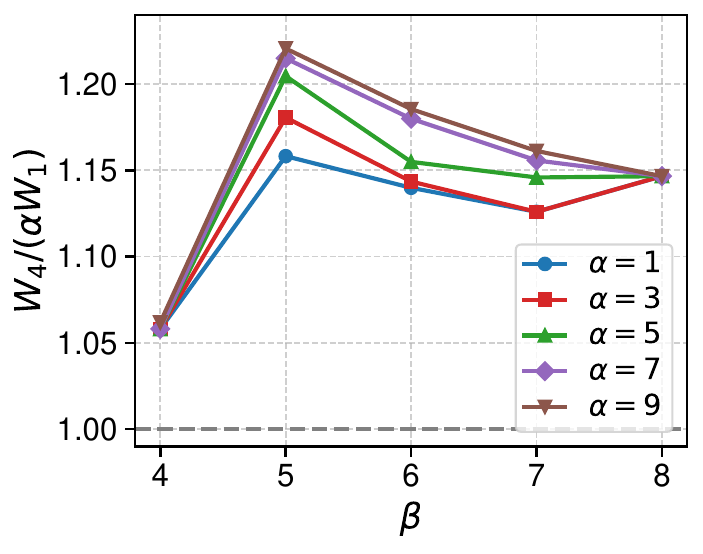}
        \caption{$n_H = 1, h = \sqrt[3]{x}$}
        \label{fig:md-hybrid-m1-h-alpha-1-3-c2-2p01}
    \end{subfigure}\hfill
    \begin{subfigure}[b]{0.32\textwidth}
        \centering
        \includegraphics[width=\linewidth]{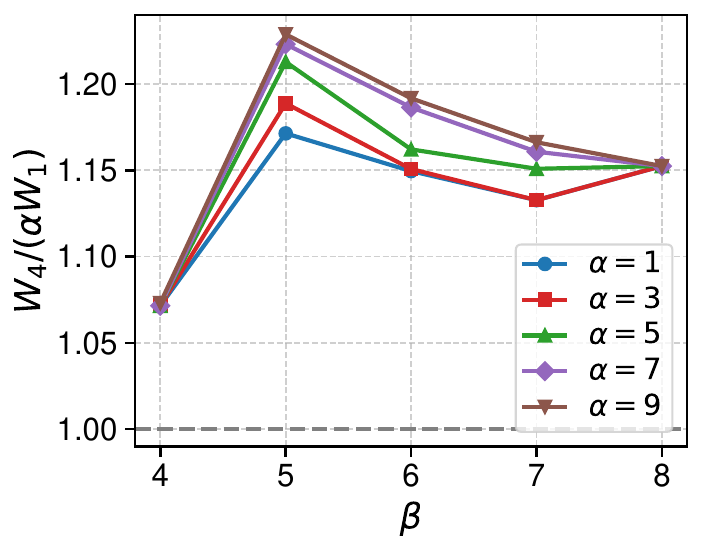}
        \caption{$n_H = 1, h = \sqrt[5]{x}$}
        \label{fig:md-hybrid-m1-h-alpha-1-5-c2-2p01}
    \end{subfigure}

    \vspace{1ex}

    \begin{subfigure}[b]{0.32\textwidth}
        \centering
        \includegraphics[width=\linewidth]{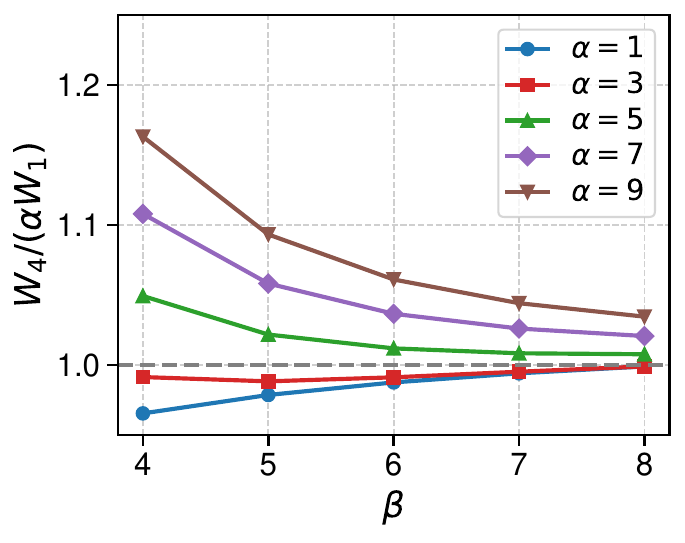}
        \caption{$n_H = 3, h = x$}
        \label{fig:md-hybrid-m3-h-alpha-c2-2p01}
    \end{subfigure}\hfill
    \begin{subfigure}[b]{0.32\textwidth}
        \centering
        \includegraphics[width=\linewidth]{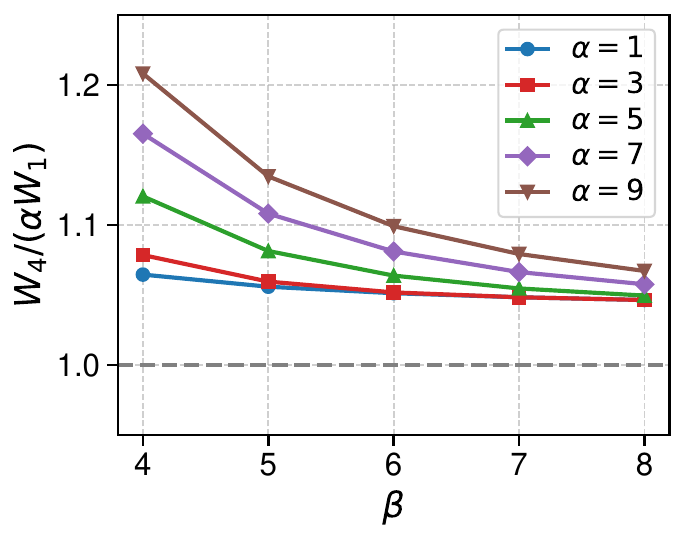}
        \caption{$n_H = 3, h = \sqrt[3]{x}$}
        \label{fig:md-hybrid-m3-h-alpha-1-3-c2-2p01}
    \end{subfigure}\hfill
    \begin{subfigure}[b]{0.32\textwidth}
        \centering
        \includegraphics[width=\linewidth]{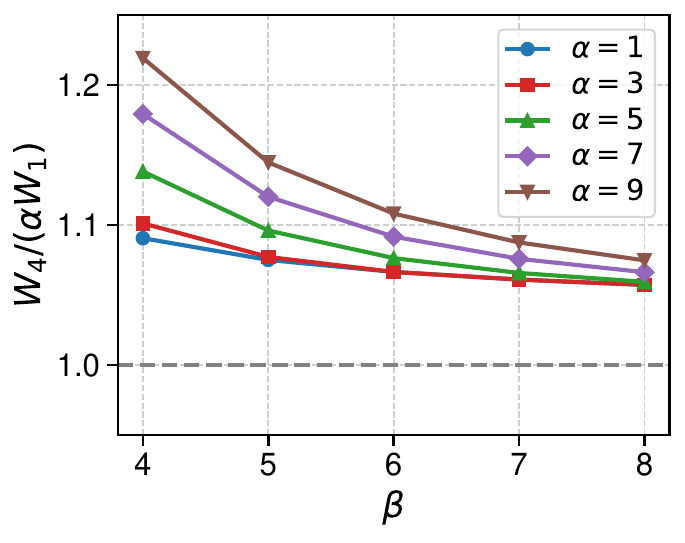}
        \caption{$n_H = 3, h = \sqrt[5]{x}$}
        \label{fig:md-hybrid-m3-h-alpha-1-5-c2-2p01}
    \end{subfigure}

    \vspace{1ex}

    \begin{subfigure}[b]{0.32\textwidth}
        \centering
        \includegraphics[width=\linewidth]{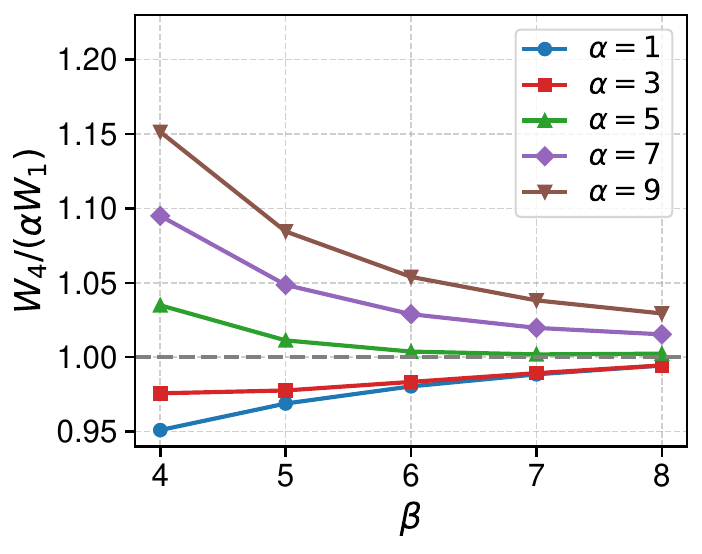}
        \caption{$n_H = 5, h = x$}
        \label{fig:md-hybrid-m5-h-alpha-c2-2p01}
    \end{subfigure}\hfill
    \begin{subfigure}[b]{0.32\textwidth}
        \centering
        \includegraphics[width=\linewidth]{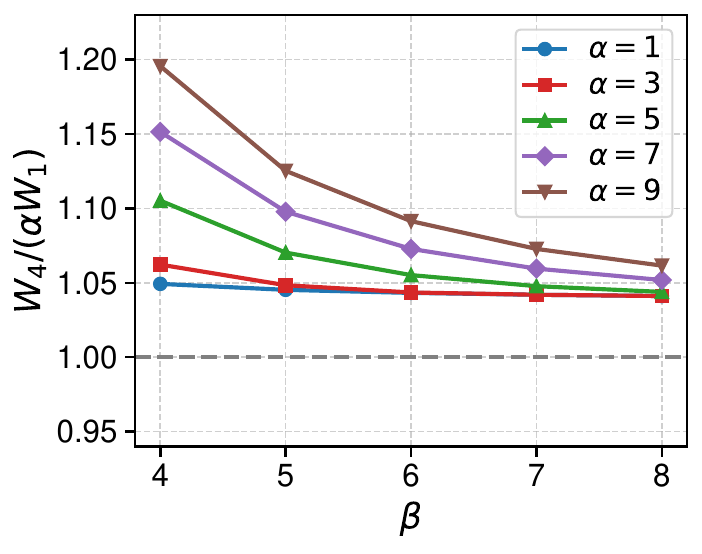}
        \caption{$n_H = 5, h = \sqrt[3]{x}$}
        \label{fig:md-hybrid-m5-h-alpha-1-3-c2-2p01}
    \end{subfigure}\hfill
    \begin{subfigure}[b]{0.32\textwidth}
        \centering
        \includegraphics[width=\linewidth]{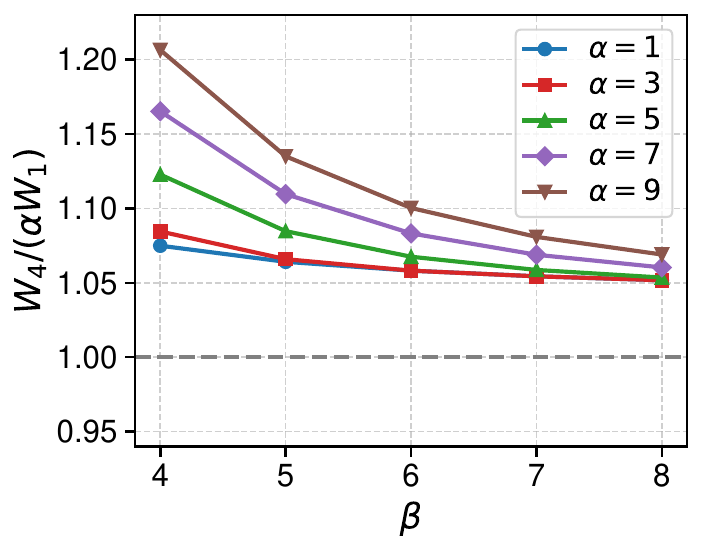}
        \caption{$n_H = 5, h = \sqrt[5]{x}$}
        \label{fig:md-hybrid-m5-h-alpha-1-5-c2-2p01}
    \end{subfigure}

    \vspace{1ex}

    \begin{subfigure}[b]{0.32\textwidth}
        \centering
        \includegraphics[width=\linewidth]{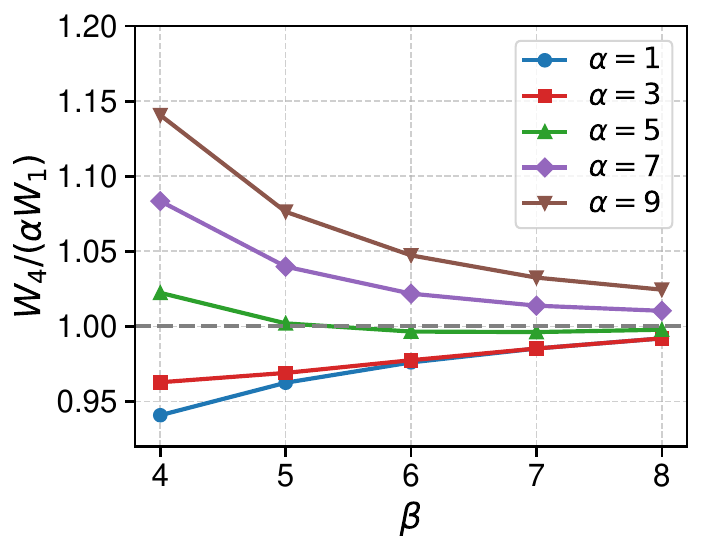}
        \caption{$n_H = 7, h = x$}
        \label{fig:md-hybrid-m7-h-alpha-c2-2p01}
    \end{subfigure}\hfill
    \begin{subfigure}[b]{0.32\textwidth}
        \centering
        \includegraphics[width=\linewidth]{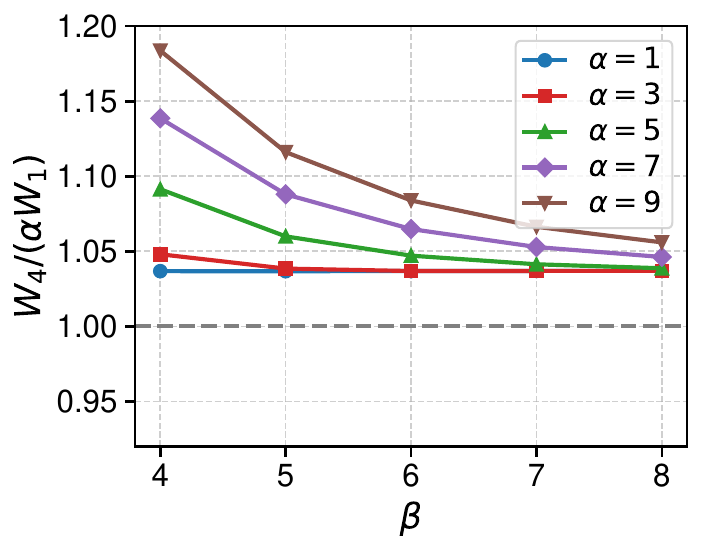}
        \caption{$n_H = 7, h = \sqrt[3]{x}$}
        \label{fig:md-hybrid-m7-h-alpha-1-3-c2-2p01}
    \end{subfigure}\hfill
    \begin{subfigure}[b]{0.32\textwidth}
        \centering
        \includegraphics[width=\linewidth]{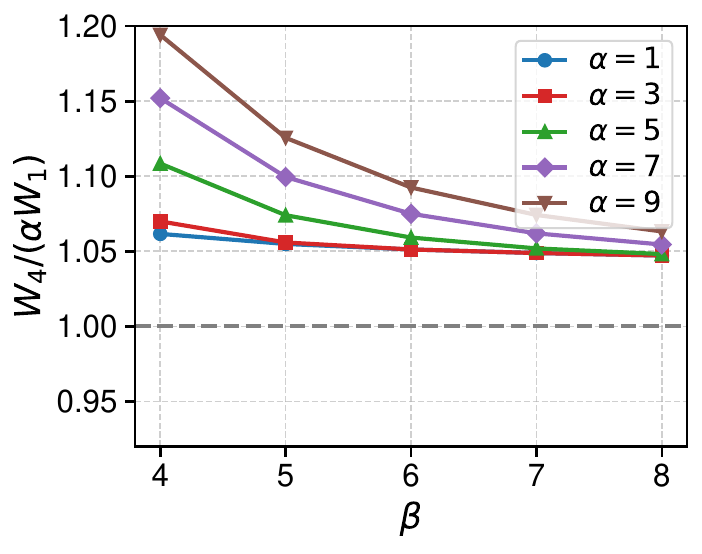}
        \caption{$n_H = 7, h = \sqrt[5]{x}$}
        \label{fig:md-hybrid-m7-h-alpha-1-5-c2-2p01}
    \end{subfigure}
    \caption{Long-term search engine profit under mechanism design with hybrid equilibrium}
    \label{fig:long-term-profit-tpbl-hybrid-asymmetric}
\end{figure}

\end{document}